\documentclass[11pt]{amsart}
\allowdisplaybreaks
\usepackage[foot]{amsaddr}
\usepackage{amssymb,amsmath,amsthm}
\usepackage{graphics}
\usepackage{bbm}
\usepackage{enumerate}    
\usepackage{multirow}
\usepackage{adjustbox}
\usepackage{todonotes}
\usepackage{url}
\usepackage{textcomp}
\usepackage{booktabs}
\usepackage{commath}
\usepackage{thmtools}
\usepackage[hypertexnames=false]{hyperref}
\usepackage[capitalise]{cleveref}
\usepackage{float}
\usepackage{tikz}
\usepackage[center]{caption}
\usepackage{subcaption}
\usepackage{textcmds}
\usepackage{dsfont}
\usepackage{appendix}
\newtheorem{thm}{Theorem}[section]
\newtheorem{cor}[thm]{Corollary}
\newtheorem{lem}[thm]{Lemma}

\theoremstyle{definition}
\newtheorem{defn}[thm]{Definition}
\theoremstyle{remark}
\newtheorem{rem}[thm]{Remark}

\numberwithin{equation}{section}

\newcommand{\sgn}{\mathrm{sgn}}
\newcommand{\RR}{\mathbb{R}}

\newcommand{\QQ}{\mathbb{Q}}

\newcommand{\NN}{\mathbb{N}}
\newcommand{\EE}{\mathbb{E}}

\newcommand{\VV}{\mathbb{V}}

\newcommand{\pp}{\textup{\texttt{+}}}
\newcommand{\mm}{\textup{\texttt{-}}}

\title[Term structure shapes in the Hull--White model]{Term structure shapes in the Hull--White model with Svensson--parameterized initial yield curves}
\author{Felix Sachse}
\address{Department of Mathematics, Saarland University, Campus E2 4, D-66123 Saarbrücken, Germany}
\email{sachse@math.uni-sb.de}
\thanks{This article is based on two chapters of the doctoral thesis of the author \cite{sachse2026thesis}.}

\begin{document}
\begin{abstract}
	
We examine the shapes attainable by the forward and yield curve in the Hull--White model with Svensson--parameterized initial yield curves. For Nelson--Siegel--parameterized initial yield curves, we provide a complete classification of all attainable shapes and partition the parameter space and the state space according to these shapes. Our analysis relies on the theory of Tchebycheff systems and the envelope method. For Bliss-- and Svensson--parameterized initial yield curves, we derive lower and upper bounds on the number of attainable local extrema. We examine the asymptotic behavior of the model and demonstrate that the yield curve attains only the shapes \texttt{normal}, \texttt{inverse} and \texttt{humped} with positive probability as $t\rightarrow\infty.$
	
\end{abstract}
\maketitle

\section{Introduction}

The shape of the term structure of interest rates, commonly expressed in terms of \emph{level}, \emph{slope} and \emph{curvature} is an important economic indicator \cite{litterman1991common, diebold2006macroeconomy}. Yield curves with negative slope, for example, have been identified as a signal of economic recession \cite{bauer2018economic}. Beyond these simple characteristics, however, the term structure may exhibit rather complex shapes with multiple local extrema \cite{gurkaynak2007us, KRS26}. Given a model of the term structure, it is therefore of interest to determine which shapes\footnote{In our terminology, the term structure’s \textbf{shape} is characterized by the number and types of its local extrema; see also \cref{tab:shape}.} can be attained within the model. 

Already when introducing the single--factor Vasicek model \cite{vasicek1977equilibrium}, Oldrich Vasicek showed that the yield curve can only attain three different shapes: \texttt{normal}, \texttt{inverse} and \texttt{humped}. More recently, the shapes of forward and yield curves have been studied more systematically. For one--factor affine short rate models it was shown in \cite{keller2008yield, keller2018correction} that, as in the Vasicek model, forward and yield curve only attain the shapes \texttt{normal}, \texttt{inverse} and \texttt{humped}, with the shape being solely determined by the level of the short rate. For the two--factor Vasicek model, a complete classification of the much richer set of attainable shapes was obtained using the theory of Descartes systems \cite{keller2021classification}. Moreover, the state space was segmented according to the attainable shapes using the concept of the envelope of a family of lines \cite{KRS23}. Tchebycheff systems and envelopes were subsequently employed to obtain a complete classification of attainable shapes and a corresponding segmentation of the parameter space for the widely used Svensson family, also yielding insights into the consistent dynamic Svensson model \cite{KRS26}.

Building on these latest results, we now investigate the attainable shapes in the Hull--White model \cite{HW90}. As an extension of the Vasicek model, the Hull--White model must be able to produce a richer variety of shapes. In fact, as the initial term structure can be chosen arbitrarily, every shape is attainable in the Hull--White model. We must therefore restrict our analysis to a specific family of initial (forward or yield) curves, namely the curves belonging to the Svensson family.

\begin{table}[hbtp]
	\begin{center}
		\begin{tabular}{p{3cm}p{5cm}p{2.5cm}} 
			\toprule
			Shape of the term structure & Description & Sign sequence of derivative\\ 
			\midrule 
			\texttt{flat} & constant & $0$\\
			\texttt{normal} & strictly increasing & $\pp$\\
			\texttt{inverse} & strictly decreasing & $\mm$\\
			\texttt{humped} & single local maximum & $\pp \mm$\\
			\texttt{dipped} & single local minimum & $\mm \pp$\\
			\texttt{hd} & hump-dip, i.e. local maximum followed by local minimum & $\pp \mm \pp$\\
			\texttt{dh}, \texttt{hdh}, etc. & further sequences of multiple `dips'  and `humps'  & $\dotsc$ \\
			\bottomrule
		\end{tabular}
	\end{center}
	\caption{Shapes of the term structure; adapted from \cite{keller2021classification}.\label{tab:shape}}
\end{table}

\subsection{The Hull--White model}

We introduce the Hull--White model, as considered in \cite{Fil09}. The dynamics of the short rate $\mathbf{r}$ under the risk neutral measure $\QQ$ are given by
\begin{align}
	\dif \mathbf{r}(t) = (b(t) - \lambda \mathbf{r}(t))\dif t + \sigma\dif W(t),
\end{align}
where $\lambda$ and $\sigma$ are positive constants and $b$ is chosen such that a given initial forward curve $f_0$ is matched, i.e.
\begin{align*}
	b(t) = \partial_t(f_0(t) + g(t)) + \lambda(f_0(t) + g(t))
\end{align*}
where
\begin{align*}
	g(t) = -\frac{\sigma^2}{2\lambda^2}\left({\mathrm{e}}^{-\lambda t} - 1\right)^2.
\end{align*}

The maps
\begin{align*}
	\RR_{ >0}\ni x\mapsto f(x,t,\mathbf{r}(t))\qquad \text{and} \qquad \RR_{ >0}\ni x\mapsto y(x,t,\mathbf{r}(t)),
\end{align*}
with
\begin{align*}
	\begin{split}f(x,t,r) &= f_0(x+t) + \left(\frac{\sigma^2}{2\lambda^2}\left(1-{\mathrm{e}}^{-2\lambda t}\right) - f_0(t) + r\right)\left[{\mathrm{e}}^{-\lambda x}\right]\\ &\qquad- \frac{\sigma^2}{2\lambda^2}\left(1 - {\mathrm{e}}^{-2\lambda t}\right)\left[{\mathrm{e}}^{-2\lambda x}\right],\end{split}\\
	\begin{split}y(x,t,r) 
		&= \frac{1}{x}\int_0^x f_0(\xi+t)\dif \xi\\
		&\qquad  + \left(\frac{\sigma^2}{2\lambda^2}\left(1-{\mathrm{e}}^{-2\lambda t}\right) - f_0(t) + r\right)\left[\frac{1 - {\mathrm{e}}^{-\lambda x}}{\lambda x}\right]\\ &\qquad- \frac{\sigma^2}{2\lambda^2}\left(1 - {\mathrm{e}}^{-2\lambda t}\right)\left[\frac{1 - {\mathrm{e}}^{-2\lambda x}}{2\lambda x}\right],\end{split}
\end{align*}
are called forward and yield curve, respectively. Both curves can be transformed into each other via the relation
\begin{align*}
	y(x) = \frac{1}{x}\int_0^x f(\xi)\dif \xi \qquad\text{and}\qquad f(x) = y(x) + xy'(x),
\end{align*}
which allows us to speak about an initial curve without having to specify if we mean an initial forward curve or an initial yield curve.

The initial curve will be chosen from the Svensson family \cite{Sve94}, with parameters $(\tau_1, \tau_2, \beta_0, \beta_1, \beta_2, \beta_3)\in\RR_{ >0}^2\times\RR^4$, expressed in terms of the forward curve here,
\begin{align}\label{eq:init_Sve}
	f_0^S(x) = \beta_0 + \beta_1{\mathrm{e}}^{-\frac{x}{\tau_1}} + \frac{\beta_2}{\tau_1}x{\mathrm{e}}^{-\frac{x}{\tau_1}} + \frac{\beta_3}{\tau_2}x{\mathrm{e}}^{-\frac{x}{\tau_2}},
\end{align} 
or from one of its subfamilies: The Bliss family \cite{Bli96}
\begin{align}\label{eq:init_Bli}
	f_0^B(x) = \beta_0 + \beta_1{\mathrm{e}}^{-\frac{x}{\tau_1}} + \frac{\beta_3}{\tau_2}x{\mathrm{e}}^{-\frac{x}{\tau_2}},
\end{align}
the Nelson--Siegel family \cite{NS87}
\begin{align}\label{eq:init_NS}
	f_0^{NS}(x) = \beta_0 + \beta_1{\mathrm{e}}^{-\frac{x}{\tau}} + \frac{\beta_2}{\tau}x{\mathrm{e}}^{-\frac{x}{\tau}},
\end{align}
the Two--factor family \cite{DPR05} with monotone curves
\begin{align}\label{eq:init_mon}
	f_0^M(x) = \beta_0 + \beta_1{\mathrm{e}}^{-\frac{x}{\tau}},
\end{align}
or the One--factor family with flat curves
\begin{align}\label{eq:init_const}
	f_0^F(x) = \beta_0.
\end{align}
When working with the Bliss family or the Svensson family we will always assume $\tau_1\neq\tau_2$. 

We want to answer, at least partially, the following questions:
\begin{description}
	\item[Q1] \label{qu:classification} Which shapes, exactly, can be represented in the Hull--White model with initial curves from the Svensson family (or one of its subfamilies), and which cannot? 
	\item[Q2] \label{qu:separation} What is the shape of the forward curve and the yield curve for a given parameter vector $(\lambda, \sigma, \tau_1, \tau_2, \beta_0, \beta_1, \beta_2, \beta_3)\in\RR_{ >0}^4\times\RR^4$, at a given time $t\in\RR_{ >0}$, conditional on the state $\mathbf{r}(t) = r\in\RR$?
	\item[Q3] \label{qu:asymptotics} Which shapes are observed with strictly positive probability when $t\rightarrow \infty$?
\end{description}
In line with \cite{KRS26}, we will refer to the first question as the \emph{classification problem} and the second question as the \emph{segmentation problem}.

\subsection{Upper bounds}

In order to determine the shape of a curve, we have to calculate the number of transversal zeros of its derivative. For the forward curve this derivative is given by
\begin{align}\label{eq:for_der}
	\begin{split}\partial_xf(x,t,r) &= f_0'(x+t) - \left(\frac{\sigma^2}{2\lambda^2}\left(1-{\mathrm{e}}^{-2\lambda t}\right) - f_0(t) + r\right)\left[\lambda{\mathrm{e}}^{-\lambda x}\right]\\ &\qquad+ \frac{\sigma^2}{2\lambda^2}\left(1 - {\mathrm{e}}^{-2\lambda t}\right)\left[2\lambda{\mathrm{e}}^{-2\lambda x}\right],\end{split}
\end{align}
and for the yield curve we obtain, using the integral representation from Lemma~\ref{lem:integral_representation},
\begin{align}\label{eq:yield_der}
	\begin{split}&\partial_xy(x,t,r) = \frac{1}{x^2}\int_0^x \xi\partial_xf(\xi,t,r) \dif \xi\\
		&\quad= \frac{1}{x^2}\int_0^x \xi f_0'(\xi+t)\dif \xi\\&\qquad + \left(\frac{\sigma^2}{2\lambda^2}\left(1-{\mathrm{e}}^{-2\lambda t}\right) - f_0(t) + r\right)\left[\frac{{\mathrm{e}}^{-\lambda x}(1+\lambda x) - 1}{\lambda x^2}\right]\\ &\qquad - \frac{\sigma^2}{2\lambda^2}\left(1 - {\mathrm{e}}^{-2\lambda t}\right)\left[\frac{{\mathrm{e}}^{-2\lambda x}(1+2\lambda x) - 1}{2\lambda x^2}\right].\end{split}
\end{align}
Since
\begin{align}\label{eq:first_der_Sve}
	\frac{\dif f_0^S}{\dif x}(x) &= \left(\frac{\beta_2 - \beta_1}{\tau_1}\right){\mathrm{e}}^{-\frac{x}{\tau_1}} - \frac{\beta_2}{\tau_1^2}x{\mathrm{e}}^{-\frac{x}{\tau_1}} + \frac{\beta_3}{\tau_2}{\mathrm{e}}^{-\frac{x}{\tau_2}} - \frac{\beta_3}{\tau_2^2}x{\mathrm{e}}^{-\frac{x}{\tau_2}},
\end{align}
upper bounds for the number of local extrema of the forward or yield curve are easily determined.
\begin{thm}\label{thm:bounds}
	Forward and yield curve in the Hull--White model attain at most
	\begin{enumerate}
		\item one local extremum, if the initial curve is chosen from the One--factor family,
		\item two local extrema, if the initial curve is chosen from the Two--factor family,
		\item three local extrema, if the initial curve is chosen from the Nelson--Siegel family,
		\item four local extrema, if the initial curve is chosen the Bliss family,
		\item five local extrema, if the initial curve is chosen from the Svensson family.
	\end{enumerate}
\end{thm}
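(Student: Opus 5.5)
Local extrema of the forward curve correspond to sign changes of $\partial_x f(\cdot,t,r)$ on $\RR_{>0}$, and local extrema of the yield curve to sign changes of $\partial_x y(\cdot,t,r)$. The plan is to write each derivative as a linear combination of exponential--polynomial functions, and then count zeros with the classical bound: a nontrivial function $\sum_{j=1}^m p_j(x)\mathrm{e}^{-\mu_j x}$ with distinct rates $\mu_j$ and polynomials $p_j$ of degree $d_j$ has at most $\sum_j (d_j+1)-1$ real zeros, counted with multiplicity. This bound follows by induction on the total degree: divide by one exponential, differentiate to lower the degree of one polynomial, and apply Rolle's theorem.

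\textbf{Forward curve.} By \eqref{eq:for_der} and \eqref{eq:first_der_Sve}, after the shift $x\mapsto x+t$ (which only changes constant coefficients), $\partial_x f$ lies in the span of
\[
\mathrm{e}^{-\lambda x},\ \mathrm{e}^{-2\lambda x},\ \mathrm{e}^{-x/\tau_1},\ x\mathrm{e}^{-x/\tau_1},\ \mathrm{e}^{-x/\tau_2},\ x\mathrm{e}^{-x/\tau_2},
\]
restricted to the terms actually present in each subfamily. The number of basis elements is:
\begin{itemize}
\item One--factor family: $2$;
\item Two--factor family: $3$;
\item Nelson--Siegel family: $4$;
\item Bliss family: $5$;
\item Svensson family: $6$.
\end{itemize}
If some rates coincide (for example $1/\tau_1=\lambda$ or $1/\tau_1=2\lambda$), terms merge into a single polynomial factor. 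The total count $\sum_j(d_j+1)$ can then only decrease, so the bound only improves. Hence, whenever the derivative is not identically zero, it has at most $1,2,3,4,5$ zeros respectively. In particular it has at most that many sign changes, which gives the claimed number of local extrema. If the derivative vanishes identically, the curve is flat and has no local extrema.

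\textbf{Yield curve.} Use $\partial_x y(x)=x^{-2}\int_0^x \xi\,\partial_x f(\xi)\,\mathrm{d}\xi$. The sign of $\partial_x y$ is the sign of $F(x):=\int_0^x \xi\,\partial_xf(\xi)\,\mathrm{d}\xi$, and $F(0)=0$. If $F$ has $k$ zeros in $(0,\infty)$, then by Rolle's theorem applied on $[0,\infty)$ including the zero at $0$, the function $F'(x)=x\,\partial_xf(x)$ has at least $k$ zeros in $(0,\infty)$. These are zeros of $\partial_xf$. Hence $\partial_x y$ has at most as many zeros in $(0,\infty)$ as $\partial_xf$, giving the same bounds. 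To make the count respect sign changes rather than just distinct zeros, I would use the multiplicity version of Rolle's theorem, or simply argue via sign changes: between consecutive sign changes of $F$ there is a zero of $F'$ at which $F'$ changes sign.

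\textbf{Main obstacle.} The main care is needed in the degenerate rate coincidences and in establishing the zero-counting lemma with multiplicities. I would state the lemma once, proving it by the standard induction: multiply by $\mathrm{e}^{\mu_1 x}$ and differentiate $d_1+1$ times to kill the first block, then use Rolle's theorem. Everything else is bookkeeping.
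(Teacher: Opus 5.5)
Your proof is correct, and it reaches the bounds by a more elementary route than the paper does.

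The paper first shows that the functions $\mathrm{e}^{-x/\tau_1}, x\mathrm{e}^{-x/\tau_1}, \mathrm{e}^{-x/\tau_2}, x\mathrm{e}^{-x/\tau_2}, \mathrm{e}^{-2\lambda x}, \mathrm{e}^{-\lambda x}$ form an ECT-system after reordering. It does this by computing the confluent-Vandermonde Wronskian (Theorem~\ref{thm:Wronskian}, Corollary~\ref{cor:ECT_forward}). It then invokes the Karlin--Studden zero bound for ET-systems (Theorem~\ref{thm:T-system_upper_bounds}). For the yield curve it cites \cite[Thm.~3.3]{keller2021classification} or Lemma~\ref{lem:ECT_yield}, which says the transformed functions $x^{-2}\int_0^x \xi u_i(\xi)\,\mathrm{d}\xi$ again form an ECT-system.

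You replace the first step with the classical P\'olya--Szeg\H{o} Rolle induction for exponential polynomials. You replace the second with a single Rolle argument on $F(x)=\int_0^x\xi\,\partial_xf(\xi)\,\mathrm{d}\xi$, using $F(0)=0$. This keeps the argument self-contained, with no Wronskians or Tchebycheff theory. It also handles coincident rates such as $\lambda=1/\tau_i$ or $2\lambda=1/\tau_i$ explicitly through your block-merging count. The paper covers those cases only implicitly, via ``choosing an appropriate subsystem'' and Remark~\ref{rem:bounds}, because with a repeated exponent the listed system is not a T-system.

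What the heavier machinery buys the paper is reusability. The ECT property survives the transformations in Lemmas~\ref{lem:ECT_yield} and~\ref{lem:ECT_der}. The paper relies on this later to bound, counting multiplicities, the zeros of $x\mapsto\partial_x r^{\mathrm{y}}_x(t)$ and of various Wronskians in the envelope analysis. Your Rolle count would have to be redone case by case there.

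For the theorem itself, your count of distinct zeros of $\partial_x y$ is sufficient, because local extrema correspond exactly to sign changes of a derivative that is not identically zero. If you wanted multiplicities anyway, the multiplicity version of Rolle would also work, since $F$ vanishes to order at least two at $0$.
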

\begin{proof}
	Choose a curve from the Svensson family as an initial forward curve. The function  $x\mapsto \partial_xf(t,x)$, given in \eqref{eq:for_der}, with $f_0$ from \eqref{eq:init_Sve}, is a polynomial of the system
	\begin{align}\label{eq:ECT_system_f}
		\left({\mathrm{e}}^{-\frac{x}{\tau_1}}\ , \ x{\mathrm{e}}^{-\frac{x}{\tau_1}}\ ,\ {\mathrm{e}}^{-\frac{x}{\tau_2}}\ ,\ x{\mathrm{e}}^{-\frac{x}{\tau_2}}\ ,\ {\mathrm{e}}^{-2\lambda x}\ ,\ {\mathrm{e}}^{-\lambda x}\right).
	\end{align}
	After potentially changing the order of the functions above, one obtains an ECT-system; cf. Corollary~\ref{cor:ECT_forward}. The last statement of the theorem thus follows from Theorem~\ref{thm:T-system_upper_bounds}. When considering initial curves from one of the subfamilies, the bounds can be obtained by choosing an appropriate subsystem. The upper bounds for the yield curve are obtained by applying \cite[Thm. 3.3]{keller2021classification} or Lemma~\ref{lem:ECT_yield}.
\end{proof}

\begin{rem}\label{rem:bounds}
	The bounds in Theorem~\ref{thm:bounds} can be improved if, for example, $\lambda = \frac{1}{\tau_1}$.
\end{rem}

\begin{rem}
	The number of local extrema of the initial curves was already determined in \cite{KRS26}. Comparing these results with the theorem above shows that, for each family, the Hull--White model can generate at most two additional local extrema.
\end{rem}

\section{Tracking functions and envelopes}

\subsection{A family of curves}

To determine the sign sequences of \eqref{eq:for_der} and \eqref{eq:yield_der}, we need to study the zero sets of both functions. Let us define
\begin{align*}
	\Theta := \RR_{ >0}\times\RR.
\end{align*}
We introduce two parameterized families
\begin{align*}
	\mathcal{F}^{\mathrm{f}} := (\zeta^{\mathrm{f}}_x)_{x \in (0,\infty)}\qquad\text{and}\qquad \mathcal{F}^{\mathrm{y}} := (\zeta^{\mathrm{y}}_x)_{x \in (0,\infty)},
\end{align*}
where
\begin{align*}
	\zeta^{\mathrm{f}}_x &:= \{(t,r)\in\Theta\ |\ \partial_xf(x,t,r) = 0\},\\
	\zeta^{\mathrm{y}}_x &:= \{(t,r)\in\Theta\ |\ \partial_xy(x,t,r) = 0\},
\end{align*}
for $x > 0$. Additionally we consider the limiting sets $\zeta_0$ and $\zeta_\infty$, which may separate $\Theta$ into regions $\zeta^{\pm}_0$ and $\zeta^{\pm}_\infty$, such that the forward or yield curve is initially increasing (decreasing) in $\zeta^{+}_0$ ($\zeta^{-}_0$), as well as terminally increasing (decreasing) in $\zeta^{+}_\infty$ ($\zeta^{-}_\infty$). 
As mentioned in \cite{KRS23}, where the setting was quite similar, the complete information regarding all attainable shapes of the forward curve resp. yield curve is encoded in these families.

As a convenient representation of the zero sets, we employ parameterized functions $t\mapsto r_x(t)$, whose graphs are precisely the zero sets, and hence are connected to $(\zeta_x)_{x \in (0,\infty)}$ by the identity
\begin{align*}
	\zeta_x = \{(t,r_x(t)) | t > 0\}.
\end{align*}
They are explicitly given by
\begin{align}\label{eq:env_1}
	\begin{split} r_x^{\mathrm{f}}(t) &= \frac{f_0'(x+t){\mathrm{e}}^{\lambda x}}{\lambda} - \left(\frac{\sigma^2}{2\lambda^2}\left(1 - {\mathrm{e}}^{-2\lambda t}\right) - f_0(t)\right)\\ &\qquad + \frac{\sigma^2}{\lambda^2}\left(1 - {\mathrm{e}}^{-2\lambda t}\right){\mathrm{e}}^{-\lambda x}\end{split}
\end{align}
and
\begin{align}\label{eq:ell_infty_gen}
	\begin{split}r^{\mathrm{y}}_x(t) &= -\frac{\lambda}{{\mathrm{e}}^{-\lambda x}(1+\lambda x) - 1}\int_0^x \xi f_0'(\xi+t)\dif \xi\\ &\qquad- \left(\frac{\sigma^2}{2\lambda^2}\left(1-{\mathrm{e}}^{-2\lambda t}\right) - f_0(t)\right) \\& \qquad + \frac{\sigma^2}{2\lambda^2}\left(1 - {\mathrm{e}}^{-2\lambda t}\right)\left[\frac{{\mathrm{e}}^{-2\lambda x}(1+2\lambda x) - 1}{2({\mathrm{e}}^{-\lambda x}(1+\lambda x) - 1)}\right].\end{split}
\end{align}
By plugging $x = 0$ into \eqref{eq:for_der} we obtain
\begin{align*}
	r_0(t) = f_0(t) + \frac{f_0'(t)}{\lambda} + \frac{\sigma^2}{2\lambda^2}\left(1 - {\mathrm{e}}^{-2\lambda t}\right).
\end{align*}
By Lemma~\ref{lem:integral_representation}, $\partial_xy(0,t,r) = \frac{1}{2}\partial_xf(0,t,r)$. Thus, the graph of $r_0(\cdot)$ partitions $\Theta$ into two regions, corresponding to initially increasing and initially decreasing forward and yield curves.
Assuming now $f_0 = f_0^S$, we can determine the terminal monotonic behavior of the forward curve by comparing the exponents in \eqref{eq:env_1}. If $\frac{1}{\tau_1} > \lambda$ and $\frac{1}{\tau_2} > \lambda$, $\zeta_\infty^{\mathrm{f}} = \{(t,r_\infty^{\mathrm{f}}(t)) | t > 0\}$, where
\begin{align*}
	r_\infty^{\mathrm{f}}(t) = f_0(t) - \frac{\sigma^2}{2\lambda^2}\left(1 - {\mathrm{e}}^{-2\lambda t}\right),
\end{align*}
and if $\frac{1}{\tau_2} > \frac{1}{\tau_1} = \lambda$ and $\beta_2 = 0$, 
\begin{align*}
	r_\infty^{\mathrm{f}}(t) = \beta_0 + \frac{\beta_3}{\tau_2}t{\mathrm{e}}^{-\frac{t}{\tau_2}} - \frac{\sigma^2}{2\lambda^2}\left(1 - {\mathrm{e}}^{-2\lambda t}\right).
\end{align*}
In all remaining cases, the set $\zeta^{\mathrm{f}}_\infty$ is empty and the forward curve exhibits the same terminal monotonic behavior for every choice of $t >  0$ and $r\in\RR$.
The forward curve is terminally increasing ($\zeta_\infty^{{\mathrm{f}}+} = \Theta$ and $\zeta_\infty^{{\mathrm{f}}-} = \emptyset$), if
\begin{align}\label{eq:forward_term_inc}
	\begin{split}
		&\lambda \ge \frac{1}{\tau_1},\ \frac{1}{\tau_2} > \frac{1}{\tau_1},\ \beta_2 < 0,\text{ or}\\
		&\lambda > \frac{1}{\tau_1},\ \frac{1}{\tau_2} > \frac{1}{\tau_1},\ \beta_2 = 0,\ \beta_1 < 0,\text{ or}\\
		&\lambda \ge \frac{1}{\tau_2},\ \frac{1}{\tau_1} > \frac{1}{\tau_2},\ \beta_3 < 0,
	\end{split}
\end{align}
and terminally decreasing ($\zeta_\infty^{{\mathrm{f}}+} = \emptyset$ and $\zeta_\infty^{{\mathrm{f}}-} = \Theta$), if
\begin{align}\label{eq:forward_term_dec}
	\begin{split}
		&\lambda \ge \frac{1}{\tau_1},\ \frac{1}{\tau_2} > \frac{1}{\tau_1},\ \beta_2 > 0,\text{ or}\\
		&\lambda > \frac{1}{\tau_1},\ \frac{1}{\tau_2} > \frac{1}{\tau_1},\ \beta_2 = 0,\ \beta_1 > 0,\text{ or}\\
		&\lambda \ge \frac{1}{\tau_2},\ \frac{1}{\tau_1} > \frac{1}{\tau_2},\ \beta_3 > 0.
	\end{split}
\end{align}
To understand the terminal monotonic behavior of the yield curve, we begin by evaluating the integral in \eqref{eq:ell_infty_gen}. 
\begin{align*}
	\int_0^x \xi f_0'(\xi+t)\dif \xi &= \left[\beta_2(h_2(x,\tau_1) - h_1(x,\tau_1)) + \beta_1h_1(x,\tau_1)\right]\tau_1{\mathrm{e}}^{-\frac{t}{\tau_1}}\\ &\qquad + \beta_2h_1(x,\tau_1)t{\mathrm{e}}^{-\frac{t}{\tau_1}}\\ &\qquad + \beta_3(h_2(x,\tau_2) -  h_1(x,\tau_2))\tau_2{\mathrm{e}}^{-\frac{t}{\tau_2}}\\ &\qquad + \beta_3h_1(x,\tau_2)t{\mathrm{e}}^{-\frac{t}{\tau_2}},
\end{align*}
where
\begin{align*}
	h_1(x,\tau) &:= \left(1 + \frac{x}{\tau}\right){\mathrm{e}}^{-\frac{x}{\tau}} - 1\quad,\ \tau > 0,\\
	h_2(x,\tau) &:= \left(2 + 2\frac{x}{\tau} + \frac{x^2}{\tau^2}\right){\mathrm{e}}^{-\frac{x}{\tau}} - 2\quad,\ \tau > 0.
\end{align*}
Since
\begin{align*}
	\lim_{x\rightarrow\infty} h_1(x,\tau) = -1\qquad , \qquad \lim_{x\rightarrow\infty} h_2(x,\tau) = -2,
\end{align*}
taking the limit $x\rightarrow\infty$ in \eqref{eq:ell_infty_gen} yields
\begin{align*}
	r_\infty^{\mathrm{y}}(t) &= \beta_0 + \beta_1(1-\lambda\tau_1){\mathrm{e}}^{-\frac{t}{\tau_1}} + \beta_2\left(\frac{t}{\tau_1} - \lambda(\tau_1 + t)\right){\mathrm{e}}^{-\frac{t}{\tau_1}}\\ &\qquad + \beta_3{\mathrm{e}}^{\frac{t}{\tau_2} - \lambda(\tau_2 + t)} - \frac{\sigma^2}{4\lambda^2}\left(1 - {\mathrm{e}}^{-2\lambda t}\right).
\end{align*}
Finally, we disclose the limits for $t\downarrow 0$ and $t\uparrow\infty$.
\begin{align*}
	\lim_{t\downarrow0}r_0(t) &= \beta_0 + \beta_1\left(1 - \frac{1}{\lambda\tau_1}\right) + \frac{\beta_2}{\lambda\tau_1} + \frac{\beta_3}{\lambda\tau_2},\\
	\lim_{t\downarrow0}r_\infty^{\mathrm{f}}(t) &= \begin{cases}\beta_0 + \beta_1 &\text{, if } \frac{1}{\tau_1} > \lambda \text{ and } \frac{1}{\tau_2} > \lambda,\\ \beta_0 &\text{, if } \frac{1}{\tau_2} > \frac{1}{\tau_1} = \lambda \text{ and } \beta_2 = 0,\end{cases}\\
	\lim_{t\downarrow0}r_\infty^{\mathrm{y}}(t) &= \beta_0 + \beta_1(1-\lambda\tau_1) - \beta_2\lambda\tau_1 - \beta_3\lambda\tau_2,\\
	\lim_{t\uparrow\infty} r_0(t) &= \beta_0 + \frac{\sigma^2}{2\lambda^2},\\
	\lim_{t\uparrow\infty} r_\infty^{\mathrm{f}}(t) &=\beta_0 - \frac{\sigma^2}{2\lambda^2},\\
	\lim_{t\uparrow\infty} r_\infty^{\mathrm{y}}(t) &=\beta_0 - \frac{\sigma^2}{4\lambda^2}.
\end{align*}
An illustration of the family $\mathcal{F}$ is shown in \cref{fig:family}.
\begin{figure}
	\centering
	\includegraphics[width=\textwidth]{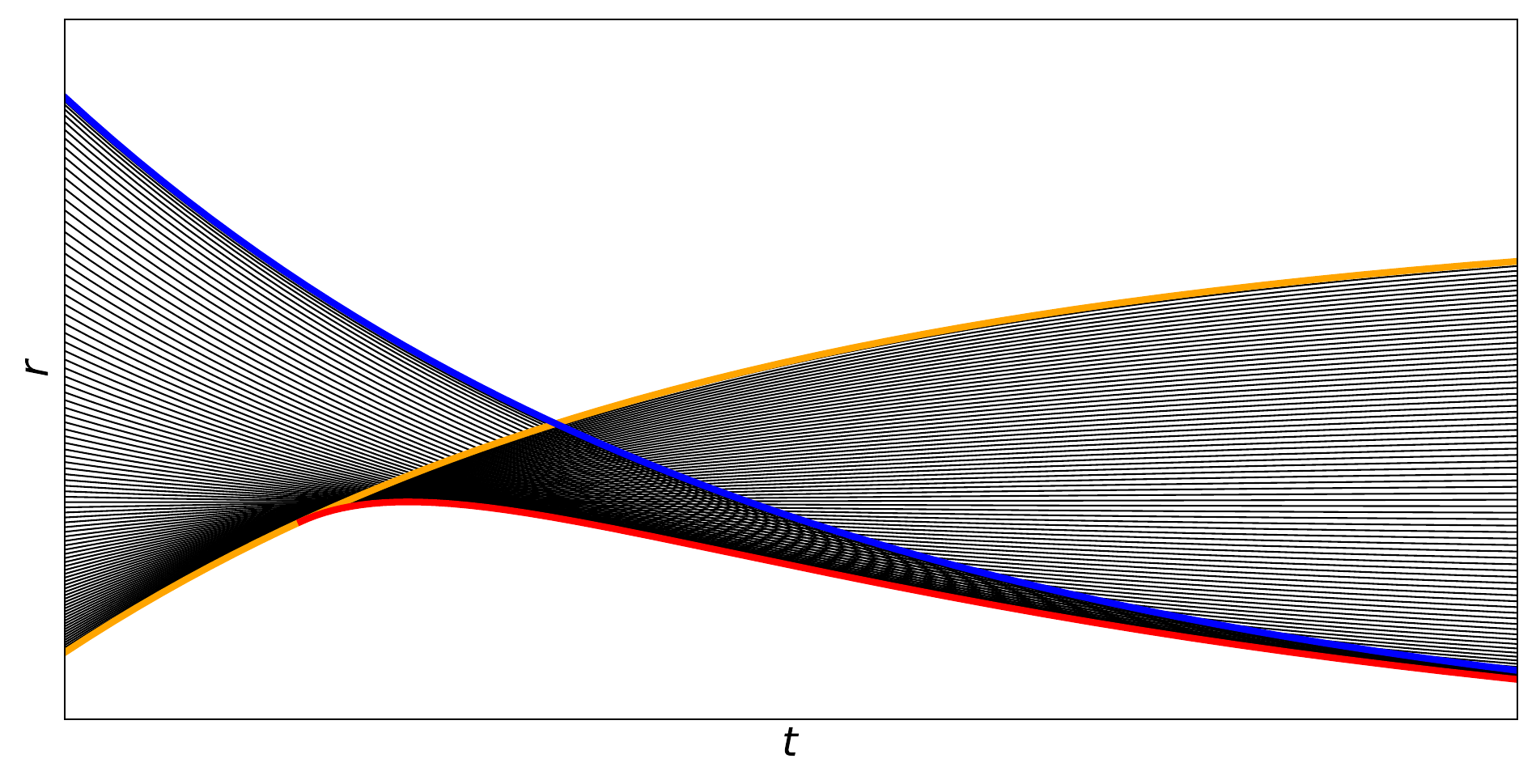}
	\caption{Samples from the family $\mathcal{F}$. Highlighted are $\zeta_0$ (orange), $\zeta_\infty$ (blue) and the envelope (red).}
	\label{fig:family}
\end{figure}

\subsection{The tracking function and the envelope}

Let us now fix $t > 0$ and introduce, as another important tool, the \textbf{tracking function} $x\mapsto r_x(t)$, which tracks the evolution of zero sets $(\zeta_x)_{x \in (0,\infty)}$ along the lines $t = \mathrm{const.}$ in $\Theta$. Determining the number and value of the extremal points $r_{x^*}(t)$ is the key to solving the classification problem \textbf{Q1} and the segmentation problem \textbf{Q2}, since the regions corresponding to different shapes of forward curve or yield curve are exactly separated by the points $(t, r_{x^*}(t))$, along with $(t, r_{0}(t))$ and $(t, r_{\infty}(t))$; as can be concluded from the proofs in \cite{KRS23}, in particular Lem. 4.2. Analogous to \cite[Lem. 3.3]{KRS23}, we make the following observation:
\begin{lem}\label{lem:tracking_envelope}
	For any $x > 0$
	\begin{align*}
		\partial_x r^{\mathrm{f}}_x(t) = 0&\qquad \Longleftrightarrow\qquad \partial_{xx}f(x,t,r^{\mathrm{f}}_x(t)) = 0,\\
		\partial_x r^{\mathrm{y}}_x(t) = 0&\qquad \Longleftrightarrow\qquad \partial_{xx}y(x,t,r^{\mathrm{y}}_x(t)) = 0,
	\end{align*}
	In addition, if $\partial_x r^{\mathrm{f/y}}_x(t) = 0$, then
	\begin{align*}
		\partial_{xx} r^{\mathrm{f}}_x(t) = 0&\qquad \Longleftrightarrow\qquad \partial_{xxx}f(x,t,r^{\mathrm{f}}_x(t)) = 0,\\
		\partial_{xx} r^{\mathrm{y}}_x(t) = 0&\qquad \Longleftrightarrow\qquad \partial_{xxx}y(x,t,r^{\mathrm{y}}_x(t)) = 0.
	\end{align*}
\end{lem}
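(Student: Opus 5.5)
The argument rests on one structural fact: both $\partial_x f$ and $\partial_x y$ are affine in the state $r$ with nonvanishing coefficient. We write
\begin{align*}
	\partial_x f(x,t,r) = A(x,t) - r\,\lambda\mathrm{e}^{-\lambda x},\qquad \partial_x y(x,t,r) = B(x,t) + r\,c(x),
\end{align*}
where $c(x) = \frac{\mathrm{e}^{-\lambda x}(1+\lambda x)-1}{\lambda x^2}$. Here $c(x)<0$ for $x>0$, because $\mathrm{e}^{u}>1+u$ gives $(1+u)\mathrm{e}^{-u}<1$. Likewise $a(x) := -\lambda\mathrm{e}^{-\lambda x}\neq 0$. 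The tracking function is therefore defined implicitly by $F(x,r_x(t)) = 0$, with $F(x,r) := \partial_x f(x,t,r)$ (respectively $\partial_x y$). Explicitly, $r_x(t) = -A(x,t)/a(x)$ or $-B(x,t)/c(x)$, which is exactly \eqref{eq:env_1} and \eqref{eq:ell_infty_gen}. Everything is smooth in $x$ on $(0,\infty)$, so we may differentiate freely.

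For the first equivalence, fix $t$ and differentiate the identity $F(x,r_x(t))\equiv 0$ in $x$:
\begin{align*}
	\partial_x F(x,r_x(t)) + \partial_r F(x,r_x(t))\,\partial_x r_x(t) = 0 .
\end{align*}
Since $\partial_r F = a(x)$ (resp. $c(x)$) never vanishes, $\partial_x r_x(t) = 0$ holds if and only if $\partial_x F(x,r_x(t)) = 0$. Now $\partial_x F(x,r)$ is precisely $\partial_{xx} f(x,t,r)$, resp. $\partial_{xx}y(x,t,r)$. This proves the first pair of equivalences.

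For the second pair, differentiate once more:
\begin{align*}
	\partial_{xx}F + 2\,\partial_{xr}F\,\partial_x r_x + \partial_{rr}F\,(\partial_x r_x)^2 + \partial_r F\,\partial_{xx} r_x = 0 ,
\end{align*}
with all derivatives of $F$ evaluated at $(x,r_x(t))$. By affinity, $\partial_{rr}F = 0$. Under the hypothesis $\partial_x r_x(t)=0$, the middle term also drops, leaving
\begin{align*}
	\partial_{xx}F(x,r_x(t)) = -\partial_rF(x)\,\partial_{xx}r_x(t).
\end{align*}
Since $\partial_r F\neq 0$, we get $\partial_{xx} r_x(t) = 0$ if and only if $\partial_{xx}F(x,r_x(t)) = 0$, i.e. $\partial_{xxx}f$ (resp. $\partial_{xxx}y$) vanishes there. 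This mirrors \cite[Lem.~3.3]{KRS23}.

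The only step needing real care is the nonvanishing of $c(x)$ for the yield curve, together with the legitimacy of differentiating under the integral in \eqref{eq:yield_der}. The first follows from the elementary inequality above. For the second, the integrand $\xi f_0'(\xi+t)$ is smooth, so $x\mapsto \int_0^x \xi f_0'(\xi+t)\,\mathrm{d}\xi$ and $x\mapsto x^{-2}$ are smooth on $(0,\infty)$. Everything else is the chain rule.
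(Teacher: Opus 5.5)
Your proposal is correct and follows the paper's argument. You differentiate the identity $\partial_x f(x,t,r^{\mathrm{f}}_x(t))\equiv 0$ (resp.\ its yield analogue) once, and then once more under the hypothesis $\partial_x r_x(t)=0$, using that the $r$-coefficient $\partial_{xr}f=-\lambda\mathrm{e}^{-\lambda x}$ (resp.\ $c(x)<0$) never vanishes; you also fill in the yield-case sign check that the paper leaves to ``similar calculations''.
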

\begin{proof}
	Since $(x\mapsto\partial_xf(x,t,r^{\mathrm{f}}_x(t)))\equiv 0$, differentiating once with respect to $x$ yields
	\begin{align*}
		\partial_{xx}f(x,t,r_x^{\mathrm{f}}(t)) + \partial_xr^{\mathrm{f}}_x(t)\partial_{xr}f(x,t,r^{\mathrm{f}}_x(t)) = 0.
	\end{align*}
	When $\partial_x r^{\mathrm{f}}_x(t) = 0$, differentiating once again yields
	\begin{align*}
		\partial_{xxx}f(x,t,r_x^{\mathrm{f}}(t)) + \partial_{xx}r^{\mathrm{f}}_x(t)\partial_{xr}f(x,t,r^{\mathrm{f}}_x(t)) = 0.
	\end{align*}
	Since $\partial_{xr}f(x,t,r) = -\lambda{\mathrm{e}}^{-\lambda x} < 0$ we obtain the desired result for the forward curve. By carrying out similar calculations, one can establish the result also for the yield curve.
\end{proof}

We conclude that there is a ono-to-one correspondence between the critical points of the tracking function and the solutions of the system
\begin{align}\label{eq:env_forward}
	\begin{cases}
		\partial_{x}f(x,t,r) = 0,\\
		\partial_{xx}f(x,t,r) = 0,
	\end{cases}
\end{align}
resp.
\begin{align}\label{eq:env_yield}
	\begin{cases}
		\partial_{x}y(x,t,r) = 0,\\
		\partial_{xx}y(x,t,r) = 0.
	\end{cases}
\end{align}
Furthermore, a critical point of the tracking function is extremal, if it does not correspond to a solution of
\begin{align}\label{eq:por_forward}
	\begin{cases}
		\partial_{x}f(x,t,r) = 0,\\
		\partial_{xx}f(x,t,r) = 0,\\
		\partial_{xxx}f(x,t,r) = 0,
	\end{cases}
\end{align}
resp.
\begin{align}\label{eq:por_yield}
	\begin{cases}
		\partial_{x}y(x,t,r) = 0,\\
		\partial_{xx}y(x,t,r) = 0,\\
		\partial_{xxx}y(x,t,r) = 0.
	\end{cases}
\end{align}
In line with \cite[Def. 92, Def. 5.26]{bruce1992curves}, we define the \textbf{envelopes} of $\mathcal{F}^{\mathrm{f}}$ and $\mathcal{F}^{\mathrm{y}}$ to be the sets
\begin{align*}
	\eta^{\mathrm{f}} &:= \{(t,r)\in\Theta\ |\ \exists x > 0 : (x,t,r) \text{ is a solution to } \eqref{eq:env_forward}\},\\
	\eta^{\mathrm{y}} &:= \{(t,r)\in\Theta\ |\ \exists x > 0 : (x,t,r) \text{ is a solution to } \eqref{eq:env_yield}\}
\end{align*}
A point on the envelope which corresponds to a solution of \eqref{eq:por_forward} resp. \eqref{eq:por_yield} is called \textbf{point of regression}. The other points on the envelope are called \textbf{regular}.

In \cite{KRS23} and \cite{KRS26} the envelope appeared as a non-linear contour at the boundary of a region swept by a family of lines. Under certain regularity assumptions, the envelope could be described as a  parameterized continuous curve and the winding number of an augmented envelope could be used to determine the shape of the forward curve or yield curve. Here, however, the families $\mathcal{F}^{\mathrm{f}}$ and $\mathcal{F}^{\mathrm{y}}$ are not any longer families of lines. Hence, for a fixed value of $x > 0$, there might be no or multiple solutions to \eqref{eq:env_forward} resp. \eqref{eq:env_yield}. As a result, rather than representing the envelope as a parameterized curve, we will need to describe it as the graph of a (multi--valued) function of $t$ and cannot rely on a formula to calculate the number of transversal zeros.

\subsection{Equivalent formulations and the number of solutions}

We proceed by giving a brief overview of some essential properties of the envelopes. Starting with the envelope of $\mathcal{F}^{\mathrm{f}}$, we note that a solution $(x,t,r)$ of \eqref{eq:env_forward} also solves
\begin{align}\label{eq:env_2}
	f_0''(x+t) + \lambda f_0'(x+t) = \sigma^2\left(1 - {\mathrm{e}}^{-2\lambda t}\right){\mathrm{e}}^{-2\lambda x}.
\end{align}
When picking the initial curve from the Svensson family, both sides of the equation are polynomials of the same ECT-system \eqref{eq:exppolsys}, with $x$ being the argument. As a result, there are only finitely many solutions for a given $t > 0$. More specifically, there are at most four solutions in the case of the Svensson family, one less for the Bliss family, and, ultimately, no solutions for flat initial curves. A similar behavior can be observed with the number of points of regression:
\begin{lem}\label{lem:por_finite_forward}
	There are finitely many points of regression on $\eta^{\mathrm{f}}$ when the initial curve is chosen from the Svensson family.
\end{lem}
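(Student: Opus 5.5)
The plan is to reduce the three equations \eqref{eq:por_forward} to one equation in the single variable $u:=x+t$. That equation has finitely many roots by the ECT-property. Each root $u$ then determines at most one pair $(x,t)$, and each pair $(x,t)$ determines $r$ uniquely.

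\emph{Reduction.} Abbreviate $c(t):=\sigma^2(1-{\mathrm{e}}^{-2\lambda t})>0$ and write $D$ for differentiation. As noted before the lemma, $r$ enters $\partial_x f$ only through the term $-\lambda A\,{\mathrm{e}}^{-\lambda x}$. Hence the combination $\partial_{xx}f+\lambda\partial_x f$ eliminates $r$ and gives \eqref{eq:env_2}. Write this as $G(x,t)=0$ with
\begin{align*}
G(x,t)=g(x+t)-c(t){\mathrm{e}}^{-2\lambda x},\qquad g:=(D+\lambda)Df_0 .
\end{align*}
Similarly, $\partial_{xxx}f+\lambda\partial_{xx}f=\partial_x G$. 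So a solution $(x,t,r)$ of \eqref{eq:por_forward} yields $G=\partial_xG=0$. Since $\partial_xG+2\lambda G$ eliminates $c(t)$, we get the necessary condition
\begin{align*}
H(x+t)=0,\qquad H:=(D+2\lambda)(D+\lambda)Df_0 .
\end{align*}
Conversely, $r$ is recovered uniquely from $\partial_xf=0$, because the coefficient of $A$ is $-\lambda{\mathrm{e}}^{-\lambda x}\neq0$.

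\emph{Generic case.} For $f_0=f_0^S$, the function $H$ is a polynomial in the system $({\mathrm{e}}^{-u/\tau_1},u{\mathrm{e}}^{-u/\tau_1},{\mathrm{e}}^{-u/\tau_2},u{\mathrm{e}}^{-u/\tau_2})$, a subsystem of \eqref{eq:exppolsys}. If $H\not\equiv0$, Theorem~\ref{thm:T-system_upper_bounds} gives finitely many zeros $u>0$ (at most three). Fix such a $u$ and put $t=u-x$ with $x\in(0,u)$. Then $G=0$ reads
\begin{align*}
g(u)=\sigma^2\left({\mathrm{e}}^{-2\lambda x}-{\mathrm{e}}^{-2\lambda u}\right),
\end{align*}
and the right-hand side is strictly decreasing in $x$, so there is at most one $x$. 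Consequently there are finitely many triples $(x,t,r)$, and hence finitely many points of regression.

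\emph{Degenerate case.} I expect this to be the main obstacle: I must identify exactly when $H\equiv0$. The kernel of $(D+2\lambda)(D+\lambda)D$ is spanned by $1,{\mathrm{e}}^{-\lambda u},{\mathrm{e}}^{-2\lambda u}$. The terms $u{\mathrm{e}}^{-u/\tau_i}$ never lie in this kernel, and $\tau_1\neq\tau_2$. Therefore $H\equiv0$ forces $f_0=\beta_0+\beta_1{\mathrm{e}}^{-u/\tau_1}$ (up to relabelling) with either $1/\tau_1=\lambda$ or $1/\tau_1=2\lambda$; this is consistent with Remark~\ref{rem:bounds}.
\begin{itemize}
\item If $1/\tau_1=\lambda$, then $g\equiv0$. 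Since $c(t)>0$, the equation $G=0$ has no solution, so $\eta^{\mathrm{f}}$ is empty.
\item If $1/\tau_1=2\lambda$, then $g(u)=k{\mathrm{e}}^{-2\lambda u}$ for a constant $k$. The equation $G=0$ becomes $k{\mathrm{e}}^{-2\lambda t}=\sigma^2(1-{\mathrm{e}}^{-2\lambda t})$, which is independent of $x$ and has at most one solution $t^*$. At $t^*$, every $x>0$ satisfies $\partial_{xx}f=0$, so Lemma~\ref{lem:tracking_envelope} gives $\partial_xr^{\mathrm{f}}_x(t^*)\equiv0$. The tracking function is then constant, and the envelope reduces to the single point $(t^*,r^{\mathrm{f}}_{x}(t^*))$.
\end{itemize}
In either subcase the set of points of regression is finite, which completes the plan.
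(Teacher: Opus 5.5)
Your proposal is correct and follows the paper's proof: the combination $\partial_{xxx}f+3\lambda\partial_{xx}f+2\lambda^2\partial_xf$ (your $\partial_xG+2\lambda G$) reduces the system to an exponential--polynomial equation in $x+t$ with finitely many roots, and for each root, \eqref{eq:env_2} determines at most one $x$ and hence one $(t,r)$. Your extra treatment of the degenerate case $H\equiv0$ is sound and fills a step the paper's proof skips by tacitly assuming the polynomial is nontrivial; however, your enumeration should also include the flat case $\beta_1=\beta_2=\beta_3=0$ (any $\tau_1$), which your first bullet already handles because $g\equiv0$ there.
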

\begin{proof}
	For a solution $(x,t,r)$ of \eqref{eq:por_forward} it must hold
	\begin{align}\label{eq:env_3}
		\begin{split}
			0 &= \partial_{xxx}f(x,t,r) + 3\lambda\partial_{xx}f(x,t,r) + 2\lambda^2\partial_{x}f(x,t,r)\\
			&= f_0'''(x + t) + 3\lambda f_0''(x + t) + 2\lambda^2f_0'(x + t).
		\end{split}
	\end{align}
	Since the right hand side is a polynomial of the ECT-system \eqref{eq:exppolsys}, with the argument $x + t$, the equation is solved only for finitely many $x + t$. Let us fix such a solution $x + t =: c$. Equation \eqref{eq:env_2} now reads
	\begin{align*}
		C = \sigma^2\left(1 - {\mathrm{e}}^{-2\lambda (c - x)}\right){\mathrm{e}}^{-2\lambda x},
	\end{align*}
	for some $C\in\RR$. Rearranging yields
	\begin{align*}
		\frac{C}{\sigma^2} + {\mathrm{e}}^{-2\lambda c} = {\mathrm{e}}^{-2\lambda x}.
	\end{align*}
	This equation has at most one solution in $x$. Thus, there is at most one solution $(x,t,r)$ of \eqref{eq:por_forward} for every solution $x + t$ of \eqref{eq:env_3}.
\end{proof}
There are at most three points of regression in the case of Svensson--parameterized initial curves, one less for the Bliss family, and, ultimately, none for initial curves from the families of monotone curves \eqref{eq:init_mon} and flat curves. \eqref{eq:init_const}.

Continuing with the envelope $\eta^{\mathrm{y}}$ of $\mathcal{F}^{\mathrm{y}}$, we note that
\begin{align*}
	\partial_{xx}y(x,t,r) &= \frac{1}{x}\partial_{x}f(x,t,r) - \frac{2}{x}\partial_{x}y(x,t,r),\\
	\partial_{xxx}y(x,t,r) &= -\frac{3}{x^2}\partial_{x}f(x,t,r) + \frac{1}{x}\partial_{xx}f(x,t,r) + \frac{6}{x^2}\partial_{x}y(x,t,r),
\end{align*}
so, instead of solving \eqref{eq:env_yield} or \eqref{eq:por_yield}, one could equivalently solve
\begin{align}\label{eq:env_yield_alt}
	\begin{cases}
		\partial_{x}y(x,t,r) = 0,\\
		\partial_{x}f(x,t,r) = 0,
	\end{cases}
\end{align}
or
\begin{align}\label{eq:por_yield_alt}
	\begin{cases}
		\partial_{x}y(x,t,r) = 0,\\
		\partial_{x}f(x,t,r) = 0,\\
		\partial_{xx}f(x,t,r) = 0.
	\end{cases}
\end{align}
We wish to particularly emphasize the following fact: points of regression on $\eta^{\mathrm{y}}$ also lie on $\eta^{\mathrm{f}}$ (but are generally not points of regression thereof).

We now establish upper bounds to the number of solutions of \eqref{eq:env_yield}, when $t$ is fixed. By Lemma~\ref{lem:tracking_envelope}, we might therefore determine upper bounds to the number of zeros of $x\mapsto\partial_x r_x^{\mathrm{y}}(t)$. 
When the initial curve is Svensson--parameterized, the derivative of the forward curve \eqref{eq:for_der} is a polynomial of
\begin{align*}
	\left({\mathrm{e}}^{-\lambda x}, {\mathrm{e}}^{-2\lambda x}, {\mathrm{e}}^{-\frac{x}{\tau_1}}, x{\mathrm{e}}^{-\frac{x}{\tau_1}}, {\mathrm{e}}^{-\frac{x}{\tau_2}}, x{\mathrm{e}}^{-\frac{x}{\tau_2}}\right).
\end{align*}
Assuming that $\lambda, 1/\tau_1$ and $1/\tau_2$ are pairwise distinct, we obtain an ECT-system, after possibly altering the signs of the last four functions in the system; see the proof of Corollary~\ref{cor:ECT_forward}. By Lemma~\ref{lem:ECT_yield}, the corresponding system for the derivative of the yield curve also constitutes an ECT-system. We highlight the fact that the function, which is getting multiplied with $r$ in \eqref{eq:yield_der}, is the first one in the system. Applying Lemma~\ref{lem:ECT_der}, we conclude that $x\mapsto\partial_x r_x^{\mathrm{y}}(t)$ is a polynomial of an ECT-system consisting of five functions. Consequently, for every $t > 0$, the derivative of the tracking function has at most four zeros, counting multiplicities; see Theorem~\ref{thm:T-system_upper_bounds}. Improved bounds can be obtained when the initial curve is chosen from a  subfamily or with matching exponents. In particular, when choosing from \eqref{eq:init_mon}, there is at most one solution to \eqref{eq:env_yield} for every $t > 0$ and the envelope exhibits no points of regression. For the Nelson--Siegel family, there are at most two solutions for every $t > 0$, and points of regression do not share their corresponding value of $t$ with other points on the envelope.

\subsection{Initial and terminal monotonic behavior of the tracking function}

Whenever $x\mapsto(t,r_x(t))$ hits a regular point on the envelope of $\mathcal{F}^{\mathrm{f}}$ resp. $\mathcal{F}^{\mathrm{y}}$, the tracking function $x\mapsto r_x(t)$ reverses its monotonicity, switching from decreasing to increasing or from increasing to decreasing. Thus, knowing how often the tracking function changes direction is key to determining the exact number of regular solutions of \eqref{eq:env_forward} resp. \eqref{eq:env_yield}. When this number can be bounded by one, we only have to check if initial and terminal direction of the tracking function match (no regular solutions) or do not match (exactly one regular solution). Since
\begin{align*}
	\frac{\partial}{\partial x} r_x^{\mathrm{f}}(t) = \frac{f_0''(x + t){\mathrm{e}}^{\lambda x}}{\lambda} + f_0'(x + t){\mathrm{e}}^{\lambda x} - \frac{\sigma^2}{\lambda}\left(1 - {\mathrm{e}}^{-2\lambda t}\right){\mathrm{e}}^{-\lambda x},
\end{align*}
the initial direction of the tracking function in the forward case is determined by the sign of 
\begin{align*}
	\frac{\partial}{\partial x} r_x^{\mathrm{f}}(t)|_{x = 0} = \frac{f_0''(t)}{\lambda} + f_0'(t) - \frac{\sigma^2}{\lambda}\left(1 - {\mathrm{e}}^{-2\lambda t}\right),
\end{align*}
where a positive (negative) sign indicates that the tracking function is initially increasing (decreasing). Notably,
\begin{align*}
	\frac{\partial}{\partial x} r_x^{\mathrm{f}}(t)|_{x = 0} \overset{t\rightarrow\infty}{\longrightarrow} -\frac{\sigma^2}{\lambda},
\end{align*}
so, asymptotically, the tracking function is initially decreasing. The derivatives of the forward curve $x\mapsto \partial_x f(x,t,r)$ and the yield curve $x\mapsto \partial_x y(x,t,r)$ have the same initial sign and 
\begin{align*}
	\partial_{xx}y(0^+,t,r) = \frac{\partial_{xx}f(0,t,r)}{3} \qquad\text{and}\qquad \partial_{xr}y(0^+,t,r) = \frac{\partial_{xr}f(0,t,r)}{2};
\end{align*}
see Lemma~\ref{lem:integral_representation} for the first identity. Since
\begin{align*}
	\partial_x r_x^{\mathrm{f}}(t) &= -\frac{\partial_{xx} f(x,t,r_x^{\mathrm{f}}(t))}{\partial_{xr} f(x,t,r_x^{\mathrm{f}}(t))},\\ 
	\partial_x r_x^{\mathrm{y}}(t), &= -\frac{\partial_{xx} y(x,t,r_x^{\mathrm{y}}(t))}{\partial_{xr} y(x,t,r_x^{\mathrm{y}}(t))},
\end{align*}
it holds
\begin{align*}
	\partial_x r_x^{\mathrm{y}}(t)|_{x = 0^+} = \frac{2}{3}\partial_x r_x^{\mathrm{f}}(t)|_{x = 0}.
\end{align*}
So, for every $t > 0$, both tracking functions have the same initial direction. 
To determine the terminal direction of the tracking function in the yield case, we first compute its derivative.
\begin{align}\label{eq:der_tracking_yield}
	\begin{split}
		&\frac{\partial}{\partial x} r_x^{\mathrm{y}}(t) =\\&\quad \frac{\sigma^2}{4\lambda^2}\left(1-{\mathrm{e}}^{-2\lambda t}\right)g_1\left(x,\frac{1}{2\lambda}\right) + \beta_1\left(-\lambda\tau_1{\mathrm{e}}^{-\frac{t}{\tau_1}}g_1(x,\tau_1)\right)\\&\quad + \beta_2\left(-\lambda\tau_1{\mathrm{e}}^{-\frac{t}{\tau_1}}\left(g_2(x,\tau_1) - g_1(x,\tau_1)\right) - \lambda t{\mathrm{e}}^{-\frac{t}{\tau_1}}g_1(x,\tau_1)\right)\\&\quad + \beta_3\left(-\lambda\tau_2{\mathrm{e}}^{-\frac{t}{\tau_2}}\left(g_2(x,\tau_2) - g_1(x,\tau_2)\right) - \lambda t{\mathrm{e}}^{-\frac{t}{\tau_2}}g_1(x,\tau_2)\right),
	\end{split}
\end{align}
where
\begin{align*}
	&g_1(x,\tau) := \frac{x\left(\frac{1}{\tau^2}{\mathrm{e}}^{-\frac{x}{\tau}} + (\lambda - \frac{1}{\tau})(\lambda(1 + \frac{x}{\tau}) + \frac{1}{\tau}){\mathrm{e}}^{-\lambda x - \frac{x}{\tau}} - \lambda^2{\mathrm{e}}^{-\lambda x}\right)}{\left({\mathrm{e}}^{-\lambda x}(1 + \lambda x) - 1\right)^2}\\
	&g_2(x,\tau) :=\\ &\qquad\frac{x\left(\frac{x}{\tau^3}{\mathrm{e}}^{-\frac{x}{\tau}} + \left[\lambda\frac{x^2}{\tau^2}(\lambda - \frac{1}{\tau}) + \frac{x}{\tau}(2\lambda^2 - \frac{1}{\tau^2}) + 2\lambda^2\right]{\mathrm{e}}^{-\lambda x - \frac{x}{\tau}} - 2\lambda^2{\mathrm{e}}^{-\lambda x}\right)}{\left({\mathrm{e}}^{-\lambda x}(1 + \lambda x) - 1\right)^2}.
\end{align*}
A comparison of the exponential-monomial terms in \eqref{eq:der_tracking_yield} yields the following result: The terminal direction of the tracking function is determined by the sign of
\begin{align}\label{eq:term_dir_track_y}
	\begin{cases}
		\begin{aligned}
			&-\frac{\sigma^2}{4\lambda^2}\left(1 - {\mathrm{e}}^{-2\lambda t}\right) + \beta_1\lambda\tau_1{\mathrm{e}}^{-\frac{t}{\tau_1}}\\&\qquad + \beta_2\lambda(\tau_1 + t){\mathrm{e}}^{-\frac{t}{\tau_1}}\\&\qquad + \beta_3\lambda(\tau_2 + t){\mathrm{e}}^{-\frac{t}{\tau_2}},
		\end{aligned} & \text{if } \frac{1}{\tau_1} > \lambda,\frac{1}{\tau_2} > \lambda,\\
		-\frac{\sigma^2}{4\lambda^2}\left(1 - {\mathrm{e}}^{-2\lambda t}\right) + \beta_3\lambda(\tau_2 + t){\mathrm{e}}^{-\frac{t}{\tau_2}}, & \text{if } \frac{1}{\tau_2} > \frac{1}{\tau_1} = \lambda,\beta_2 = 0,\\
		-\beta_3, & \text{if } \lambda \ge \frac{1}{\tau_2}, \frac{1}{\tau_1} > \frac{1}{\tau_2},\\
		-\beta_2, & \text{if } \lambda \ge \frac{1}{\tau_1}, \frac{1}{\tau_2} > \frac{1}{\tau_1}, \beta_2 \neq 0,\\
		-\beta_1, & \text{if } \lambda \ge \frac{1}{\tau_1}, \frac{1}{\tau_2} > \frac{1}{\tau_1}, \beta_2 = 0,
	\end{cases}
\end{align}
where a positive (negative) sign indicates that the tracking function is terminally increasing (decreasing). It is terminally decreasing if $\beta_1 = \beta_2 = \beta_3 = 0$.

\section{Flat initial curves}

We begin our analysis by examining the shapes of the term structure in the case of a flat initial curve.\footnote{This case has already been studied in \cite[Proposition 1]{schlogl1997factor}.} By Theorem~\ref{thm:bounds}, both the forward and yield curve admit at most one local extremum; consequently, their shape is fully characterized by the initial and terminal signs of their respective derivatives, which are determined from
\begin{align}\label{l0_linf_const}
	\begin{split}r_0(t) &= \beta_0 + \frac{\sigma^2}{2\lambda^2}\left(1 - {\mathrm{e}}^{-2\lambda t}\right),\\
		r_\infty^{\mathrm{f}}(t) &= \beta_0 - \frac{\sigma^2}{2\lambda^2}\left(1 - {\mathrm{e}}^{-2\lambda t}\right),\\
		r_\infty^{\mathrm{y}}(t) &= \beta_0 - \frac{\sigma^2}{4\lambda^2}\left(1 - {\mathrm{e}}^{-2\lambda t}\right).\end{split}
\end{align}
$\zeta_0$ and $\zeta_\infty^{\mathrm{f/y}}$ clearly do not intersect. Hence, the only attainable shapes for both the forward curve and the yield curve are \texttt{normal}, \texttt{inverse} and \texttt{humped}. This answers \textbf{Q1} and \textbf{Q2}. The decomposition of $\Theta$ is illustrated in \cref{fig:const}.
\begin{figure}
	\centering
	\begin{subfigure}{0.49\textwidth}
		\includegraphics[width=\textwidth]{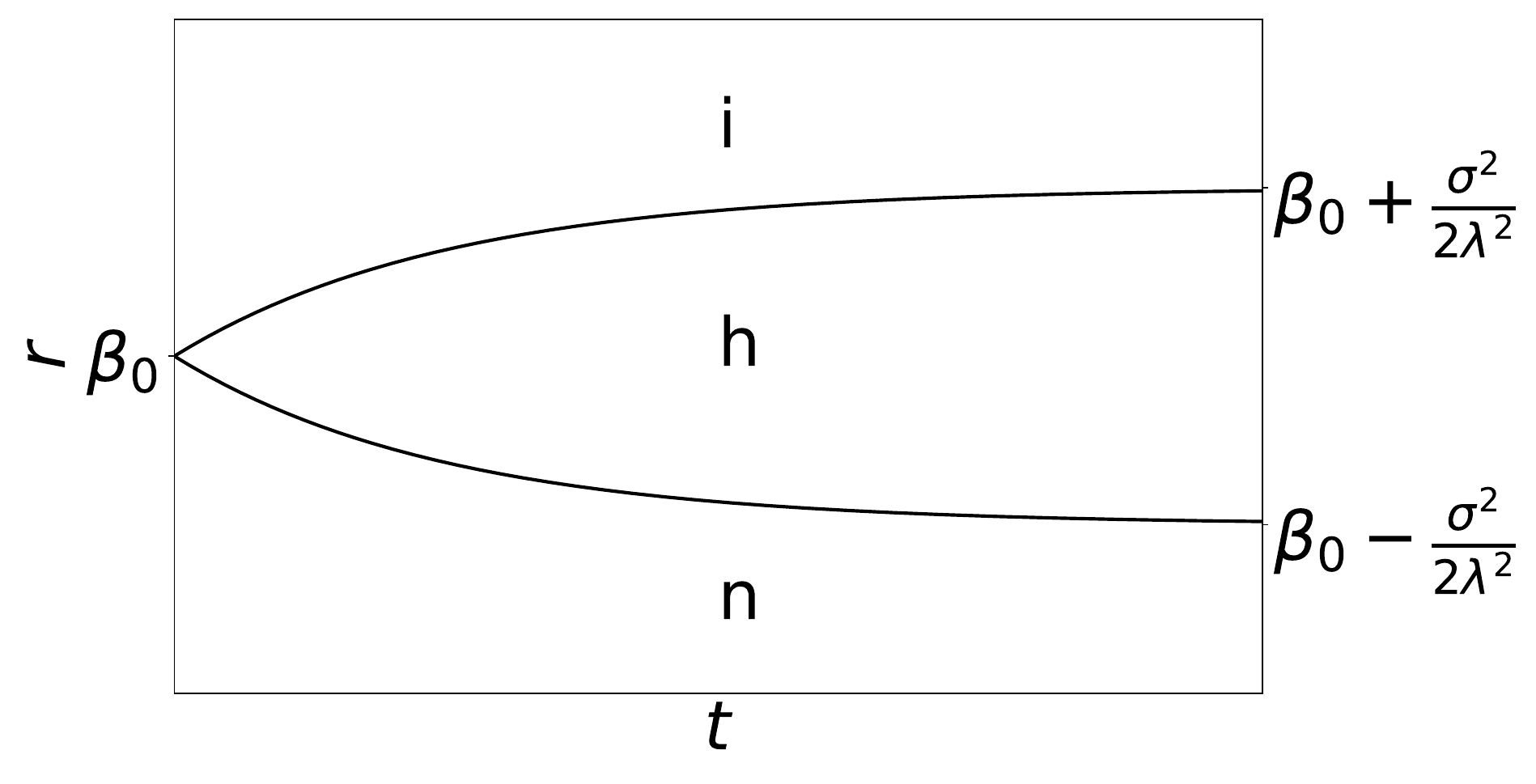}
		\caption{Forward curve.}
	\end{subfigure}
	\hfill
	\begin{subfigure}{0.49\textwidth}
		\includegraphics[width=\textwidth]{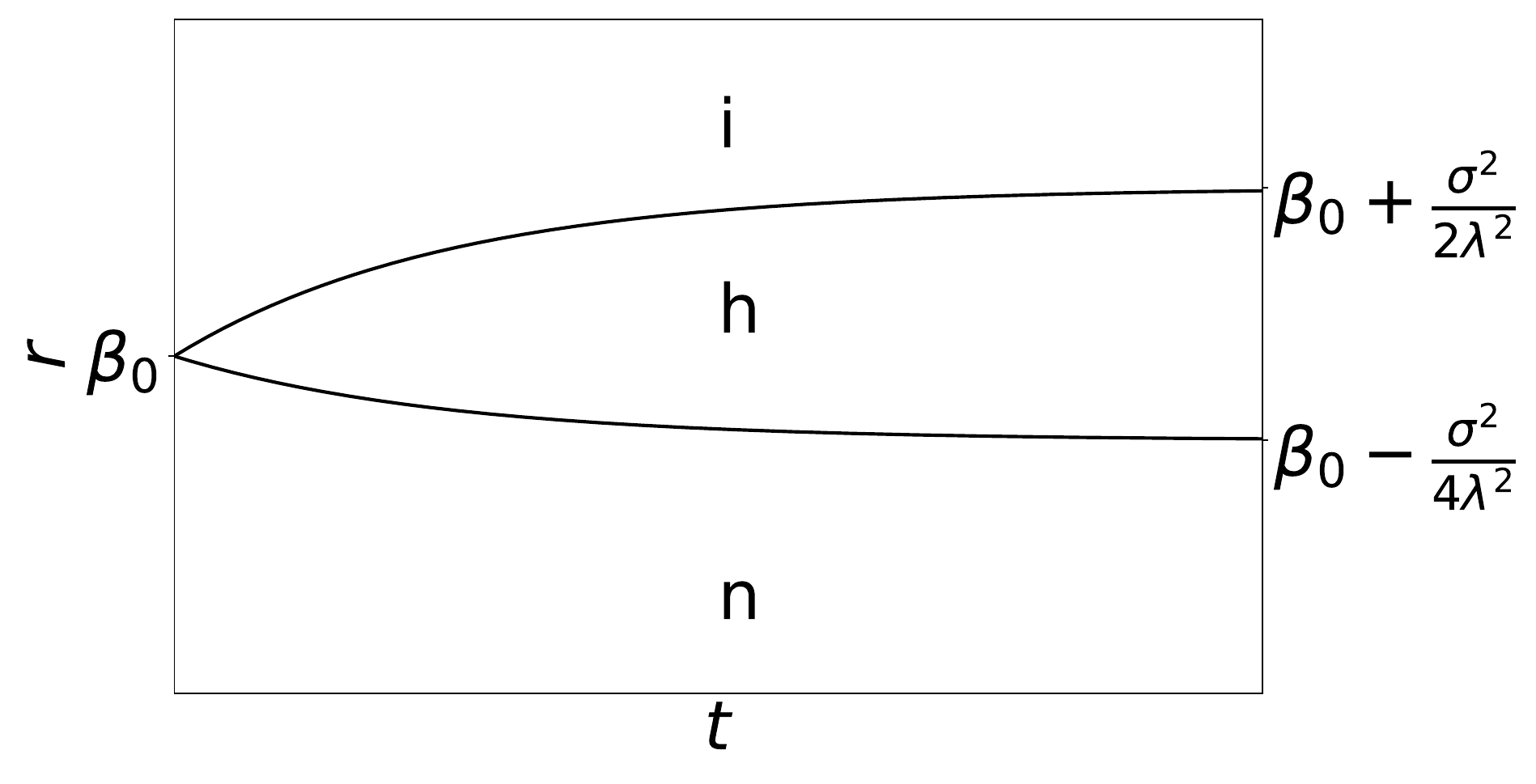}
		\caption{Yield curve.}
	\end{subfigure}
	\caption{Shapes of the term structure for flat initial curves.}
	\label{fig:const}
\end{figure}

\section{Monotone initial curves}

We now assume that the initial curve is monotone \eqref{eq:init_mon}. We will answer the questions \textbf{Q1} and \textbf{Q2}, and start by studying the number of intersections of $\zeta_0$ and $\zeta_\infty^{\mathrm{f/y}}$.
Since
\begin{align*}
	r_0(t) - r_\infty^{\mathrm{f}}(t) &= \begin{cases}
		-\frac{\beta_1}{\tau\lambda}{\mathrm{e}}^{-\frac{t}{\tau}} + \frac{\sigma^2}{\lambda^2}\left(1-{\mathrm{e}}^{-2\lambda t}\right),& \lambda < \frac{1}{\tau},\\
		\frac{\sigma^2}{\lambda^2}\left(1-{\mathrm{e}}^{-2\lambda t}\right),& \lambda = \frac{1}{\tau},
	\end{cases}\\
	r_0(t) - r_\infty^{\mathrm{y}}(t) &= \beta_1\left(\lambda\tau - \frac{1}{\lambda\tau}\right){\mathrm{e}}^{-\frac{t}{\tau}} + \frac{3\sigma^2}{4\lambda^2}\left(1-{\mathrm{e}}^{-2\lambda t}\right)
\end{align*}
are polynomials of an ECT-system, $({\mathrm{e}}^{-\frac{t}{\tau}}, {\mathrm{e}}^{-2\lambda t}, 1)$ or $({\mathrm{e}}^{-2\lambda t}, {\mathrm{e}}^{-\frac{t}{\tau}}, 1)$, there are at most two intersections, respectively. With only positive factors in the polynomial above, there cannot be any intersections and if $\beta_1 > 0$ resp. $\beta_1\left(\lambda\tau - \frac{1}{\lambda\tau}\right) < 0$, there must be exactly one.

\subsection{Segmentation for $\lambda = \frac{1}{\tau}$ or $2\lambda = \frac{1}{\tau}$}

In these cases, forward and yield curve admit at most one local extremum; see Remark~\ref{rem:bounds}. Hence, their shape is solely determined by the initial and terminal signs of their respective derivatives. Choosing $\lambda = \frac{1}{\tau}$ yields exactly the same functions $r_0$ and $r_\infty^{\mathrm{f/y}}$ as in the case of flat initial curves. Therefore, the attainable shapes are \texttt{normal, inverse} and \texttt{humped}. The resulting decomposition of $\Theta$ is illustrated in \cref{fig:const}.

With $2\lambda = \frac{1}{\tau}$, $\zeta_0$ and $\zeta_\infty^{\mathrm{f/y}}$ intersect exactly once, respectively,  if and only if $\beta_1 > 0$. Hence, for $\beta_1 < 0$ the attainable shapes are \texttt{normal, inverse} and \texttt{humped}. If $\beta_1 > 0$, the shapes \texttt{normal, inverse, humped} and \texttt{dipped} are attainable. Also, there is a \texttt{flat} curve, where $\zeta_0$ and $\zeta_\infty^{\mathrm{f/y}}$ intersect. The decomposition of $\Theta$ is illustrated in \cref{fig:mon_dep_1_2}.
\begin{figure}
	\centering
	\begin{subfigure}{0.49\textwidth}
		\includegraphics[width=\textwidth]{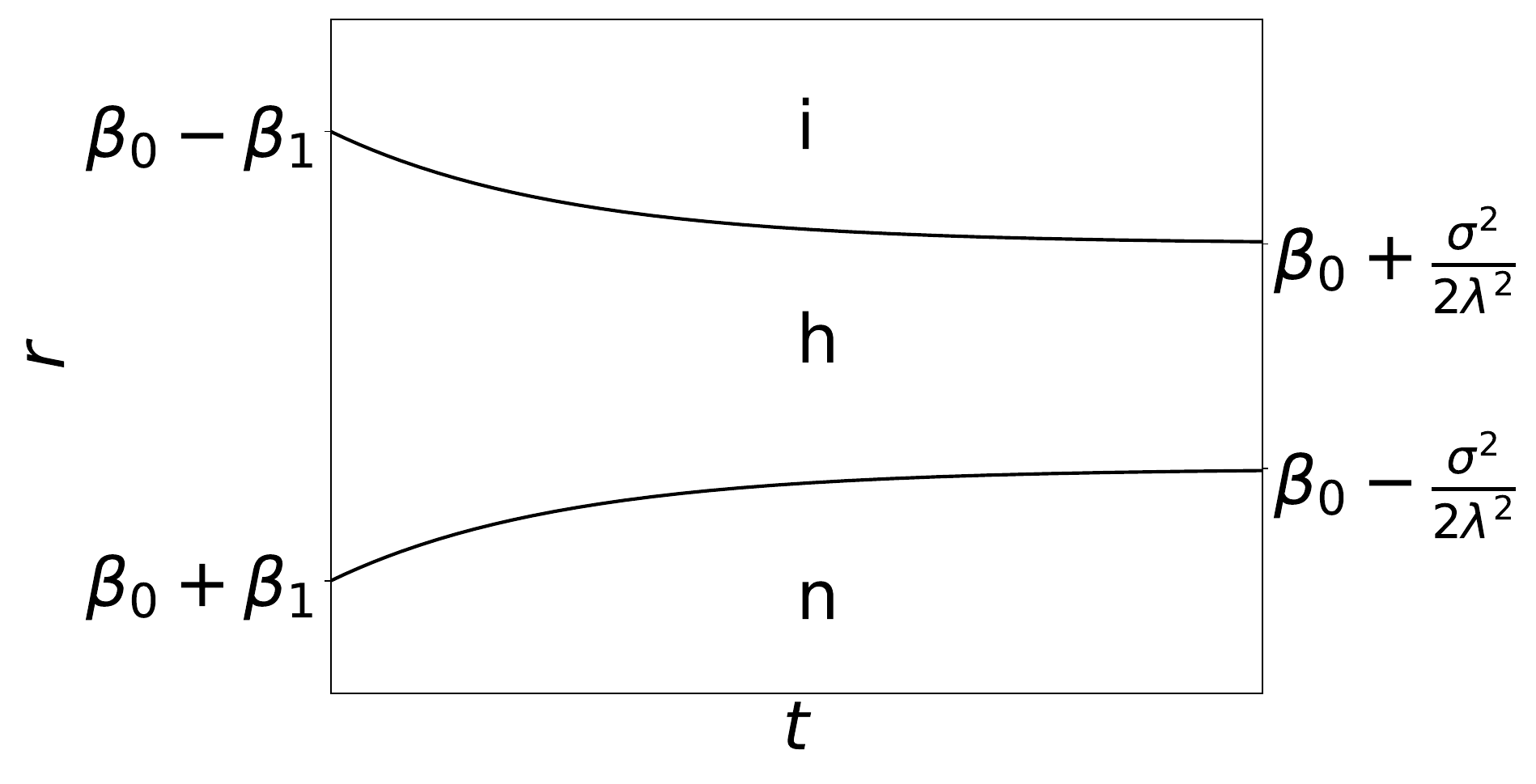}
		\caption{Forward curve with $\beta_1 < 0$.}
	\end{subfigure}
	\hfill
	\begin{subfigure}{0.49\textwidth}
		\includegraphics[width=\textwidth]{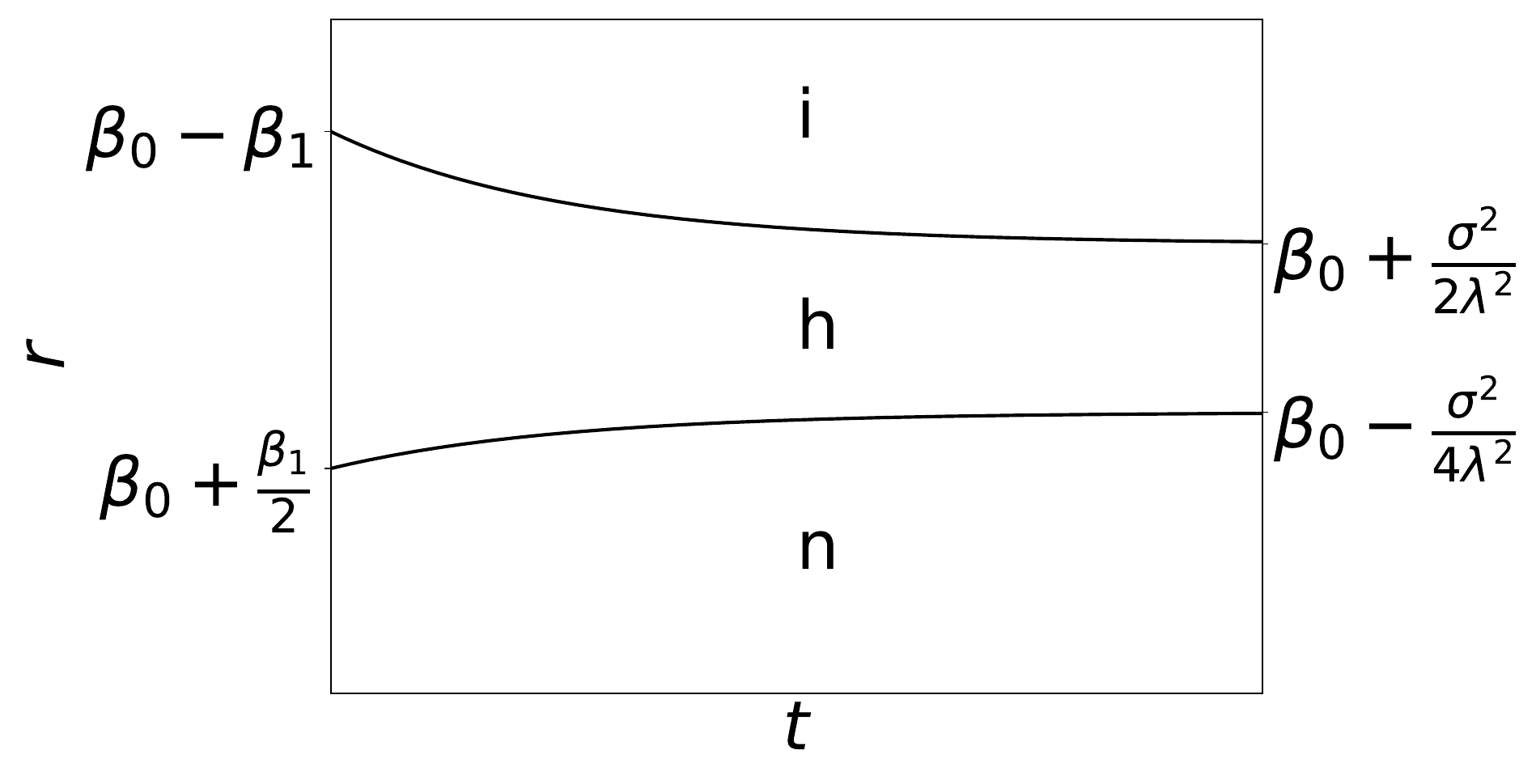}
		\caption{Yield curve with $\beta_1 < 0$.}
	\end{subfigure}
	
	\begin{subfigure}{0.49\textwidth}
		\includegraphics[width=\textwidth]{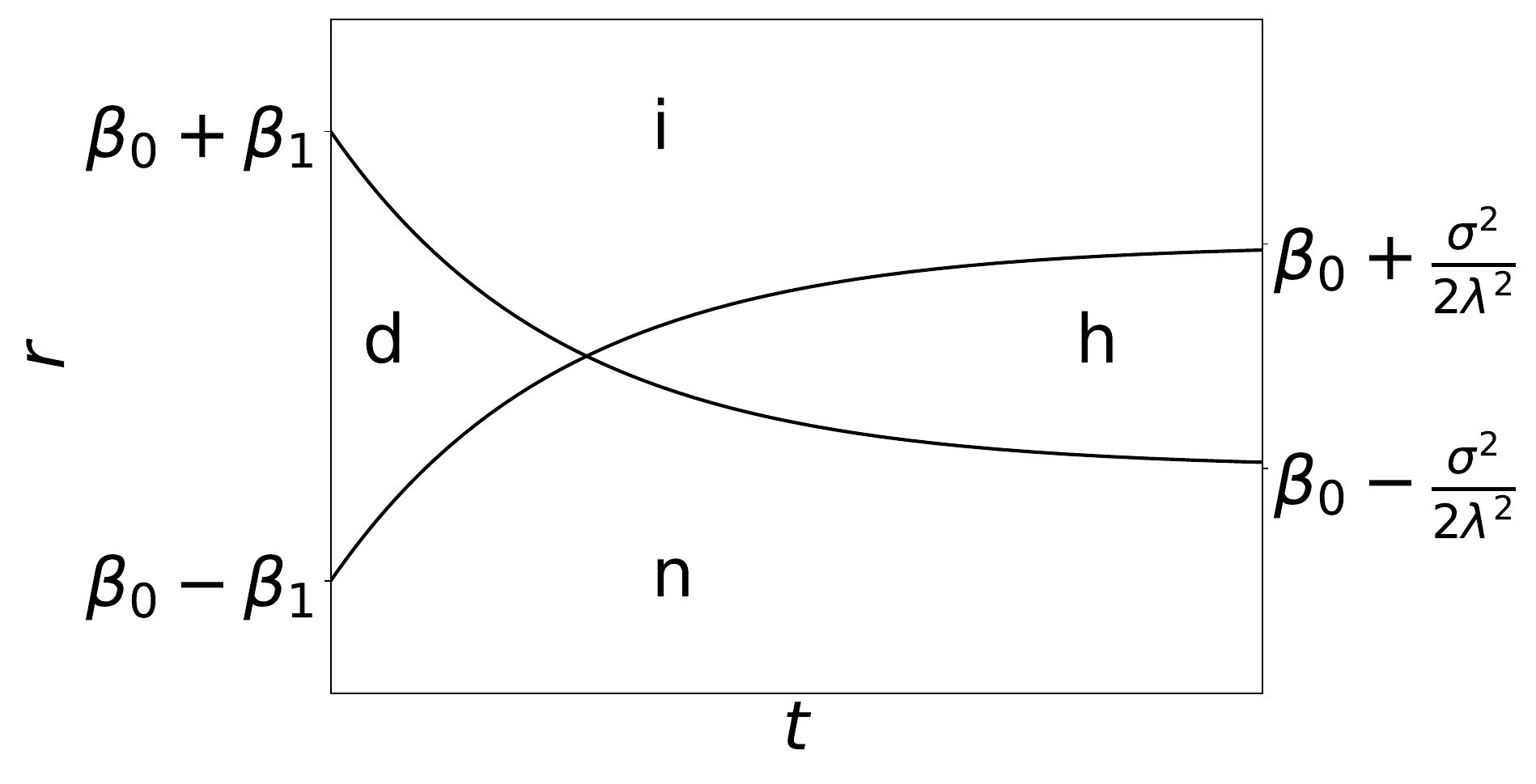}
		\caption{Forward curve with $\beta_1 > 0$.}
	\end{subfigure}
	\hfill
	\begin{subfigure}{0.49\textwidth}
		\includegraphics[width=\textwidth]{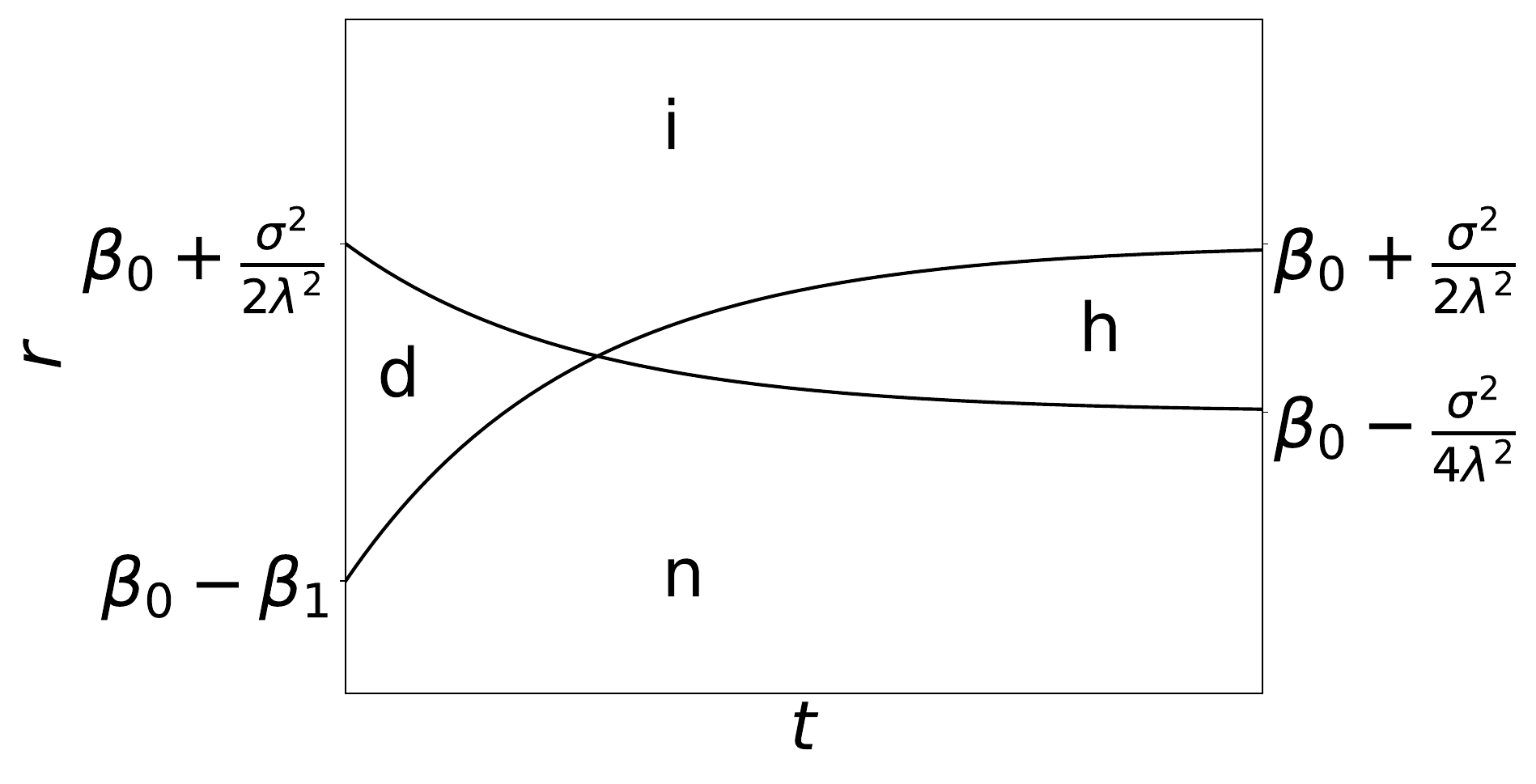}
		\caption{Yield curve with $\beta_1 > 0$.}
	\end{subfigure}
	\caption{Shapes of the term structure for monotone initial curves with $2\lambda = \frac{1}{\tau}$.}
	\label{fig:mon_dep_1_2}
\end{figure}

\subsection{Segmentation for $2\lambda < \frac{1}{\tau}$}	

When we are not dealing with one of the special cases, we need to additionally consider the envelope. By plugging $f_0^M$ into \eqref{eq:env_2} we obtain
\begin{align}\label{eq:env_monotone_forward}
	x = \underbrace{\frac{1}{2\lambda - \frac{1}{\tau}}}_{=: \alpha}\log\Bigg(\underbrace{\frac{\sigma^2\tau\left(1-{\mathrm{e}}^{-2\lambda t}\right)\exp\left(\frac{t}{\tau}\right)}{\beta_1\left(\frac{1}{\tau} - \lambda\right)}}_{=:\gamma(t)}\Bigg)
\end{align}
The numerator of $\gamma$ is positive for $t > 0$, hence, the existence of solutions to \eqref{eq:env_forward} for some $t > 0$ depends on the signs of $\beta_1\left(\frac{1}{\tau} - \lambda\right)$ and $\frac{1}{\tau} - 2\lambda$. With $2\lambda < \frac{1}{\tau}$, $\alpha$ is negative. Thus, if and only if $\gamma$ takes a value in $(0,1)$, a solution to \eqref{eq:env_forward} exists. The sets containing the values of $t$, for which there is a corresponding point on the envelope will be denoted with
\begin{align*}
	T^{\mathrm{f}} &:= \{t > 0\ |\ \exists x > 0,r\in\RR : (x,t,r)\text{ solves }\eqref{eq:env_forward}\}\\
	& = \{t > 0\ |\ \exists x > 0 : (x,t)\text{ solves }\eqref{eq:env_2}\}\\
	T^{\mathrm{y}} &:= \{t > 0\ |\ \exists x > 0,r\in\RR : (x,t,r)\text{ solves }\eqref{eq:env_yield}\}.
\end{align*}

For $\beta_1 < 0$, $\gamma$ is negative. Hence, there are no solutions to \eqref{eq:env_forward}. The shape of the forward curve is fully characterized by the initial and terminal signs of its derivative. Since $\zeta_0$ and $\zeta_\infty^{\mathrm{f}}$ do not intersect, the attainable shapes are \texttt{normal, inverse} and \texttt{humped}. By \cite[Thm.~3.3]{keller2021classification}, the yield curve cannot attain any more shapes. Since also $\zeta_0$ and $\zeta_\infty^{\mathrm{y}}$ do not intersect, the set of attainable shapes is identical.

For $\beta_1 > 0$, $\gamma$ is positive and since its numerator is strictly increasing from 0 up to $\infty$, solutions to \eqref{eq:env_forward} exist for every $t$ in some bounded interval $T^{\mathrm{f}}$ starting at 0. The envelope $\eta^{\mathrm{f}}$ approaches $\zeta_\infty^{\mathrm{f}}$ on its left end and $\zeta_0$ on its right end, in the sense that $r\rightarrow r_\infty^{\mathrm{f}}(0^+)$ as $t\downarrow0$ (and $x\uparrow\infty$) and $r \rightarrow r_0^{\mathrm{f}}(\sup T^{\mathrm{f}})$ as $t\uparrow\sup T^{\mathrm{f}}$ (and $x\downarrow 0$) for the solutions $(x,t,r)$ of \eqref{eq:env_forward}.
There is exactly one intersection point $(t^*,r^*)$ of $\zeta_0$ and $\zeta_\infty^{\mathrm{f}}$, and it must hold
\begin{align*}
	0 = r_0(t^*) - r_\infty^{\mathrm{f}}(t^*) = \frac{{f_0^M}'(t^*)}{\lambda} + \frac{\sigma^2}{\lambda^2}\left(1 - {\mathrm{e}}^{-2\lambda t^*}\right),
\end{align*}
and therefore
\begin{align*}
	\frac{\sigma^2\tau\left(1-{\mathrm{e}}^{-2\lambda t^*}\right)\exp\left(\frac{t^*}{\tau}\right)}{\beta_1\left(\frac{1}{\tau} - \lambda\right)} = \frac{\lambda}{\frac{1}{\tau} - \lambda} \in (0,1).
\end{align*}
Hence, $t^*$ is an element of $T^{\mathrm{f}}$. By Lemma~\ref{lem:por_finite_forward} and the following remark, there are no points of regression on the envelope and thus the tracking function $x\mapsto r_x^{\mathrm{f}}(t)$ changes its direction exactly once, whenever $t\in T^{\mathrm{f}}$. As a result, the envelope does not intersect with $\zeta_0$ or $\zeta_\infty^{\mathrm{f}}$. Since the envelope is the graph of a continuous function and $r_0(0^+) < r_\infty^{\mathrm{f}}(0^+)$, it must also hold $r_0(t) < r$ and $r_\infty^{\mathrm{f}}(t) < r$ for every $(t,r)$ on the envelope. We conclude that the shapes \texttt{normal, inverse, humped, dipped, dh} are attainable.
The yield curve cannot attain any more shapes \cite[Thm.~3.3]{keller2021classification} and, again, the envelope possesses no points of regression and \eqref{eq:env_yield} has at most one solution for every $t > 0$. The range of $t$ for these solutions can be determined by studying the initial and terminal direction of the tracking function. Whenever they do not match, the tracking function changes direction in a corresponding point on the envelope. The initial direction is determined by the sign of
\begin{align}\label{eq:init_direction_mon}
	\underbrace{\frac{\beta_1}{\tau}\left(\frac{1}{\lambda\tau} - 1\right)}_{> 0}\exp\left(-\frac{t}{\tau}\right) - \frac{\sigma^2}{\lambda}\left(1 - {\mathrm{e}}^{-2\lambda t}\right)
\end{align}
and the terminal direction is determined by the sign of
\begin{align}\label{eq:term_direction_mon}
	\beta_1\lambda\tau\exp\left(-\frac{t}{\tau}\right) - \frac{\sigma^2}{4\lambda^2}\left(1 - {\mathrm{e}}^{-2\lambda t}\right).
\end{align}
Both expressions change their sign exactly once (from $\pp$ to $\mm$). Since $2\lambda < 1/\tau$ implies
\begin{align*}
	4\lambda^2\tau < -\frac{4}{3}\lambda\left(\lambda\tau - \frac{1}{\lambda\tau}\right) < \frac{1}{\tau}\left(\frac{1}{\lambda\tau} - 1\right),
\end{align*}
the terminal direction changes first, then $\zeta_0$ and $\zeta_\infty^{\mathrm{y}}$ intersect and lastly, the initial direction of the tracking function changes. Consequently, the yield curve also attains the shapes \texttt{normal, inverse, humped, dipped, dh}. An illustration is provided in \cref{fig:mon_indep_1_2}.
\begin{figure}
	\centering
	\begin{subfigure}{0.49\textwidth}
		\includegraphics[width=\textwidth]{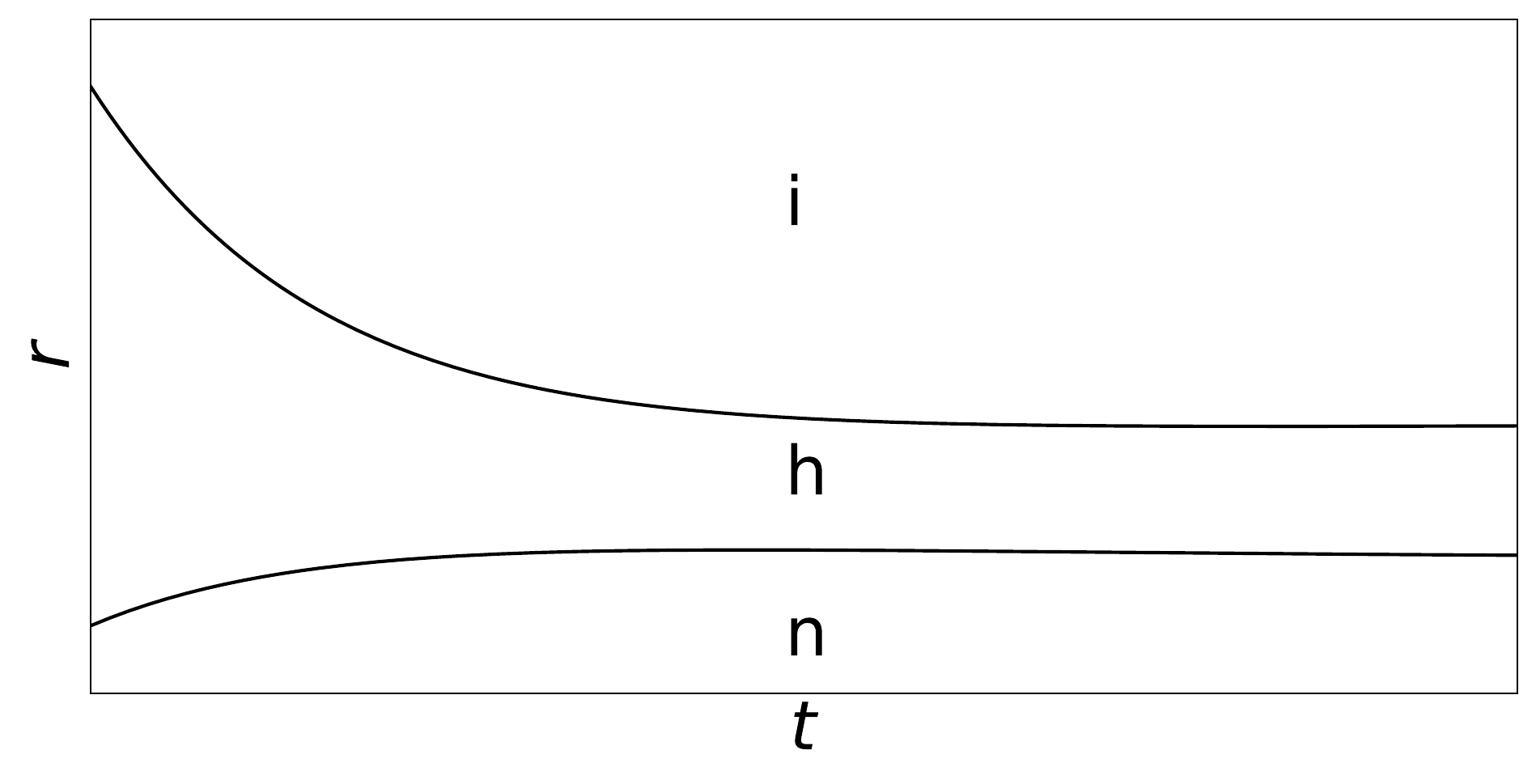}
		\caption{Forward curve with $\beta_1 < 0$.}
	\end{subfigure}
	\hfill
	\begin{subfigure}{0.49\textwidth}
		\includegraphics[width=\textwidth]{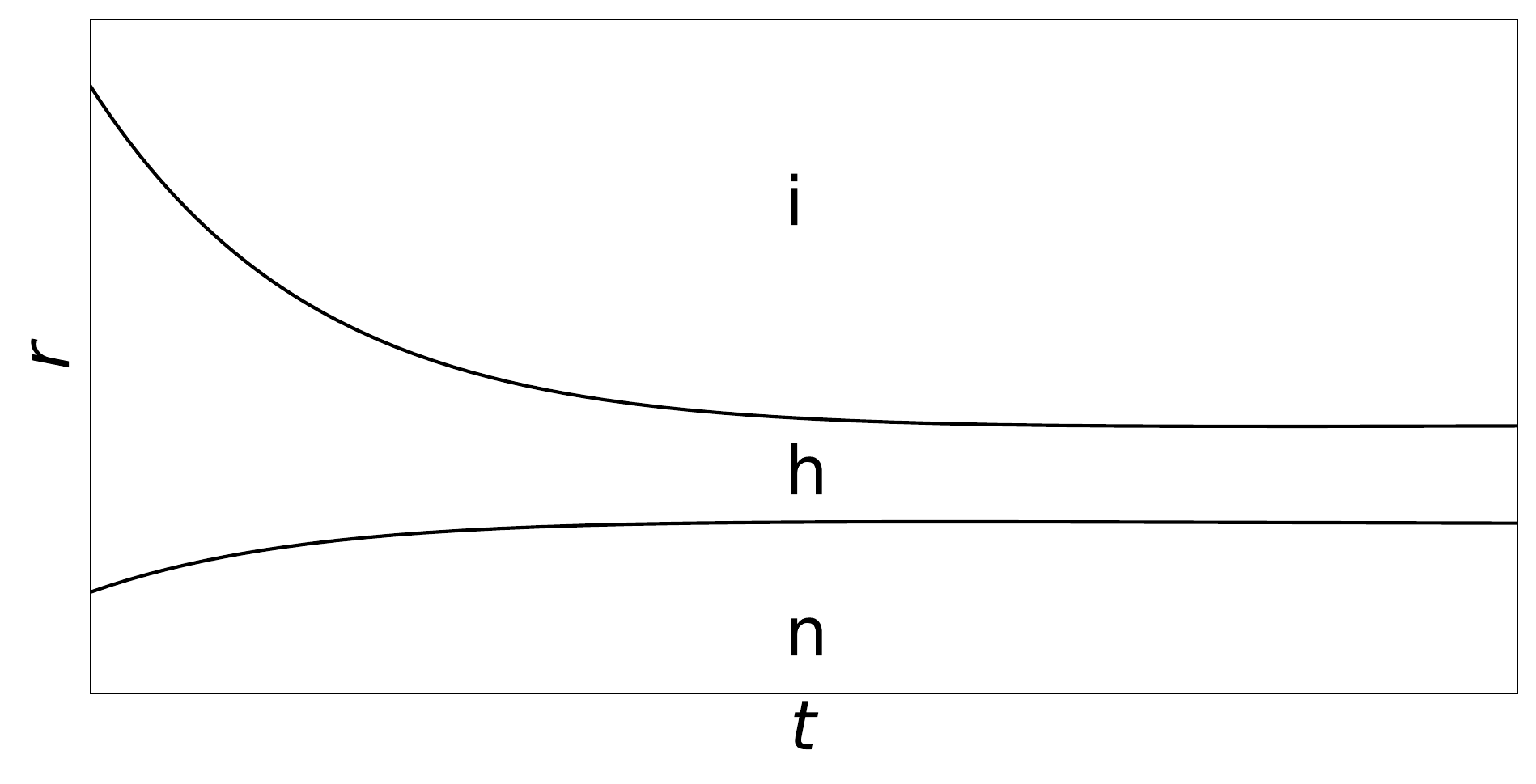}
		\caption{Yield curve with $\beta_1 < 0$.}
	\end{subfigure}
	
	\begin{subfigure}{0.49\textwidth}
		\includegraphics[width=\textwidth]{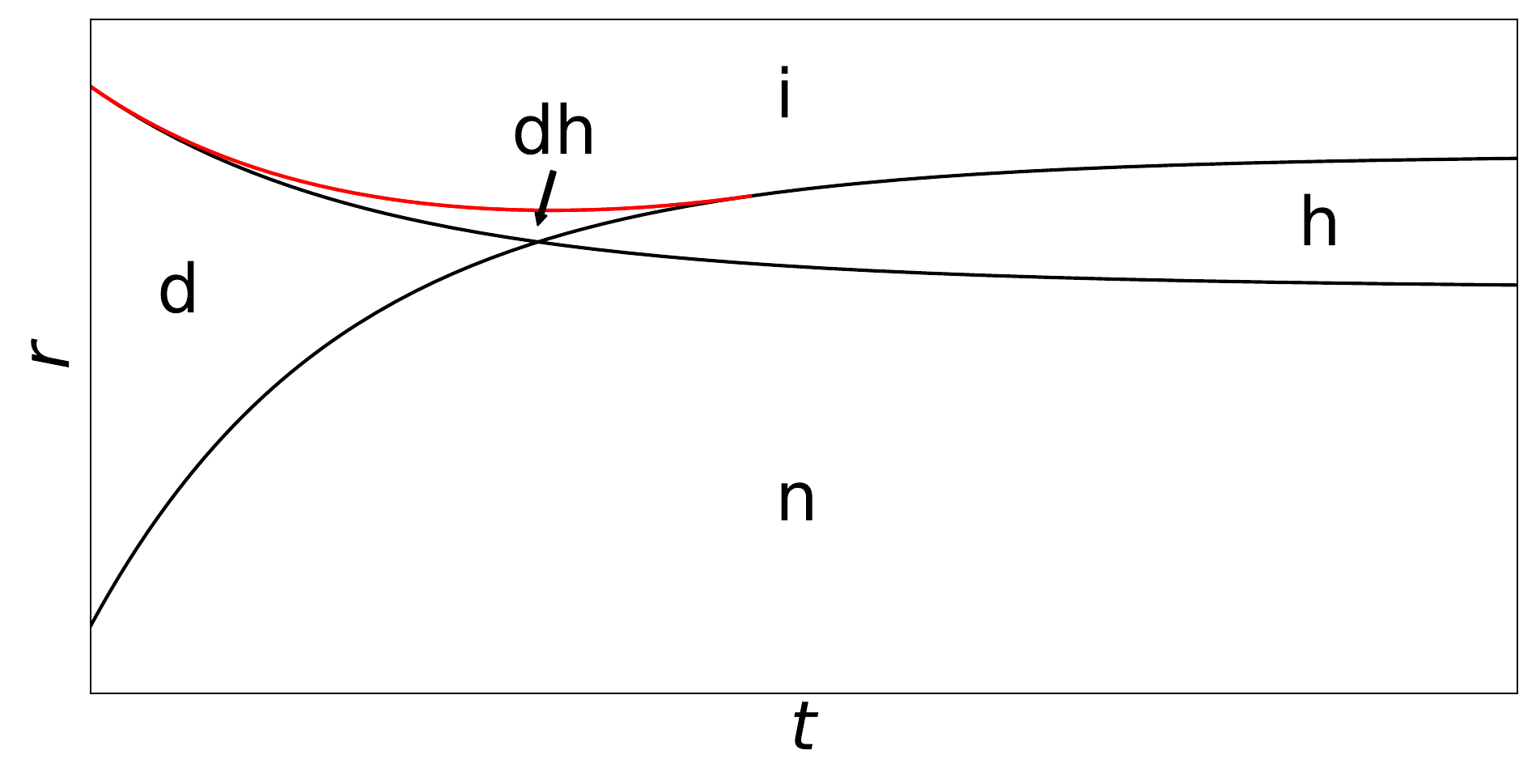}
		\caption{Forward curve with $\beta_1 > 0$.}
	\end{subfigure}
	\hfill
	\begin{subfigure}{0.49\textwidth}
		\includegraphics[width=\textwidth]{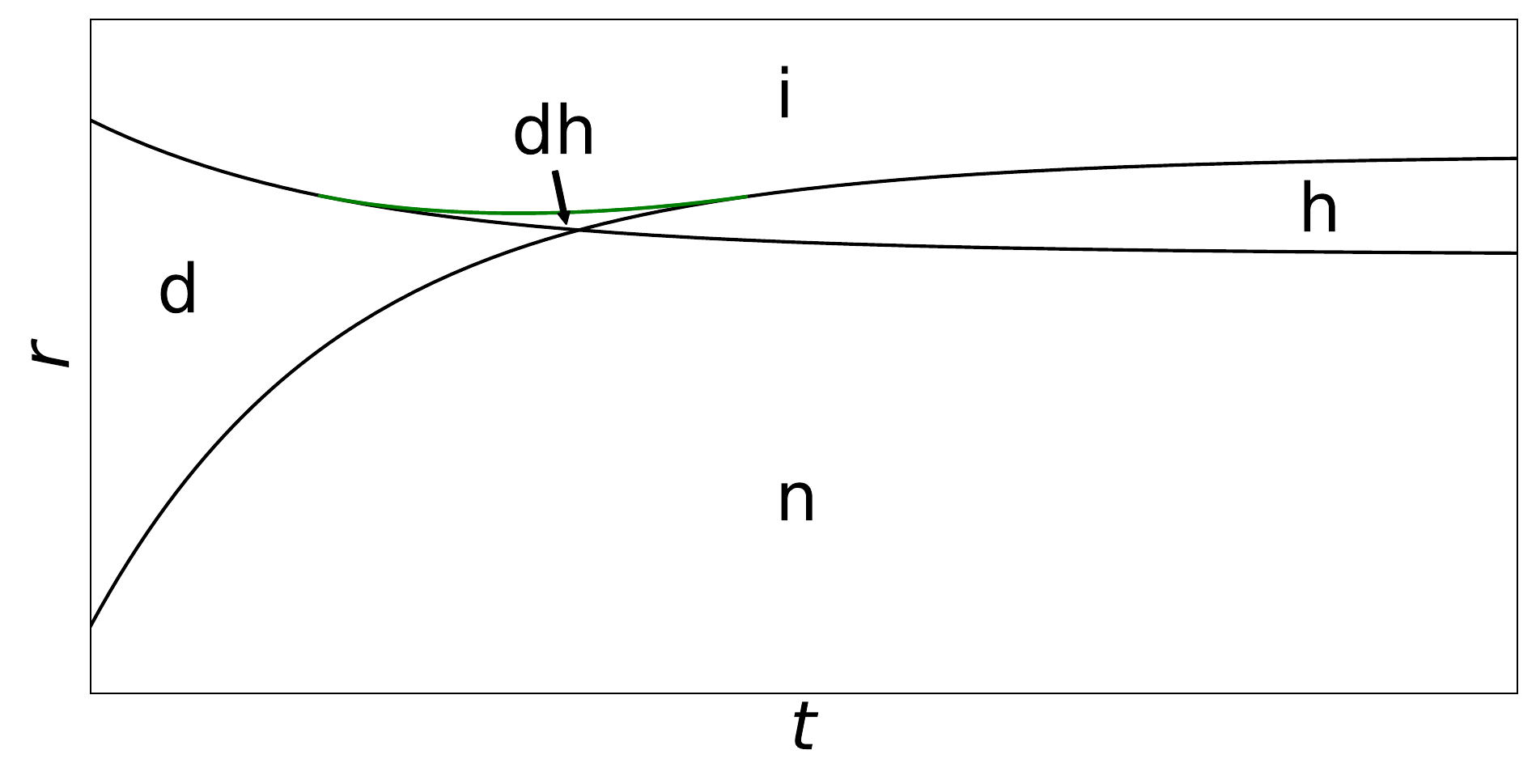}
		\caption{Yield curve with $\beta_1 > 0$.}
	\end{subfigure}
	\caption{Shapes of the term structure for monotone initial curves with $2\lambda < \frac{1}{\tau}$.}
	\label{fig:mon_indep_1_2}
\end{figure}

\subsection{Segmentation for $\lambda < \frac{1}{\tau} < 2\lambda$}

In  this case $\alpha$ is positive. Hence, the envelope $\eta^{\mathrm{f}}$ is non--empty if and only if $\gamma$ takes values in $(1,\infty)$.

For $\beta_1 < 0$, $\gamma$ is negative. Hence, there are no solutions to \eqref{eq:env_forward}. By a similar reasoning as in the case $2\lambda < \frac{1}{\tau}$, the shapes \texttt{normal, inverse} and \texttt{humped} are attainable for forward and yield curve.

For $\beta_1 > 0$, $\gamma$ is positive and solutions to \eqref{eq:env_forward} exist for every $t$ in an interval $T^{\mathrm{f}}$, which is unbounded and away from 0. The envelope $\eta^{\mathrm{f}}$ approaches $\zeta_0$ on its left end and $\zeta_\infty^{\mathrm{f}}$ asymptotically. There is exactly one intersection point $(t^*, r^*)$ of $\zeta_0$ and $\zeta_\infty^{\mathrm{f}}$ and since 
\begin{align*}
	\frac{\lambda}{\frac{1}{\tau} - \lambda} \in (1,\infty),
\end{align*}
$t^*$ is again an element of $T^{\mathrm{f}}$. Similar to the case $2\lambda < \frac{1}{\tau}$, it must hold $r < r_0(t)$ and $r < r_\infty^{\mathrm{f}}$ for every $(t,r)$ on the envelope. We conclude that the shapes \texttt{normal, inverse, humped, dipped, hd} are attainable.
To determine the shapes of the yield curve we study the initial and terminal direction of the tracking function. Still, the expressions \eqref{eq:init_direction_mon} and \eqref{eq:term_direction_mon} change their sign exactly once (from $\pp$ to $\mm$) and since
\begin{align*}
	\frac{1}{\tau}\left(\frac{1}{\lambda\tau} - 1\right) < -\frac{4}{3}\lambda\left(\lambda\tau - \frac{1}{\lambda\tau}\right) < 4\lambda^2\tau,
\end{align*}
the initial direction changes first, then $\zeta_0$ and $\zeta_\infty^{\mathrm{y}}$ intersect and lastly, the terminal direction of the tracking function changes. Consequently, the yield curve also attains the shapes \texttt{normal, inverse, humped, dipped, hd}. An illustration is provided in \cref{fig:mon_indep_3_4}.
\begin{figure}
	\centering
	\begin{subfigure}{0.49\textwidth}
		\includegraphics[width=\textwidth]{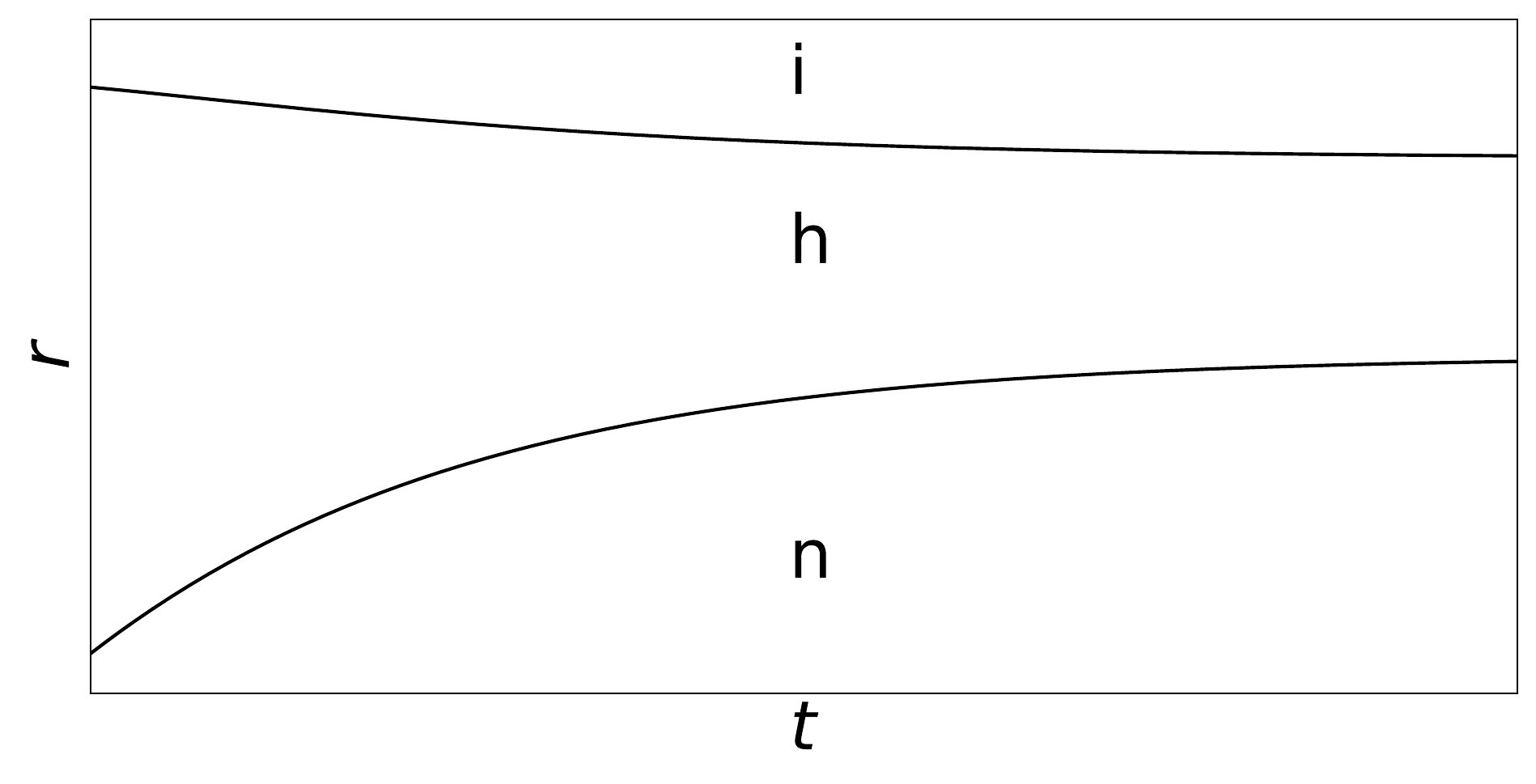}
		\caption{Forward curve with $\beta_1 < 0$.}
	\end{subfigure}
	\hfill
	\begin{subfigure}{0.49\textwidth}
		\includegraphics[width=\textwidth]{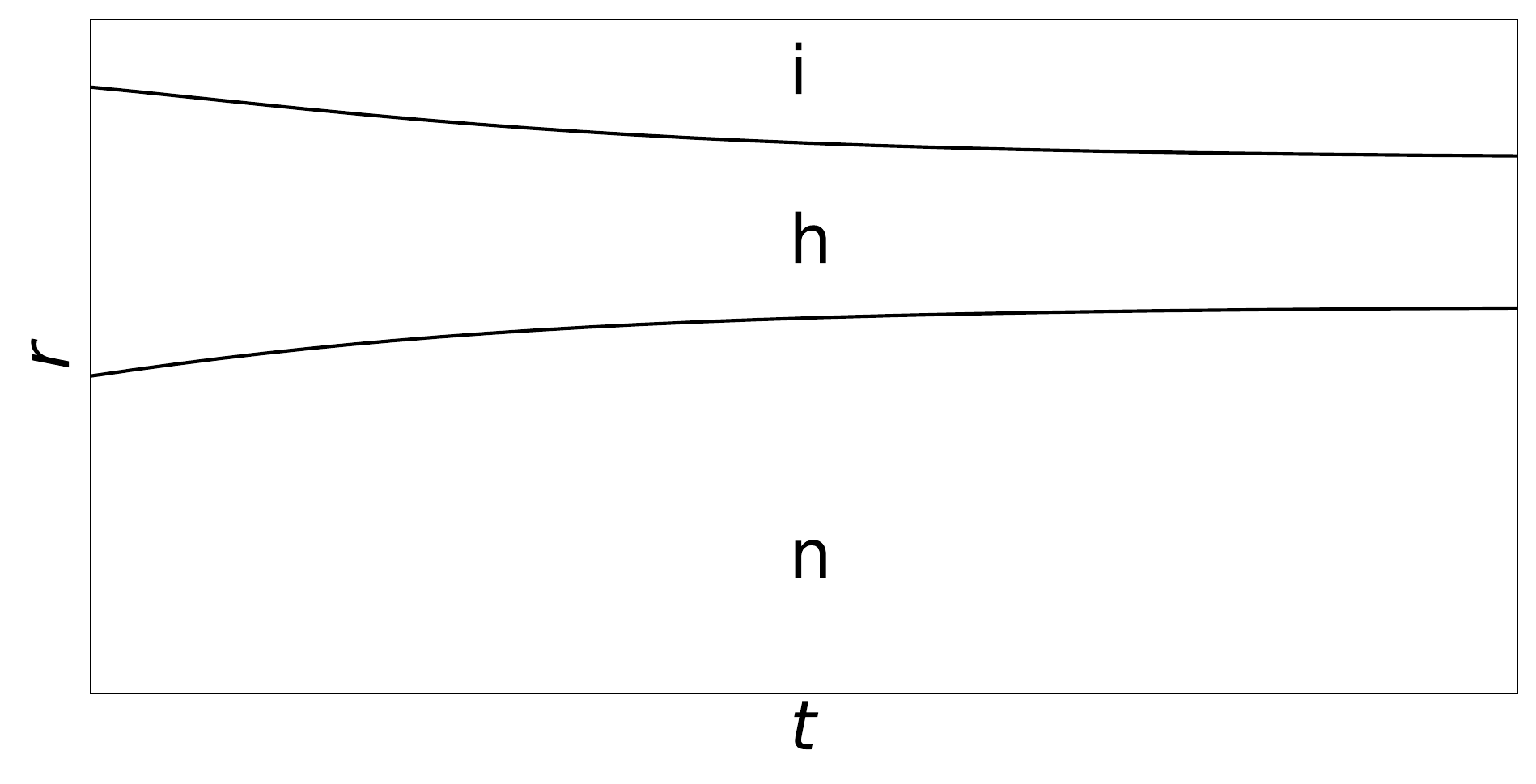}
		\caption{Yield curve with $\beta_1 < 0$.}
	\end{subfigure}
	
	\begin{subfigure}{0.49\textwidth}
		\includegraphics[width=\textwidth]{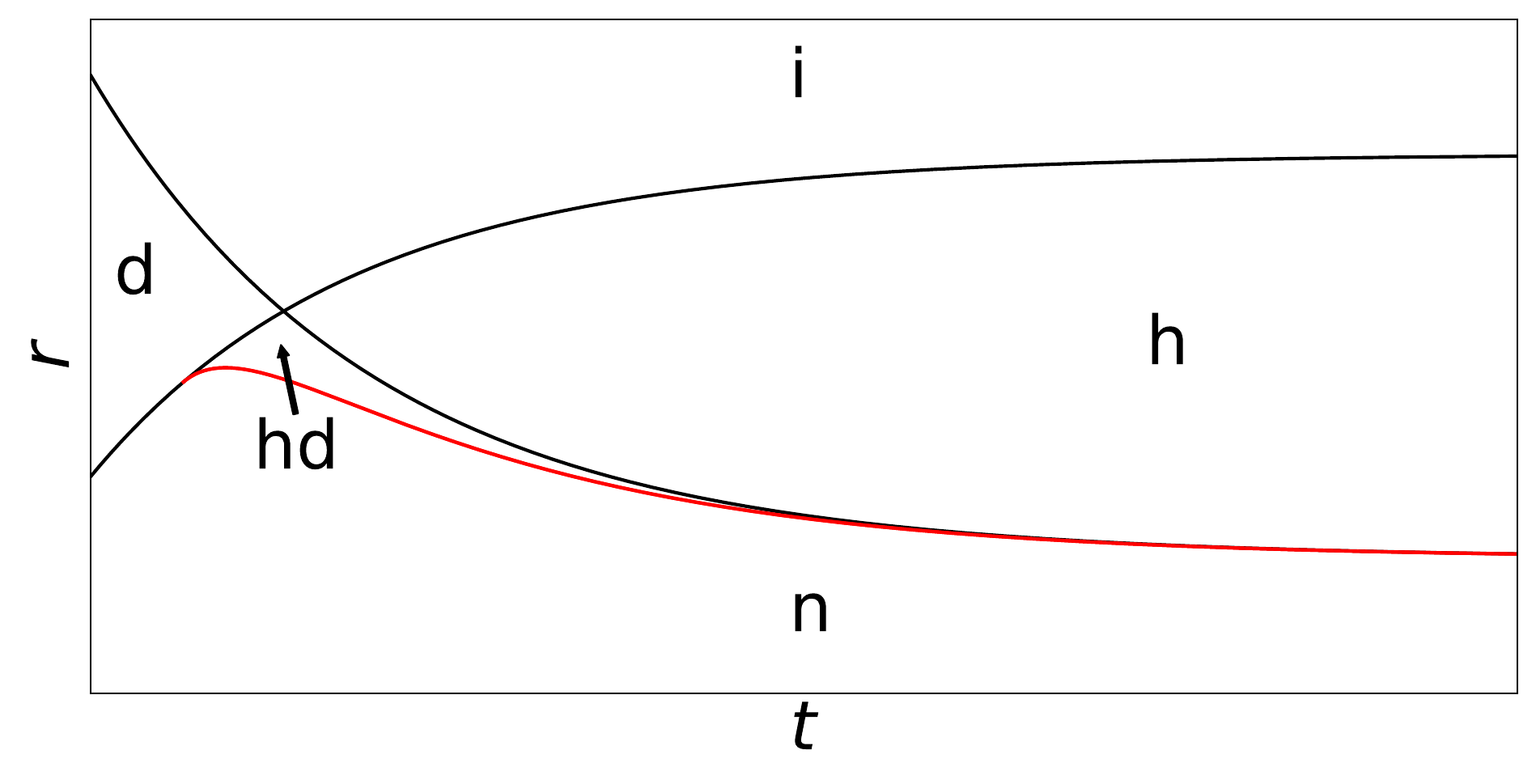}
		\caption{Forward curve with $\beta_1 > 0$.}
	\end{subfigure}
	\hfill
	\begin{subfigure}{0.49\textwidth}
		\includegraphics[width=\textwidth]{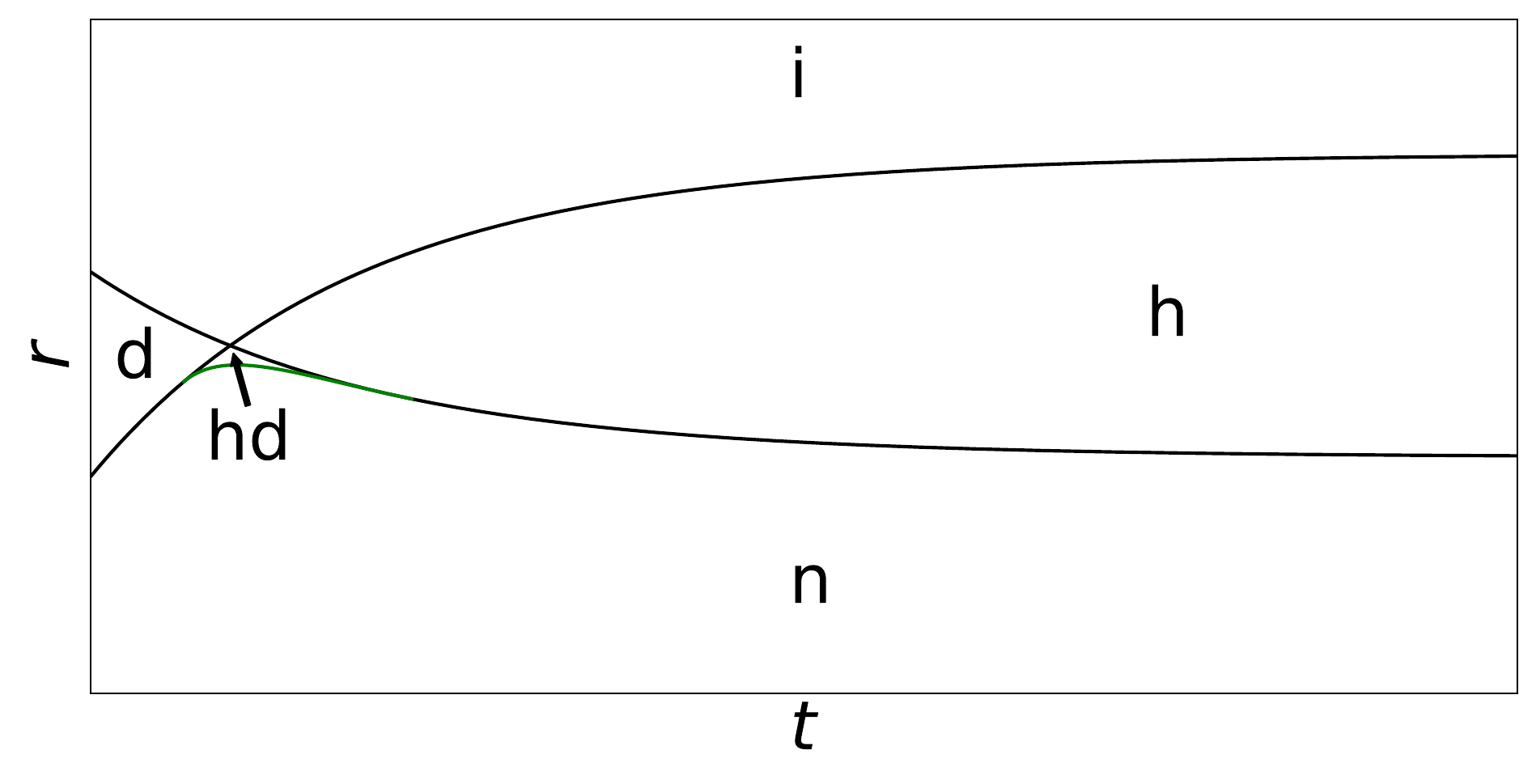}
		\caption{Yield curve with $\beta_1 > 0$.}
	\end{subfigure}
	\caption{Shapes of the term structure for monotone initial curves with $\lambda < \frac{1}{\tau} < 2\lambda$.}
	\label{fig:mon_indep_3_4}
\end{figure}

\subsection{Segmentation for $\frac{1}{\tau} < \lambda$}

In  this case, $\alpha$ is positive. Hence, the envelope $\eta^{\mathrm{f}}$ is non--empty if and only if $\gamma$ takes values in $(1,\infty)$.

For $\beta_1 > 0$, $\gamma$ is negative. Therefore, there are no solutions to \eqref{eq:env_forward}. The shape of the forward curve is fully characterized by the initial and terminal sign of its derivative. The terminal sign is always $\mm$. Thus, the attainable shapes are \texttt{inverse} and \texttt{humped}. The yield curve cannot attain  more complex shapes and since $\zeta_0$ and $\zeta_\infty^{\mathrm{y}}$ are non--empty and do not intersect, the shapes \texttt{normal, inverse} and \texttt{humped} are attainable.

For $\beta_1 < 0$, $\gamma$ is positive and solutions to \eqref{eq:env_forward} exist for every $t$ in an interval $T^{\mathrm{f}}$, which is unbounded and away from 0. The set $\zeta_\infty^{\mathrm{f}}$ is empty. The envelope approaches $\zeta_0$ on its left end and does not intersect with $\zeta_0$. The terminal sign of the derivative of the forward curve is always $\pp$. Thus, the shapes \texttt{normal, dipped} and \texttt{hd} are attainable. 
Terminally, the tracking function is always increasing in the yield case. Its initial direction changes from increasing to decreasing when \eqref{eq:init_direction_mon} changes its sign, so the envelope exists for every $t$ in an unbounded interval $T^{\mathrm{y}}$, which is away from 0. There is exactly one intersection point $(t^*, r^*)$ of $\zeta_0$ and $\zeta_\infty$ and since
\begin{align*}
	\frac{1}{\tau}\left(\frac{1}{\lambda\tau} - 1\right) < \frac{4}{3}\lambda\left(\lambda\tau - \frac{1}{\lambda\tau}\right)\quad\Longleftrightarrow\quad \frac{1}{\tau} < \lambda,
\end{align*}
it must hold $t^*\in T^{\mathrm{y}}$. Consequently, the yield curve attains the shapes \texttt{normal, inverse, humped, dipped} and \texttt{hd}. An illustration is provided in \cref{fig:mon_indep_5_6}.
\begin{figure}
	\centering
	\begin{subfigure}{0.49\textwidth}
		\includegraphics[width=\textwidth]{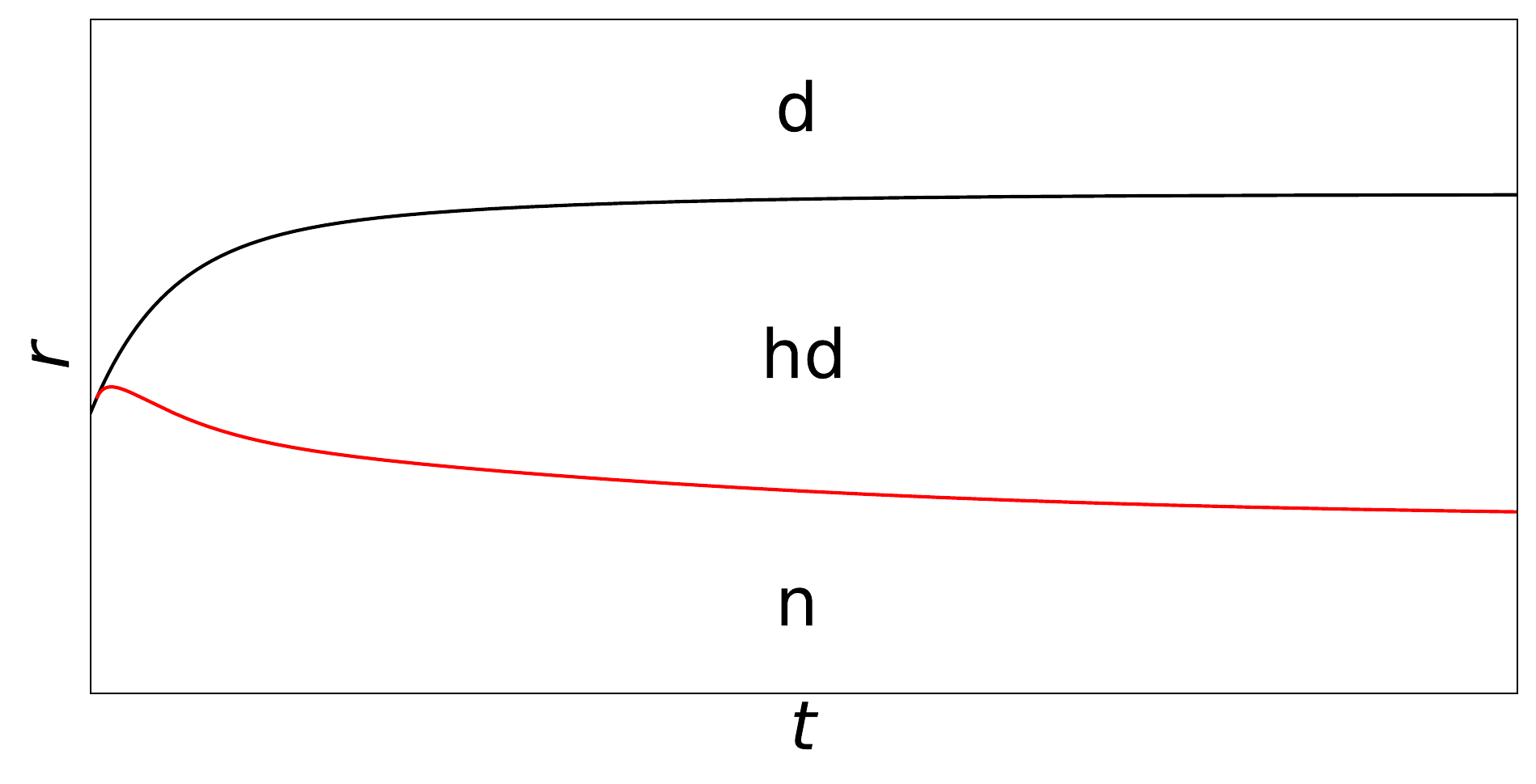}
		\caption{Forward curve with $\beta_1 < 0$.}
	\end{subfigure}
	\hfill
	\begin{subfigure}{0.49\textwidth}
		\includegraphics[width=\textwidth]{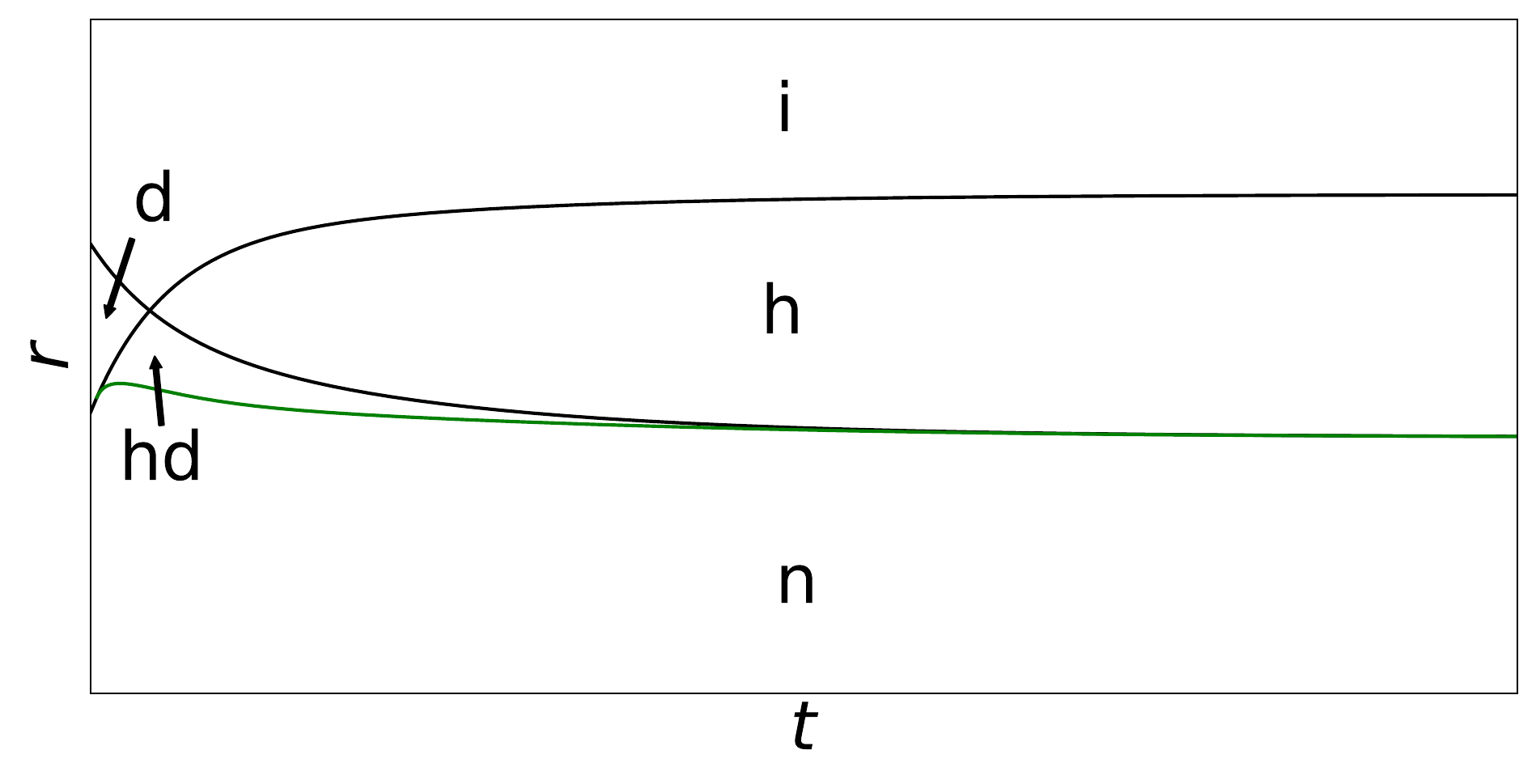}
		\caption{Yield curve with $\beta_1 < 0$.}
	\end{subfigure}
	
	\begin{subfigure}{0.49\textwidth}
		\includegraphics[width=\textwidth]{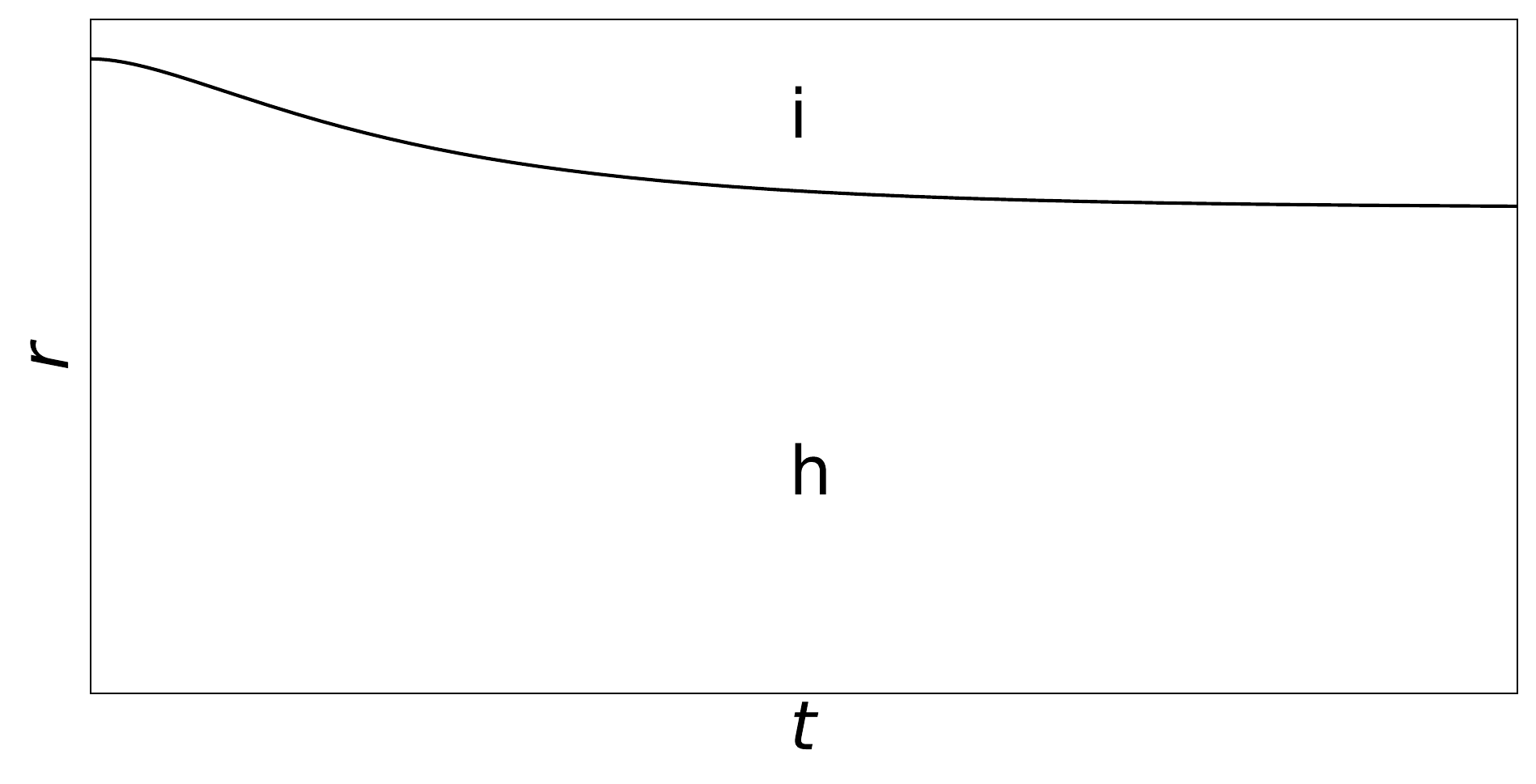}
		\caption{Forward curve with $\beta_1 > 0$.}
	\end{subfigure}
	\hfill
	\begin{subfigure}{0.49\textwidth}
		\includegraphics[width=\textwidth]{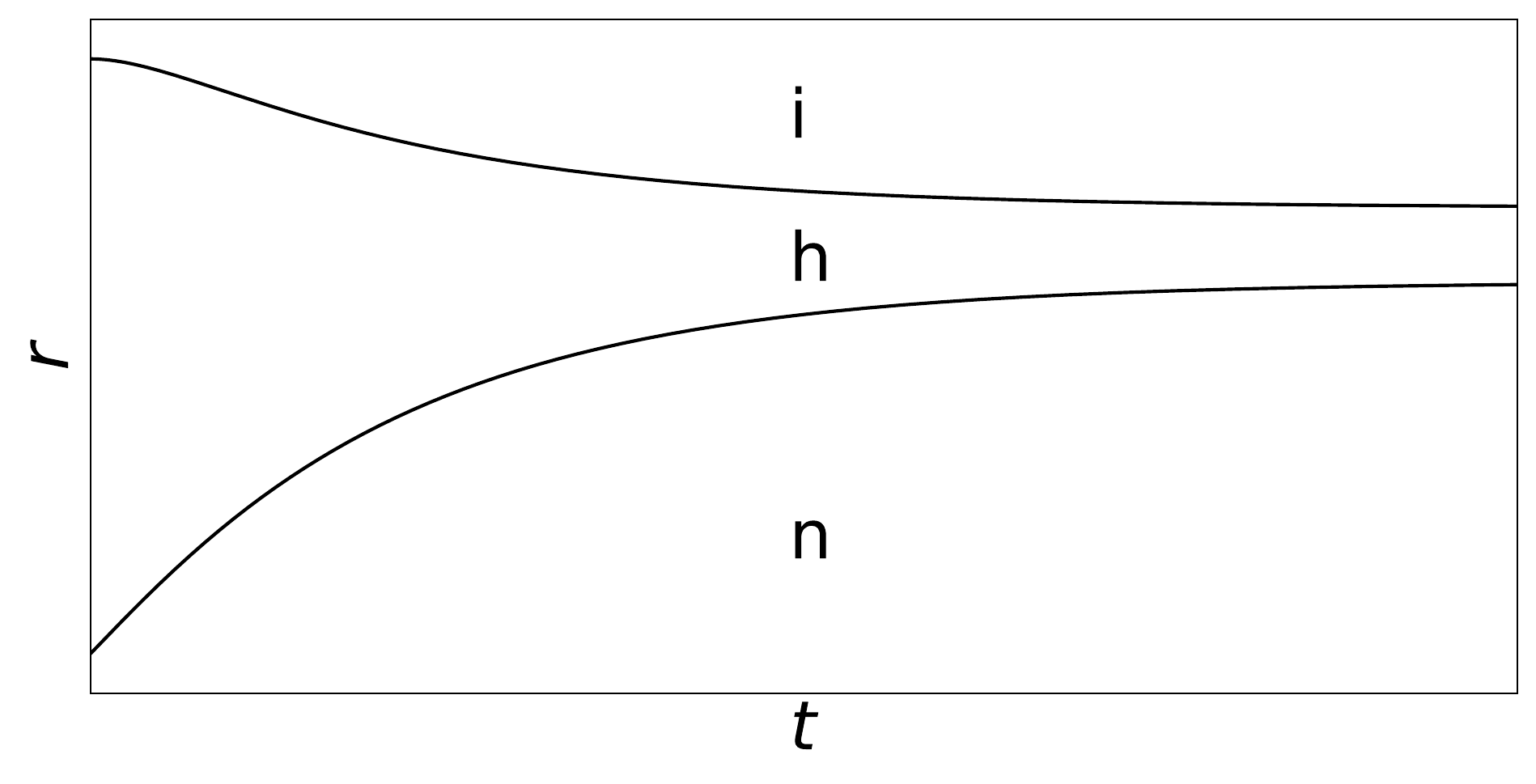}
		\caption{Yield curve with $\beta_1 > 0$.}
	\end{subfigure}
	\caption{Shapes of the term structure for monotone initial curves with $\frac{1}{\tau} < \lambda$.}
	\label{fig:mon_indep_5_6}
\end{figure}

\section{Nelson--Siegel--parameterized initial curves}

\subsection{Segmentation for $\lambda = \frac{1}{\tau}$}	

We start again by studying the case $\lambda = \frac{1}{\tau}$. Here, forward and yield curve possess at most two local extrema; see Remark~\ref{rem:bounds}. 
The initial sign of the derivatives of forward and yield curve can be determined from
\begin{align*}
	r_0(t) = \beta_0 + \beta_2{\mathrm{e}}^{-\lambda t} + \frac{\sigma^2}{2\lambda^2}\left(1 - {\mathrm{e}}^{-2\lambda t}\right).
\end{align*}
Terminally, the forward curve is increasing if $\beta_2 < 0$, and decreasing if $\beta_2 > 0$. The terminal monotonic behavior of the yield curve is determined from
\begin{align*}
	r_\infty^{\mathrm{y}}(t) = \beta_0 - \beta_2{\mathrm{e}}^{-\lambda t} + \frac{\sigma^2}{2\lambda^2}\left(1 - {\mathrm{e}}^{-2\lambda t}\right).
\end{align*}
Solutions to \eqref{eq:env_2} are given by
\begin{align}\label{eq:Nelson_Siegel_dep_env}
	x = \frac{1}{\lambda}\log\Big(\underbrace{-\frac{\sigma^2\tau^2}{\beta_2}\left(1-{\mathrm{e}}^{-2\lambda t}\right){\mathrm{e}}^{\lambda t}}_{=:\delta(t)}\Big).
\end{align}
The initial direction of the tracking function is determined by the sign of
\begin{align}\label{eq:init_direction_Nelson_Siegel_dep}
	-\beta_2{\mathrm{e}}^{-\lambda  t} - \frac{\sigma^2}{\lambda^2}\left(1 - {\mathrm{e}}^{-2\lambda t}\right)
\end{align}
and the terminal direction is determined by the sign of $-\beta_2$, both in the forward case and the yield case.

If $\beta_2 < 0$, $\delta$ is monotonically increasing from 0 to $\infty$, hence, $T^{\mathrm{f}}$ is an unbounded interval and away from 0; the envelope approaches $\zeta_0^{\mathrm{f}}$ on its left end, but the sets do not intersect; $\zeta_\infty^{\mathrm{f}}$ is empty. The forward curve is terminally increasing. Thus, the shapes \texttt{normal, dipped} and \texttt{hd} are attainable.
In the case of the yield curve, the tracking function is terminally increasing. Its initial direction changes from increasing to decreasing when \eqref{eq:init_direction_Nelson_Siegel_dep} changes its sign, so $T^{\mathrm{y}}$ is an unbounded interval and away from 0. There is exactly one intersection point $(t^*, r^*)$ of $\zeta_0$ and $\zeta_\infty$, since
\begin{align*}
	r_0(t) - r_\infty^{\mathrm{y}}(t) = 2\beta_2{\mathrm{e}}^{-\lambda t} + \frac{3\sigma^2}{4\lambda^2}\left(1 - {\mathrm{e}}^{-2\lambda t}\right),
\end{align*}
and $t^*\in T^{\mathrm{y}}$. Consequently, the yield curve attains the shapes \texttt{normal, inverse, humped, dipped} and \texttt{hd}.

If $\beta_2 < 0$, $\delta$ is monotonically decreasing from 0 to $-\infty$, hence, $T^{\mathrm{f}}$ is empty; and so is $\zeta_\infty^{\mathrm{f}}$. The shape of the forward curve is purely determined by the initial and terminal sign of its derivative. The terminal sign is always $\mm$, so the shapes \texttt{inverse} and \texttt{humped} are attainable. The yield curve cannot attain  more complex shapes and since $\zeta_0$ and $\zeta_\infty^{\mathrm{y}}$ are non--empty and do not intersect, the shapes \texttt{normal, inverse} and \texttt{humped} are attainable. An illustration is provided in \cref{fig:Nelson_Siegel_dep_1_2}.	
\begin{figure}
	\centering
	\begin{subfigure}{0.49\textwidth}
		\includegraphics[width=\textwidth]{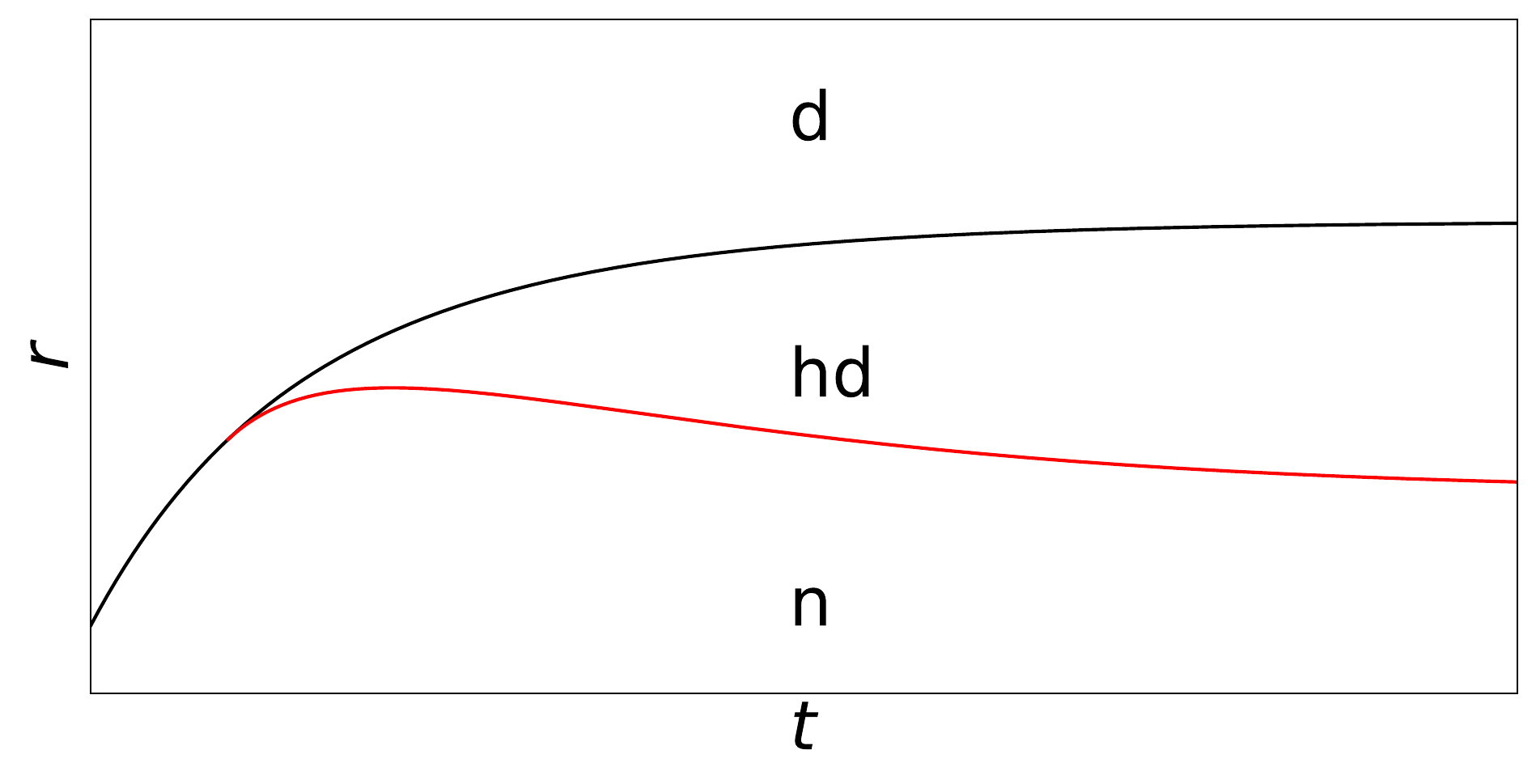}
		\caption{Forward curve with $\beta_2 < 0$.}
	\end{subfigure}
	\hfill
	\begin{subfigure}{0.49\textwidth}
		\includegraphics[width=\textwidth]{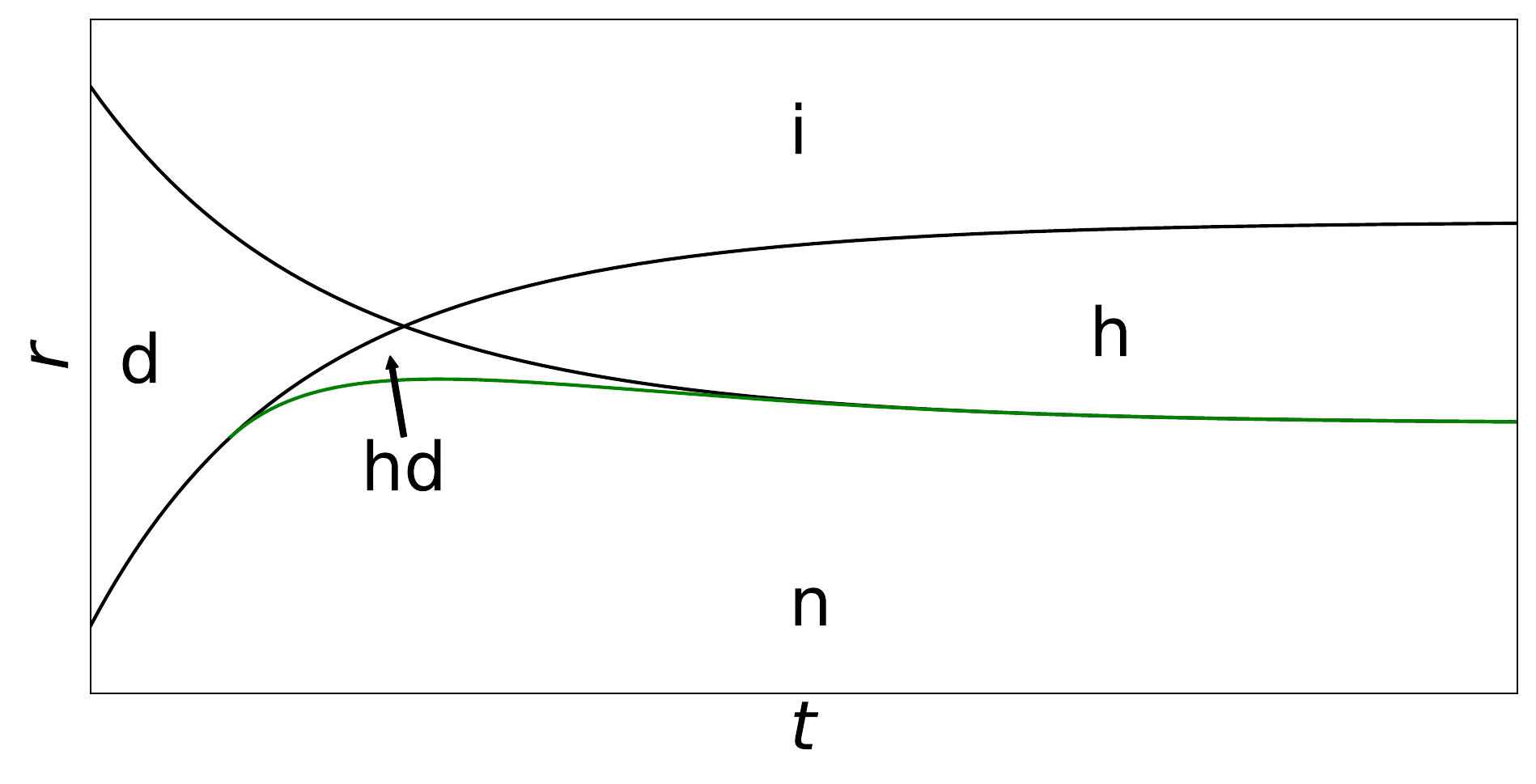}
		\caption{Yield curve with $\beta_2 < 0$.}
	\end{subfigure}
	
	\begin{subfigure}{0.49\textwidth}
		\includegraphics[width=\textwidth]{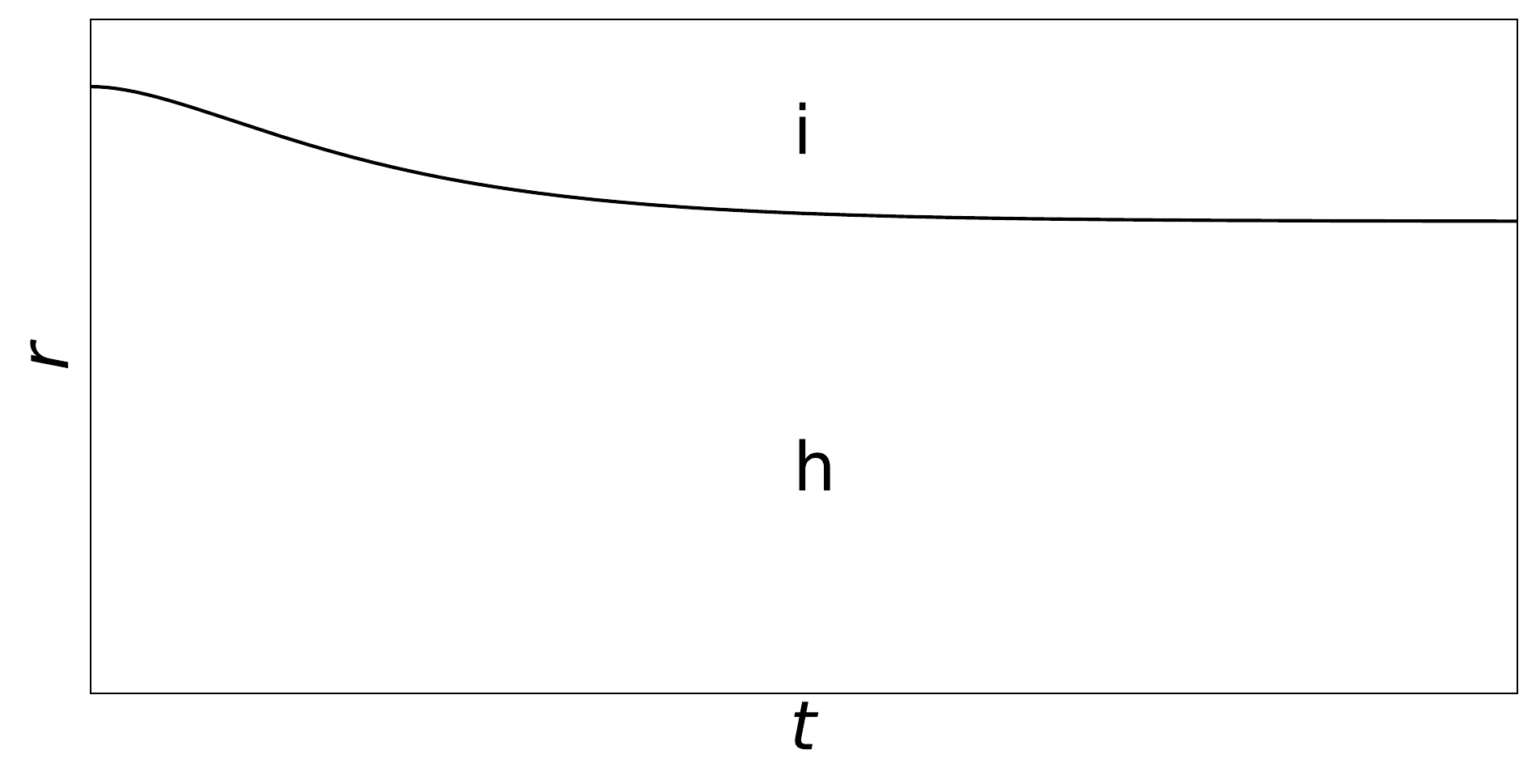}
		\caption{Forward curve with $\beta_2 > 0$.}
	\end{subfigure}
	\hfill
	\begin{subfigure}{0.49\textwidth}
		\includegraphics[width=\textwidth]{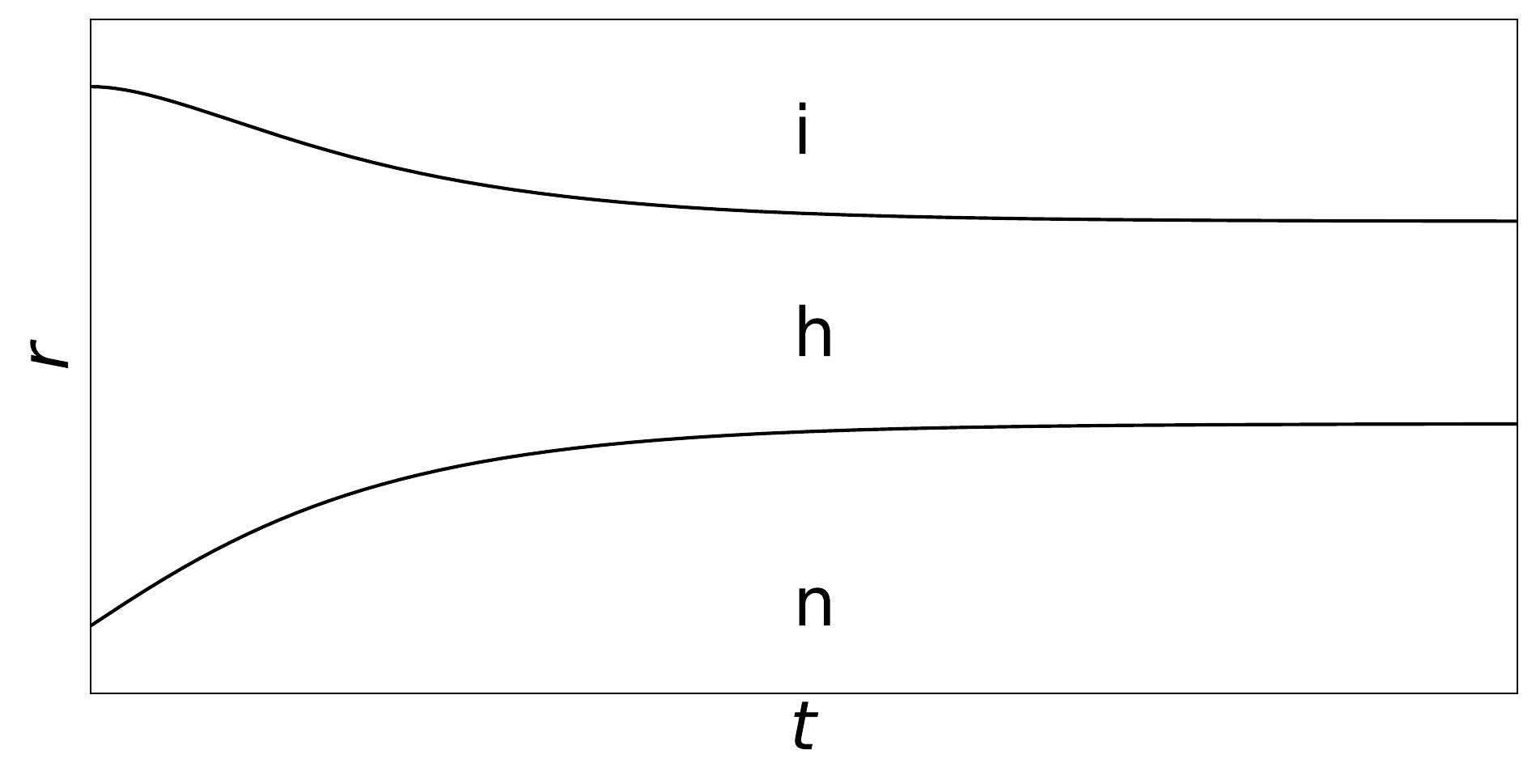}
		\caption{Yield curve with $\beta_2 > 0$.}
	\end{subfigure}
	\caption{Shapes of the term structure for Nelson--Siegel parameterized initial curves with $\frac{1}{\tau} = \lambda$.}
	\label{fig:Nelson_Siegel_dep_1_2}
\end{figure}

\subsection{Segmentation in the general case}	

Let us first consider the case of the forward curve here and assume that $2\lambda \neq \frac{1}{\tau} \neq \lambda$. With an initial curve from the Nelson--Siegel family, a rearranged equation \eqref{eq:env_2} reads
\begin{align}\label{eq:env_2_NS}
	\left(x + \phi(t)\right){\mathrm{e}}^{\rho x} = \xi(t),
\end{align}
where
\begin{align*}
	\phi(t) &:= t + \left(\frac{\beta_1}{\beta_2} - 1 - \frac{1}{1 - \lambda\tau_1}\right)\tau_1,\\
	\rho &:= - \left(\frac{1}{\tau_1} - 2\lambda\right),\\
	\xi(t) &:= \frac{\sigma^2\tau_1^2\left(1 - \exp\left(-2\lambda t\right)\right)\exp\left(\frac{t}{\tau_1}\right)}{\left(\frac{1}{\tau_1} - \lambda\right)\beta_2}.
\end{align*}
By \cite[Lemma A.3]{KRS26}, solutions of \eqref{eq:env_2_NS} must satisfy
\begin{align}\label{eq:Lambert-W-NS}
	x = \frac{{\mathcal W}(\rho\xi(t){\mathrm{e}}^{\phi(t)\chi})}{\chi} - \phi(t),
\end{align}
where ${\mathcal W}$ could either be the branch ${\mathcal W}_0$ or the branch ${\mathcal W}_{-1}$ of the Lambert ${\mathcal W}$-function \cite{CG96}. Hence, there can be two solutions of \eqref{eq:env_2} corresponding to the same value of $t$; see \cref{fig:Nelson_Siegel_1D} for a visualization. Also, applying the theory of Tchebycheff systems, we find that there are up to three intersections of $\zeta_0$ and $\zeta_\infty^{\mathrm{f}}$; it can be shown that this bound is actually sharp.
\begin{figure}
	\centering
	\includegraphics[width=\textwidth]{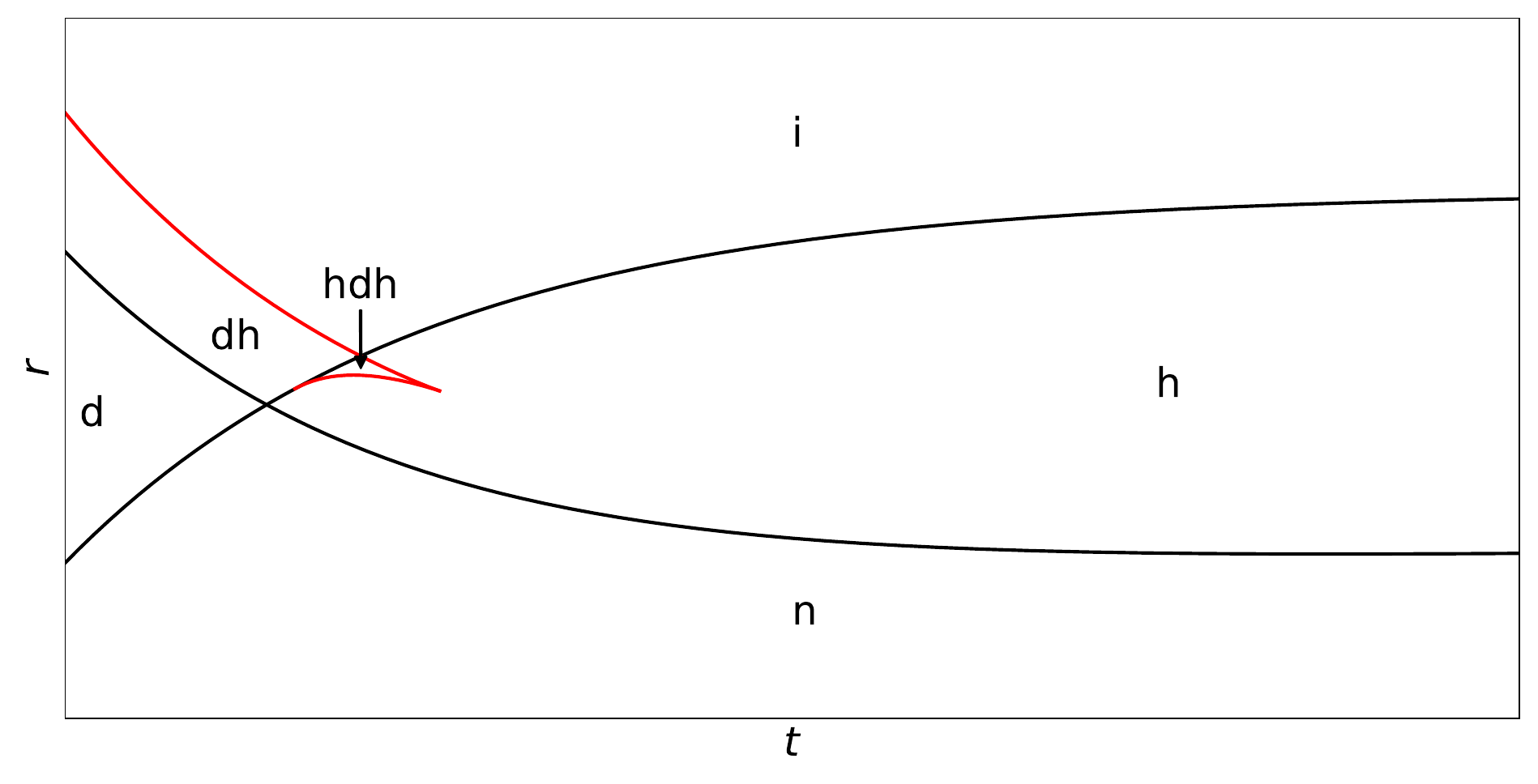}
	\caption{Shapes of the forward curve for one choice of Nelson--Siegel parameterized initial curves.}
	\label{fig:Nelson_Siegel_1D}
\end{figure}

If we were now to continue in the same manner as before and would try to determine how the sets $\zeta_0, \zeta_\infty^{\mathrm{f}}$ and $\eta^{\mathrm{f}}$ are positioned relative to each other in $\Theta$, and where they intersect, in an attempt to fully understand the full evolution of shapes in the model, we would quickly notice that it is getting increasingly difficult and tedious to do so with the sheer number of possibilities and formulae getting longer and longer; all while gaining very little novel insights into the model. Thus, we will now change our point of view instead and (temporarily) abandon our attempt to understand the evolution of shapes, so to  still achieve a complete classification of attainable term structure shapes and provide a complete segmentation of the state space. General results about the (asymptotic) evolution of shapes, however, will be provided in \cref{sec:asymptotics}.

Our goal here is to get into the setting introduced in \cite{KRS23} and refined in \cite{KRS26}, where the envelope was derived from a family of lines, yielding, under mild assumptions, a winding number formula for determining the number of local extrema, which works especially well when there are at most three local extrema, as is the case here.
In order to do so, we introduce transformed parameters
\begin{align*}
	\widetilde{\beta}_0 &:= f_0(t) - \frac{\sigma^2}{2\lambda^2}\left(1 - {\mathrm{e}}^{-2\lambda t}\right),\\
	\widetilde{\beta}_1 &:=\left(\frac{\beta_2}{\tau^2}t - \frac{\beta_2 - \beta_1}{\tau}\right){\tau^2}{\mathrm{e}}^{-\frac{t}{\tau}},\\
	\widetilde{\beta}_2 &:= \frac{\beta_2}{\tau^2}{\mathrm{e}}^{-\frac{t}{\tau}},\\
	\widetilde{\sigma}^2 &:= \frac{\sigma^2}{2\lambda^2}\left(1 - {\mathrm{e}}^{-2\lambda t}\right).
\end{align*}
Note that, for every fixed choice of $(\lambda, \tau, t)\in\RR_{ >0}^3$ there is a one-to-one correspondence between the old parameters $(\beta_0,\beta_1,\beta_2,\sigma)$ and the new parameters $(\widetilde\beta_0,\widetilde\beta_1,\widetilde\beta_2,\widetilde\sigma)$, and the parameters $\beta_2$ and $\widetilde{\beta}_2$ even share the same sign. With the new parameters we define
\begin{align*}
	a_{\mathrm{f}}(x) &:= -\widetilde{\beta}_2\cdot x{\tau^2}{\mathrm{e}}^{-\frac{x}{\tau}} + \widetilde{\beta}_0\cdot\lambda{\mathrm{e}}^{-\lambda x} + \widetilde{\sigma}^2\cdot 2\lambda{\mathrm{e}}^{-2\lambda x},\\
	b_{\mathrm{f}}(x) &:= -{\mathrm{e}}^{-\frac{x}{\tau}},\\
	c_{\mathrm{f}}(x) &:= -\lambda{\mathrm{e}}^{-\lambda x},
\end{align*}
and, with different basis functions for the yield curve,
\begin{align*}
	a_{\mathrm{y}}(x) &:= \widetilde{\beta}_2\cdot \frac{\tau^5h_2(x,\tau)}{x^2} - \widetilde{\beta}_0\cdot\frac{h_1(x,\frac{1}{\lambda})}{\lambda x^2} - \widetilde{\sigma}^2\cdot \frac{h_1(x,\frac{1}{2\lambda})}{2\lambda x^2},\\
	b_{\mathrm{y}}(x) &:= \tau^2\frac{h_1(x,\tau)}{x^2},\\
	c_{\mathrm{y}}(x) &:= \frac{h_1(x,\frac{1}{\lambda})}{\lambda x^2},
\end{align*}
omitting the dependence on any variables other than $x$. The derivative of the forward resp. yield curve is then given by
\begin{align*}
	\partial_x f(x) &= a_{\mathrm{f}}(x) + b_{\mathrm{f}}(x)\widetilde{\beta}_1 + c_{\mathrm{f}}(x)r,\\
	\partial_x y(x) &= a_{\mathrm{y}}(x) + b_{\mathrm{y}}(x)\widetilde{\beta}_1 + c_{\mathrm{y}}(x)r.
\end{align*}
As practiced before, we will drop the subscripts f and y and only add them again whenever it becomes necessary to distinguish the two cases.

The coefficient $b(x)$ does not vanish for $x \in (0,\infty)$, hence, for every $x\in(0,\infty)$,
\begin{align*}
	\ell_x := \{(\widetilde{\beta}_1, r)\in\Theta'\ |\ a(x) + b(x)\widetilde{\beta}_1 + c(x)r = 0\}.
\end{align*}
defines a line in the plane $\Theta' := \RR^2$, all of which are collected in a family
\begin{align*}
	\mathcal{F} := (\ell_x)_{x\in(0,\infty)}.
\end{align*}
A summarized explanation of the relevance of this family was provided in \cite{KRS26}; we supply an adapted version here: 
\begin{quote}
	\textit{
		For every choice of $(\widetilde{\beta}_0, \widetilde{\beta}_1, \widetilde{\beta}_2, \tau), (\lambda, \widetilde{\sigma})$ and $t$ with their counterparts $(\beta_0,\beta_1,\beta_2,\tau)$ and $(\lambda,\sigma)$, the forward curve $x\mapsto f(x,t,r)$ (yield curve $x\mapsto y(x,t,r)$) in the Hull--White model with parameters $(\lambda,\sigma)$ and Nelson--Siegel--parameterized initial curves with parameters $(\beta_0,\beta_1,\beta_2,\tau)$ has a local extremum at $x$, if $\ell_x$ passes through the point $(\widetilde{\beta}_1, r)$.}
\end{quote}
The envelope $\eta$ of $\mathcal{F}$ consists of all points $(\widetilde{\beta}_1, r)\in\Theta'$, which satisfy
\begin{align}\label{eq:env_lines}
	\begin{cases}
		a(x) + b(x)\widetilde{\beta}_1 + c(x)r = 0,\\
		a'(x) + b'(x)\widetilde{\beta}_1 + c'(x)r = 0,
	\end{cases}
\end{align}
for some $x\in(0,\infty)$. Under mild assumptions, \eqref{eq:env_lines} is solved by a single point $\eta(x) \in\Theta'$ for each $x\in (0,\infty)$ and their totality $\eta = (\eta(x))_{x\in(0,\infty)}$ forms a continuously differentiable curve in $\Theta'$, given by
\begin{align*}
	\eta(x) = \left(\frac{W(c,a)(x)}{W(b,c)(x)},\frac{W(a,b)(x)}{W(b,c)(x)}\right).
\end{align*}
We denote with $\ell_0^+$ ($\ell_0^-, \ell_\infty^+, \ell_\infty^-$) the set of points $(\widetilde{\beta}_1, r)\in\Theta'$, such that the corresponding curve is initially increasing (initially decreasing, terminally increasing, terminally decreasing). $\ell_0$ and $\ell_\infty$ denote their boundaries.
As summarized in \cite{KRS26}, the number of local extrema can be easily calculated:
\begin{itemize}
	\item The boundaries of the regions of $\Theta'$ corresponding to different shapes of the curve are given precisely by the envelope $\eta$, together with the lines $\ell_0$ and $\ell_\infty$.
	\item By adjoining parts of $\ell_0$ and $\ell_\infty$ to $\eta$, the envelope can be turned into a closed curve, the \textbf{augmented envelope $\hat{\eta}$}. The number of local extrema of the curve, given the parameter $(\widetilde{\beta}_1, r)\in\Theta'$ can be determined from the winding number of $\hat{\eta}$ around $(\widetilde{\beta}_1, r)$; see \cite[Thm. 4.1]{KRS23}.
\end{itemize}
To apply the winding number formula, we have to check assumptions \textbf{({A}1)} -- \textbf{({A}6)} from \cite{KRS23}, which is done in Appendix~\ref{sec:assumptions}. 
\begin{rem}\label{rem:truncated}
	In order to apply the winding number formula when $\ell_\infty$ is empty, it is necessary to consider the truncated family $(\ell_x)_{x\in(0,T)}$ and its envelope for arbitrarily large $T$. This adjustment has been discussed in \cite{KRS26}, we will thus not go into further detail here.
\end{rem}
Let us recap the definition of the quadrants and rays introduced in \cite{KRS23}:
\begin{equation*}
	\begin{split}
		Q_{n} := {\ell_0^+}\cap {\ell_\infty^+},\qquad Q_{h} &:= \ell_0^+\cap\ell_\infty^-.\\
		Q_{i} := {\ell_0^-}\cap {\ell_\infty^-}, \qquad Q_{d} &:= \ell_0^-\cap\ell_\infty^+ 
	\end{split}
\end{equation*}
and
\begin{equation*}
	\begin{split}
		r_{nh} &= \partial Q_n \cap \partial Q_h, \qquad r_{ih} = \partial Q_i \cap \partial Q_h ,\\
		r_{id} &= \partial Q_i \cap \partial Q_d, \qquad r_{nd} = \partial Q_n \cap \partial Q_d.
	\end{split}
\end{equation*}
A comparison of the coordinates of $M$ and $\eta(0)$ resp. $\eta(\infty)$ (see Appendix~\ref{sec:assumptions} for their definitions), yields the following results.
\begin{lem}
	If $\lambda < \frac{1}{\tau}$, the envelope $\eta_{\mathrm{f}}$ starts on $r_{nd}$ iff
	\begin{align}\label{eq:start_env_HW}
		2\widetilde{\sigma}^2 < \frac{\widetilde{\beta}_2\tau^2 + 2\lambda^2\widetilde{\sigma^2}}{\lambda\left(\frac{1}{\tau} - \lambda\right)}
	\end{align}
	and on $r_{ih}$ iff the inequality is reversed. It asymptotically ends on $r_{nh}$ iff $\lambda < \frac{1}{\tau} < 2\lambda$ and $\widetilde{\beta}_2 > 0$. It asymptotically ends on $r_{id}$ iff $\lambda < \frac{1}{\tau} < 2\lambda$ and $\widetilde{\beta}_2 < 0$ or $2\lambda < \frac{1}{\tau}$. 
\end{lem}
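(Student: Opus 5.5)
The plan is to compute, for the forward case with $\lambda<\frac1\tau$, three objects explicitly from $a_{\mathrm f},b_{\mathrm f},c_{\mathrm f}$: the point $M=\ell_0\cap\ell_\infty$, the limit $\eta(0)$, and the asymptotic behaviour of $\eta(x)$ as $x\to\infty$. Then I would read off on which ray each endpoint of the envelope lies. Using $a(0)=\lambda\widetilde\beta_0+2\lambda\widetilde\sigma^2$, $b(0)=-1$ and $c(0)=-\lambda$, the line $\ell_0$ is
\[
\widetilde\beta_1+\lambda r=\lambda\widetilde\beta_0+2\lambda\widetilde\sigma^2 .
\]
Since $\lambda<\frac1\tau<2\lambda$ or $2\lambda<\frac1\tau$, the slowest decaying term of $\partial_xf$ as $x\to\infty$ is $\lambda{\mathrm{e}}^{-\lambda x}(\widetilde\beta_0-r)$. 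Hence $\ell_\infty=\{r=\widetilde\beta_0\}$, with $\ell_\infty^+=\{r<\widetilde\beta_0\}$, and so $M=(2\lambda\widetilde\sigma^2,\widetilde\beta_0)$. Along $\ell_0$ the coordinate $r$ is strictly decreasing in $\widetilde\beta_1$. Therefore the half of $\ell_0$ with $r<\widetilde\beta_0$ separates $Q_n$ from $Q_d$, i.e. it is $r_{nd}$, and the other half is $r_{ih}$. Similarly, the part of $\ell_\infty$ with $\widetilde\beta_1>2\lambda\widetilde\sigma^2$ lies in $\ell_0^-$ and is $r_{id}$, while the part with $\widetilde\beta_1<2\lambda\widetilde\sigma^2$ is $r_{nh}$.

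For the starting point I would solve \eqref{eq:env_lines} at $x=0$, using the limits from Appendix~\ref{sec:assumptions}. These are $a'(0)=-\widetilde\beta_2\tau^2-\lambda^2\widetilde\beta_0-4\lambda^2\widetilde\sigma^2$, $b'(0)=\frac1\tau$ and $c'(0)=\lambda^2$. Eliminating $\widetilde\beta_1$ with the $\ell_0$ equation gives
\[
r-\widetilde\beta_0=-\frac{\widetilde\beta_2\tau^2+2\lambda^2\widetilde\sigma^2-2\lambda\widetilde\sigma^2\left(\frac1\tau-\lambda\right)}{\lambda\left(\frac1\tau-\lambda\right)} .
\]
Because $\frac1\tau-\lambda>0$, the condition $r<\widetilde\beta_0$ is exactly \eqref{eq:start_env_HW}, so $\eta(0)\in r_{nd}$ under \eqref{eq:start_env_HW}. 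The reversed strict inequality gives $\eta(0)\in r_{ih}$. Equality would put $\eta(0)=M$, which is excluded by the stated "iff".

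For the end point I would write $a=-\widetilde\beta_0c+\widetilde a$ with $\widetilde a(x)=-\widetilde\beta_2\tau^2x{\mathrm{e}}^{-x/\tau}+2\lambda\widetilde\sigma^2{\mathrm{e}}^{-2\lambda x}$. By bilinearity of the Wronskian this gives $\eta=(0,\widetilde\beta_0)+\widetilde\eta$, where $\widetilde\eta$ is built from $\widetilde a$ in place of $a$. The denominator $W(b,c)$ is a positive multiple of ${\mathrm{e}}^{-(1/\tau+\lambda)x}$. The dominant term of $\widetilde a$ is $2\lambda\widetilde\sigma^2{\mathrm{e}}^{-2\lambda x}$ if $2\lambda<\frac1\tau$, and $-\widetilde\beta_2\tau^2x{\mathrm{e}}^{-x/\tau}$ if $\frac1\tau<2\lambda$.

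Expanding $W(\widetilde a,b)/W(b,c)$ and $W(c,\widetilde a)/W(b,c)$ with this dominant term, I expect two things. First, the $r$-coordinate tends to $\widetilde\beta_0$, i.e. $\eta$ approaches $\ell_\infty$. Second, $\widetilde\beta_1\to\pm\infty$, with sign fixed by the sign of the dominant coefficient and the sign of $\frac1\tau-2\lambda$ arising from differentiating the exponentials. The expected outcome is as follows. For $2\lambda<\frac1\tau$ the coefficient $\widetilde\sigma^2>0$ gives $\widetilde\beta_1\to+\infty$, which is $r_{id}$. For $\lambda<\frac1\tau<2\lambda$ the sign of $-\widetilde\beta_2$ decides, giving $r_{nh}$ for $\widetilde\beta_2>0$ and $r_{id}$ for $\widetilde\beta_2<0$.

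The main obstacle is this last asymptotic step: tracking the signs through the Wronskians carefully, and confirming that $r$ converges to $\widetilde\beta_0$ while $\widetilde\beta_1$ diverges rather than the other way round. The degenerate case $\widetilde\beta_2=0$ with $\lambda<\frac1\tau<2\lambda$ lies outside the stated dichotomy, and I would check it separately. I would carry out the leading-order expansion symbolically and sanity-check the signs numerically on a few parameter choices, as in \cref{fig:Nelson_Siegel_1D}.
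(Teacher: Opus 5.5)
Your proposal is correct and follows the same route as the paper, which also just compares the coordinates of $M_{\mathrm f}=(2\lambda\widetilde\sigma^2,\widetilde\beta_0)$ with those of $\eta_{\mathrm f}(0)$ and with the behaviour of $\eta_{\mathrm f}(x)$ as $x\to\infty$; your identification of the rays and your derivation of \eqref{eq:start_env_HW} check out. The asymptotic step you left open closes as you predict, since
\[
\eta_{\mathrm f,1}(x)=\frac{\widetilde\beta_2\tau^2\bigl(1-x(\tfrac1\tau-\lambda)\bigr)+2\lambda^2\widetilde\sigma^2{\mathrm e}^{(\frac1\tau-2\lambda)x}}{\frac1\tau-\lambda},\qquad
\eta_{\mathrm f,2}(x)=\widetilde\beta_0-\frac{\widetilde\beta_2\tau^2{\mathrm e}^{-(\frac1\tau-\lambda)x}}{\lambda(\frac1\tau-\lambda)}+\frac{2\widetilde\sigma^2(\frac1\tau-2\lambda){\mathrm e}^{-\lambda x}}{\frac1\tau-\lambda},
\]
but note that $\frac1\tau-2\lambda$ only decides which term of $\eta_{\mathrm f,1}$ dominates and does not enter its sign: the linear term carries the factor $\frac1\tau-\lambda$, which cancels against $W(b,c)$. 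Hence the limit is $+\infty$ for $2\lambda<\frac1\tau$ and $\sgn(-\widetilde\beta_2)\infty$ for $\lambda<\frac1\tau<2\lambda$, which is exactly your case split (the appendix's blanket claim $\eta_{\mathrm f,1}\to\sgn(-\widetilde\beta_2)\infty$ is only accurate for $\frac1\tau\le2\lambda$).
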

\begin{lem}
	\begin{enumerate}[(a)]
		\item If $\frac{1}{\tau} < \lambda$, the envelope $\eta_{\mathrm{y}}$ starts on $r_{nd}$ iff
		\begin{align}\label{eq:start_env_HW_y_1}
			\eta_{\mathrm{y}, 1}(0) < M_{\mathrm{y}, 1}
		\end{align}
		and on $r_{ih}$ iff the inequality is reversed. It asymptotically ends on $r_{id}$ iff
		\begin{align}\label{eq:end_env_HW_y_1}
			\widetilde{\beta}_2 > 0
		\end{align}
		and on $r_{nh}$ iff the inequality is reversed.
		\item If $\lambda < \frac{1}{\tau}$, the envelope $\eta_{\mathrm{y}}$ starts on $r_{nd}$ iff
		\begin{align}\label{eq:start_env_HW_y_2}
			\eta_{\mathrm{y}, 2}(0) < M_{\mathrm{y}, 2}
		\end{align}
		and on $r_{ih}$ iff the inequality is reversed. It ends on $r_{id}$ iff
		\begin{align}\label{eq:end_env_HW_y_2}
			\eta_{\mathrm{y}, 1}(\infty) > M_{\mathrm{y}, 1}
		\end{align}
		and on $r_{nh}$ iff the inequality is reversed.
	\end{enumerate}
\end{lem}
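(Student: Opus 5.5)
The plan is to follow the scheme used for the forward curve in the preceding lemma and in \cite{KRS26}. We compute the limiting behaviour of the envelope $\eta_{\mathrm{y}}(x)=\bigl(W(c,a)/W(b,c),\,W(a,b)/W(b,c)\bigr)(x)$ as $x\downarrow0$ and as $x\uparrow\infty$. We then compare its position on $\ell_0$ resp.\ $\ell_\infty$ with the intersection point $M=\ell_0\cap\ell_\infty$ from Appendix~\ref{sec:assumptions}. Since $\ell_0$ and $\ell_\infty$ are transversal lines meeting at $M$, each of them is split by $M$ into two rays:
\begin{itemize}
\item on $\ell_0$, one ray borders $Q_n$ and $Q_d$ (this is $r_{nd}$) and the other borders $Q_i$ and $Q_h$ (this is $r_{ih}$);
\item on $\ell_\infty$, the two rays are $r_{nh}$ and $r_{id}$.
\end{itemize}
Once we know that $\eta_{\mathrm{y}}$ starts on $\ell_0$ and ends (possibly asymptotically) on $\ell_\infty$, deciding which ray it lies on amounts to a single coordinate comparison with $M$. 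One also has to fix once which side of $M$ carries $\ell_\infty^{+}$ resp.\ $\ell_0^{+}$. I would read this off from the signs of the slopes of $\ell_0,\ell_\infty$, which are determined by the signs of $b_{\mathrm{y}}$ and $c_{\mathrm{y}}$ near $0$ and near $\infty$.

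For the start, expand $h_1(x,\tau)/x^2$ and $h_2(x,\tau)/x^2$ in Taylor series at $x=0$; for instance $h_1(x,\tau)/x^2=-\tfrac{1}{2\tau^2}+\tfrac{x}{3\tau^3}+O(x^2)$. This shows that $a_{\mathrm{y}},b_{\mathrm{y}},c_{\mathrm{y}}$ extend analytically to $x=0$. Hence the Wronskians have finite limits, and $W(b,c)(0)\neq0$ by the ECT property (Lemma~\ref{lem:ECT_yield}). Therefore $\eta_{\mathrm{y}}(0)$ exists and satisfies the first equation of \eqref{eq:env_lines} at $x=0$, i.e.\ it lies on $\ell_0$. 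The first coordinates $\eta_{\mathrm{y},1}(0)$ and $M_{\mathrm{y},1}$ then decide between $r_{nd}$ and $r_{ih}$ in case (a). In case (b) the second coordinates $\eta_{\mathrm{y},2}(0)$ and $M_{\mathrm{y},2}$ decide, because the parametrization of $\ell_0$ by the $r$-coordinate has the convenient orientation there. This yields \eqref{eq:start_env_HW_y_1} and \eqref{eq:start_env_HW_y_2}.

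For the end, note that $h_1\to-1$ and $h_2\to-2$. So $a_{\mathrm{y}},b_{\mathrm{y}},c_{\mathrm{y}}$ equal constants times $x^{-2}$ plus exponentially small corrections. The leading part defines $\ell_\infty$, and the envelope is governed by the slowest-decaying correction.
\begin{itemize}
\item In case (a), $1/\tau<\lambda$, the dominant corrections are the terms $x^k{\mathrm{e}}^{-x/\tau}$, coming from $b_{\mathrm{y}}$ and from $\widetilde\beta_2\tau^5h_2/x^2$. Inserting these into the Wronskians should show that $\eta_{\mathrm{y}}(x)$ runs off to infinity along $\ell_\infty$. The direction of escape is determined by the sign of the $x^2{\mathrm{e}}^{-x/\tau}$ coefficient, which is proportional to $\widetilde\beta_2$. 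This gives $r_{id}$ iff $\widetilde\beta_2>0$, which is \eqref{eq:end_env_HW_y_1}.
\item In case (b), $\lambda<1/\tau$, the dominant correction is ${\mathrm{e}}^{-\lambda x}$, appearing in $c_{\mathrm{y}}$ and in the $\widetilde\beta_0$-term of $a_{\mathrm{y}}$. Since it enters only through the $r$-direction, I expect a finite limit $\eta_{\mathrm{y}}(\infty)\in\ell_\infty$. Comparing its first coordinate with $M_{\mathrm{y},1}$ gives \eqref{eq:end_env_HW_y_2}.
\end{itemize}

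The main obstacle is the $x\to\infty$ asymptotics of the Wronskians. The leading $x^{-2}$ parts cancel in $W(b,c)$ up to exponentially small terms, so one must carefully track products of the different exponential scales ${\mathrm{e}}^{-x/\tau}$, ${\mathrm{e}}^{-\lambda x}$ and ${\mathrm{e}}^{-2\lambda x}$ together with their polynomial prefactors. One must also keep the sign bookkeeping identifying the rays consistent. I would first factor $x^{-2}$ out of all three functions, using that $W(x^{-2}u,x^{-2}v)=x^{-4}W(u,v)$. This reduces the problem to Wronskians of exponential polynomials, where dominant-term extraction is routine.
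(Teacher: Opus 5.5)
Your route is the paper's own. The paper obtains this lemma simply by comparing $\eta_{\mathrm{y}}(0)$, and $\eta_{\mathrm{y}}(\infty)$ or the direction of escape, with $M_{\mathrm{y}}$, using the limits listed in Appendix~\ref{sec:assumptions}. Your analytic steps are sound: after rescaling by $x^2$, $\eta_{\mathrm{y}}(0)=\eta_{\mathrm{f}}(0)\in\ell_0$. In case (a), $\eta_{\mathrm{y},1}(x)\sim-\tau^2\widetilde{\beta}_2x$ and $\eta_{\mathrm{y},2}(x)\sim\lambda\tau^4\widetilde{\beta}_2x$. In case (b) there is the finite limit $(-2\tau^3\widetilde{\beta}_2+\widetilde{\sigma}^2/(2\lambda\tau^2),\ \widetilde{\beta}_0)$.

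\textbf{The gap is the orientation step.} It is exactly what distinguishes (a) from (b), and the method you propose for it fails. By the first lemma of Appendix~\ref{sec:assumptions}, $b_{\mathrm{y}}$ and $c_{\mathrm{y}}$ are negative on all of $(0,\infty)$. Hence $\ell_0$ (slope $-1/\lambda$) and $\ell_\infty^{\mathrm{y}}$ (slope $-\lambda\tau^2$) have negative slope in \emph{every} case. Reading the orientation off the signs of the slopes therefore gives the same assignment of rays in (a) and (b). The lemma, however, puts $r_{nd}$ and $r_{id}$ on opposite sides of $M_{\mathrm{y}}$ in the two cases. Your remark that in (b) the $r$-coordinate ``has the convenient orientation'' explains nothing. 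On $\ell_0$ the two coordinates are decreasing functions of each other, so the condition $\eta_{\mathrm{y},2}(0)<M_{\mathrm{y},2}$ is just $\eta_{\mathrm{y},1}(0)>M_{\mathrm{y},1}$, the reverse of the condition in (a).

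\textbf{What actually decides is the relative steepness of the two lines}, i.e.\ the sign of $1-\lambda^2\tau^2$. This is also the sign of the denominator $\frac{1}{\lambda}-\lambda\tau^2$ in $M_{\mathrm{y}}$. Put $L_0:=\lambda\widetilde{\beta}_0+2\lambda\widetilde{\sigma}^2-\widetilde{\beta}_1-\lambda r$ and $L_\infty:=-2\tau^5\widetilde{\beta}_2+\frac{\widetilde{\beta}_0}{\lambda}+\frac{\widetilde{\sigma}^2}{2\lambda}-\tau^2\widetilde{\beta}_1-\frac{r}{\lambda}$. Then $\ell_0^{\pm}=\{\pm L_0>0\}$ and $\ell_\infty^{\mathrm{y}\pm}=\{\pm L_\infty>0\}$, since the terminal sign of $\partial_x y$ is that of $x^{-2}L_\infty$. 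Restricting each form to the other line gives
\[
L_\infty|_{\ell_0}=(\lambda^{-2}-\tau^2)(\widetilde{\beta}_1-M_{\mathrm{y},1}),\qquad L_0|_{\ell_\infty^{\mathrm{y}}}=(\lambda^2\tau^2-1)(\widetilde{\beta}_1-M_{\mathrm{y},1}).
\]
So for $\lambda\tau>1$ the rays $r_{nd}\subset\ell_0$ and $r_{id}\subset\ell_\infty^{\mathrm{y}}$ are those with $\widetilde{\beta}_1<M_{\mathrm{y},1}$. For $\lambda\tau<1$ they are those with $\widetilde{\beta}_1>M_{\mathrm{y},1}$, which on $\ell_0$ is equivalent to $r<M_{\mathrm{y},2}$. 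Combined with your limits, this yields all four assertions.

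In (a), the conclusion ``$r_{id}$ iff $\widetilde{\beta}_2>0$'' needs two ingredients that your proposal only asserts (``should show'', ``this gives''). First, the explicit sign $\eta_{\mathrm{y},1}(x)\to-\sgn(\widetilde{\beta}_2)\infty$. For the rescaled functions $x^2a_{\mathrm{y}},x^2b_{\mathrm{y}},x^2c_{\mathrm{y}}$ this follows from $W(c,a)\approx\widetilde{\beta}_2\tau^2x^2{\mathrm{e}}^{-x/\tau}/\lambda$ and $W(b,c)\approx-x{\mathrm{e}}^{-x/\tau}/\lambda$. Second, the fact that $\lambda\tau>1$ places the ray $\widetilde{\beta}_1\to-\infty$ of $\ell_\infty^{\mathrm{y}}$ in $\ell_0^-$.
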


We proceed with a case by case analysis, similar to \cite{KRS23} and \cite{KRS26}. Defining  
\begin{align*}
	D := \widetilde{\beta}_2\tau^2\left(\frac{1}{\tau} - \lambda\right) - 2\lambda^2\widetilde{\sigma}^2\left(\frac{1}{\tau} - 2\lambda\right).
\end{align*}
and using the results above, as well as from Appendix~\ref{sec:assumptions} and \cite[Lem. B.6]{KRS26}, we find seven different ways of separating $\Theta'$, when considering the forward curve, all of which can be attained by the right choice of $\lambda,\tau$ and $\widetilde{\beta}_2$, thus answering \textbf{Q1} and \textbf{Q2}. Visualizations are provided in \cref{fig:Nelson_Siegel_indep_f}.
\begin{enumerate}[1.]
	\item $\frac{1}{\tau} < \lambda$ and $\widetilde{\beta}_2 < 0$: $\ell_\infty^{\mathrm{f}}$ is empty. The envelope initially moves into $\ell_0^{\mathrm{f}\pp}$ and has no cusps, thus no intersections with $\ell_0^{\mathrm{f}}$. The forward curve is terminally increasing. The shapes \texttt{normal, dipped} and \texttt{hd} are attainable.
	\item $\frac{1}{\tau} < \lambda$, $\widetilde{\beta}_2 > 0$ and $D < 0$: $\ell_\infty^{\mathrm{f}}$ is empty. The envelope initially moves into $\ell_0^{\mathrm{f}\mm}$ and has no cusps, thus no intersections with $\ell_0^{\mathrm{f}}$. The forward curve is terminally decreasing. The shapes \texttt{inverse, humped} and \texttt{dh} are attainable.
	\item $\frac{1}{\tau} < \lambda$, $\widetilde{\beta}_2 > 0$ and $D > 0$: $\ell_\infty^{\mathrm{f}}$ is empty. The envelope initially moves into $\ell_0^{\mathrm{f}\pp}$ and has exactly one cusps, after which it intersects with $\ell_0^{\mathrm{f}}$ moving into $\ell_0^{\mathrm{f}\mm}$. The forward curve is terminally decreasing. The shapes \texttt{inverse, humped, dh} and \texttt{hdh} are attainable.
	\item $\lambda < \frac{1}{\tau} \le 2\lambda$ and $\widetilde{\beta}_2 > 0$: The envelope starts on $r_{nd}$ moving into $Q_{n}$, has no cusps, thus no intersections with $\ell_0^{\mathrm{f}}$ or $\ell_\infty^{\mathrm{f}}$ and asymptotically ends on $r_{nh}$. The shapes \texttt{normal, inverse, humped, dipped} and \texttt{hd} are attainable.
	\item $\{\lambda < \frac{1}{\tau} \le 2\lambda$, $\widetilde{\beta}_2 < 0$ and $D < 0\}$ or $\{2\lambda < \frac{1}{\tau}$ and $\{\widetilde{\beta}_2 < 0$ or $\widetilde{\beta}_2 > 0$ and $D < 0\}\}$: The envelope starts on $r_{ih}$ moving into $Q_{i}$, has no cusps, thus no intersections with $\ell_0^{\mathrm{f}}$ or $\ell_\infty^{\mathrm{f}}$ and asymptotically ends on $r_{id}$. The shapes \texttt{normal, inverse, humped, dipped} and \texttt{dh} are attainable.
	\item $\{\lambda < \frac{1}{\tau} < 2\lambda$ and $\widetilde{\beta}_2 < 0$ or $2\lambda < \frac{1}{\tau}$ and $\widetilde{\beta}_2 > 0\}$ and $D > 0$, but not \eqref{eq:start_env_HW}: The envelope starts on $r_{ih}$, moving into $Q_{h}$, has a cusp there, after which it intersects with $\ell_0$ and asymptotically ends on $\ell_\infty^{\mathrm{f}}$ coming from $Q_{i}$. The shapes \texttt{normal, inverse, humped, dipped, dh} and \texttt{hdh} are attainable.
	\item $\{\lambda < \frac{1}{\tau} < 2\lambda$ and $\widetilde{\beta}_2 < 0$ or $2\lambda < \frac{1}{\tau}$ and $\widetilde{\beta}_2 > 0\}$ and $D > 0$ and \eqref{eq:start_env_HW} holds: The envelope starts on $r_{nd}$, moving into $Q_{n}$, crosses $\ell_\infty^{\mathrm{f}}$ into $Q_{h}$, has a cusp there, after which it intersects with $\ell_0$ and asymptotically ends on $\ell_\infty^{\mathrm{f}}$ coming from $Q_{i}$. The shapes \texttt{normal, inverse, humped, dipped, hd, dh} and \texttt{hdh} are attainable.
\end{enumerate}
\begin{figure}
	\centering
	\begin{subfigure}{0.49\textwidth}
		\includegraphics[width=\textwidth]{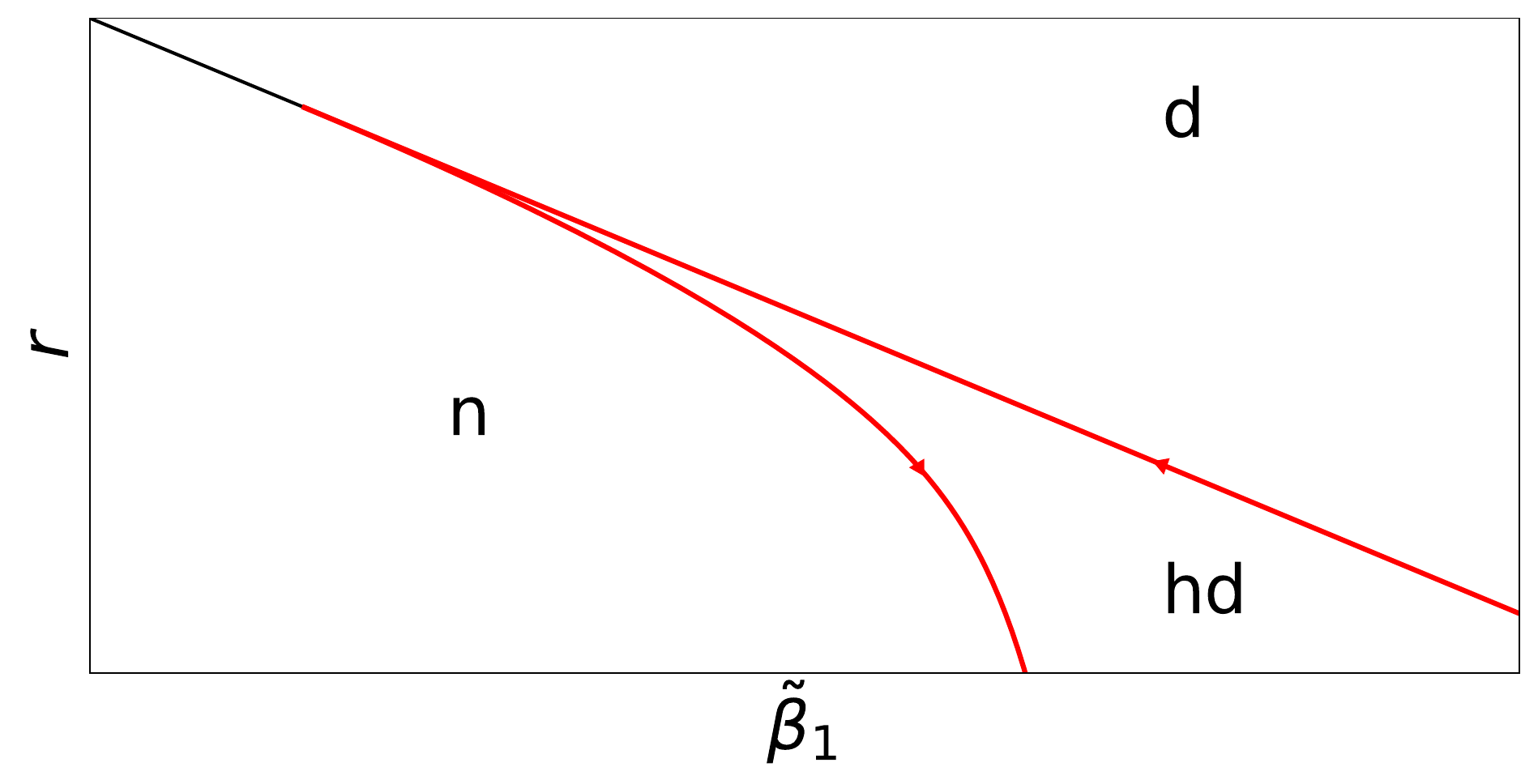}
		\caption{Regime 1.}
	\end{subfigure}
	\hfill
	\begin{subfigure}{0.49\textwidth}
		\includegraphics[width=\textwidth]{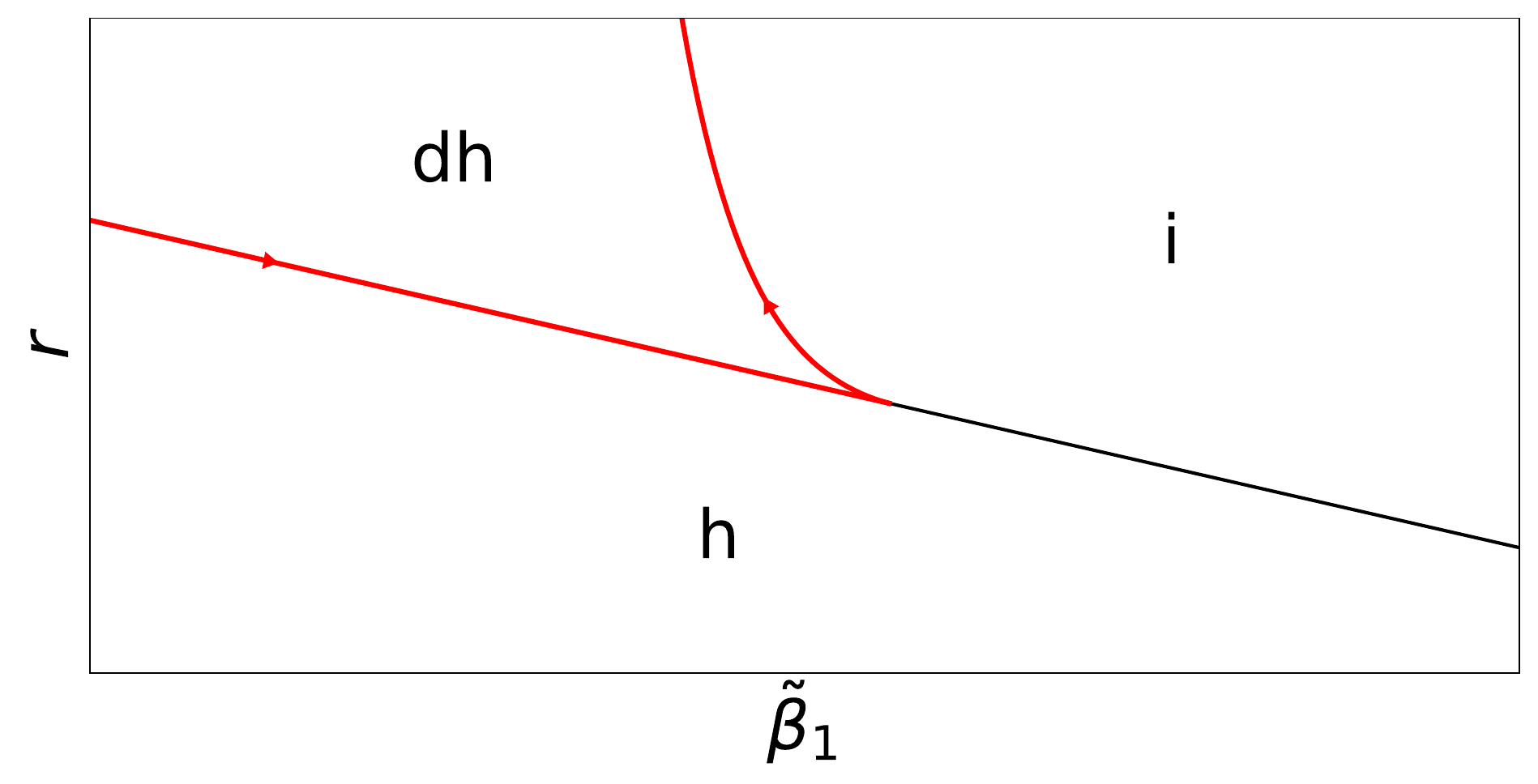}
		\caption{Regime 2.}
	\end{subfigure}
	
	\begin{subfigure}{0.49\textwidth}
		\includegraphics[width=\textwidth]{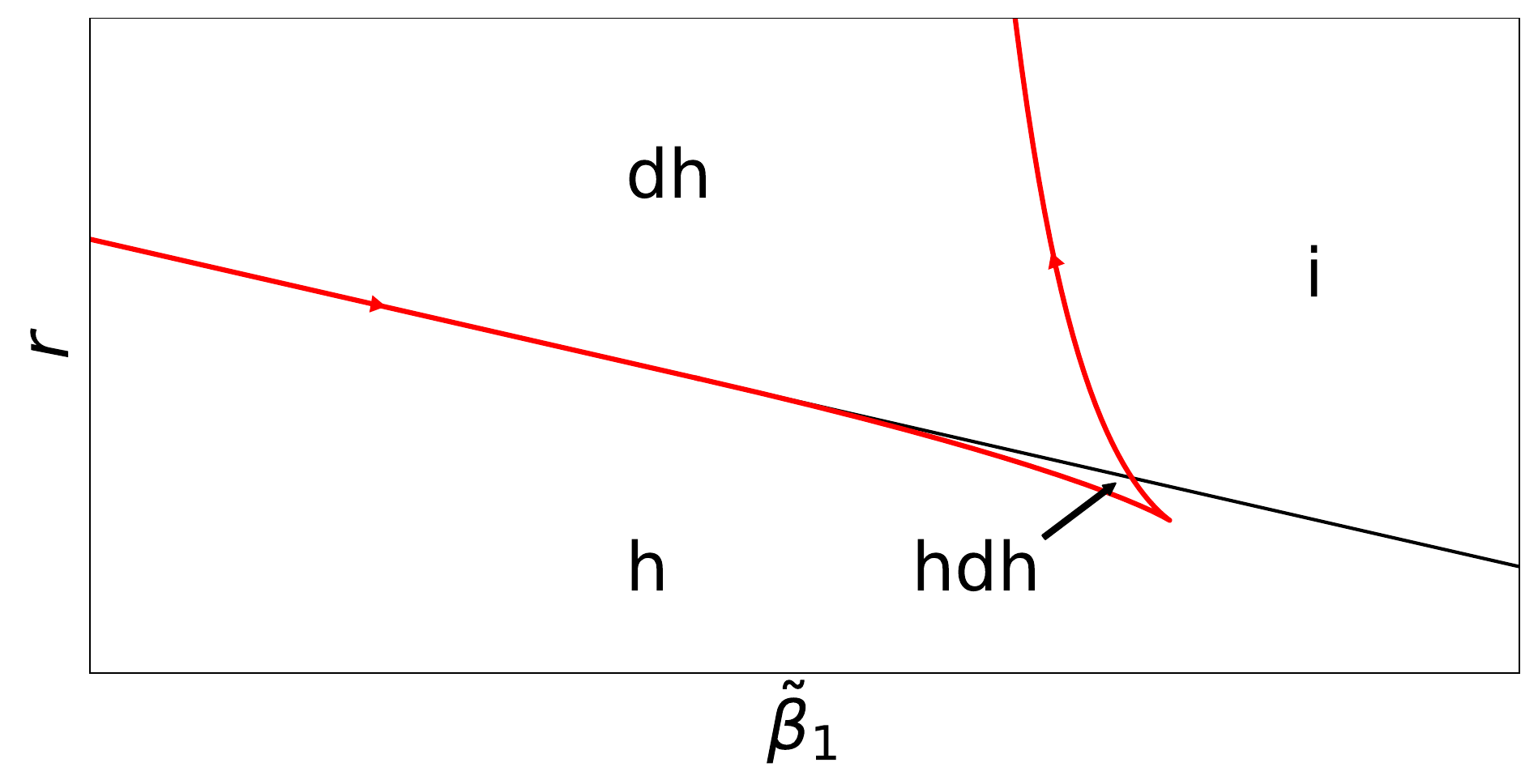}
		\caption{Regime 3.}
	\end{subfigure}
	\hfill
	\begin{subfigure}{0.49\textwidth}
		\includegraphics[width=\textwidth]{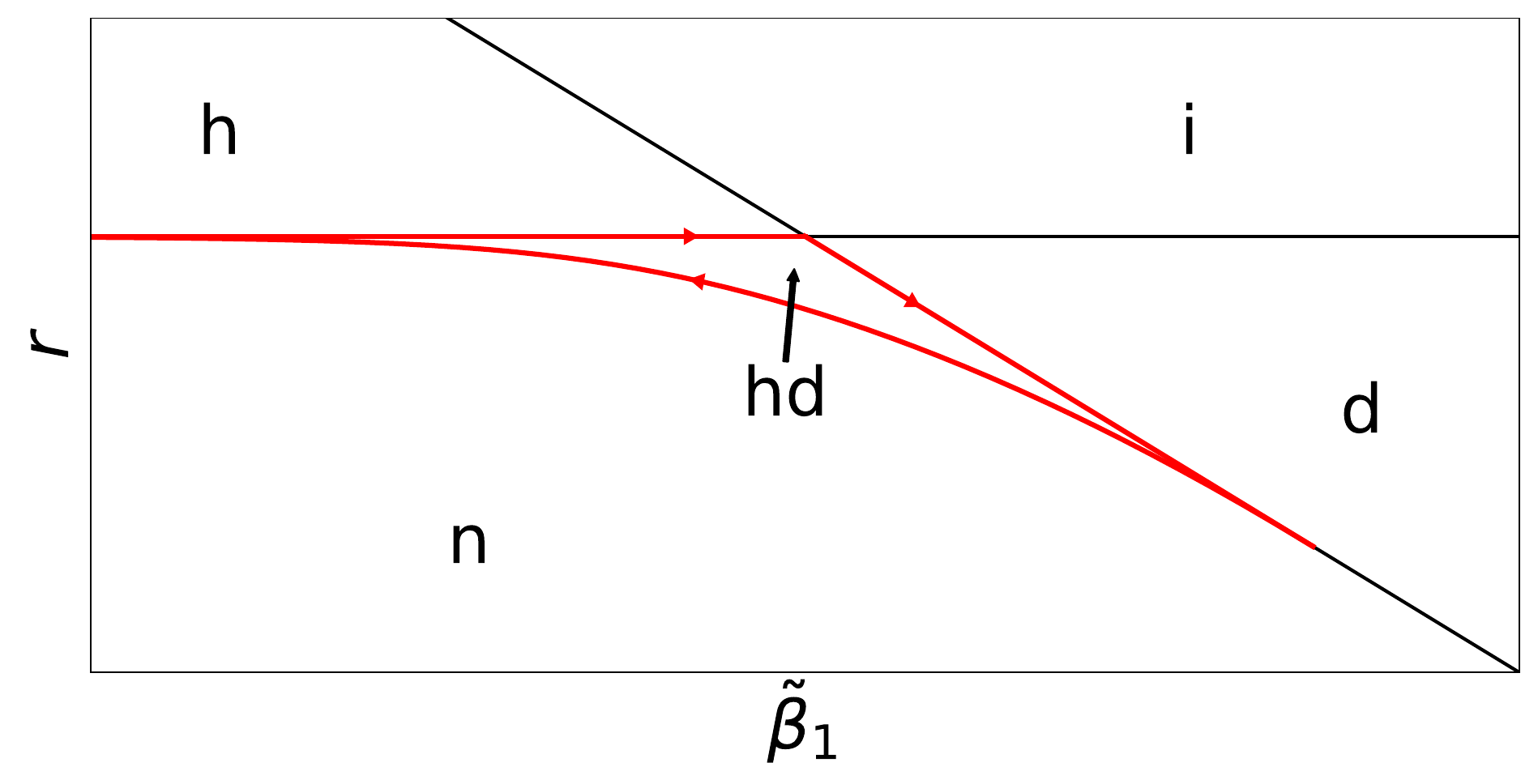}
		\caption{Regime 4.}
	\end{subfigure}
	
	\begin{subfigure}{0.49\textwidth}
		\includegraphics[width=\textwidth]{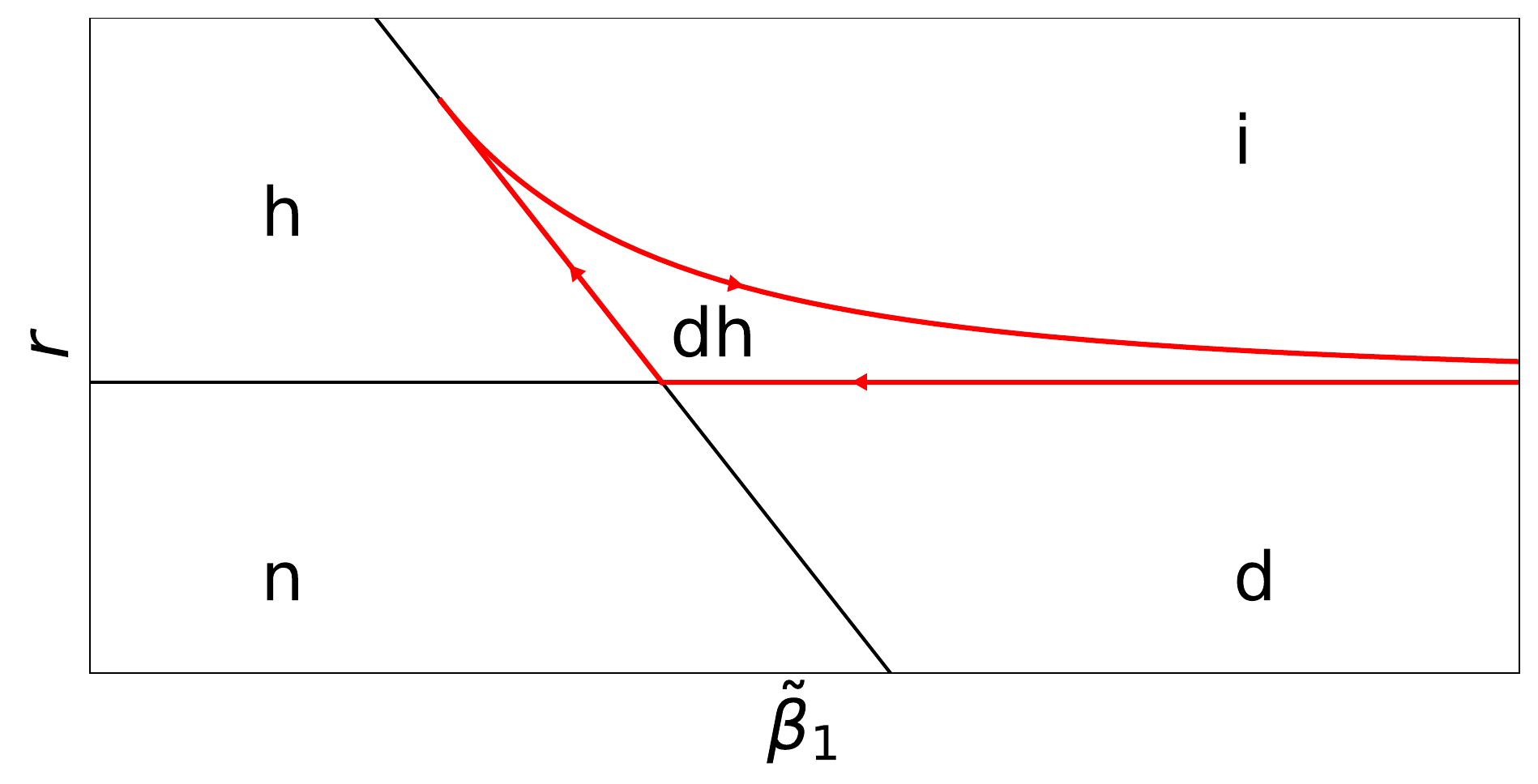}
		\caption{Regime 5.}
	\end{subfigure}
	\hfill
	\begin{subfigure}{0.49\textwidth}
		\includegraphics[width=\textwidth]{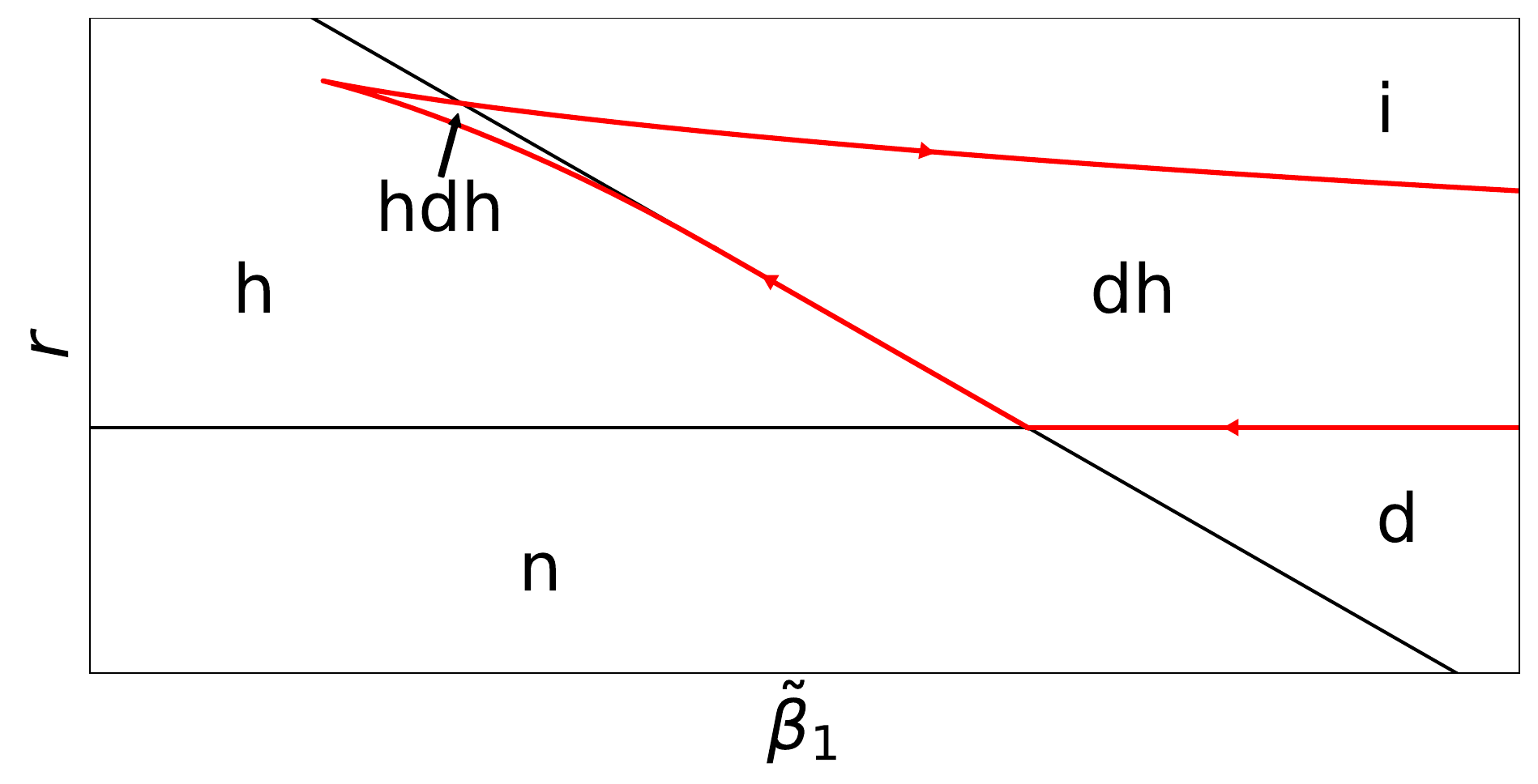}
		\caption{Regime 6.}
	\end{subfigure}
	
	\begin{subfigure}{0.49\textwidth}
		\includegraphics[width=\textwidth]{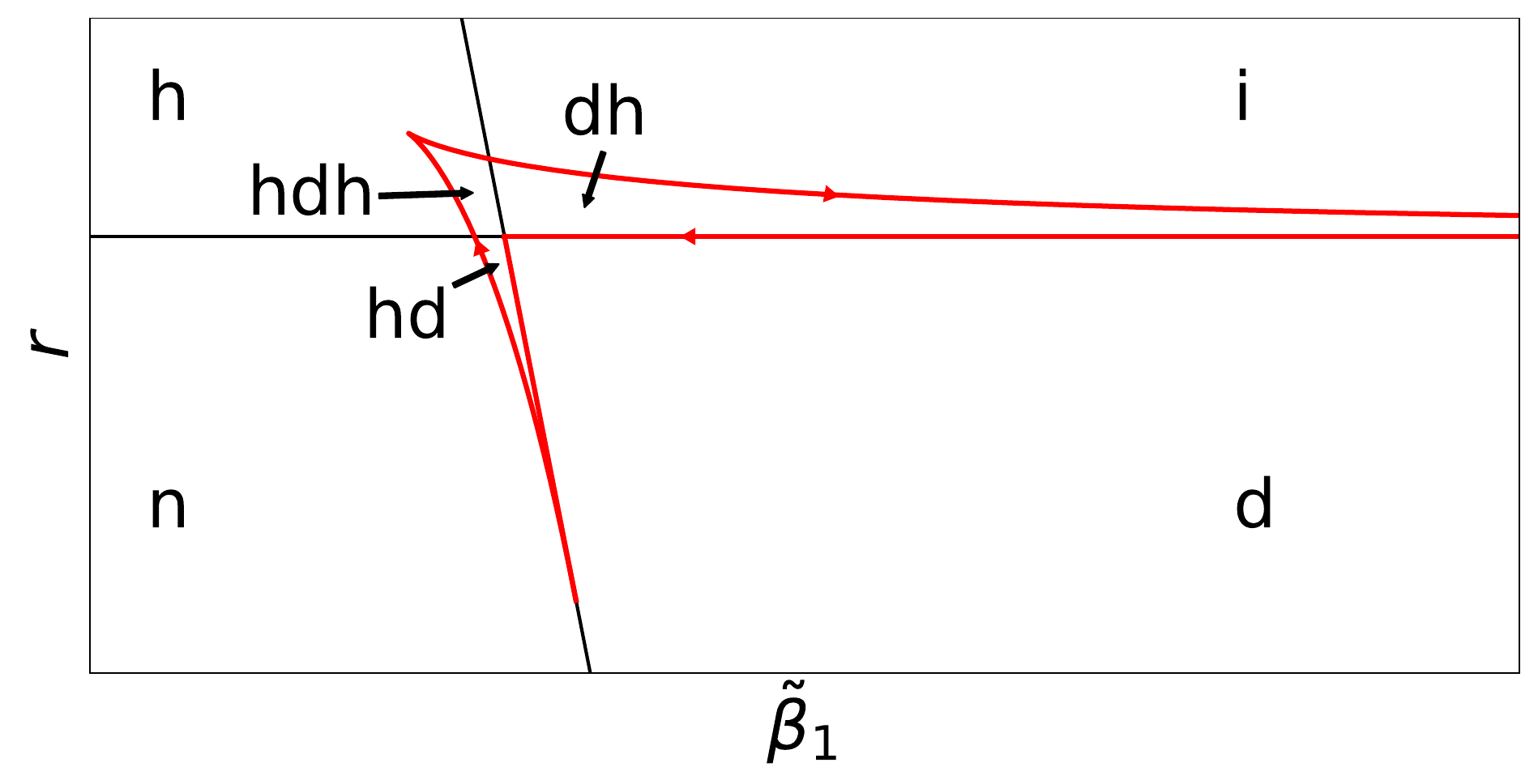}
		\caption{Regime 7.}
	\end{subfigure}

	\caption{Shapes of the forward curve for Nelson--Siegel--parameterized initial curves in seven regimes. The augmented envelope is colored in red.}
	\label{fig:Nelson_Siegel_indep_f}
\end{figure}

To determine the decomposition of $\Theta'$ for the yield curve, we remark, that the terminal sign of the Wronskian $W(a_{\mathrm{y}}, b_{\mathrm{y}}, c_{\mathrm{y}})$ is given by the sign of $\widetilde{\beta}_2$, if $\frac{1}{\tau}\le 2\lambda$, and $\mm$, if $2\lambda < \frac{1}{\tau}$. The Wronskian shares its initial sign with the initial sign of $W(a_{\mathrm{f}}, b_{\mathrm{f}}, c_{\mathrm{f}})$ \cite[Lem. A.2 f.]{KRS26} and thus $W(a_{\mathrm{y}}, b_{\mathrm{y}}, c_{\mathrm{y}})$ has a cusp iff $W(a_{\mathrm{f}}, b_{\mathrm{f}}, c_{\mathrm{f}})$ has a cusp.

We find nine different ways of separating $\Theta'$, when considering the yield curve, all of which can be attained by the right choice of $\lambda,\tau$ and $\widetilde{\beta}_2$, thus answering \textbf{Q1} and \textbf{Q2}. Visualizations are provided in \cref{fig:Nelson_Siegel_indep_y}.
\begin{enumerate}[1.]
	\item $\lambda < \frac{1}{\tau}\le 2\lambda$ and $\widetilde{\beta}_2 > 0$: The envelope starts on $r_{nd}$, moving into $Q_n$, where it has no cusps and thus does not intersect with $\ell_0$ or $\ell_\infty^{\mathrm{y}}$. It ends on $r_{nh}$. The shapes \texttt{normal, inverse, humped, dipped} and \texttt{hd} are attainable.
	\item $\frac{1}{\tau} < \lambda$ and $\widetilde{\beta}_2 < 0$: The envelope starts on $r_{nd}$, moving into $Q_n$, where it has no cusps and thus does not intersect with $\ell_0$ or $\ell_\infty^{\mathrm{y}}$. It asymptotically ends on $r_{nh}$. The shapes \texttt{normal, inverse, humped, dipped} and \texttt{hd} are attainable.
	\item $\{\lambda < \frac{1}{\tau} \le 2\lambda, \widetilde{\beta}_2 < 0$ and $D < 0\}$ or $\{2\lambda < \frac{1}{\tau}$ and $\{\widetilde{\beta}_2 < 0$ or $\widetilde{\beta_2} > 0$ and $D < 0\}\}$: The envelope starts on $r_{ih}$, moving into $Q_i$, where it has no cusps and thus does not intersect with $\ell_0$ or $\ell_\infty^{\mathrm{y}}$. It ends on $r_{id}$. The shapes \texttt{normal, inverse, humped, dipped} and \texttt{dh} are attainable.
	\item $\frac{1}{\tau} < \lambda, \widetilde{\beta}_2 > 0$ and $D < 0$: The envelope starts on $r_{ih}$, moving into $Q_i$, where it has no cusps and thus does not intersect with $\ell_0$ or $\ell_\infty^{\mathrm{y}}$. It asymptotically ends on $r_{id}$. The shapes \texttt{normal, inverse, humped, dipped} and \texttt{dh} are attainable.
	\item $\{\lambda < \frac{1}{\tau} \le 2\lambda$ and $ \widetilde{\beta}_2 < 0$ or  $2\lambda < \frac{1}{\tau}$ and $\widetilde{\beta}_2 > 0\}$ and $D > 0$ and \eqref{eq:end_env_HW_y_2}, but not \eqref{eq:start_env_HW_y_2}: The envelope starts on $r_{ih}$, moving into $Q_h$, where it has a cusp, crosses $\ell_0$ into $Q_i$ and ends on $r_{id}$. The shapes \texttt{normal, inverse, humped, dipped, dh} and \texttt{hdh} are attainable.
	\item $\frac{1}{\tau} < \lambda$ and $\widetilde{\beta}_2 > 0$ and $D > 0$, but not \eqref{eq:start_env_HW_y_1}: The envelope starts on $r_{ih}$, moving into $Q_h$, where it has a cusp, crosses $\ell_0$ into $Q_i$ and asymptotically ends on $r_{id}$. The shapes \texttt{normal, inverse, humped, dipped, dh} and \texttt{hdh} are attainable.
	\item $\{\lambda < \frac{1}{\tau} \le 2\lambda$ and $ \widetilde{\beta}_2 < 0$ or  $2\lambda < \frac{1}{\tau}$ and $\widetilde{\beta}_2 > 0\}$ and $D > 0$ and \eqref{eq:start_env_HW_y_2} and \eqref{eq:end_env_HW_y_2}: The envelope starts on $r_{nd}$, moving into $Q_n$, crosses $\ell_\infty^{\mathrm{y}}$ into $Q_h$, where it has a cusp, crosses $\ell_0$ into $Q_i$ and ends on $r_{id}$. The shapes \texttt{normal, inverse, humped, dipped, hd, dh} and \texttt{hdh} are attainable.
	\item $\frac{1}{\tau} < \lambda$ and $\widetilde{\beta}_2 > 0$ and $D > 0$ and \eqref{eq:start_env_HW_y_1}: The envelope starts on $r_{nd}$, moving into $Q_n$, crosses $\ell_\infty^{\mathrm{y}}$ into $Q_h$, where it has a cusp, crosses $\ell_0$ into $Q_i$ and asymptotically ends on $r_{id}$. The shapes \texttt{normal, inverse, humped, dipped, hd, dh} and \texttt{hdh} are attainable.
	\item $\{\lambda < \frac{1}{\tau} \le 2\lambda$ and $ \widetilde{\beta}_2 < 0$ or  $2\lambda < \frac{1}{\tau}$ and $\widetilde{\beta}_2 > 0\}$ and $D > 0$ and \eqref{eq:start_env_HW_y_2}, but not \eqref{eq:end_env_HW_y_2}: The envelope starts on $r_{nd}$, moving into $Q_n$, crosses $\ell_\infty^{\mathrm{y}}$ into $Q_h$, where it has a cusp and ends on $r_{nh}$. The shapes \texttt{normal, inverse, humped, dipped, hd} and \texttt{hdh} are attainable.
\end{enumerate}
\begin{figure}
	\centering
	\begin{subfigure}{0.49\textwidth}
		\includegraphics[width=\textwidth]{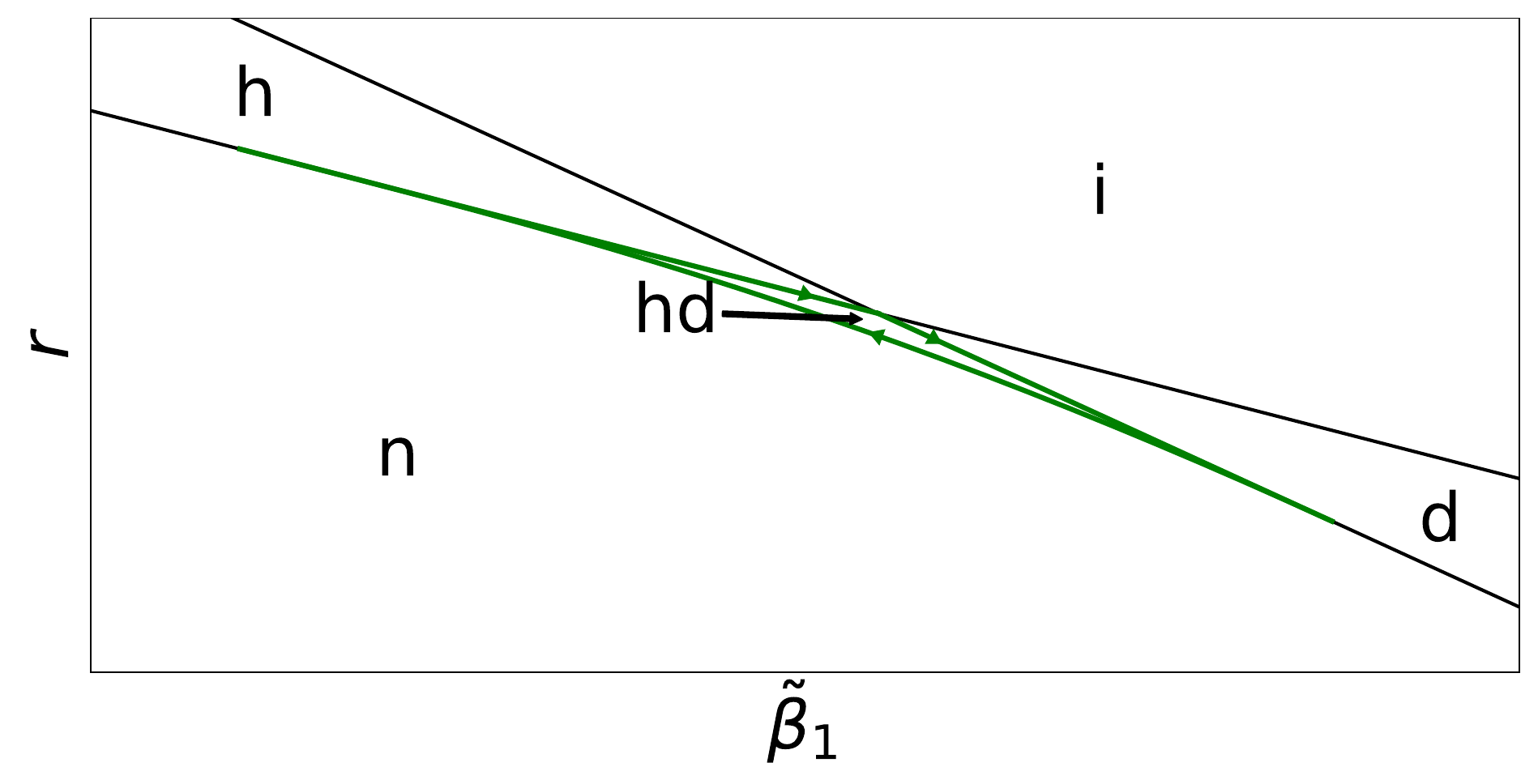}
		\caption{Regime 1.}
	\end{subfigure}
	\hfill
	\begin{subfigure}{0.49\textwidth}
		\includegraphics[width=\textwidth]{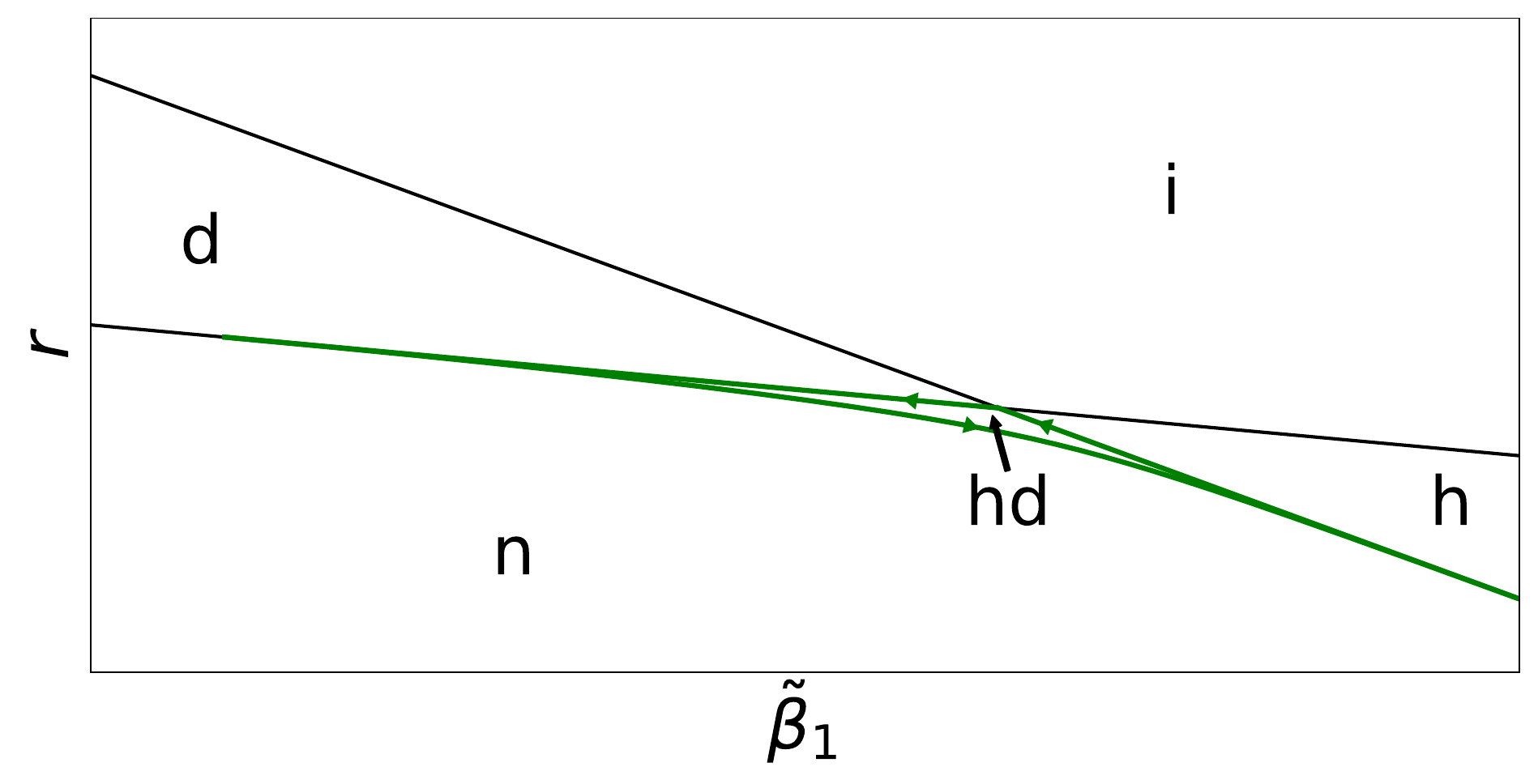}
		\caption{Regime 2.}
	\end{subfigure}
	
	\begin{subfigure}{0.49\textwidth}
		\includegraphics[width=\textwidth]{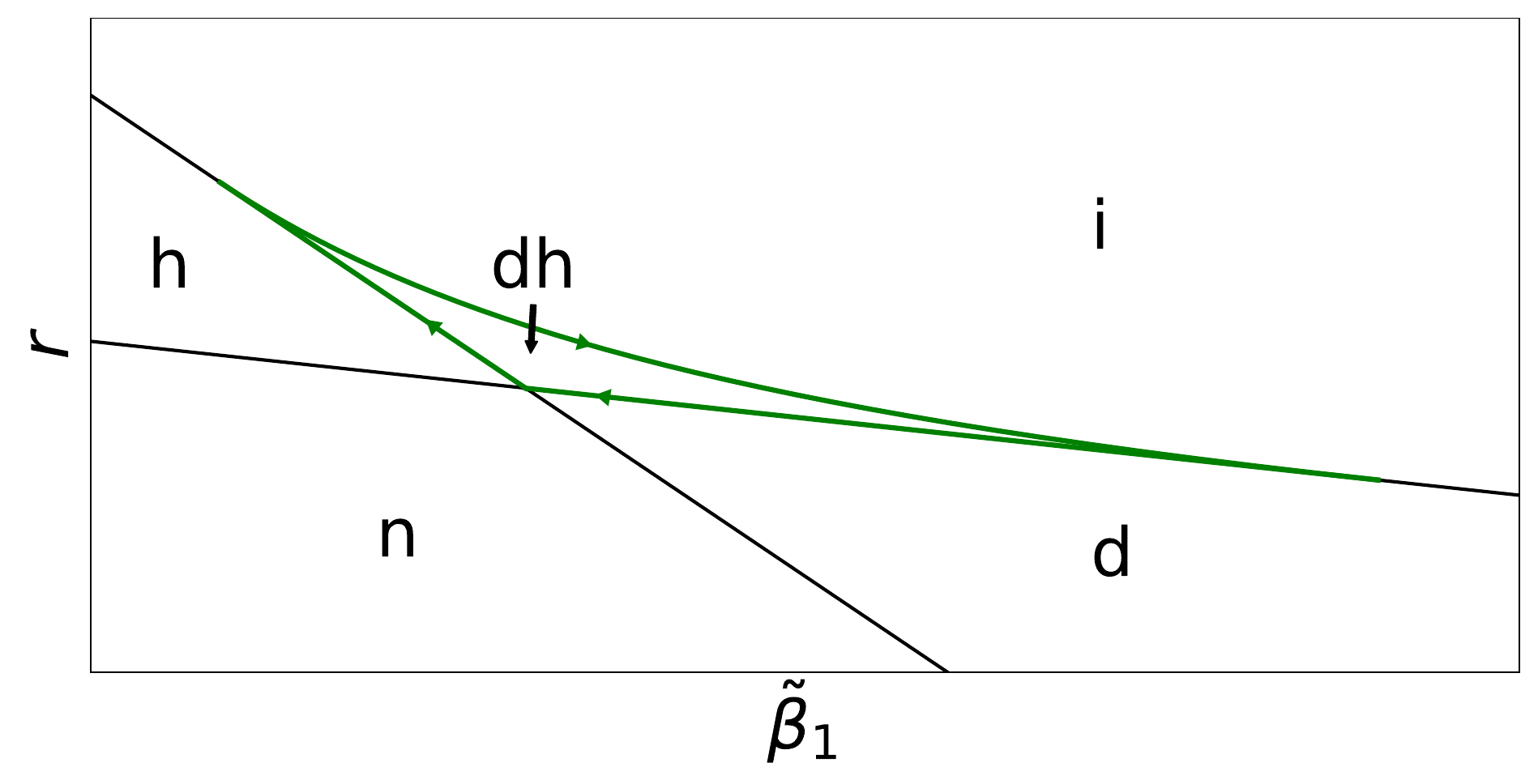}
		\caption{Regime 3.}
	\end{subfigure}
	\hfill
	\begin{subfigure}{0.49\textwidth}
		\includegraphics[width=\textwidth]{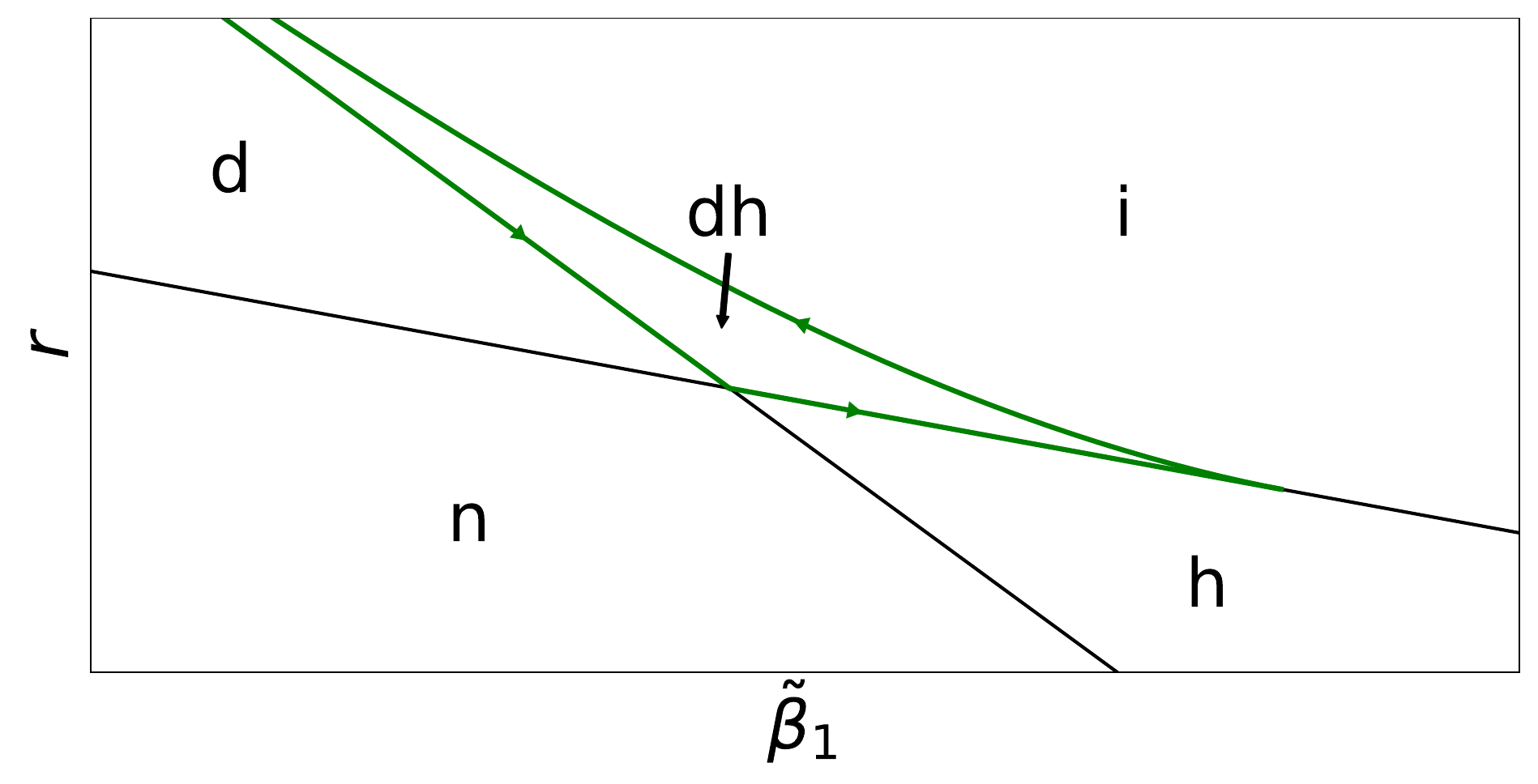}
		\caption{Regime 4.}
	\end{subfigure}
	
	\begin{subfigure}{0.49\textwidth}
		\includegraphics[width=\textwidth]{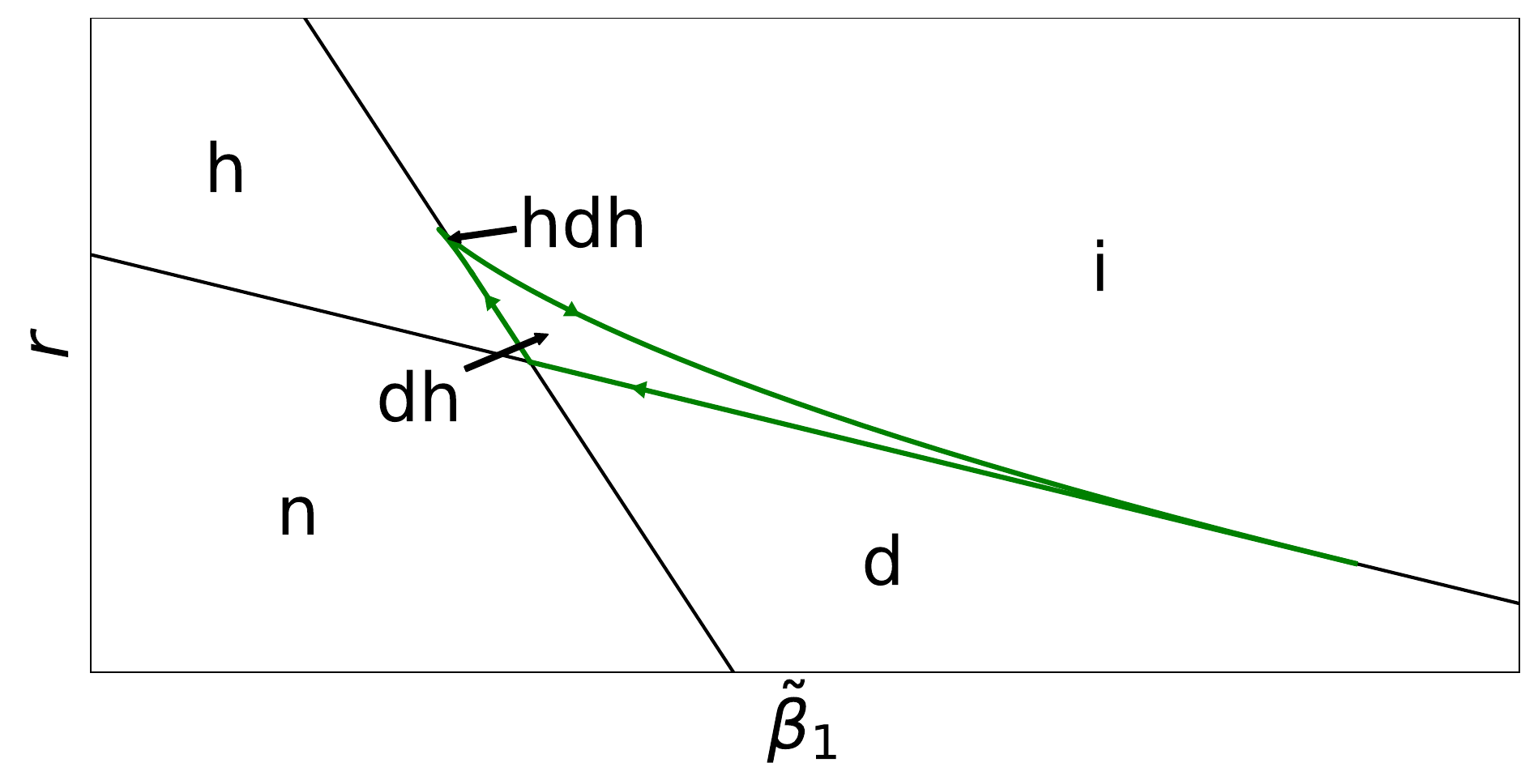}
		\caption{Regime 5.}
	\end{subfigure}
	\hfill
	\begin{subfigure}{0.49\textwidth}
		\includegraphics[width=\textwidth]{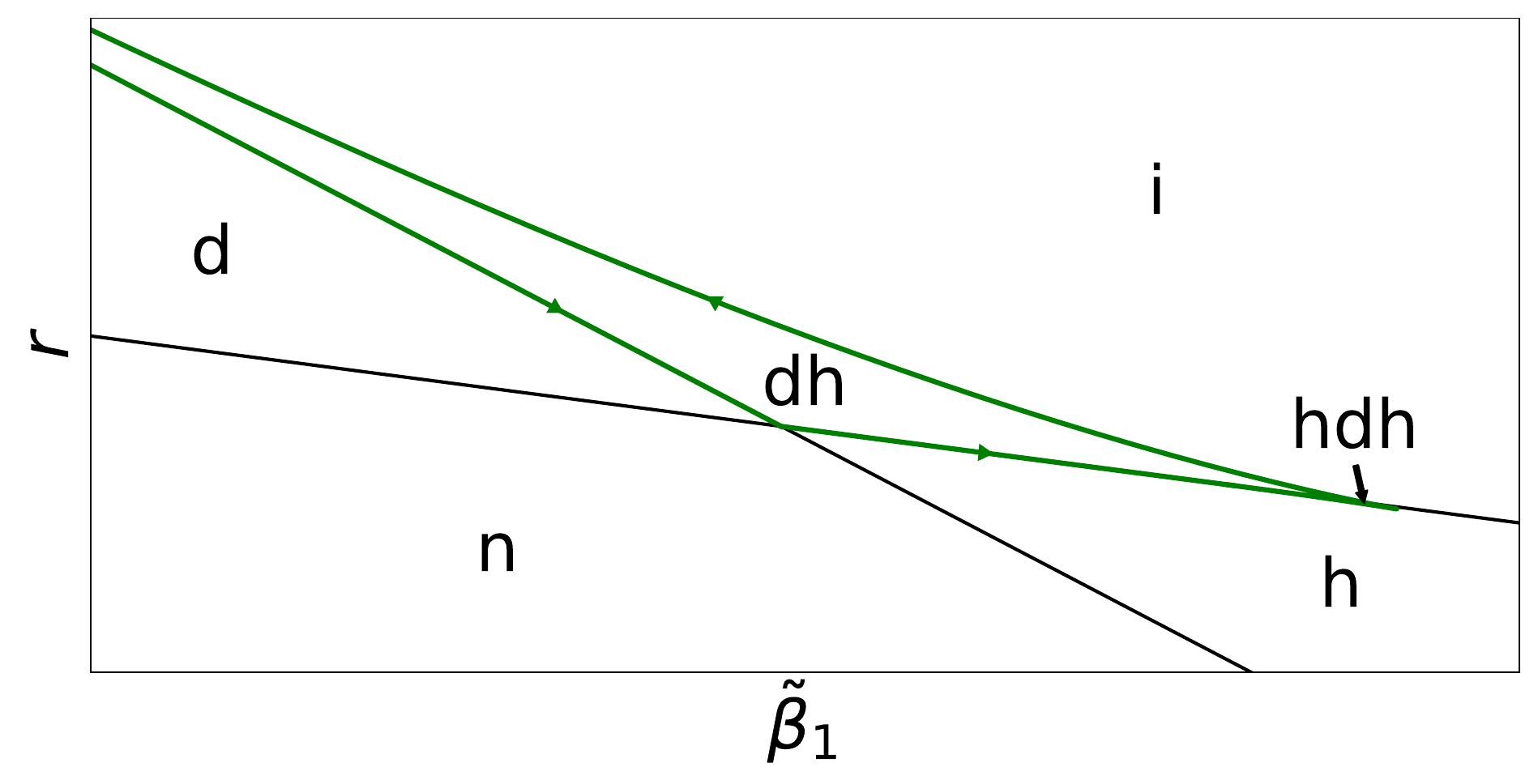}
		\caption{Regime 6.}
	\end{subfigure}
	
	\begin{subfigure}{0.49\textwidth}
		\includegraphics[width=\textwidth]{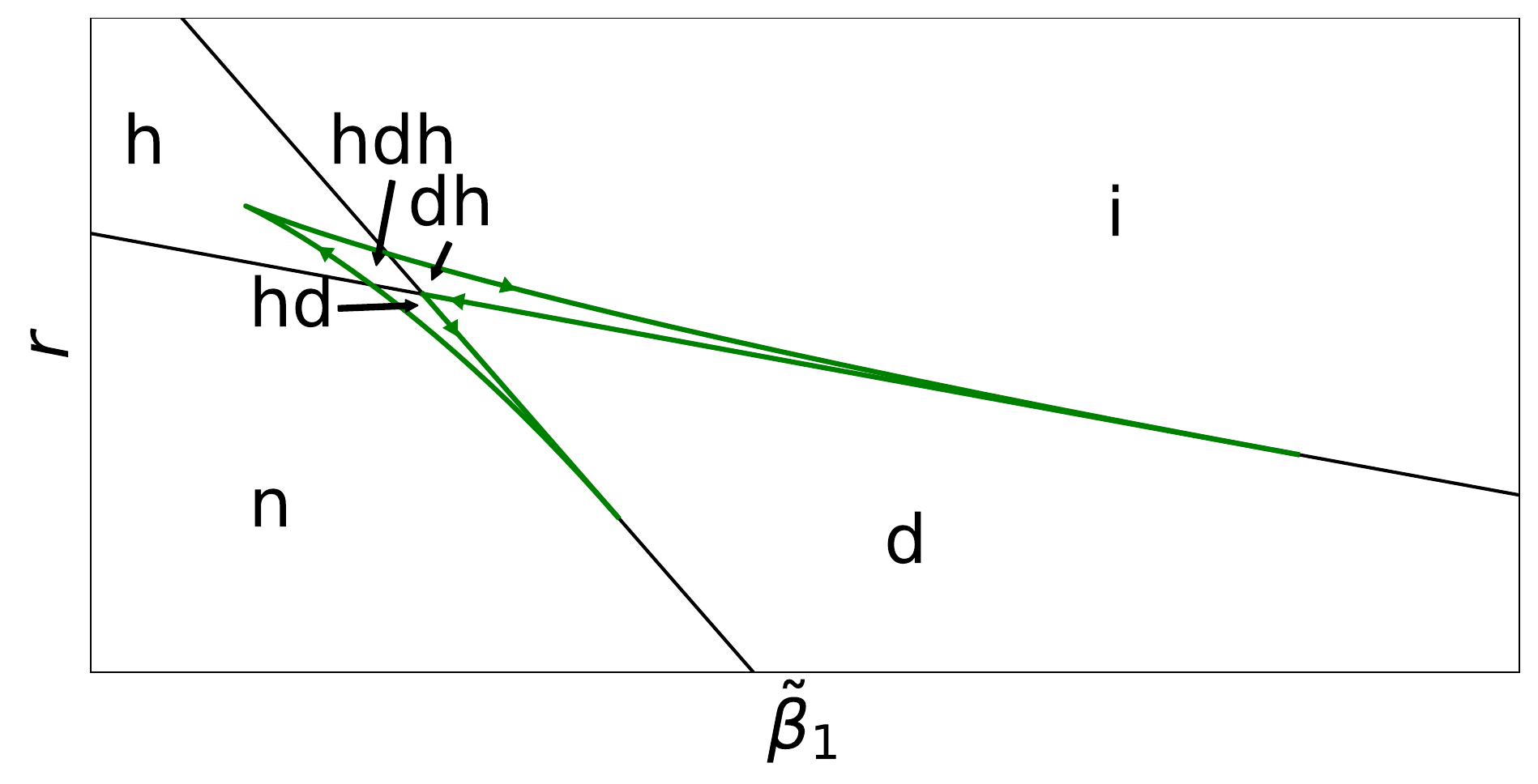}
		\caption{Regime 7.}
	\end{subfigure}
	\hfill
	\begin{subfigure}{0.49\textwidth}
		\includegraphics[width=\textwidth]{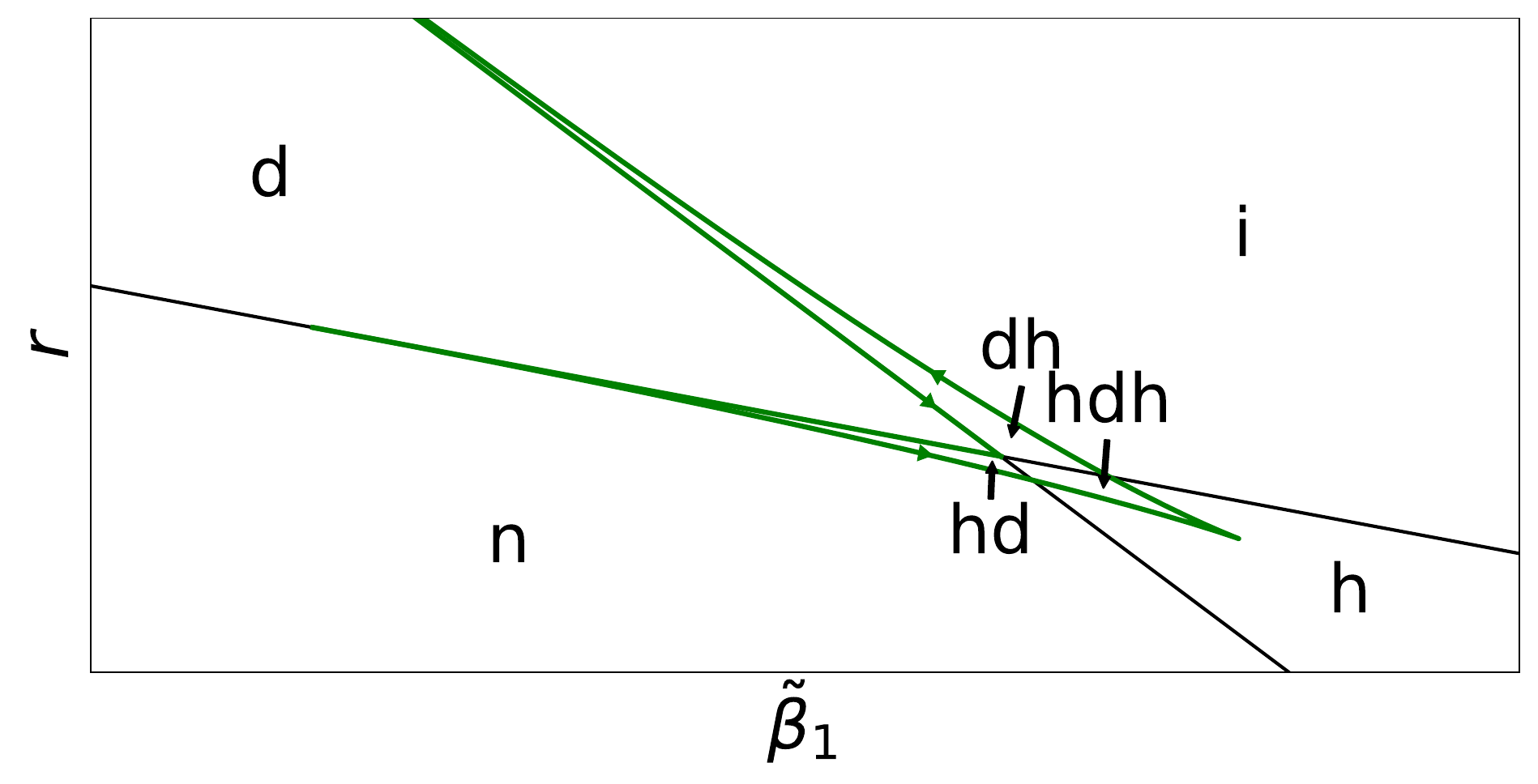}
		\caption{Regime 8.}
	\end{subfigure}
	
	\begin{subfigure}{0.49\textwidth}
		\includegraphics[width=\textwidth]{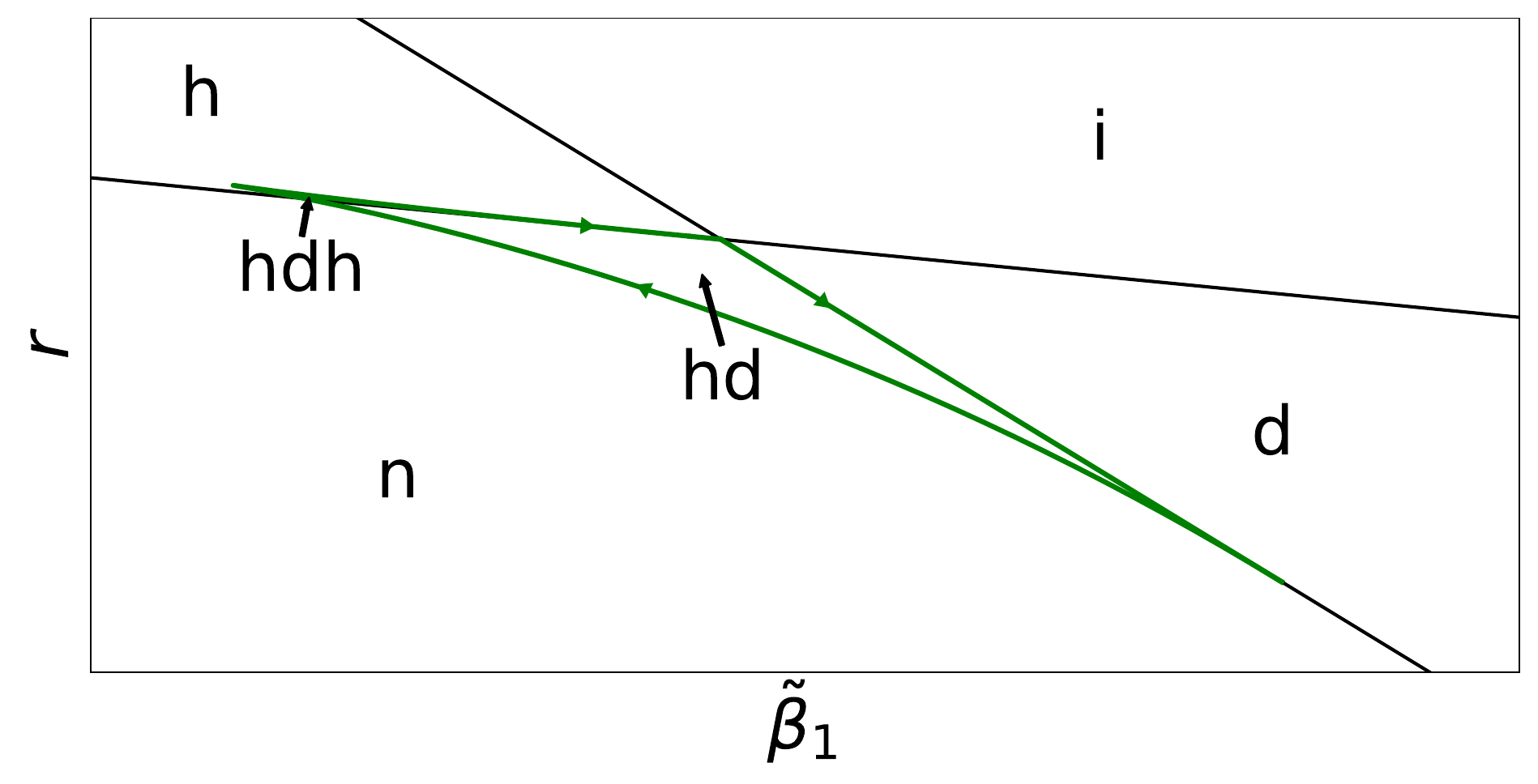}
		\caption{Regime 9.}
	\end{subfigure}
	
	\caption{Shapes of the yield curve for Nelson--Siegel--parameterized initial curves in nine regimes. The augmented envelope is colored in green.}
	\label{fig:Nelson_Siegel_indep_y}
\end{figure}

\begin{rem}
	The shape \texttt{dhd} is not attainable with Nelson--Siegel--parameter\-ized initial curves. To explicitly determine parameters for the shape \texttt{hdh}, one can thus use the approach described on \cite[p. 20]{KS66}.
\end{rem}

\section{A note on Bliss and Svensson}

We have derived upper bounds for the number of local extrema for Bliss--parameterized and Svensson--parameterized initial curves in Theorem~\ref{thm:bounds}. Studying \eqref{eq:for_der} and \eqref{eq:yield_der}, it is obvious that, by choosing $t$ small and $r$, such that
\begin{align*}
	\left(\frac{\sigma^2}{2\lambda^2}\left(1-{\mathrm{e}}^{-2\lambda t}\right) - f_0(t) + r\right) \approx 0,
\end{align*}
at least the shape of the initial curve must be attainable for small values of $t$. This is a partial answer to \textbf{Q1}.
An analysis as in the case of Nelson--Siegel--parameterized initial curves can be done, but leads to problems already observed in \cite{KRS23} with the scale-proximal case of the Vasicek model with $\rho < 0$, where the exact location of the cusp points of the envelope could not be determined. Still, it is clear that parameter space and state space are separated into regions only by  those points where initial or terminal behavior changes or which are part of the envelope, thus giving us an answer to \textbf{Q2}. Contrary to the classification problem and the segmentation problem, the asymptotic behavior of the model, and thus \textbf{Q3}, can be studied quite well; this is done in the next section.

\section{Asymptotic behavior}\label{sec:asymptotics}

Here we study how the model behaves as $t$ tends to $\infty$. We will show that, for the yield curve, asymptotically, most shapes vanish and only a few remain. Additionally, the asymptotic frequencies of the remaining shapes do not depend on the choice of initial curve.

\begin{lem}\label{lem:asymptotic_att_shapes}
	If one of the conditions
	\begin{enumerate}
		\item $\lambda > \frac{1}{\tau_2}$ and $\frac{1}{\tau_1} > \frac{1}{\tau_2}$ and $\beta_3 < 0$, or
		\item $\lambda > \frac{1}{\tau_1}$ and $\left\{\frac{1}{\tau_1} > \frac{1}{\tau_2}\text{ or }\beta_3 = 0\right\}$ and $\beta_2 < 0$, or
		\item $\lambda > \frac{1}{\tau_1}$ and $\left\{\frac{1}{\tau_1} > \frac{1}{\tau_2}\text{ or }\beta_3 = 0\right\}$ and $\beta_2 = 0$ and $\beta_1 < 0,$
	\end{enumerate}
	is satisfied, there is a $K > 0$, such that for every $t > K$ one can find \text{(half-)}open intervals $I_{\texttt{i}}(t), I_{\texttt{h}}(t), I_{\texttt{hd}}(t), I_{\texttt{n}}(t)$ with $I_{\texttt{i}}(t)\cup I_{\texttt{h}}(t)\cup I_{\texttt{hd}}(t)\cup I_{\texttt{n}}(t) = \RR$, such that the curve $x\mapsto y(x,t,r)$ is of shape $\texttt{s}\in\{\texttt{i}, \texttt{h}, \texttt{hd}, \texttt{n}\}$ when $r\in I_{\texttt{s}}(t)$.
	
	If none of the conditions is satisfied, there is a $K > 0$, such that for every $t > K$ one can find (half-)open intervals $I_{\texttt{i}}(t), I_{\texttt{h}}(t), I_{\texttt{n}}(t)$ with $I_{\texttt{i}}(t)\cup I_{\texttt{h}}(t)\cup I_{\texttt{n}}(t) = \RR$, such that the curve $x\mapsto y(x,t,r)$ is of shape $\texttt{s}\in\{\texttt{i}, \texttt{h}, \texttt{n}\}$ when $r\in I_{\texttt{s}}(t)$.
\end{lem}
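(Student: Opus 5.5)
The plan is to fix $t$ large and analyse the yield tracking function $x\mapsto r^{\mathrm{y}}_x(t)$. Its derivative is given by \eqref{eq:der_tracking_yield}. By Lemma~\ref{lem:tracking_envelope} and the discussion that follows it, $\Theta$ restricted to the vertical line through $t$ is partitioned into shape regions exactly by three kinds of values: $r_0(t)$, $r^{\mathrm{y}}_\infty(t)$ (when $\zeta^{\mathrm{y}}_\infty\neq\emptyset$), and the extremal values of the tracking function. So it suffices to show two things for all $t>K$: the number of critical points of $x\mapsto r^{\mathrm{y}}_x(t)$ is $1$ under conditions 1--3 and $0$ otherwise, and the relative order of these finitely many points is fixed. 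The intervals $I_{\texttt{s}}(t)$ are then the components between consecutive such points, with endpoints assigned according to the sign conventions.

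The first step is the decomposition of \eqref{eq:der_tracking_yield} into a leading and a perturbation part,
\[
\partial_x r^{\mathrm{y}}_x(t)=\frac{\sigma^2}{4\lambda^2}\bigl(1-{\mathrm{e}}^{-2\lambda t}\bigr)g_1\Bigl(x,\tfrac{1}{2\lambda}\Bigr)+\varepsilon_t(x).
\]
Here $\varepsilon_t$ is a combination of $g_1(\cdot,\tau_i)$ and $g_2(\cdot,\tau_i)$ with coefficients of order $(1+t){\mathrm{e}}^{-t/\tau_i}$, which tend to $0$. The leading term is the Vasicek-type term: it has a strict negative sign on $(0,\infty)$, which is consistent with the limit $-\sigma^2/\lambda$ at $x=0$ and with the first case of \eqref{eq:term_dir_track_y}. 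I then compare decay rates as $x\to\infty$.

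If $1/\tau_i>\lambda$ for every active exponent, the numerators of $g_j(x,\tau_i)$ decay at least like ${\mathrm{e}}^{-\lambda x}$ (up to polynomial factors), as does the numerator of $g_1(x,\tfrac{1}{2\lambda})$. Hence the ratio $\varepsilon_t(x)/g_1(x,\tfrac{1}{2\lambda})$ is bounded by $C(1+t){\mathrm{e}}^{-t/\min\tau_i}$ uniformly in $x\in(0,\infty)$. I would also check the behaviour at $x\downarrow0$ using the expansions from Lemma~\ref{lem:integral_representation}. The same conclusion holds in the boundary cases ($1/\tau_1=\lambda$ with $\beta_2=0$, or the case where the offending coefficients vanish), and these are exactly the cases not covered by conditions 1--3. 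It follows that for $t>K$ the tracking function is strictly decreasing, the envelope is empty at $t$, and the shape is determined by the signs of $r-r_0(t)$ and $r-r^{\mathrm{y}}_\infty(t)$. Since $r_0(t)\to\beta_0+\sigma^2/(2\lambda^2)$ and $r^{\mathrm{y}}_\infty(t)\to\beta_0-\sigma^2/(4\lambda^2)$, we have $r^{\mathrm{y}}_\infty(t)<r_0(t)$ eventually. This yields $I_{\texttt{i}}=(-\infty,r^{\mathrm{y}}_\infty(t)]$, $I_{\texttt{h}}$ the middle interval, and $I_{\texttt{n}}=[r_0(t),\infty)$.

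Under conditions 1--3 some exponent satisfies $1/\tau_i<\lambda$ with nonzero leading coefficient of the right sign. Then $\varepsilon_t$ dominates as $x\to\infty$, and by \eqref{eq:term_dir_track_y} the terminal direction has sign $-\beta_3$, $-\beta_2$ or $-\beta_1$, which is positive. Meanwhile the initial direction is negative for large $t$. So the tracking function has at least one critical point. For the upper bound I split $(0,\infty)$ at a threshold $X(t)$ chosen so that ${\mathrm{e}}^{-(\lambda-1/\tau_i)X(t)}$ is comparable to the perturbation size, which makes $X(t)$ grow linearly in $t$. On $(0,X(t)/2]$ the leading term dominates uniformly, so there are no zeros. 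On $[X(t)/2,\infty)$ all $g$'s are, up to exponentially small relative errors, replaced by their asymptotic exponential-monomials. There $\partial_x r^{\mathrm{y}}_x(t)$ is a two-term polynomial of the Tchebycheff system $({\mathrm{e}}^{-\lambda x}, x^k{\mathrm{e}}^{-x/\tau_i})$ (cf.\ Theorem~\ref{thm:T-system_upper_bounds}), and such a polynomial has at most one zero, which must be simple. This gives exactly one regular point on $\eta^{\mathrm{y}}$, namely a minimum value $m(t)$ of the tracking function. Moreover $\zeta^{\mathrm{y}}_\infty=\emptyset$ in this regime, and the minimum sits at $x\to\infty$ at the level where the Vasicek-type curve would flatten. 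I would show $m(t)<r_0(t)$, and then read off the four intervals by following the level $r$ against the graph of $x\mapsto r^{\mathrm{y}}_x(t)$: one zero for $r$ between $m(t)$ and $r_0(t)$ giving \texttt{h}, two zeros for $r$ just above the minimum giving \texttt{hd}, and so on. The remaining \texttt{i}/\texttt{n} pieces come from the sign of $r-r_0(t)$.

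The main obstacle is the uniform zero count in the regime of conditions 1--3, in particular the gluing of the two regions $x\le X(t)$ and $x\ge X(t)$. The constants in the perturbation argument involve $t{\mathrm{e}}^{-t/\tau}$ and the quotient structure of $g_1,g_2$, and they must be controlled so that no spurious zero appears near the junction. I would first try to avoid the splitting altogether: multiply $\partial_x r^{\mathrm{y}}_x(t)$ by the positive denominator $({\mathrm{e}}^{-\lambda x}(1+\lambda x)-1)^2/x$ and apply Lemma~\ref{lem:ECT_der} together with a sign count of the coefficients (a Descartes-rule argument for the ECT-system). Only if that fails to force a single sign change for large $t$ would I fall back on the asymptotic splitting.
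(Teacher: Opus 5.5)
Your proposal has two genuine gaps: the core step (at most one turn) is not proved, and the final assignment of intervals is wrong.

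\textbf{The core step.} For large $t$ you need the yield tracking function $x\mapsto r^{\mathrm{y}}_x(t)$ to turn at most once, and the proposal does not establish this.

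The tool you would try first is not available. Lemma~\ref{lem:ECT_der} only makes $\partial_x r^{\mathrm{y}}_x(t)$ a polynomial of a five-function ECT-system, which gives at most four zeros, not one. ECT-systems obey no Descartes-type sign rule: $(1,x,x^2-cx)$ is ECT on $[0,\infty)$, yet $x^2-cx+d$ has two positive zeros for $0<4d<c^2$. This loss of the variation-diminishing property is exactly what the appendix points out. Your fallback, splitting at $X(t)$, needs derivative control of the error terms near the zero and near the junction, and you leave that open.

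The complementary case is also incomplete. Your uniform bound on $\varepsilon_t/g_1(\cdot,\tfrac{1}{2\lambda})$ requires every active exponent to decay faster than $\lambda$. The complement of conditions 1--3 also contains other cases:
\begin{itemize}
\item $\lambda>1/\tau_1$ with $\beta_2>0$. Here $\varepsilon_t$ dominates for large $x$, so you must show it reinforces rather than cancels the leading term.
\item $\lambda=1/\tau_1$ with $\beta_2\neq0$. Here $g_1(\cdot,\tau_1)\equiv0$ but $g_2(x,\tau_1)\sim\lambda^3x^2{\mathrm{e}}^{-\lambda x}$ outgrows the leading term. By \eqref{eq:term_dir_track_y} the terminal direction is then $-\beta_2$, so for $\beta_2<0$ the tracking function does turn (compare the Nelson--Siegel case $\lambda=1/\tau$, $\beta_2<0$).
\end{itemize}
So your claim that ``these are exactly the cases not covered by conditions 1--3'' is false.

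The missing idea is the paper's soft argument. By Lemma~\ref{lem:env_limit_HW}, for large $t$ all points of $\eta^{\mathrm{f}}$ lie near $\beta_0-\tfrac{\sigma^2}{2\lambda^2}$ and all points of $\eta^{\mathrm{y}}$ lie near $\beta_0-\tfrac{\sigma^2}{4\lambda^2}$, both well below $r_0(t)$. Hence for $r\in(\beta_0-\tfrac{3\sigma^2}{8\lambda^2},r_0(t))$ the forward curve has at most two extrema. Two turns of the yield tracking function in that window would produce a level with three yield extrema, contradicting \cite[Thm.~3.3]{keller2021classification}. Whether there is one turn or none is then decided only by comparing the initial direction (negative for large $t$) with the terminal direction \eqref{eq:term_dir_track_y}. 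No perturbation estimates are needed.

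\textbf{The endgame.} $\zeta^{\mathrm{y}}_\infty$ is never empty; you may be thinking of $\zeta^{\mathrm{f}}_\infty$, which is empty in this regime. The integral in \eqref{eq:ell_infty_gen} converges as $x\to\infty$, and the $x$-dependent factors tend to $\lambda$ and $\tfrac12$. So $r^{\mathrm{y}}_x(t)\to r^{\mathrm{y}}_\infty(t)$, a finite value, in every regime.

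Under conditions 1--3 the tracking function therefore decreases from $r_0(t)$ to its minimum $m(t)$ and then increases to $r^{\mathrm{y}}_\infty(t)$. Moreover $r^{\mathrm{y}}_\infty(t)<r_0(t)$ for large $t$, because the limits satisfy $\beta_0-\tfrac{\sigma^2}{4\lambda^2}<\beta_0+\tfrac{\sigma^2}{2\lambda^2}$; this is the paper's ``not reaching its starting level again''. The value $r^{\mathrm{y}}_\infty(t)$ is the breakpoint between \texttt{h} and \texttt{hd} that your description lacks. Without it your account contradicts itself: one zero for all $r\in(m(t),r_0(t))$, yet two zeros just above $m(t)$.

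Your orientation is also reversed. The coefficient of $r$ in \eqref{eq:yield_der} is negative, so $\partial_x y>0$ iff $r<r^{\mathrm{y}}_x(t)$. Large $r$ therefore gives \texttt{inverse} and small $r$ gives \texttt{normal}, the opposite of your $I_{\texttt{i}}=(-\infty,r^{\mathrm{y}}_\infty(t)]$ and $I_{\texttt{n}}=[r_0(t),\infty)$.

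The correct partition under conditions 1--3 is $I_{\texttt{n}}=(-\infty,m(t)]$, $I_{\texttt{hd}}=(m(t),r^{\mathrm{y}}_\infty(t))$, $I_{\texttt{h}}=[r^{\mathrm{y}}_\infty(t),r_0(t))$ and $I_{\texttt{i}}=[r_0(t),\infty)$. Otherwise it is $I_{\texttt{n}}=(-\infty,r^{\mathrm{y}}_\infty(t)]$, $I_{\texttt{h}}=(r^{\mathrm{y}}_\infty(t),r_0(t))$ and $I_{\texttt{i}}=[r_0(t),\infty)$.
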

\begin{thm}\label{thm:asymptotic_rel_freq_yield}
	In the Hull--White model with Svensson--parameterized initial curves, the asymptotic relative frequencies of the shapes of the yield curve $x\mapsto y(x,t,\mathbf{r}(t))$ to appear under the risk neutral measure $\QQ$ are given by
	\begin{align*}
		\QQ(x\mapsto y(x,t,\mathbf{r}(t)) \text{ is \texttt{inverse}}) &\overset{t\rightarrow\infty}{\longrightarrow} 0.5,\\
		\QQ(x\mapsto y(x,t,\mathbf{r}(t)) \text{ is \texttt{humped}}) &\overset{t\rightarrow\infty}{\longrightarrow} 0.5 - \Phi\left(-\frac{3\sigma}{\sqrt{8\lambda^3}}\right),\\
		\QQ(x\mapsto y(x,t,\mathbf{r}(t)) \text{ is \texttt{normal}}) &\overset{t\rightarrow\infty}{\longrightarrow} \Phi\left(-\frac{3\sigma}{\sqrt{8\lambda^3}}\right),
	\end{align*}
	where $\Phi$ denotes the cumulative distribution function of the standard normal distribution.
\end{thm}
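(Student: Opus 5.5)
The plan is to combine Lemma~\ref{lem:asymptotic_att_shapes} with the explicit Gaussian law of $\mathbf{r}(t)$. First I show that the region $I_{\texttt{hd}}(t)$, where it exists, has vanishing probability. Then I compute the limiting probabilities of the three remaining shapes from the limiting boundary points $r_0(t)$ and $r^{\mathrm{y}}_\infty(t)$.

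\emph{Step 1 (law of the short rate).} In the Hull--White model $\mathbf{r}(t)$ is Gaussian. Using that $b$ is chosen to match $f_0$, the standard computation (cf.\ \cite{Fil09}) gives
\begin{align*}
\EE[\mathbf{r}(t)] = f_0(t) + g(t) = f_0(t) + \frac{\sigma^2}{2\lambda^2}\left(1-{\mathrm{e}}^{-\lambda t}\right)^2
\quad\text{and}\quad
\VV[\mathbf{r}(t)] = \frac{\sigma^2}{2\lambda}\left(1-{\mathrm{e}}^{-2\lambda t}\right).
\end{align*}
Since $f_0^S(t)\to\beta_0$, we get $\mathbf{r}(t)\Rightarrow N\big(\beta_0+\frac{\sigma^2}{2\lambda^2},\frac{\sigma^2}{2\lambda}\big)$.

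\emph{Step 2 (identifying the interval endpoints).} By Lemma~\ref{lem:asymptotic_att_shapes}, for $t>K$ the shape is determined by the position of $r$ relative to finitely many points. The intervals are ordered \texttt{i}, \texttt{h}, (\texttt{hd},) \texttt{n} as $r$ increases, since the coefficient of $r$ in \eqref{eq:yield_der} is negative.

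The boundary between \texttt{inverse} and \texttt{humped} is where the initial sign changes, namely $r_0(t)$. This is because a humped curve is initially increasing and terminally decreasing, while an inverse curve is initially decreasing. The boundary towards \texttt{normal} is where the terminal sign changes, namely $r^{\mathrm{y}}_\infty(t)$. Under the conditions (1)--(3), $\zeta_\infty^{\mathrm{y}}$ may instead be replaced by a point on the envelope, so that $I_{\texttt{hd}}(t)$ lies between the \texttt{h} and \texttt{n} regions.

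From the limits computed above, $r_0(t)\to\beta_0+\frac{\sigma^2}{2\lambda^2}$ and $r^{\mathrm{y}}_\infty(t)\to\beta_0-\frac{\sigma^2}{4\lambda^2}$. So at the limit the ordering is $r_\infty^{\mathrm{y}}<r_0$: small $r$ gives \texttt{normal}, intermediate $r$ gives \texttt{humped}, and large $r$ gives \texttt{inverse}.

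\emph{Step 3 (the \texttt{hd} interval vanishes).} This is the main obstacle. I need to show that the length of $I_{\texttt{hd}}(t)$ tends to $0$, or at least that both of its endpoints converge to $\beta_0-\frac{\sigma^2}{4\lambda^2}$. The endpoints are extremal values of the tracking function $x\mapsto r^{\mathrm{y}}_x(t)$ near $x=\infty$, together with the terminal limit.

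My first attempt would use \eqref{eq:ell_infty_gen}. The terms involving $f_0'(\xi+t)$ and $f_0(t)$ carry factors ${\mathrm{e}}^{-t/\tau_i}$ or $t{\mathrm{e}}^{-t/\tau_i}$ multiplied by bounded functions of $x$, namely $h_1$, $h_2$ divided by ${\mathrm{e}}^{-\lambda x}(1+\lambda x)-1$, which is bounded away from $0$ for $x$ away from $0$. Hence $\sup_{x\ge x_0}|r^{\mathrm{y}}_x(t)-\rho(x)|\to0$, where $\rho(x)$ is the tracking function of the Vasicek limit. That limiting function is monotone on the relevant range, by the flat-curve analysis in the section on flat initial curves. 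So any extremal value occurring at large $x$ converges to $\beta_0-\frac{\sigma^2}{4\lambda^2}$, and the envelope points creating \texttt{hd} squeeze onto $r^{\mathrm{y}}_\infty$. Because the limiting law has a density, $\QQ(\mathbf{r}(t)\in I_{\texttt{hd}}(t))\to0$.

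\emph{Step 4 (conclusion).} By weak convergence and continuity of the limit distribution, endpoints converging to constants give
\begin{align*}
\QQ(\texttt{inverse}) &\to \QQ\left(Z>\beta_0+\tfrac{\sigma^2}{2\lambda^2}\right) = 0.5,\\
\QQ(\texttt{normal}) &\to \QQ\left(Z<\beta_0-\tfrac{\sigma^2}{4\lambda^2}\right) = \Phi\left(-\frac{3\sigma^2/(4\lambda^2)}{\sigma/\sqrt{2\lambda}}\right) = \Phi\left(-\frac{3\sigma}{\sqrt{8\lambda^3}}\right),
\end{align*}
where $Z$ has the limiting law from Step 1. \texttt{Humped} receives the remainder, $0.5-\Phi\left(-\frac{3\sigma}{\sqrt{8\lambda^3}}\right)$.
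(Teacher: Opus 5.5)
Your proof is correct in outline and follows the paper's skeleton: the Gaussian law of $\mathbf r(t)$, the interval structure from Lemma~\ref{lem:asymptotic_att_shapes}, and convergence of the interval endpoints. The key step, showing that the \texttt{hd} window collapses, is done by a different argument.

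The paper first proves that solutions of \eqref{eq:env_yield} satisfy $x\to\infty$ as $t\to\infty$. This is a Rolle argument on the forward tracking function, reducing to Lemma~\ref{lem:env_limit} for \eqref{eq:env_2}. It then reads off $r\to\beta_0-\frac{\sigma^2}{4\lambda^2}$ from \eqref{eq:ell_infty_gen} in Lemma~\ref{lem:env_limit_HW}.

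You instead show that the whole tracking function $x\mapsto r^{\mathrm y}_x(t)$ converges uniformly to the limiting flat-curve tracking function $\rho$. Since $\rho$ decreases strictly from $\beta_0+\frac{\sigma^2}{2\lambda^2}$ to $\beta_0-\frac{\sigma^2}{4\lambda^2}$, the minimum of $r^{\mathrm y}_\cdot(t)$ is forced onto $\inf\rho$ without locating the minimizer.

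Your route is more elementary for this step. The paper's lemmas give more: where the envelope sits in $x$, and the $r$-limits that Lemma~\ref{lem:one_solution} needs to exclude a second envelope point. They also cover the forward curve, where the term ${\mathrm e}^{\lambda x}f_0'(x+t)$ in \eqref{eq:env_1} is not uniformly small once $\lambda>1/\tau_i$. You still rely on Lemma~\ref{lem:asymptotic_att_shapes}, hence on Lemma~\ref{lem:one_solution} and Lemma~\ref{lem:env_limit_HW}, for the one-minimum structure. So your Step~3 replaces only the endpoint limit.

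There are two points to tighten.

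\textbf{The minimizer's location.} As written you have uniform convergence only on $[x_0,\infty)$. You then speak of extremal values ``occurring at large $x$'' without showing the minimizer lies there. There are two easy fixes.
\begin{enumerate}[(a)]
\item Extend the estimate to $(0,x_0]$ using $\bigl|\int_0^x\xi f_0'(\xi+t)\dif\xi\bigr|\le\frac{x^2}{2}\sup_{s\ge t}|f_0'(s)|$ and $1-{\mathrm e}^{-\lambda x}(1+\lambda x)\ge\frac{\lambda^2x^2}{2}{\mathrm e}^{-\lambda x_0}$. This gives $\sup_{x>0}|r^{\mathrm y}_x(t)-\rho(x)|\to0$, hence $\min_x r^{\mathrm y}_x(t)\to\inf_x\rho(x)$ directly.
\item Alternatively, note that if the minimizer stayed below $x_0$, the tracking function would be increasing on $[x_0,\infty)$. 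That contradicts uniform convergence there to the strictly decreasing $\rho$.
\end{enumerate}

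\textbf{The ordering in Step~2.} The ordering you state is reversed. The coefficient of $r$ in \eqref{eq:yield_der} is negative, so as $r$ increases the intervals read \texttt{n}, (\texttt{hd},) \texttt{h}, \texttt{i}. This is the ordering you actually use later in Step~2 and in Step~4.
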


\begin{thm}\label{thm:asymptotic_rel_freq_forward}
	In the Hull--White model with Svensson--parameterized initial curves, the asymptotic relative frequencies of the shapes of the forward curve $x\mapsto f(x,t,\mathbf{r}(t))$ to appear under the risk neutral measure $\QQ$ are given by
	\begin{align*}
		\QQ(x\mapsto f(x,t,\mathbf{r}(t)) \text{ is \texttt{inverse}}) &\overset{t\rightarrow\infty}{\longrightarrow} 0.5,\\
		\QQ(x\mapsto f(x,t,\mathbf{r}(t)) \text{ is \texttt{humped}}) &\overset{t\rightarrow\infty}{\longrightarrow} 0.5 - \Phi\left(-\frac{\sqrt{2}\sigma}{\lambda^{3/2}}\right),\\
		\QQ(x\mapsto f(x,t,\mathbf{r}(t)) \text{ is \texttt{normal}}) &\overset{t\rightarrow\infty}{\longrightarrow} \Phi\left(-\frac{\sqrt{2}\sigma}{\lambda^{3/2}}\right),
	\end{align*}
	if $\lambda < \min\{1/\tau_1,1/\tau_2\}$ or \{$1/\tau_2 > 1/\tau_1 = \lambda$ and $\beta_2 = 0$\};
	\begin{align*}
		\QQ(x\mapsto f(x,t,\mathbf{r}(t)) \text{ is \texttt{dipped}}) &\overset{t\rightarrow\infty}{\longrightarrow} 0.5,\\
		\QQ(x\mapsto f(x,t,\mathbf{r}(t)) \text{ is \texttt{hd}}) &\overset{t\rightarrow\infty}{\longrightarrow} 0.5 - \Phi\left(-\frac{\sqrt{2}\sigma}{\lambda^{3/2}}\right),\\
		\QQ(x\mapsto f(x,t,\mathbf{r}(t)) \text{ is \texttt{normal}}) &\overset{t\rightarrow\infty}{\longrightarrow} \Phi\left(-\frac{\sqrt{2}\sigma}{\lambda^{3/2}}\right),
	\end{align*}
	if \eqref{eq:forward_term_inc} is satisfied; and
	\begin{align*}
		\QQ(x\mapsto f(x,t,\mathbf{r}(t)) \text{ is \texttt{inverse}}) &\overset{t\rightarrow\infty}{\longrightarrow} 0.5,\\
		\QQ(x\mapsto f(x,t,\mathbf{r}(t)) \text{ is \texttt{humped}}) &\overset{t\rightarrow\infty}{\longrightarrow} 0.5,
	\end{align*}
	if \eqref{eq:forward_term_dec} is satisfied.
\end{thm}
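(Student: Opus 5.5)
The plan is to compare the Hull--White model, for large $t$, with the Vasicek-type model obtained by dropping the $f_0$-terms. I then show that the shape of $x\mapsto f(x,t,r)$ is governed by two threshold values of $r$ that converge to $\beta_0\pm\frac{\sigma^2}{2\lambda^2}$, except on a set of $r$ of vanishing Gaussian mass. The first step is the law of $\mathbf r(t)$ under $\QQ$. From the dynamics and the choice of $b$, $\mathbf r(t)$ is Gaussian with mean $m(t)=f_0(t)+g(t)$, where $g(t)=\frac{\sigma^2}{2\lambda^2}(1-{\mathrm{e}}^{-\lambda t})^2$ is the positive quantity with $f_0+g$ entering $b$, and variance $s^2(t)=\frac{\sigma^2}{2\lambda}(1-{\mathrm{e}}^{-2\lambda t})$. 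Since $f_0^S(t)\to\beta_0$, we get $m(t)\to\beta_0+\frac{\sigma^2}{2\lambda^2}$ and $s(t)\to\sigma/\sqrt{2\lambda}$. Comparing with the limits listed after \eqref{eq:ell_infty_gen}, we have $r_0(t)-m(t)\to0$, and in the cases where $\zeta^{\mathrm f}_\infty$ is nonempty, $(r^{\mathrm f}_\infty(t)-m(t))/s(t)\to-\frac{\sigma^2}{\lambda^2}\cdot\frac{\sqrt{2\lambda}}{\sigma}=-\frac{\sqrt2\sigma}{\lambda^{3/2}}$. Hence $\QQ(\mathbf r(t)<r_0(t))\to\frac12$ and $\QQ(\mathbf r(t)<r^{\mathrm f}_\infty(t))\to\Phi(-\sqrt2\sigma/\lambda^{3/2})$. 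These two limits produce every number in the statement, once the shape is identified as a function of $r$ relative to the thresholds.

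The second step identifies the shapes in each regime via the tracking function $x\mapsto r^{\mathrm f}_x(t)$ from \eqref{eq:env_1}. I write
\begin{align*}
r^{\mathrm f}_x(t)=V_x(t)+P_x(t),
\end{align*}
where $V_x(t)=f_0(t)-\frac{\sigma^2}{2\lambda^2}(1-{\mathrm{e}}^{-2\lambda t})+\frac{\sigma^2}{\lambda^2}(1-{\mathrm{e}}^{-2\lambda t}){\mathrm{e}}^{-\lambda x}$ and $P_x(t)=f_0'(x+t){\mathrm{e}}^{\lambda x}/\lambda$. The part $V_x(t)$ is strictly decreasing in $x$, from $r_0(t)$ (up to the $f_0'(t)/\lambda$ correction, which tends to $0$) to $r^{\mathrm f}_\infty(t)$.

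In the first regime, $\lambda<\min\{1/\tau_1,1/\tau_2\}$ or the degenerate case $1/\tau_2>1/\tau_1=\lambda$, $\beta_2=0$, we have $\sup_{x>0}|P_x(t)|\le\delta(t)$ with $\delta(t)=C(1+t){\mathrm{e}}^{-t\min(1/\tau_1,1/\tau_2)}\to0$, resp. with the appropriate modification in the degenerate case. The curve has a critical point at $x$ iff $r=r^{\mathrm f}_x(t)$, so the number of local extrema equals the number of crossings of the level $r$ by the tracking function.

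I claim that, for $r$ outside the two bands $[r^{\mathrm f}_\infty(t)-2\delta,r^{\mathrm f}_\infty(t)+2\delta]$ and $[r_0(t)-2\delta,r_0(t)+2\delta]$, this number is the Vasicek count: $0$ or $1$. Such a level $r$ is attained only for $x$ in a compact interval $[a,b]$, independent of large $t$, on which $V$ is still far from its limits. On that interval, $\partial_xV\le -c<0$ while $\partial_xP=O((1+t){\mathrm{e}}^{-t/\tau})$ uniformly. So the tracking function is strictly monotone there and crosses $r$ exactly once if $r^{\mathrm f}_\infty<r<r_0$, and never otherwise. Reading off the initial and terminal signs, the resulting shapes are: \texttt{normal} for $r<r^{\mathrm f}_\infty$, \texttt{humped} for $r^{\mathrm f}_\infty<r<r_0$, and \texttt{inverse} for $r>r_0$. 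The bands have width $4\delta(t)\to0$ and the density of $\mathbf r(t)$ is bounded uniformly in $t$, so their probability tends to $0$, which yields the first set of limits.

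In the regimes \eqref{eq:forward_term_inc} and \eqref{eq:forward_term_dec}, $P_x(t)$ dominates as $x\to\infty$ and pushes the tracking function to $\mp\infty$. Instead of a terminal threshold, there is then a turning point of the tracking function, i.e.\ a point of $\eta^{\mathrm f}$. I would show that its $r$-value lies within $\delta(t)\to0$ of $r^{\mathrm f}_\infty(t)=f_0(t)-\frac{\sigma^2}{2\lambda^2}(1-{\mathrm{e}}^{-2\lambda t})$, so it plays the same role as the terminal threshold before. In the terminally increasing case, $P\to-\infty$ and the tracking function decreases from $r_0$, turns near $r^{\mathrm f}_\infty$, and decreases again to $-\infty$. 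Hence the shapes are \texttt{dipped} for $r>r_0$, \texttt{hd} between the thresholds, and \texttt{normal} below, which gives the second set of limits. In the terminally decreasing case, the tracking function decreases from $r_0$ and then rises through the band near $r^{\mathrm f}_\infty$ to $+\infty$. Every $r<r_0$ outside the band is then crossed exactly once and gives \texttt{humped}, while $r>r_0$ gives \texttt{inverse}. This yields the $0.5/0.5$ split.

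The main obstacle is making the perturbation argument rigorous uniformly in $x$, especially when $1/\tau<2\lambda$. In that case $\partial_xP$ can beat $\partial_xV$ for large $x$, so the tracking function need not be monotone near $x=\infty$. I would handle this by splitting $x$ at a point $b(t)$ where $V_x(t)-r^{\mathrm f}_\infty(t)=2\delta(t)$. On $[0,b(t)]$, the derivative estimate above gives strict monotonicity. On $[b(t),\infty)$, the tracking function stays within the band, or, in the dominated regimes, it turns exactly once by the at-most-four-solutions bound for \eqref{eq:env_2} combined with Theorem~\ref{thm:T-system_upper_bounds}. It is therefore irrelevant for levels $r$ outside the bands. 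A symmetric treatment near $x=0$ handles the band around $r_0(t)$.
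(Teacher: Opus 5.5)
Your first step is sound: the Gaussian limits $r_0(t)-m(t)\to0$ and $(r^{\mathrm f}_\infty(t)-m(t))/s(t)\to-\sqrt2\sigma/\lambda^{3/2}$ hold, and the bands carry vanishing mass. Your treatment of the first regime also works. Splitting at $b(t)$ and showing $|\partial_xP_x(t)|\ll|\partial_xV_x(t)|$ uniformly on $[0,b(t)]$ does go through, because $\min\{1/\tau_1,1/\tau_2\}>\lambda$. That is a legitimate, more hands-on substitute for the paper's route via Lemma~\ref{lem:env_limit_HW}.

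The regimes \eqref{eq:forward_term_inc} and \eqref{eq:forward_term_dec}, however, are handled incorrectly. From \eqref{eq:for_der} and \eqref{eq:env_1},
\[
\partial_xf(x,t,r)=\lambda{\mathrm{e}}^{-\lambda x}\bigl(r^{\mathrm f}_x(t)-r\bigr),
\]
so a forward curve that is terminally increasing for every $r$ forces $r^{\mathrm f}_x(t)\to+\infty$. Concretely, under \eqref{eq:forward_term_inc} you have $f_0'(x+t)>0$ for large $x$, so $P_x(t)\to+\infty$; under \eqref{eq:forward_term_dec}, $P_x(t)\to-\infty$. Your two pictures are swapped, and as written they contradict the shapes you read off from them:
\begin{itemize}
\item A tracking function ending at $-\infty$ never reaches levels $r>r_0(t)$, which gives \texttt{inverse}, not \texttt{dipped}.
\item A tracking function that turns near $r^{\mathrm f}_\infty$ and rises to $+\infty$ crosses levels between the thresholds twice and levels above $r_0(t)$ once. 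That gives \texttt{hd} and \texttt{dipped}, not \texttt{humped} and \texttt{inverse}.
\end{itemize}
Moreover, in the terminally decreasing regime there is no turning point at all for large $t$. The leading exponential--monomial of $f_0''+\lambda f_0'$ is negative there, so
\[
\partial_xr^{\mathrm f}_x(t)=\tfrac{{\mathrm{e}}^{\lambda x}}{\lambda}(f_0''+\lambda f_0')(x+t)-\tfrac{\sigma^2}{\lambda}(1-{\mathrm{e}}^{-2\lambda t}){\mathrm{e}}^{-\lambda x}<0
\]
for all $x>0$ once $t$ is large, consistent with Lemma~\ref{lem:env_limit}. The tracking function falls strictly from $r_0(t)$ to $-\infty$, and you get exactly \texttt{inverse} and \texttt{humped} with no band needed.

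The genuinely missing step is in the terminally increasing regime. The tracking function starts decreasing and ends at $+\infty$, so it makes an odd number of direction changes. You must show that every such turning point lies at a level tending to $\beta_0-\frac{\sigma^2}{2\lambda^2}$. Only then is the function monotone outside a vanishing band, so that the whole middle region is \texttt{hd}.

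You only announce this step, and the tools you cite do not deliver it:
\begin{itemize}
\item The uniform bound $\sup_x|P_x(t)|\le\delta(t)$ fails here, since $P$ is unbounded in $x$.
\item The bound of at most four solutions of \eqref{eq:env_2} neither yields ``exactly one turn'' nor says anything about where the turning values lie.
\end{itemize}

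The paper's mechanism (Lemmas~\ref{lem:env_limit} and~\ref{lem:env_limit_HW}) runs as follows. At a turning point \eqref{eq:env_2} holds, that is ${\mathrm{e}}^{\lambda x}(f_0''+\lambda f_0')(x+t)=\sigma^2(1-{\mathrm{e}}^{-2\lambda t}){\mathrm{e}}^{-\lambda x}$, and $x\to\infty$ at such points.
\begin{itemize}
\item When the leading decay rate of $f_0'$ differs from $\lambda$, one has $|f_0'|\le C|f_0''+\lambda f_0'|$ at large arguments, so $P_x(t)\to0$ at the turning points.
\item In the resonant case $1/\tau_i=\lambda$, the identity forces $x<t+C$, and again $P_x(t)\to0$.
\end{itemize}
Combined with $V_x(t)\to\beta_0-\frac{\sigma^2}{2\lambda^2}$ as $x,t\to\infty$, this pins down all turning levels. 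With the corrected direction, your reading \texttt{dipped}/\texttt{hd}/\texttt{normal} and the stated limits then follow.
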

\begin{rem}
	A look into \cite{diez2020yield} reveals that the asymptotic relative frequencies for the shapes of the yield curve found in Theorem~\ref{thm:asymptotic_rel_freq_yield} are identical to those of the Vasicek model. The same is true for the shapes of the forward curve.
\end{rem}

To prove these results, we will use a series of technical lemmas, which explain the asymptotic behavior of the solutions of \eqref{eq:env_forward} and \eqref{eq:env_yield}. In a first step we determine the limits of $x$ and $r$ as $t\rightarrow\infty$.

\begin{lem}\label{lem:env_limit}
	Let $f_0$ satisfy
	\begin{align}\label{ass:limits}
		\lim_{t\rightarrow\infty} f_0'(t) = 0\quad\text{and}\quad \lim_{t\rightarrow\infty} f_0''(t) = 0.
	\end{align}
	For the solutions $(x,t)\in(0,\infty)^2$ of \eqref{eq:env_2} it holds that $x\rightarrow \infty$ as $t\rightarrow\infty$. Furthermore, if $f_0$ is from the Svensson family or one of its subfamilies, there exists some $K > 0$, such that for every $t > K$ there is a solution of \eqref{eq:env_2} if and only if
	\begin{enumerate}
		\item $\beta_2 = 0$ and $\frac{1}{\tau_1} < \frac{1}{\tau_2}$ and $\frac{1}{\tau_1} = \lambda$ and $\beta_3 > 0$, or 
		\item $\beta_2 = 0$ and \{$\beta_3 = 0$ or $\frac{1}{\tau_1} < \frac{1}{\tau_2}$\} and $\beta_1(\frac{1}{\tau_1} - \lambda) > 0$ and $2\lambda > \frac{1}{\tau_1}$, or
		\item \{$\beta_3 = 0$ or $\frac{1}{\tau_1} < \frac{1}{\tau_2}$\} and
		\begin{enumerate}
			\item $\beta_2(\frac{1}{\tau_1} - \lambda) > 0$ and $2\lambda > \frac{1}{\tau_1}$, or
			\item $\frac{1}{\tau_1} = \lambda$ and $\beta_2 < 0$, or
		\end{enumerate}
		\item $\frac{1}{\tau_2} < \frac{1}{\tau_1}$ and
		\begin{enumerate}
			\item $\beta_3(\frac{1}{\tau_2} - \lambda) > 0$ and $2\lambda > \frac{1}{\tau_2}$, or
			\item $\frac{1}{\tau_2} = \lambda$ and $\beta_3 < 0$.
		\end{enumerate}
	\end{enumerate}
\end{lem}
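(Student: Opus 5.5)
The plan is to rewrite \eqref{eq:env_2} in the shifted variable $u := x+t$. There the left-hand side depends only on $u$, while the right-hand side carries the whole dependence on $t$. Once this is done, both statements come down to asymptotics of an exponential polynomial in $u$.

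\emph{First claim.} I will argue by contradiction. Suppose there are solutions $(x_n,t_n)$ of \eqref{eq:env_2} with $t_n\to\infty$ and $x_n\le C$. Since $x_n+t_n\ge t_n\to\infty$, assumption \eqref{ass:limits} forces the left-hand side $f_0''(x_n+t_n)+\lambda f_0'(x_n+t_n)$ to tend to $0$. The right-hand side, however, satisfies
\[
\sigma^2\left(1-{\mathrm{e}}^{-2\lambda t_n}\right){\mathrm{e}}^{-2\lambda x_n}\ \ge\ \tfrac{\sigma^2}{2}{\mathrm{e}}^{-2\lambda C}>0
\]
for $n$ large, which is a contradiction. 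Hence $x\to\infty$ along any family of solutions as $t\to\infty$.

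\emph{Second claim: reformulation.} Put $p(u):=f_0''(u)+\lambda f_0'(u)$ and $q(u):=p(u){\mathrm{e}}^{2\lambda u}$. Multiplying \eqref{eq:env_2} by ${\mathrm{e}}^{2\lambda(x+t)}$ shows that the existence of a solution for a given $t$ is equivalent to
\[
\exists\, u>t:\quad q(u)=\sigma^2\left({\mathrm{e}}^{2\lambda t}-1\right).
\]
For $f_0$ in the Svensson family, \eqref{eq:first_der_Sve} gives
\[
p(u)=\sum_{i=1,2}\left(A_i+B_iu\right){\mathrm{e}}^{-u/\tau_i},
\]
with explicit coefficients. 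For the $\tau_1$-block,
\[
B_1=\frac{\beta_2}{\tau_1^2}\left(\frac{1}{\tau_1}-\lambda\right),\qquad
A_1=\frac{\beta_2-\beta_1}{\tau_1}\left(\lambda-\frac{1}{\tau_1}\right)-\frac{\beta_2}{\tau_1^2}.
\]
Consequently:
\begin{itemize}
\item if $\lambda=1/\tau_1$, then $A_1=-\beta_2/\tau_1^2$;
\item if $\beta_2=0$, then $A_1=\beta_1\left(\frac{1}{\tau_1}-\lambda\right)/\tau_1$.
\end{itemize}
The analogous formulas hold for the $\tau_2$-block with $\beta_3$ in place of $\beta_2$ and no $\beta_1$-term. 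For large $u$, $q$ is governed by its dominant term $c\,u^k{\mathrm{e}}^{(2\lambda-\mu)u}$. Here $\mu$ is the smallest rate $1/\tau_i$ whose block does not vanish, and $(c,k)$ is the first nonzero coefficient of that block, taking $B_i$ before $A_i$.

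\emph{Case analysis.} There are two situations.
\begin{itemize}
\item If $c>0$ and $2\lambda>\mu$, then $q(u)\to+\infty$, so $q(u)$ exceeds $\sigma^2({\mathrm{e}}^{2\lambda t}-1)$ for $u$ large. At the same time $q(t)/{\mathrm{e}}^{2\lambda t}\sim c\,t^k{\mathrm{e}}^{-\mu t}\to0$, so $q(t)<\sigma^2({\mathrm{e}}^{2\lambda t}-1)$ for $t$ large. The intermediate value theorem on $(t,\infty)$ then gives a solution for every $t$ beyond some $K$.
\item If $c<0$, then $q<0$ on $(t,\infty)$ for $t$ large, so there is no solution. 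The same conclusion holds if $2\lambda<\mu$: then $q$ is bounded on $(t,\infty)$, in fact $q\to 0$, while $\sigma^2({\mathrm{e}}^{2\lambda t}-1)\to\infty$.
\end{itemize}
Running through the possibilities gives the cases of the lemma: which block dominates (determined by the order of $1/\tau_1$ and $1/\tau_2$ and by which $\beta$'s vanish), together with whether $\lambda$ equals the dominant rate, which kills $B_i$. For example, in case 3(a) the leading coefficient is $B_1$, whose sign is that of $\beta_2(\frac{1}{\tau_1}-\lambda)$. In case 3(b) we have $\lambda=1/\tau_1<2\lambda$ automatically and the leading coefficient is $A_1=-\beta_2/\tau_1^2$, which is positive exactly when $\beta_2<0$. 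Cases 2 and 4 are handled in the same way.

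\emph{Main obstacle.} I expect the difficulty to lie in the bookkeeping of degenerate sub-cases. One is that an entire block may vanish, for instance when $\lambda=1/\tau_1$ and $\beta_2=0$. In that situation the $\tau_1$-block disappears entirely and the next block takes over; this is the origin of case 1. The other is the borderline $2\lambda=\mu$, where $q\sim c\,u^k$ grows only polynomially. If $k=1$ and $c>0$, then $q$ is unbounded and the intermediate value theorem still seems to produce solutions. In both situations I will first verify directly against the explicit coefficients whether the stated conditions (for instance, the absence of a condition $2\lambda>1/\tau_2$ in case 1, and the exclusion of equality in 3(a) and 4(a)) are exactly right, or whether they rely on an implicit non-resonance assumption on $\lambda,1/\tau_1,1/\tau_2$ that should be made explicit.
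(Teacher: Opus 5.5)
Your argument is the paper's argument made rigorous. The paper also bounds $|f_0''+\lambda f_0'|$ by its dominant exponential term. It then concludes that solutions exist for all large $t$ iff that term is eventually positive and $2\lambda>1/\tau$ for the dominant $\tau$. Your substitution $u=x+t$ turns \eqref{eq:env_2} into $q(u)=\sigma^2(\mathrm{e}^{2\lambda t}-1)$ with $u>t$. That is what actually delivers existence (intermediate value theorem) and non-existence (sign or boundedness of $q$), which the paper only asserts. You should also include the trivial case $p\equiv 0$, where there are no solutions; the paper mentions it.

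The points you left open are not gaps in your reasoning but defects of the statement, and your suspicions are correct. What your argument proves is therefore a corrected lemma.

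(i) In case 1 the $\tau_1$-block vanishes, with $\mu=1/\tau_2>\lambda$, $k=1$ and $c=B_2>0$. So solutions for large $t$ require $2\lambda\ge 1/\tau_2$. Concretely, $\lambda=\tau_1=1$, $\tau_2=1/3$, $\beta_2=0$, $\beta_3=1$ satisfies case 1, but gives $q(u)=(18u-15)\mathrm{e}^{-u}\le 18\mathrm{e}^{-11/6}<3$. Hence there is no solution once $\sigma^2(\mathrm{e}^{2t}-1)\ge 3$. The paper's own criterion (dominant rate $1/\tau_2$) would have produced this condition; it was lost in the enumeration.

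(ii) For $2\lambda=\mu$, $k=1$, $c>0$, solutions do exist. Take $\beta_3=0$, $1/\tau_1=2\lambda$, $\beta_2>0$. Then $q(u)=A_1+B_1u$ with $B_1=\beta_2\lambda/\tau_1^2>0$, and $u=(\sigma^2(\mathrm{e}^{2\lambda t}-1)-A_1)/B_1>t$ for large $t$ (equivalently, \eqref{eq:env_2_NS} with $\rho=0$), yet no listed condition holds. Hence 3(a), 4(a) and the corrected case 1 need $2\lambda\ge 1/\tau_i$. Strict inequality is right only when the leading coefficient is $A_i$: this is case 2, and in 3(b), 4(b) it holds automatically. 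The paper's two-sided bound does not exclude this borderline either.

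(iii) The $\tau_1$-block also vanishes when $\beta_1=\beta_2=0$ and $\lambda\ne 1/\tau_1$. For $1/\tau_1<1/\tau_2$ this case is absent from the list, although your criterion yields solutions, e.g. when $\beta_3(1/\tau_2-\lambda)>0$ and $2\lambda>1/\tau_2$.

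So state and prove the following: there are solutions for every sufficiently large $t$ iff $c>0$ and either $2\lambda>\mu$ or ($2\lambda=\mu$ and $k=1$); otherwise there are none for large $t$.
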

\begin{proof}
	For $t\rightarrow \infty$ the left hand side of \eqref{eq:env_2} converges to 0, so the right hand side has to follow, for there to be equality, hence $x\rightarrow\infty$. 
	
	For $f_0$ from the Svensson family, the left hand side of \eqref{eq:env_2} equates to
	\begin{align*}
		&\frac{\beta_1(\frac{1}{\tau_1} - \lambda) + \beta_2(\lambda - \frac{2}{\tau_1})}{\tau_1}{\mathrm{e}}^{-\frac{x + t}{\tau_1}} + \frac{\beta_2(\frac{1}{\tau_1} - \lambda)}{\tau_1^2}(x + t){\mathrm{e}}^{-\frac{x + t}{\tau_1}}\\&\qquad + \frac{\beta_3(\lambda - \frac{2}{\tau_2})}{\tau_2}{\mathrm{e}}^{-\frac{x + t}{\tau_2}} + \frac{\beta_3(\frac{1}{\tau_2} - \lambda)}{\tau_2^2}(x + t){\mathrm{e}}^{-\frac{x + t}{\tau_2}}.
	\end{align*}
	If this expression vanishes for every $x >0$, there are no solutions. Otherwise, there exists some $\tilde K > 0$ and constants $c_1,c_2 > 0$, such that
	\begin{align*}
		c_1{\mathrm{e}}^{-\frac{x+t}{\tau}} < |f_0''(x+t) + \lambda f_0'(x+t)| < c_2x{\mathrm{e}}^{-\frac{x+t}{\tau}},\quad t > \tilde K,
	\end{align*}
	for $\tau = \tau_1$ or $\tau = \tau_2$. So there exists some $ K > \tilde K$, such that there is a solution for every $t > K$, if and only if the terminal sign of the left hand side is positive and $2\lambda > \frac{1}{\tau}$.
\end{proof}

\begin{lem}
	Let $f_0$ satisfy \eqref{ass:limits}. For the solutions $(x,t,r)\in(0,\infty)^2\times\RR$ of \eqref{eq:env_yield} it holds that $x\rightarrow \infty$ as $t\rightarrow \infty$.
\end{lem}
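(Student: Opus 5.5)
Solutions of \eqref{eq:env_yield} are equivalently solutions of \eqref{eq:env_yield_alt}. So the plan is to eliminate $r$ from the two equations $\partial_x y = 0$ and $\partial_x f = 0$, in the spirit of the proof of Lemma~\ref{lem:env_limit}, and obtain one equation in $(x,t)$ alone. Set $A(t) := \frac{\sigma^2}{2\lambda^2}(1-{\mathrm{e}}^{-2\lambda t}) - f_0(t) + r$ and $S(t) := \frac{\sigma^2}{2\lambda^2}(1-{\mathrm{e}}^{-2\lambda t})$, so that $S(t)\to \frac{\sigma^2}{2\lambda^2}>0$. Then \eqref{eq:for_der} reads
\begin{align*}
f_0'(x+t) - A\lambda{\mathrm{e}}^{-\lambda x} + 2\lambda S{\mathrm{e}}^{-2\lambda x} = 0 .
\end{align*}
Using the integral representation \eqref{eq:yield_der}, $\partial_x y = 0$ becomes
\begin{align*}
\int_0^x \xi\,\partial_x f(\xi,t,r)\dif\xi = 0 ,
\end{align*}
that is, $F(x,t) - A\,p(x) + S\,q(x) = 0$, where $F(x,t) := \int_0^x \xi f_0'(\xi+t)\dif\xi$, $p(x) := \int_0^x \lambda\xi{\mathrm{e}}^{-\lambda\xi}\dif\xi$ and $q(x) := \int_0^x 2\lambda\xi{\mathrm{e}}^{-2\lambda\xi}\dif\xi$. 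Solve the forward equation for $A$ and substitute. This gives
\begin{align*}
F(x,t) - \frac{f_0'(x+t){\mathrm{e}}^{\lambda x}}{\lambda}\,p(x) + S(t)\bigl(q(x) - 2{\mathrm{e}}^{-\lambda x}p(x)\bigr) = 0 .
\end{align*}

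Next I argue by contradiction. Suppose $t_n\to\infty$ with corresponding solutions $x_n$ that stay bounded. After passing to a subsequence, $x_n\to x^*\in[0,\infty)$. Assumption \eqref{ass:limits} gives $f_0'(\xi+t_n)\to 0$ uniformly for $\xi$ in bounded sets. Hence $F(x_n,t_n)\to 0$ and $f_0'(x_n+t_n){\mathrm{e}}^{\lambda x_n}p(x_n)\to 0$, and the equation forces
\begin{align*}
\psi(x_n) := q(x_n) - 2{\mathrm{e}}^{-\lambda x_n}p(x_n) \to 0 .
\end{align*}
I then show $\psi<0$ on $(0,\infty)$; for instance $\psi' = 2\lambda x{\mathrm{e}}^{-2\lambda x} - 2\lambda x{\mathrm{e}}^{-2\lambda x} + 2\lambda{\mathrm{e}}^{-\lambda x}p(x)$ needs to be checked carefully, computing signs directly. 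Since $\psi$ is continuous and strictly negative on $(0,\infty)$, a contradiction follows immediately whenever $x^*>0$.

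The main obstacle is the degenerate case $x^*=0$, where both sides vanish. There I rescale, as in the limits at $0$ in Lemma~\ref{lem:integral_representation}. Expanding in $x$, every term is $O(x^2)$: near zero $p(x)\approx \lambda x^2/2$, $q(x)\approx\lambda x^2$ and $F(x,t)\approx f_0'(t)x^2/2$. Dividing by $x_n^2$ then leaves, in the limit, $0\cdot(\ldots) + S\cdot(\text{const}) = 0$. If that leading constant vanishes, I go to order $x^3$ and use $f_0''\to 0$ from \eqref{ass:limits}, so that the remaining $S$-term has a nonzero limit coefficient while the $f_0$-terms still tend to $0$. This gives the contradiction. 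Concretely, I expect $\psi(x)\sim -\tfrac{\lambda^2}{3}x^3$ (to be verified), with the $f_0$-terms of order $o(1)\,x^2 + o(1)\,x^3$. The $o(1)\,x^2$ term must then be shown to cancel exactly, which is consistent with $\partial_x y(0^+)=\tfrac12\partial_x f(0)$. Once that cancellation is confirmed, the argument closes.
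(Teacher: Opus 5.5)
Your elimination of $r$ is correct and is a genuinely different route from the paper's. However, the proof is not complete. The case $x^*=0$, which you rightly call the main obstacle, is left at the level of expected asymptotics. Dividing by $x_n^2$ gives no contradiction, because $\psi(x)/x^2\to 0$. So you need the $f_0$-terms to be $o(1)\,x^3$ \emph{uniformly} along a sequence with $x_n\to 0$ and $t_n\to\infty$ simultaneously. You show neither the exact cancellation at order $x^2$ nor this uniform bound. Moreover, a termwise Taylor expansion to order $x^3$ would involve $f_0'''$, which \eqref{ass:limits} does not control.

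There is also a sign slip. Your own formula gives $\psi'(x)=2\lambda\mathrm{e}^{-\lambda x}p(x)>0$. Hence $\psi(x)=\frac{1}{2\lambda}\bigl(1-4\mathrm{e}^{-\lambda x}+(3+2\lambda x)\mathrm{e}^{-2\lambda x}\bigr)>0$ on $(0,\infty)$, with $\psi(x)\sim+\frac{\lambda^2}{3}x^3$. This is harmless for $x^*>0$, where only $\psi\neq 0$ matters.

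The missing step is easy once written in integral form, and it also removes the case split. Since $\frac{\mathrm{e}^{\lambda x}}{\lambda}p(x)=\int_0^x\xi\,\mathrm{e}^{\lambda(x-\xi)}\,\mathrm{d}\xi$,
\begin{align*}
G(x,t) := F(x,t) - \frac{f_0'(x+t)\mathrm{e}^{\lambda x}}{\lambda}\,p(x) = \int_0^x \xi\,\mathrm{e}^{-\lambda\xi}\bigl(h_t(\xi)-h_t(x)\bigr)\,\mathrm{d}\xi, \qquad h_t(\xi) := f_0'(\xi+t)\,\mathrm{e}^{\lambda\xi}.
\end{align*}
So the order-$x^2$ cancellation is structural. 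We have $h_t'(\xi)=\bigl(f_0''(\xi+t)+\lambda f_0'(\xi+t)\bigr)\mathrm{e}^{\lambda\xi}$. The mean value theorem then gives $|G(x,t)|\le\frac{x^3}{6}\mathrm{e}^{\lambda x}\varepsilon(t)$, where $\varepsilon(t):=\sup_{s\ge t}|f_0''(s)+\lambda f_0'(s)|\to 0$ by \eqref{ass:limits}.

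On the other side, $p(s)\ge\frac{\lambda}{2}s^2\mathrm{e}^{-\lambda s}$, hence $\psi(x)\ge\frac{\lambda^2}{3}x^3\mathrm{e}^{-2\lambda x}$. Inserting both bounds into $G+S\psi=0$ and dividing by $x^3$ yields $2\lambda^2S(t)\le\mathrm{e}^{3\lambda x}\varepsilon(t)$ for every solution. Since $2\lambda^2S(t)\to\sigma^2>0$, $x$ must tend to infinity.

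For comparison, the paper never looks near $x=0$. From $\int_0^{\hat x}\xi\,\partial_xf\,\mathrm{d}\xi=0$ it obtains a zero $\xi_1\in(0,\hat x)$ of $\partial_x f$ besides $\hat x$. It then applies Rolle's theorem to the forward tracking function together with Lemma~\ref{lem:tracking_envelope} to get a solution $\xi_2<\hat x$ of \eqref{eq:env_2}, and concludes with Lemma~\ref{lem:env_limit}. That argument is shorter. Your completed argument is self-contained, since it does not use Lemma~\ref{lem:env_limit}, and it gives an explicit lower bound on $x$ in terms of $t$.
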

\begin{proof}
	Let us assume, that there exists some $C_1 > 0$, such that for every $C_2 > 0$ there is a solution $(\hat x, \hat t, \hat r)$ of \eqref{eq:env_yield} with $\hat t > C_2$ and $\hat x < C_1$.
	By Lemma~\ref{lem:env_limit} we can choose $C_2$ such that every solution $(\bar x, \hat t)$ of \eqref{eq:env_2} satisfies $\bar x > C_1$. Since
	\begin{align*}
		0 = \partial_x y(\hat x, \hat t, \hat r) = \frac{1}{x^2}\int_0^{\hat x} \xi\partial_x f(\xi, \hat t, \hat r)\dif\xi,
	\end{align*}
	there must be some $\xi_1\in(0,\hat x)$ with $\partial_x f(\xi_1, \hat t, \hat r) = 0$, i.e. $r_{\xi_1}^{\mathrm{f}}(\hat t) = \hat r$. By \eqref{eq:env_yield_alt}, also $\partial_x f(\hat x, \hat t, \hat r) = 0$, i.e. $r_{\hat x}^{\mathrm{f}}(\hat t) = \hat r$, so there is $\xi_2\in(\xi_1,\hat x)$ with $\partial_x r_{\xi_2}^{\mathrm{f}}(\hat t) = 0$. Since $0 = \partial_x f(x,t,r_x^{\mathrm{f}}(t))$ for every $x,t > 0$, by definition of the tracking function $r_x^{\mathrm{f}}$, it holds
	\begin{align*}
		0 &= \frac{\dif}{\dif x}\left(\partial_x f(x, \hat t,r_x^{\mathrm{f}}(\hat t))\right)\\ &= \partial_{xx} f(\xi_2, \hat t, r_{\xi_2}^{\mathrm{f}}(\hat t)) + \partial_x r_{\xi_2}^{\mathrm{f}}(\hat t)\partial_{xr}f(\xi_2, \hat t, r_{\xi_2}^{\mathrm{f}}(\hat t)),
	\end{align*}
	and hence $\partial_{xx} f(\xi_2, \hat t, r_{\xi_2}^{\mathrm{f}}(\hat t)) = 0$. But this is not possible, since it implies, that $(\xi_2, \hat t)$ solves \eqref{eq:env_2}.
\end{proof}

\begin{lem}\label{lem:env_limit_HW}
	Let $f_0$ be from the Svensson family or one of its subfamilies.
	\begin{enumerate}
		\item For the solutions $(x,t,r)\in(0,\infty)^2\times\RR$ of \eqref{eq:env_forward} it holds that
		\begin{align*}
			r \rightarrow \beta_0 - \frac{\sigma^2}{2\lambda^2},\quad \text{as}\quad t\rightarrow\infty.
		\end{align*}
		\item For the solutions $(x,t,r)\in(0,\infty)^2\times\RR$ of \eqref{eq:env_yield} it holds that
		\begin{align*}
			r \rightarrow \beta_0 - \frac{\sigma^2}{4\lambda^2},\quad \text{as}\quad t\rightarrow\infty.
		\end{align*}
	\end{enumerate}
\end{lem}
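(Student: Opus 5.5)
Both parts follow one pattern: use the explicit formulae for the zero sets, $r = r^{\mathrm{f}}_x(t)$ from \eqref{eq:env_1} and $r = r^{\mathrm{y}}_x(t)$ from \eqref{eq:ell_infty_gen}. On the envelope, $x = x(t)\to\infty$ as $t\to\infty$ by Lemma~\ref{lem:env_limit} and the subsequent lemma. It then suffices to show that every term except the ones carrying $\beta_0$ and $\sigma^2$ vanishes along such a sequence.

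\emph{Forward case.} Write \eqref{eq:env_1} as
\begin{align*}
r^{\mathrm{f}}_x(t) = \frac{f_0'(x+t){\mathrm{e}}^{\lambda x}}{\lambda} - \frac{\sigma^2}{2\lambda^2}\left(1-{\mathrm{e}}^{-2\lambda t}\right) + f_0(t) + \frac{\sigma^2}{\lambda^2}\left(1-{\mathrm{e}}^{-2\lambda t}\right){\mathrm{e}}^{-\lambda x}.
\end{align*}
As $t\to\infty$ we have $f_0(t)\to\beta_0$, the second term tends to $-\sigma^2/(2\lambda^2)$, and the last term tends to $0$ because $x\to\infty$.

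The only delicate term is $f_0'(x+t){\mathrm{e}}^{\lambda x}$, since ${\mathrm{e}}^{\lambda x}$ blows up. Here I use that $(x,t)$ solves \eqref{eq:env_2}, i.e.
\begin{align*}
f_0''(x+t)+\lambda f_0'(x+t) = \sigma^2(1-{\mathrm{e}}^{-2\lambda t}){\mathrm{e}}^{-2\lambda x}.
\end{align*}
For Svensson curves, $f_0'$ and $f_0''$ are exponential polynomials with decay rates $1/\tau_1$ and $1/\tau_2$. Hence there are $\tau\in\{\tau_1,\tau_2\}$ and polynomials $p,q$ of degree at most one with $f_0'(s) = p(s){\mathrm{e}}^{-s/\tau}$ and $f_0''(s)+\lambda f_0'(s) = q(s){\mathrm{e}}^{-s/\tau}$, up to faster-decaying terms.

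The asymptotic analysis in the proof of Lemma~\ref{lem:env_limit} gives, for $t > K$,
\begin{align*}
{\mathrm{e}}^{-2\lambda x}\asymp |q(x+t)|{\mathrm{e}}^{-(x+t)/\tau}.
\end{align*}
This requires $2\lambda > 1/\tau$, which is exactly the existence condition. It follows that ${\mathrm{e}}^{\lambda x}\asymp |q(x+t)|^{-1/2}{\mathrm{e}}^{(x+t)/(2\tau)}$, and therefore
\begin{align*}
|f_0'(x+t)|{\mathrm{e}}^{\lambda x}\lesssim |p(x+t)|\,|q(x+t)|^{-1/2}{\mathrm{e}}^{-(x+t)/(2\tau)}\to 0.
\end{align*}
When $q$ is constant but $p$ is not (the case $1/\tau = \lambda$), the polynomial factor is harmless against the exponential. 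When the leading coefficient of $q$ vanishes, the next order term takes over, and the same estimate holds with a different polynomial. If the envelope is empty for large $t$, the statement is vacuous.

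\emph{Yield case.} Use the equivalent system \eqref{eq:env_yield_alt}: envelope points satisfy both $\partial_x f=0$ and $\partial_x y=0$. Equivalently, take the formula \eqref{eq:ell_infty_gen} directly with $x\to\infty$. The last bracket tends to $1/2$, so the $\sigma^2$ terms combine to
\begin{align*}
-\frac{\sigma^2}{2\lambda^2}+\frac{\sigma^2}{4\lambda^2} = -\frac{\sigma^2}{4\lambda^2},
\end{align*}
and $f_0(t)\to\beta_0$.

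It remains to show that $\int_0^x \xi f_0'(\xi+t)\dif\xi \to 0$. This follows from the explicit evaluation given earlier in terms of $h_1$ and $h_2$. These are bounded uniformly in $x$, and they come multiplied by ${\mathrm{e}}^{-t/\tau_i}$ or $t{\mathrm{e}}^{-t/\tau_i}$, which tend to $0$. The prefactor $\lambda/({\mathrm{e}}^{-\lambda x}(1+\lambda x)-1)\to-\lambda$ is bounded. So, unlike the forward case, no exponential growth in $x$ appears, and the yield case follows directly.

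The main obstacle is the forward case: controlling $f_0'(x+t){\mathrm{e}}^{\lambda x}$ by extracting the quantitative relation between $x$ and $t$ from \eqref{eq:env_2}, including the degenerate subcases of Lemma~\ref{lem:env_limit}. I would handle these uniformly through the two-sided bound $c_1{\mathrm{e}}^{-(x+t)/\tau} < |\cdot| < c_2 x{\mathrm{e}}^{-(x+t)/\tau}$ already used in that proof.
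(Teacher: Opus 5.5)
Your proposal is correct and follows essentially the paper's proof. Both plug envelope points into \eqref{eq:env_1} and \eqref{eq:ell_infty_gen}, use $x\to\infty$, control $f_0'(x+t)\mathrm{e}^{\lambda x}$ through \eqref{eq:env_2} and the exponential asymptotics of $f_0'$ and $f_0''+\lambda f_0'$, and show that the integral term vanishes in the yield case. The paper differs only in the forward estimate: instead of your square root, it bounds $|f_0'(x+t)|\le C|f_0''(x+t)+\lambda f_0'(x+t)|=C\sigma^2(1-\mathrm{e}^{-2\lambda t})\mathrm{e}^{-2\lambda x}$, and it uses $x<t+C$ when $1/\tau=\lambda$.

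One subcase needs care: $\beta_2=0$, $\lambda=1/\tau_1<1/\tau_2$, $\beta_3\neq0$, which you only wave at. There the leading exponentials of $f_0'$ and $f_0''+\lambda f_0'$ differ, so it is not literally ``the same estimate''. The fix is direct: $|f_0'(s)|\le C\mathrm{e}^{-\lambda s}$ gives $f_0'(x+t)\mathrm{e}^{\lambda x}=O(\mathrm{e}^{-\lambda t})$. The paper's lower bound $c_1\mathrm{e}^{-\lambda(x+t)}$ does not hold in that subcase either.
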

\begin{proof}
	If the left hand side of \eqref{eq:env_2} does not vanish for every $x + t > 0$, then there exists some $ K > 0$, such that for every $x+t > K$ it can be bounded from below by
	\begin{align}\label{eq:lower_bounds}
		\begin{cases}
			c_1(x+t){\mathrm{e}}^{-\frac{x+t}{\tau_1}},\\ \qquad \text{for }\left\{\frac{1}{\tau_1} < \frac{1}{\tau_2} \text{ or }  \beta_3 = 0\right\} \text{ and } \frac{1}{\tau_1}\neq \lambda \text{ and } \beta_2 \neq 0,\\
			c_1{\mathrm{e}}^{-\frac{x+t}{\tau_1}},\\ \qquad \text{for } \left\{\frac{1}{\tau_1} < \frac{1}{\tau_2} \text{ or }  \beta_3 = 0\right\} \text{ and } \frac{1}{\tau_1}\neq \lambda \text{ and } \beta_2 = 0,\\
			c_1(x+t){\mathrm{e}}^{-\frac{x+t}{\tau_2}},\\ \qquad \text{for } \frac{1}{\tau_2} < \frac{1}{\tau_1} \text{ and } \frac{1}{\tau_1}\neq \lambda \text{ and } \beta_3 \neq 0,\\
			c_1{\mathrm{e}}^{-\lambda(x+t)}, \\ \qquad \text{for } \left\{\frac{1}{\tau_1} < \frac{1}{\tau_2} \text{ or }  \beta_3 = 0\right\} \text{ and } \frac{1}{\tau_1} = \lambda,\\
			c_1{\mathrm{e}}^{-\lambda(x+t)}, \\ \qquad \text{for } \frac{1}{\tau_2} < \frac{1}{\tau_1} \text{ and } \frac{1}{\tau_2} = \lambda,
		\end{cases}
	\end{align}
	for some $c_1 > 0$. On the other hand we can also find an upper bound for $|f_0'(x+t)|$ for every $x+t > \tilde{K}$ for some $\tilde{K} > 0$.
	\begin{align*}
		|f_0'(x+t)| < \begin{cases}
			c_2(x+t){\mathrm{e}}^{-\frac{x+t}{\tau_1}} ,\\ \qquad \text{for } \left\{\frac{1}{\tau_1} < \frac{1}{\tau_2} \text{ or }  \beta_3 = 0\right\} \text{ and } \frac{1}{\tau_1}\neq \lambda \text{ and } \beta_2 \neq 0,\\
			c_2{\mathrm{e}}^{-\frac{x+t}{\tau_1}} ,\\ \qquad \text{for } \left\{\frac{1}{\tau_1} < \frac{1}{\tau_2} \text{ or }  \beta_3 = 0\right\} \text{ and } \frac{1}{\tau_1}\neq \lambda \text{ and } \beta_2 = 0,\\
			c_2(x+t){\mathrm{e}}^{-\frac{x+t}{\tau_2}},\\ \qquad \text{for } \frac{1}{\tau_2} < \frac{1}{\tau_1} \text{ and } \frac{1}{\tau_1}\neq \lambda \text{ and } \beta_3 \neq 0,\\
			c_2(x+t){\mathrm{e}}^{-\lambda(x+t)} , \\ \qquad \text{for } \left\{\frac{1}{\tau_1} < \frac{1}{\tau_2} \text{ or }  \beta_3 = 0\right\} \text{ and } \frac{1}{\tau_1} = \lambda,\\
			c_2(x+t){\mathrm{e}}^{-\lambda(x+t)}, \\ \qquad \text{for } \frac{1}{\tau_2} < \frac{1}{\tau_1} \text{ and } \frac{1}{\tau_2} = \lambda,
		\end{cases}
	\end{align*}
	for some $c_2 > 0$. Multiplying with ${\mathrm{e}}^{\lambda x}$ in \eqref{eq:env_2} and letting $t\rightarrow\infty$ (and hence $x\rightarrow\infty$), the right hand side has limit 0. Thus, the lower bounds given in the first three cases in \eqref{eq:lower_bounds}, multiplied with ${\mathrm{e}}^{\lambda x}$, also approach 0. The upper bounds of $|f_0'(x+t)|{\mathrm{e}}^{\lambda x}$ only differ by a constant from the lower bounds, so $f_0'(x+t){\mathrm{e}}^{\lambda x}\rightarrow 0$. In the other two cases, because of the lower bounds \eqref{eq:lower_bounds}, there must be $C > 0$, such that, for the solutions $(x,t)$ of \eqref{eq:env_2},  $x < t + C.$ Hence, $|f_0'(x+t)|{\mathrm{e}}^{\lambda x} < c_2(2t + C){\mathrm{e}}^{-\lambda t}\rightarrow 0$ as $t\rightarrow\infty$. Solutions of \eqref{eq:env_forward} are, by definition, points on the curve $r^{\mathrm{f}}_x(t)$, so plugging this limit into \eqref{eq:env_1} yields $r\rightarrow \beta_0 - \frac{\sigma^2}{2\lambda^2}$.
	
	Since there exists some $K > 0$, such that
	\begin{align*}
		|yf_0'(y+t)| \le c_1y^2{\mathrm{e}}^{\frac{-y}{\tau}}{\mathrm{e}}^{\frac{-t}{\tau}},\quad y + t > K
	\end{align*}
	for some $c_1 > 0$ and $\tau = \tau_1$ or $\tau = \tau_2$, and since $x\rightarrow\infty$ as $t\rightarrow\infty$ for the solutions of \eqref{eq:env_yield}, the integral in \eqref{eq:ell_infty_gen} converges to 0 as $t\rightarrow\infty$ and the right hand side of \eqref{eq:ell_infty_gen} converges to $\beta_0 - \frac{\sigma^2}{4\lambda^2}$. 
\end{proof}

We can now show that the envelope corresponding to the yield curve is asymptotically single-valued at most, which significantly limits the attainable shapes for larger $t$. A similar result can be obtained for the forward curve with Nelson--Siegel--parameterized initial curves by studying \eqref{eq:Lambert-W-NS}, but is omitted here.

\begin{lem}\label{lem:one_solution}
	There is a $K > 0$, such  that \eqref{eq:env_yield} has exactly one solution for every $t > K$, if and only if one of the conditions in Lemma~\ref{lem:asymptotic_att_shapes} is satisfied.
	\eqref{eq:env_yield} has no solution in every other case for every $t > K$.
\end{lem}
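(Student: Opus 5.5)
By Lemma~\ref{lem:tracking_envelope}, solutions of \eqref{eq:env_yield} for fixed $t$ correspond exactly to critical points of the tracking function $x\mapsto r^{\mathrm{y}}_x(t)$. So the plan is to count the sign changes of $x\mapsto\partial_x r^{\mathrm{y}}_x(t)$, as given in \eqref{eq:der_tracking_yield}, once $t$ is large.

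The first step is to compare the two ends of the tracking function. Its initial direction equals that of the forward tracking function, since $\partial_x r^{\mathrm{y}}_x(t)|_{x=0^+}=\frac23\partial_x r^{\mathrm{f}}_x(t)|_{x=0}$. That quantity tends to $-\sigma^2/\lambda<0$, so for large $t$ the tracking function is initially decreasing. Its terminal direction is given by the sign of \eqref{eq:term_dir_track_y}. I will evaluate this for large $t$ case by case.

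\begin{itemize}
\item In the first two cases of \eqref{eq:term_dir_track_y}, the term $-\frac{\sigma^2}{4\lambda^2}(1-\mathrm{e}^{-2\lambda t})$ dominates, so the terminal direction is negative.
\item If $\lambda\ge\frac{1}{\tau_2}$ and $\frac1{\tau_1}>\frac1{\tau_2}$, the sign is $-\beta_3$.
\item If $\lambda\ge\frac1{\tau_1}$ and $\frac1{\tau_2}>\frac1{\tau_1}$, the sign is $-\beta_2$, or $-\beta_1$ when $\beta_2=0$.
\item When $\beta_3=0$, the branch involving $\frac1{\tau_1}$ applies with $\frac1{\tau_2}$ irrelevant.
\end{itemize}

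The boundary cases $\lambda=\frac1{\tau_i}$ need an extra check. For example, $\lambda=\frac1{\tau_2}<\frac1{\tau_1}$ with $\beta_3<0$ is not among the conditions of Lemma~\ref{lem:asymptotic_att_shapes}. I will confirm this either by a finer expansion of \eqref{eq:der_tracking_yield} in that case, or by showing that the resulting critical point escapes to $x=\infty$ for every $t$, consistent with \eqref{eq:term_dir_track_y}. After this, the terminal direction is positive for all large $t$ exactly under conditions (1)--(3) of Lemma~\ref{lem:asymptotic_att_shapes}, and negative otherwise.

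The second step, which I expect to be the main obstacle, is to show that for large $t$ the tracking function has at most one critical point. Together with the first step this yields exactly one critical point when the initial and terminal directions differ (conditions (1)--(3)), and none otherwise.

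\begin{itemize}
\item The a priori bound from the ECT structure allows up to four zeros of $\partial_x r^{\mathrm{y}}_x(t)$, so an asymptotic argument is needed.
\item Write \eqref{eq:der_tracking_yield} as $\frac{\sigma^2}{4\lambda^2}(1-\mathrm{e}^{-2\lambda t})g_1(x,\frac1{2\lambda})+\varepsilon_t(x)$. Here $\varepsilon_t$ collects the $\beta$-terms, each carrying a prefactor $\mathrm{e}^{-t/\tau_i}$ or $t\,\mathrm{e}^{-t/\tau_i}$.
\item The leading function $g_1(\cdot,\frac1{2\lambda})$ is negative on $(0,\infty)$ and tends to $0$ as $x\to\infty$. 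This is the Vasicek case, where the envelope is empty.
\item On any compact interval $[0,X]$, $\varepsilon_t\to0$ uniformly, so no zeros occur there for $t$ large. This agrees with the lemma preceding Lemma~\ref{lem:env_limit_HW}, which gives $x\to\infty$.
\item For $x>X$, I compare the exponential orders of $g_1(x,\frac1{2\lambda})$ (decaying like $x\,\mathrm{e}^{-\lambda x}$ relative to the denominator) with those of $g_1(x,\tau_i)$ and $g_2(x,\tau_i)$. On this region $\partial_x r^{\mathrm{y}}_x(t)$ is, up to a positive factor, a two-term expression in the variable $x$, something like $-A + B(t)\,x^k\mathrm{e}^{-\mu x}(1+o(1))$. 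Such an expression changes sign at most once, and its derivative has fixed sign on the region.
\end{itemize}

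Making this uniform is the delicate part. The plan is to factor out the slowest-decaying exponential and show that the derivative of the normalized quantity is of one sign for $x>X$ and $t>K$, using the lower and upper bounds from the proof of Lemma~\ref{lem:env_limit_HW}. If the one-sign property fails near a boundary case, I will fall back on the earlier observation that points of regression are finitely many, and on the fact that \eqref{eq:env_2} has at most one large-$x$ solution (Lemma~\ref{lem:env_limit}). Combining this with \eqref{eq:env_yield_alt}, each critical point of $r^{\mathrm{y}}$ requires a zero of $\partial_x f$ between consecutive critical points of $r^{\mathrm{f}}$, and that limits the count to one.
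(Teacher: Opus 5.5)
Your reduction matches the paper's. For large $t$ the map $x\mapsto r^{\mathrm{y}}_x(t)$ is initially decreasing, so once you know it has at most one (regular) critical point, the count is settled by the terminal sign in \eqref{eq:term_dir_track_y}.

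\textbf{Main gap: the upper bound.} The bound of at most one critical point is the heart of the lemma, and your proposal does not establish it.
\begin{itemize}
\item Your tail argument is heuristic. An expression $-A+B(t)x^k\mathrm{e}^{-\mu x}(1+o(1))$ can change sign several times unless you control the derivative of the error term uniformly in $t$, and that is exactly the part you leave open.
\item The fallback does not close the gap. Lemma~\ref{lem:env_limit} asserts existence of solutions of \eqref{eq:env_2} for large $t$, not uniqueness. Also, knowing that each critical point of $r^{\mathrm{y}}$ needs a zero of $\partial_x f$ between critical points of $r^{\mathrm{f}}$ does not bound their number by one.
\end{itemize}
The paper uses a different idea. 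By Lemma~\ref{lem:env_limit_HW}, forward and yield envelope points accumulate at the distinct levels $\beta_0-\frac{\sigma^2}{2\lambda^2}$ and $\beta_0-\frac{\sigma^2}{4\lambda^2}$. Points of regression of $\eta^{\mathrm{y}}$ would lie on $\eta^{\mathrm{f}}$, so there are none for large $t$. For $r$ in a band around the second level, the forward curve is therefore at most \texttt{hd}. Two turns of $x\mapsto r^{\mathrm{y}}_x(t)$ inside that band would make the yield curve \texttt{hdh} for some such $r$, contradicting \cite[Thm.~3.3]{keller2021classification}.

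Your fallback can also be made rigorous, as follows.
\begin{itemize}
\item Since $\partial_x f(\xi,t,r)=-\lambda\mathrm{e}^{-\lambda\xi}(r-r^{\mathrm{f}}_\xi(t))$, formula \eqref{eq:yield_der} shows that $r^{\mathrm{y}}_x(t)$ is the running average of $r^{\mathrm{f}}_\xi(t)$ over $\xi\in(0,x)$ with weight $\xi\mathrm{e}^{-\lambda\xi}$.
\item Hence $\partial_x r^{\mathrm{y}}_x(t)$ has the sign of $r^{\mathrm{f}}_x(t)-r^{\mathrm{y}}_x(t)$, and at zeros of this difference its $x$-derivative equals $\partial_x r^{\mathrm{f}}_x(t)$.
\item $\partial_x r^{\mathrm{f}}_x(t)$ has the sign of $G_t(x)-\sigma^2(1-\mathrm{e}^{-2\lambda t})$, where $G_t(x):=\mathrm{e}^{2\lambda x}(f_0''+\lambda f_0')(x+t)$.
\item $G_t'(x)=\mathrm{e}^{2\lambda x}(f_0'''+3\lambda f_0''+2\lambda^2 f_0')(x+t)$ either vanishes identically or, as in the proof of Lemma~\ref{lem:por_finite_forward}, has no zero for $x>0$ once $t$ is large.
\item So for large $t$ the forward tracking function decreases and then at most strictly increases. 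The difference is negative on the decreasing stretch and afterwards can only cross zero upwards, hence at most once.
\end{itemize}

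\textbf{Second problem: the boundary case.} Here your plan points the wrong way. Take $\lambda=\frac1{\tau_2}<\frac1{\tau_1}$.
\begin{itemize}
\item Then $g_1(\cdot,\tau_2)\equiv0$, so the $\beta_3$-term of \eqref{eq:der_tracking_yield} is $-\beta_3\mathrm{e}^{-\lambda t}g_2(x,\tau_2)$, with $g_2(x,\tau_2)\sim\lambda^3x^2\mathrm{e}^{-\lambda x}$ as $x\to\infty$.
\item This dominates the $\sigma$-term, which behaves like $-\frac{\sigma^2}{4}(1-\mathrm{e}^{-2\lambda t})x\mathrm{e}^{-\lambda x}$. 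It also dominates the $\beta_1,\beta_2$-terms, which are $O((1+t)\mathrm{e}^{-t/\tau_1}x\mathrm{e}^{-\lambda x})$.
\item So for $\beta_3<0$ the terminal sign is $+$ for every $t$, exactly as the third line of \eqref{eq:term_dir_track_y} says.
\item Since the initial sign is $-$, every large $t$ has a critical point at a finite $x$, of order $\mathrm{e}^{\lambda t}$. A critical point cannot escape to $x=\infty$ for fixed $t$.
\end{itemize}
The same happens for $\lambda=\frac1{\tau_1}<\frac1{\tau_2}$ and $\beta_2<0$. These cases therefore have exactly one solution for large $t$.

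Consequently, the equivalence with (1)--(3) you aim for holds only with the following reading, which agrees with \eqref{eq:term_dir_track_y}, the formula the paper's closing step rests on:
\begin{itemize}
\item (1) and (2) must be read with $\lambda\ge\frac1{\tau_2}$ and $\lambda\ge\frac1{\tau_1}$, respectively.
\item The bracket in (2) and (3) must read $\frac1{\tau_1}<\frac1{\tau_2}$.
\item Only (3) genuinely needs $\lambda>\frac1{\tau_1}$, because $g_1(\cdot,\tau_1)\equiv0$ annihilates the $\beta_1$-term when $\lambda=\frac1{\tau_1}$.
\end{itemize}
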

\begin{proof}
	In a first step, we want to establish one as an upper bound for the number of solutions when $t$ is large enough. 
	By Lemma~\ref{lem:env_limit_HW}, we can choose $K_1 > 0$, such that for every solution $(x,t,r)$ of \eqref{eq:env_forward} with $t > K_1$ it holds $|\beta_0 - \frac{\sigma^2}{2\lambda^2} - r| < \frac{\sigma^2}{8\lambda^2}$; and $K_2 > 0$, such that for every solution $(x,t,r)$ of \eqref{eq:env_yield} with $t > K_2$ it holds $|\beta_0 - \frac{\sigma^2}{4\lambda^2} - r| < \frac{\sigma^2}{8\lambda^2}$.
	Also, we can choose $K_3 > 0$, such that $|\beta_0 + \frac{\sigma^2}{2\lambda^2} - r_0(t)| < \frac{\sigma^2}{2\lambda^2}$ for every $t > K_3$; and $K_4 > 0$, such that $\partial_x r_x^{\mathrm{y}}(t)|_{x = 0+} < 0$ for every $t > K_4$. 
	Now, for every $t > K:=\max\{K_1,K_2,K_3,K_4\}$, the tracking function $x\mapsto r_x^{\mathrm{y}}(t)$ is initially decreasing and changes direction in $x^*$, if and  only if $(x^*,t, r_{x^*}^{\mathrm{y}}(t))$ is a solution of \eqref{eq:env_yield} (there are no points of regression since the envelopes of the forward and yield curve do not intersect). Similarly the tracking function $x\mapsto r_x^{\mathrm{f}}(t)$ can only change direction in $x^*$ when $(x^*,t, r_{x^*}^{\mathrm{f}}(t))$ is a solution of \eqref{eq:env_forward}, and it is also initially decreasing. 
	Thus, it starts at $r_0{(t)}$, strictly decreases, possibly changes direction multiple times in the interval $(\beta_0 - \frac{5\sigma^2}{8\lambda^2}, \beta_0 - \frac{3\sigma^2}{8\lambda^2})$ (where $\beta_0 - \frac{3\sigma^2}{8\lambda^2} < r_0{(t)}$) and finally strictly decreases or strictly increases. Therefore, possible shapes corresponding to points in the interval $(\beta_0 - \frac{3\sigma^2}{8\lambda^2}, r_0{(t)})$ are \texttt{normal, humped, hd}. 
	Assuming that there is more than one solution of \eqref{eq:env_yield} for some $t' > K$, the tracking function $x\mapsto r_x^{\mathrm{y}}(t')$ starts at $r_0{(t')}$, strictly decreases and changes direction at least twice in the interval $(\beta_0 - \frac{3\sigma^2}{8\lambda^2}, \beta_0 - \frac{\sigma^2}{8\lambda^2}) \subset (\beta_0 - \frac{3\sigma^2}{8\lambda^2}, r_0{(t')})$, yielding the shape \texttt{hdh} or a more complex one for corresponding points, which contradicts \cite[Thm.~3.3]{keller2021classification}.
	The tracking function $x\mapsto r_x^{\mathrm{y}}(t')$ changes its direction exactly once if and only if initial and terminal direction are different. Since it is initially decreasing, this happens exactly when it is terminally increasing. Letting $t\rightarrow\infty$ in \eqref{eq:term_dir_track_y} yields the desired result.
\end{proof}

\begin{proof}[Proof of Lemma~\ref{lem:asymptotic_att_shapes}]
	We assume here that one of the conditions of Lemma~\ref{lem:one_solution} is satisfied. The other case can be proven in a similar way. Since
	\begin{align*}
		\lim_{t\rightarrow\infty} r_0(t) = \beta_0 + \frac{\sigma^2}{2\lambda^2} > \beta_0 - \frac{\sigma^2}{4\lambda^2} = \lim_{t\rightarrow\infty} \bar{r}_\infty^{\mathrm{y}}(t),
	\end{align*}
	we can choose $K > 0$, such that $r_0(t) > r_\infty^{\mathrm{y}}(t)$ for every $t > K$. It can also be ensured that, for every $t > K$, $\partial_x r_x^{\mathrm{y}}(t)|_{x = 0^+} < 0$ and that \eqref{eq:env_yield} has exactly one solution, which is not a point of regression. Hence, the tracking function is initially decreasing and changes direction exactly once, not reaching its starting level again. $r_0(t), r_\infty^{\mathrm{y}}(t)$ and the point where the tracking function changes direction divide the real line into four (half-)open intervals corresponding to the four shapes \texttt{n}, \texttt{i}, \texttt{h} and \texttt{hd}.
\end{proof}
\begin{proof}[Proof of Theorem~\ref{thm:asymptotic_rel_freq_yield} and Theorem~\ref{thm:asymptotic_rel_freq_forward}]
	The short-rate $\mathbf{r}(t)$ is normally distributed with
	\begin{align*}
		\EE[\mathbf{r}(t)] &= (\mathbf{r}(0) - f_0(0)){\mathrm{e}}^{-\lambda t} + f_0(t) + \frac{\sigma^2}{2\lambda^2}\left(1 - {\mathrm{e}}^{-\lambda t}\right)^2,\\
		\VV(\mathbf{r}(t)) &= \frac{\sigma^2}{2\lambda}\left(1 - {\mathrm{e}}^{-2\lambda t}\right),
	\end{align*}
	such that
	\begin{align*}
		\lim_{t\rightarrow\infty} \EE[\mathbf{r}(t)] = \beta_0 + \frac{\sigma^2}{2\lambda^2},\qquad \lim_{t\rightarrow\infty} \VV[\mathbf{r}(t)] = \frac{\sigma^2}{2\lambda}.
	\end{align*}
	The calculation of the relative frequencies follows the same approach in all cases covered by the theorems; we will demonstrate it for only one of them. Assume that \eqref{eq:forward_term_inc} is satisfied and take a look at the forward curve. It holds that
	\begin{align*}
		\lim_{t\rightarrow\infty} r_0(t) = \beta_0 + \frac{\sigma^2}{2\lambda^2},
	\end{align*}
	and, for the solutions $(x,t,r)$ of \eqref{eq:env_forward}, $r \rightarrow \beta_0 - \frac{\sigma^2}{2\lambda^2}$ as $t\rightarrow\infty$. Furthermore, $\lim_{x\rightarrow\infty} r_x^{\mathrm{f}}(t) = \infty$ for every $t > 0$ and, for $t$ large enough, the tracking function is initially decreasing. So, starting at $r_0(t)$, the tracking function decreases, changes its direction (possibly multiple times) in a small interval around $\beta_0 - \frac{\sigma^2}{2\lambda^2}$ and terminally increases with limit $\infty$. Attained shapes are  $\texttt{dipped}$ for $\mathbf{r}(t) \ge r_0(t)$, \texttt{humped} for $\beta_0 - \frac{\sigma^2}{2\lambda^2} + \epsilon < \mathbf{r}(t) < r_0(t)$, \texttt{normal} for $\mathbf{r}(t) < \beta_0 - \frac{\sigma^2}{2\lambda^2} - \epsilon$, and non-specified shapes for $\beta_0 - \frac{\sigma^2}{2\lambda^2} - \epsilon < \mathbf{r}(t) < \beta_0 - \frac{\sigma^2}{2\lambda^2} + \epsilon$; for some $\epsilon > 0$ with $\epsilon \rightarrow0$ as $t\rightarrow\infty$. So, for $t$ large enough,
	\begin{align*}
		&\QQ(x\mapsto f(x,t,\mathbf{r}(t)) \text{ is \texttt{dipped}}) = \QQ(\mathbf{r}(t) \ge r_0(t))\\ 
		&\qquad = 1 - \Phi\left(\frac{r_0(t) - \EE[\mathbf{r}(t)]}{\sqrt{\VV(\mathbf{r}(t))}}\right)  \overset{t\rightarrow\infty}{\longrightarrow}  0.5.
	\end{align*}
	The other values are calculated in a similar fashion.
\end{proof}

\appendix

\section{Tchebycheff systems}

In order to give a full classification of attainable term structure shapes, we often need to establish upper bounds to the number of local extrema of forward and yield curve, or a related function (e.g. a Wronskian).
In \cite{keller2021classification} and \cite{KRS23}, Descartes systems have been used to fulfill this task. Their variation--diminishing property proved especially useful when studying different parameter regimes.
In the models studied in \cite{KRS26} and here, Descartes systems are not applicable. Instead, we have to work with the more general notion of Tchebycheff systems, thus, still obtaining reasonable upper bounds, but losing the variation--diminishing property.

In this section, which is based on definitions and results found in \cite{KS66},  we will introduce the concept of a (extended complete) Tchebycheff system and its key property. We show that exponential--monomial functions form a Tchebycheff system and study how these systems behave under certain transformations.

\subsection{Definitions and key properties}

We will start by defining the notion of a Tchebycheff system and a polynomial.

\begin{defn}[{\cite[Ch. I, Def. 1.1]{KS66}}]
	Let $u_0,\ldots, u_n$ denote continuous real--valued functions defined on a closed interval $[a,b]$. These functions will be called a \textbf{Tchebycheff system} over $[a,b]$ (T-system over $[a,b]$, or just T-system) provided the $n+1$st order determinants
	\begin{align}\label{eq:T-system_det}
		U\begin{pmatrix}
			0,1,\cdots,n\\t_0,t_1,\cdots,t_n
		\end{pmatrix} := \det\begin{pmatrix}
			u_0(t_0) & u_0(t_1) & \cdots & u_0(t_n)\\
			u_1(t_0) & u_1(t_1) & \cdots & u_1(t_n)\\
			\vdots & \vdots & & \vdots\\
			u_n(t_0) & u_n(t_1) & \cdots & u_n(t_n)\\
		\end{pmatrix}
	\end{align}
	are strictly positive whenever $a\le t_1 < t_2 < \cdots < t_n \le b$. The functions will be referred to as a \textbf{complete Tchebycheff system} (CT-system) if $\{u_i\}_{i=0}^r$ is a T-system for each $r = 0,1,\ldots,n$.
\end{defn}

\begin{defn}[{\cite[Ch. I, Def. 4.1]{KS66}}]
	Let the system $\{u_i\}_{i = 0}^n$ be defined on some interval $[a,b]$. A function of the form $u = \sum_{i = 0}^n a_iu_i$ where $a_i$ are real numbers will be called a \textbf{$u$-polynomial} (or just polynomial). A polynomial is said to be nontrivial if $\sum_{i=0}^n a_i^2 > 0$.
\end{defn}

It is crucial for our analysis to find upper bounds to the number of zeros of certain polynomials. Once we established that we are working with a Tchebycheff system, this is an easy task, as the following theorem shows.

\begin{thm}[{\cite[Ch. I, Thm. 4.1]{KS66}}]\label{thm:T_charact}
	If $\{u_i\}_{i = 0}^n$ is a T-system over $[a,b]$, then every nontrivial $u$-polynomial has at most $n$ distinct zeros in $[a,b]$.
\end{thm}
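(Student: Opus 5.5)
The plan is the classical determinant argument of Karlin--Studden; only the definition of a T-system is used. Suppose, for contradiction, that a nontrivial polynomial $u=\sum_{i=0}^n a_iu_i$ has $n+1$ distinct zeros in $[a,b]$. We order them as $a\le t_0<t_1<\cdots<t_n\le b$. These zeros give the linear system
\begin{align*}
	\sum_{i=0}^n a_i u_i(t_j) = 0,\qquad j=0,1,\ldots,n,
\end{align*}
in the unknown coefficient vector $(a_0,\ldots,a_n)$. Written in matrix form, this reads $M^{\top}a=0$, where $M$ is exactly the matrix appearing in \eqref{eq:T-system_det}, with entries $M_{ij}=u_i(t_j)$.

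The key step is to observe that the points $t_0<\cdots<t_n$ are strictly increasing and lie in $[a,b]$. The T-system hypothesis therefore gives
\begin{align*}
	U\begin{pmatrix}0,1,\cdots,n\\t_0,t_1,\cdots,t_n\end{pmatrix}=\det M>0.
\end{align*}
In particular $M$, and hence $M^{\top}$, is invertible. The homogeneous system then has only the trivial solution $a_0=\cdots=a_n=0$, which contradicts $\sum_i a_i^2>0$. Hence a nontrivial polynomial has at most $n$ distinct zeros.

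The argument is essentially linear algebra, so I expect no real obstacle. The only point needing care is the indexing convention in the definition: the determinant condition must be read as holding for $a\le t_0<t_1<\cdots<t_n\le b$, that is, including $t_0$, as in \cite{KS66}. Since any $n+1$ distinct zeros can be sorted increasingly, the hypothesis always applies to them, and no sign or ordering issue remains. Multiplicities of zeros are not counted here; that refinement requires the ECT version and is treated separately.
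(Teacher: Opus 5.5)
Your proof is correct and is the standard determinant argument behind the cited result in \cite{KS66}; the paper itself gives no proof and simply defers to that reference. You are also right that the positivity condition must be read for $a\le t_0<t_1<\cdots<t_n\le b$: the paper's definition omits $t_0$, which is a typo.
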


Explicitly determining if \eqref{eq:T-system_det} is strictly positive will often be quite difficult. To circumvent this problem, we will mainly use the stronger notion of an extended complete Tchebycheff system, making calculations a bit simpler.

Let $u_i\in C^p[a,b],\ i=0,1,\ldots,n$ for some $p\ge 1$. For a choice of $a\le t_0 \le t_1\le\cdots\le t_n\le b$ with $t_i = t_{i + 1} = \cdots = t_{i + q}, 0\le q\le p$ we will denote with
\begin{align*}
	U^*\begin{pmatrix}
		0,1,\cdots,n\\t_0,t_1,\cdots,t_n
	\end{pmatrix}
\end{align*}
the determinant, that is obtained by exchanging the $i + 1 + j$th column in \eqref{eq:T-system_det}, $0\le j\le q$, by the column vector
\begin{align*}
	\left(\frac{\partial^j}{\partial t_i^j}u_0(t_i), \frac{\partial^j}{\partial t_i^j}u_1(t_i),\cdots,\frac{\partial^j}{\partial t_i^j}u_n(t_i)\right)^\top
\end{align*}

\begin{defn}[{\cite[Ch. I, Thm. 2.4]{KS66}}]
	The functions $u_0, u_1,\ldots,u_n$ will be called an \textbf{extended Tchebycheff system of order $p$ over $[a,b]$} (ET-system of order $p$), provided $u_i\in C^p[a,b],\ i=0,1,\ldots,n$ and
	\begin{align}\label{eq:ET-system_det}
		U^*\begin{pmatrix}
			0,1,\cdots,n\\t_0,t_1,\cdots,t_n
		\end{pmatrix} > 0
	\end{align}
	for all choices $a\le t_0 \le t_1\le\cdots\le t_n\le b$, where equality occurs in groups of at most $p$ consecutive $t_i$ values. An extended Tchebycheff system of order $n + 1$ will be referred to simply as an ET-system. The functions will be referred to as an \textbf{extended complete Tchebycheff system} (ECT-system) if $\{u_i\}_{i=0}^r$ is an ET-system for each $r = 0,1,\ldots,n$.
\end{defn}

With the stronger notion of ET-systems, we can bound the number of zeros even when counting multiplicity.

\begin{thm}[{\cite[Ch. I, Thm. 4.3]{KS66}}]\label{thm:T-system_upper_bounds}
	If $\{u_i\}_{i = 0}^n$ is an ET-system over $[a,b]$, then every nontrivial $u$-polynomial has at most $n$ zeros counting multiplicity in $[a,b]$.
\end{thm}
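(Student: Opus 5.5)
The proof is a linear-algebra argument: a polynomial with too many zeros, counted with multiplicity, yields a nontrivial solution to a homogeneous linear system whose matrix is exactly one of the determinants $U^*$ in \eqref{eq:ET-system_det}. Since that determinant is strictly positive, no such solution can exist. For an ET-system (order $p=n+1$), a zero of multiplicity $m$ of a $u$-polynomial $u=\sum_{i=0}^n a_iu_i$ at $s\in[a,b]$ means $u(s)=u'(s)=\dots=u^{(m-1)}(s)=0$. This makes sense for $m\le n+1$, since $u\in C^{n+1}[a,b]$. A zero where even more derivatives vanish is simply counted as having multiplicity at least $n+1$.

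Suppose, for contradiction, that a nontrivial $u$ has at least $n+1$ zeros counting multiplicity. I would first select distinct zeros $s_1<\dots<s_k$ with multiplicities $m_1,\dots,m_k$, where $\sum m_j\ge n+1$. By lowering the last multiplicity if necessary, I make the sum exactly $n+1$. Each $m_j$ is then at most $n+1=p$. I list the points with repetition as
\begin{align*}
t_0\le t_1\le\dots\le t_n,
\end{align*}
with $s_j$ repeated $m_j$ times. Every group of equal values then has length at most $p$, which is precisely the admissible configuration in the definition of an ET-system. Let $M$ be the $(n+1)\times(n+1)$ matrix behind $U^*\begin{pmatrix}0,\dots,n\\t_0,\dots,t_n\end{pmatrix}$. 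Within the block belonging to $s_j$, its columns are the vectors $\big(u_0^{(\ell)}(s_j),\dots,u_n^{(\ell)}(s_j)\big)^\top$ for $\ell=0,\dots,m_j-1$.

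The key observation is that, writing $a=(a_0,\dots,a_n)^\top$, the entry of $a^\top M$ in the column indexed by $(j,\ell)$ equals $u^{(\ell)}(s_j)$. This vanishes by the multiplicity assumption, so $a^\top M=0$. By \eqref{eq:ET-system_det}, $\det M>0$, so $M$ is invertible and hence $a=0$. This contradicts the nontriviality of $u$.

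The only delicate point is bookkeeping. One must check that the column replacement rule defining $U^*$ (column $i+1+j$ replaced by the $j$-th derivative at the repeated point $t_i$) matches exactly the derivative conditions of the zero. One must also check that truncating multiplicities keeps each group length within the order $p$. Once the definition of multiplicity is fixed consistently with the definition of $U^*$, the remaining steps are immediate.
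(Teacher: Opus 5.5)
Your argument is correct and is the standard one. The paper gives no proof of its own but cites Karlin--Studden, where the result follows exactly as you describe: a nontrivial coefficient vector annihilating every column of the matrix behind $U^*$ contradicts $U^*>0$.

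The only bookkeeping point to tighten is the truncation step. Add zeros one at a time until the accumulated multiplicity first reaches or exceeds $n+1$, and only then lower the multiplicity of the last zero added. Lowering just the final multiplicity of an arbitrary selection need not bring the total down to exactly $n+1$.
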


The following Theorem provides us with an easy way to determine if a given system is ECT.

\begin{thm}[{\cite[Ch. XI, Thm. 1.1]{KS66}}]\label{thm:ECT_charact}
	Let $u_0,u_1,\ldots,u_n$ be of class $C^n[a,b]$. Then $\{u_i\}_{i = 0}^n$ is an ECT-system on $[a,b]$ if and only if for $r = 0,1,\ldots,n$ we have $W(u_0,\ldots,u_r) > 0$ on $[a,b]$, where $W(u_0,\ldots,u_r)$ denotes the Wronskian of the functions $u_0,u_1,\ldots,u_r$, i.e.
	\begin{align*}
		W(u_0,u_1,\ldots,u_r)(t) = \det\begin{pmatrix}
			u_0(t) & u_0'(t) & \cdots & u_0^{(r)}(t)\\
			u_1(t) & u_1'(t) & \cdots & u_1^{(r)}(t)\\
			\vdots & \vdots & & \vdots\\
			u_r(t) & u_r'(t) & \cdots & u_k^{(r)}(t)\\
		\end{pmatrix}.
	\end{align*}
\end{thm}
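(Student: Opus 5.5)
The plan is to prove the two implications separately. The direction from ECT to positive Wronskians is almost immediate. The converse needs a Rolle-type zero-counting argument followed by a continuity argument to fix the sign.

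\emph{Necessity.} Suppose $\{u_i\}_{i=0}^n$ is an ECT-system on $[a,b]$. For each $r$ the subsystem $\{u_i\}_{i=0}^r$ is an ET-system of order $r+1$. We may therefore take $t\in[a,b]$ and let all $r+1$ points coalesce, $t_0=\cdots=t_r=t$. By definition the determinant $U^*$ then has as columns the vectors $(u_0^{(j)}(t),\ldots,u_r^{(j)}(t))^\top$, $j=0,\ldots,r$. This is exactly $W(u_0,\ldots,u_r)(t)$, so \eqref{eq:ET-system_det} gives $W(u_0,\ldots,u_r)(t)>0$.

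\emph{Sufficiency, step 1: zero counting.} Assume $W_r:=W(u_0,\ldots,u_r)>0$ on $[a,b]$ for all $r$. It suffices to show that each $\{u_i\}_{i=0}^r$ is an ET-system, since the hypothesis is inherited by subsystems. I first prove by induction on $n$ that every nontrivial $u$-polynomial has at most $n$ zeros in $[a,b]$ counting multiplicity.
\begin{itemize}
\item For $n=0$ this holds because $u_0=W_0>0$.
\item For general $n$, write $v_i:=u_i/u_0$ for $i=1,\ldots,n$ and $u=\sum a_iu_i$. Then $u/u_0=a_0+\sum_{i\ge1}a_iv_i$, and it has the same zeros with the same multiplicities as $u$.
\item By Rolle's theorem, extended to multiplicities, if $u$ has $n+1$ zeros counting multiplicity, then $(u/u_0)'=\sum_{i\ge1}a_iv_i'$ has at least $n$.
\item The key identity is $W(v_1',\ldots,v_r')=W(u_0,\ldots,u_r)/u_0^{r+1}$ for each $r\le n$. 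It follows from the standard rule $W(u_0g_0,\ldots,u_0g_r)=u_0^{r+1}W(g_0,\ldots,g_r)$ applied with $g_0=1$, followed by expanding along the first column.
\item Hence the system $(v_1',\ldots,v_n')$ again has positive Wronskians. By the induction hypothesis, $\sum_{i\ge1}a_iv_i'\equiv0$ with $a_1=\cdots=a_n=0$, and then $a_0u_0$ having a zero forces $a_0=0$.
\end{itemize}

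\emph{Sufficiency, step 2: the sign of $U^*$.} Step 1 shows that $U^*\bigl(\begin{smallmatrix}0,\ldots,n\\t_0,\ldots,t_n\end{smallmatrix}\bigr)\neq0$ for every admissible configuration. Otherwise some nontrivial combination of the rows would vanish in all columns. That would produce a polynomial with $n+1$ zeros counting multiplicity.

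To fix the sign, replace $U^*$ by the determinant of divided differences. This amounts to subtracting columns within each block of nearby points and dividing by the positive quantities $\prod(t_j-t_i)$. The result is a determinant that depends continuously on $(t_0,\ldots,t_n)$ over the closed simplex $a\le t_0\le\cdots\le t_n\le b$. It agrees with $U^*$ up to positive factors, such as $1/j!$, on every configuration.

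This continuous function never vanishes on the simplex, which is connected, so its sign is constant. At the fully coalesced configuration it equals $W_n(t)/\prod_j j!>0$. Hence $U^*>0$ everywhere, which is \eqref{eq:ET-system_det}.

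\emph{Main obstacle.} The delicate point is this continuity step. One has to check carefully that the divided-difference determinant extends continuously across coalescing points, which requires $u_i\in C^n$, and that it matches $U^*$ up to a positive factor. The Wronskian-quotient identity in step 1 is routine by comparison.
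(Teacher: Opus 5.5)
The paper does not prove this statement; it imports it from \cite[Ch.~XI, Thm.~1.1]{KS66}. There is therefore no in-paper argument to match, and your proof stands on its own as a correct one.

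The two points you flag as delicate do go through:
\begin{enumerate}
\item By the Hermite--Genocchi formula, the entries $[t_0,\ldots,t_j]u_i$ are continuous on the closed simplex once $u_i\in C^n$.
\item Passing from the columns of $U^*$ to divided differences is a triangular column operation. Its diagonal entries are $\frac{1}{(\mu-1)!}\prod_{s\neq t_j}(t_j-s)^{-\mu_s}$, where $\mu_s$ is the multiplicity of the node $s$ among $t_0,\ldots,t_j$ and $\mu$ is that of $t_j$. These entries are positive precisely because $t_j$ is the largest node.
\end{enumerate}
Combined with $U^*\neq0$ from your zero count and the diagonal value $W_n(t)/\prod_j j!$, connectedness of the simplex fixes the sign.

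In the cited source the theorem sits next to the canonical representation $u_k=w_0\int w_1\int\cdots\int w_k$ with positive weights (\cite[Ch.~XI, Thm.~1.2]{KS66}). That representation is the result the paper actually uses in Lemma~\ref{lem:ECT_yield}. Your reduction $u\mapsto(u/u_0)'$ is exactly one layer of that factorization, so the underlying mechanism is the same. What your route buys is that the sign of $U^*$ comes from a soft non-vanishing-plus-connectedness argument, without constructing the weights $w_k$. The representation route instead produces the factorization itself, which the paper needs elsewhere.

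One side remark: your argument uses derivatives only up to order $n$. That matches the convention that an ET-system of order $p$ consists of $C^{p-1}$ functions. The paper's definition literally demands $C^p$, under which the \emph{if} direction could not hold with the stated $C^n$ hypothesis. So what you prove is the theorem under the intended definition.
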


It should be noted at this point, that when changing the order of functions in an ECT-system, one does not necessarily obtain an ECT-system again. Nevertheless, the upper bound to the number of zeros of polynomials, given in  Theorem~\ref{thm:T-system_upper_bounds}, still holds.

Unless we specify otherwise, whenever we speak about an (EC)T-system, we mean a system on $[0,\infty)$, i.e. a system, that is (EC)T on every interval $[0,b]$ with $b > 0$.

\subsection{The exponential--polynomial family}

Let $r\in\NN$, $\lambda_1,\ldots,\lambda_r\in\RR$ pairwise distinct, and $m_1,\ldots,m_r\in\NN$ with $\sum_{k=1}^r m_k =: M$. We define
\begin{equation}
	\begin{gathered}\label{eq:exppolsys}
		u_0(x) := {\mathrm{e}}^{\lambda_1 x}, u_1(x) := x{\mathrm{e}}^{\lambda_1 x},\ldots, u_{m_1-1}(x) := x^{m_1-1}{\mathrm{e}}^{\lambda_1 x}\\
		u_{m_1}(x) := {\mathrm{e}}^{\lambda_2 x}, u_{m_1 + 1}(x) := x{\mathrm{e}}^{\lambda_2 x},\ldots, u_{m_1 + m_2-1}(x) := x^{m_2-1}{\mathrm{e}}^{\lambda_2 x}\\
		\vdots\\
		u_{\sum_{k=1}^{r-1}m_k}(x) := {\mathrm{e}}^{\lambda_r x}, \ldots, u_{M-1}(x) := x^{m_r-1}{\mathrm{e}}^{\lambda_r x}
	\end{gathered}
\end{equation}
A function $u$ is of the \textbf{exponential--polynomial type}, if it is a polynomial of the above defined functions. 

We want to show, that the system \eqref{eq:exppolsys} is an ECT-system and thus calculate the Wronskian; using the following result.

\begin{lem}
	For every $i,l\in\NN_0$, it holds
	\begin{align}\label{eq:exp_der}
		\frac{\mathrm{d}^i}{\dif x^i} \left(x^l\mathrm{e}^{\lambda x}\right) = \mathrm{e}^{\lambda x}\sum_{j=0}^{i\land l} {l\choose j} \frac{i!}{(i-j)!}\lambda^{i-j}x^{l-j}.
	\end{align}
	In particular it holds
	\begin{align}\label{eq:exp_der_0}
		\frac{\mathrm{d}^i}{\dif x^i} \left(x^l\mathrm{e}^{\lambda x}\right)\Big|_{x = 0} = l!{i\choose l}\lambda^{i - l},
	\end{align}
	where we use the convention ${i\choose l} := 0$ for $i < l$.
\end{lem}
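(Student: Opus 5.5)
We prove the general formula \eqref{eq:exp_der} by induction on $i$, with $l$ and $\lambda$ fixed, and then read off \eqref{eq:exp_der_0} by substituting $x=0$.

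For the base case $i=0$, only the term $j=0$ appears, and it equals $\mathrm{e}^{\lambda x}x^l$. For the induction step, assume \eqref{eq:exp_der} holds for $i$ and differentiate each summand $\mathrm{e}^{\lambda x}\binom{l}{j}\frac{i!}{(i-j)!}\lambda^{i-j}x^{l-j}$ with the product rule. This gives two pieces: a term $\lambda\cdot(\text{summand})$, and, for $j<l$, a term with $x^{l-j-1}$ and the extra factor $(l-j)$. Shift the index in the second piece by $j\mapsto j-1$. The coefficient of $\mathrm{e}^{\lambda x}\lambda^{i+1-j}x^{l-j}$ then becomes
\begin{align*}
\binom{l}{j}\frac{i!}{(i-j)!} + \binom{l}{j-1}(l-j+1)\frac{i!}{(i-j+1)!}.
\end{align*}
Using $\binom{l}{j-1}(l-j+1) = j\binom{l}{j}$, this equals $\binom{l}{j}\frac{i!}{(i+1-j)!}\big((i+1-j)+j\big) = \binom{l}{j}\frac{(i+1)!}{(i+1-j)!}$, which is the claimed coefficient for $i+1$.

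The only point needing care is the range of summation. Both pieces vanish outside $0\le j\le (i+1)\wedge l$, for two reasons. First, $\binom{l}{j}=0$ for $j>l$, and the factor $(l-j)$ kills the derivative term when $j=l$. Second, $\frac{i!}{(i-j)!}$ should be read as the falling factorial $i(i-1)\cdots(i-j+1)$, which is $0$ for $j>i$. With these conventions, boundary terms such as $j=i+1\le l$, which arise only from the shifted piece, are handled automatically. So the induction closes without splitting into cases.

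For \eqref{eq:exp_der_0}, set $x=0$. Every term with $l-j>0$ vanishes, so only $j=l$ survives, and only when $l\le i$. That term equals $\binom{l}{l}\frac{i!}{(i-l)!}\lambda^{i-l} = l!\binom{i}{l}\lambda^{i-l}$. When $i<l$ no term survives and the result is $0$, which matches the convention $\binom{i}{l}=0$. An alternative route is the Leibniz rule, $\frac{\mathrm{d}^i}{\mathrm{d}x^i}(x^l\mathrm{e}^{\lambda x}) = \sum_j \binom{i}{j}(x^l)^{(j)}\lambda^{i-j}\mathrm{e}^{\lambda x}$. Since $(x^l)^{(j)}=\frac{l!}{(l-j)!}x^{l-j}$ and $\binom{i}{j}\frac{l!}{(l-j)!}=\binom{l}{j}\frac{i!}{(i-j)!}$, this gives \eqref{eq:exp_der} directly and would make the induction unnecessary.
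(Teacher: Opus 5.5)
Your proof is correct and is essentially the paper's argument: induction on $i$ via the product rule, an index shift $j\mapsto j-1$, a binomial identity to merge the two coefficients, and then evaluation at $x=0$, where only the $j=l$ term survives.

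The only differences are cosmetic. You differentiate the level-$i$ formula once more with $l$ fixed, and you absorb the boundary terms through falling-factorial conventions. The paper instead writes $\frac{\mathrm{d}^{i+1}}{\mathrm{d}x^{i+1}}(x^l\mathrm{e}^{\lambda x}) = \frac{\mathrm{d}^{i}}{\mathrm{d}x^{i}}(\lambda x^l\mathrm{e}^{\lambda x} + l x^{l-1}\mathrm{e}^{\lambda x})$, applies the hypothesis for $l$ and $l-1$, and tracks the extra term with an indicator. Your Leibniz-rule remark does indeed give the formula in one line.
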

\begin{proof}
	We use induction over $i$. Plugging $i = 0$ into \eqref{eq:exp_der} yields
	\begin{align*}
		\frac{\mathrm{d}^0}{\dif x^0} \left(x^l\mathrm{e}^{\lambda x}\right) = \mathrm{e}^{\lambda x} {0\choose 0} \frac{0!}{0!}\lambda^{0}x^{l-0} = x^l\mathrm{e}^{\lambda x}.
	\end{align*}
	Now assuming the formula holds for $i\in\NN_0$, we show that it also holds for $i + 1$.
	\begin{align*}
		&\frac{\mathrm{d}^{i + 1}}{\dif x^{i + 1}} \left(x^l\mathrm{e}^{\lambda x}\right) = \frac{\mathrm{d}^i}{\dif x^i}\left(\lambda x^l\mathrm{e}^{\lambda x} + lx^{l-1}\mathrm{e}^{\lambda x}\right)\\
			&\quad = \mathrm{e}^{\lambda x}\left[\lambda \sum_{j=0}^{i\land l} {l\choose j} \frac{i!}{(i-j)!}\lambda^{i-j}x^{l-j} + l \sum_{j=0}^{i\land (l - 1)} {l - 1\choose j} \frac{i!}{(i-j)!}\lambda^{i-j}x^{(l - 1) -j}\right]\\
			&\quad =  \mathrm{e}^{\lambda x}\left[{l\choose 0} \frac{i!}{i!}\lambda^{(i+1) - 0}x^{l-0} + \lambda \sum_{j=1}^{i\land l} {l\choose j} \frac{i!}{(i-j)!}\lambda^{i-j}x^{l-j}\right.\\
			&\qquad + l \sum_{j=1}^{i\land l} {l - 1\choose j - 1} \frac{i!}{(i-(j - 1))!}\lambda^{i-(j - 1)}x^{(l - 1) -(j - 1)}\\
			&\qquad \left.+ \mathds{1}_{\{i \le l - 1\}}l{l-1\choose i}\frac{i!}{0!}\lambda^{i-i}x^{l-1 - i}\right]\\
			&\quad =  \mathrm{e}^{\lambda x}\left[{l\choose 0} \frac{(i + 1)!}{(i + 1)!}\lambda^{(i+1) - 0}x^{l-0} + \sum_{j=1}^{i\land l} \frac{l!}{(l - j)!} {i\choose j}\lambda^{(i + 1)-j}x^{l-j}\right.\\
			&\qquad + \sum_{j=1}^{i\land l} \frac{l!}{(l-j)!} {i \choose j - 1}\lambda^{(i + 1)-j}x^{l -j}\\
			&\qquad \left.+ \mathds{1}_{\{i + 1 \le l\}}{l\choose i + 1}\frac{(i + 1)!}{0!}\lambda^{(i + 1) - (i + 1)}x^{l - (i + 1)}\right]\\
			&\quad =  \mathrm{e}^{\lambda x}\left[{l\choose 0} \frac{(i + 1)!}{(i + 1)!}\lambda^{(i+1) - 0}x^{l-0} + \sum_{j=1}^{i\land l} \frac{l!}{(l - j)!} {i + 1\choose j}\lambda^{(i + 1)-j}x^{l-j}\right.\\
			&\qquad \left.+ \mathds{1}_{\{i + 1 \le l\}}{l\choose i + 1}\frac{(i + 1)!}{0!}\lambda^{(i + 1) - (i + 1)}x^{l - (i + 1)}\right]\\
			&\quad =  \mathrm{e}^{\lambda x}\left[{l\choose 0} \frac{(i + 1)!}{(i + 1)!}\lambda^{(i+1) - 0}x^{l-0} + \sum_{j=1}^{i\land l} {l \choose j} \frac{(i + 1)!}{((i + 1) - j)!}\lambda^{(i + 1)-j}x^{l-j}\right.\\
			&\qquad \left.+ \mathds{1}_{\{i + 1 \le l\}}{l\choose i + 1}\frac{(i + 1)!}{0!}\lambda^{(i + 1) - (i + 1)}x^{l - (i + 1)}\right]\\
			&\quad = \mathrm{e}^{\lambda x}\sum_{j=0}^{(i + 1)\land l} {l\choose j} \frac{(i + 1)!}{((i + 1)-j)!}\lambda^{(i + 1)-j}x^{l-j}.
	\end{align*}
	Now let $x = 0$. If $i < l$, the sum in \eqref{eq:exp_der} vanishes. If $l \le i$,  the only summand that does not vanish corresponds to $j = l$, and evaluates to
	\begin{align*}
		{l\choose l}\frac{i!}{(i - l)!}\lambda^{i - l} = l!{i\choose l}\lambda^{i - l}.
	\end{align*}
\end{proof}

\begin{thm}\label{thm:Wronskian}
	Let $u_0,\ldots, u_{M-1}$ be as in \eqref{eq:exppolsys}. Then
	\begin{align}\label{eq:Wronskian}
		\begin{split}&W(u_0, \ldots,u_{M-1})(x) =\\ &\qquad\exp\left(x\sum_{k=1}^r m_k\lambda_k\right)\prod_{k=1}^r\left[\left(\prod_{l=1}^{m_k-1} l!\right)\left(\prod_{l=1}^{k-1}(\lambda_k - \lambda_l)^{m_km_l}\right)\right]
		\end{split}
	\end{align}
\end{thm}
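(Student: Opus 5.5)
The plan is to split the computation into an $x$-dependent exponential factor and a constant, and to evaluate the constant at $x=0$.

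\emph{Step 1 (the exponential factor).} Each $u_j$ in \eqref{eq:exppolsys} solves the constant-coefficient linear ODE $L u := \prod_{k=1}^r \left(\frac{\dif}{\dif x} - \lambda_k\right)^{m_k} u = 0$ of order $M$. The characteristic polynomial of $L$ is $\prod_k (z-\lambda_k)^{m_k}$, so its coefficient of $z^{M-1}$ is $a_{M-1} = -\sum_k m_k\lambda_k$. By Abel's identity the Wronskian of any $M$ solutions satisfies $W' = -a_{M-1} W$. Hence
\begin{align*}
W(u_0,\ldots,u_{M-1})(x) = W(u_0,\ldots,u_{M-1})(0)\,\exp\left(x\sum_{k=1}^r m_k\lambda_k\right).
\end{align*}
It remains to show that $W(0)$ equals the constant product in \eqref{eq:Wronskian}.

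\emph{Step 2 (reduction to a confluent Vandermonde determinant).} By \eqref{eq:exp_der_0}, the row of the Wronskian matrix at $x=0$ belonging to $u = x^l{\mathrm{e}}^{\lambda_k x}$ has entries $l!\binom{i}{l}\lambda_k^{i-l}$ for $i=0,\ldots,M-1$. We pull the factor $l!$ out of each such row; this accounts for $\prod_k\prod_{l=1}^{m_k-1} l!$ (the term $0!=1$ is harmless). What remains in that row is $\binom{i}{l}\lambda_k^{i-l} = \frac{1}{l!}\frac{\dif^l}{\dif \lambda^l}\lambda^i\big|_{\lambda=\lambda_k}$. So $W(0)/\prod l!$ is exactly the confluent Vandermonde determinant $V$ with nodes $\lambda_k$ of multiplicity $m_k$, whose rows are normalized derivatives of $(1,\lambda,\ldots,\lambda^{M-1})$.

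\emph{Step 3 (evaluating the confluent Vandermonde).} We must show $V = \prod_{k=1}^r\prod_{l=1}^{k-1}(\lambda_k-\lambda_l)^{m_km_l}$. I would derive this from the ordinary Vandermonde determinant $\prod_{p<q}(\mu_q-\mu_p)$ by a limiting argument:
\begin{enumerate}[(a)]
\item Take distinct nodes $\mu_{k,0},\ldots,\mu_{k,m_k-1}$ near $\lambda_k$, ordered block by block.
\item Inside each block, replace the rows successively by divided differences of $\lambda\mapsto\lambda^i$. These are row operations, so they divide the determinant by the intra-block factors $\prod_{0\le p<q<m_k}(\mu_{k,q}-\mu_{k,p})$.
\item Let $\mu_{k,p}\to\lambda_k$. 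Divided differences of order $l$ converge to $\frac{1}{l!}\frac{\dif^l}{\dif\lambda^l}\lambda^i$ at $\lambda_k$.
\item The remaining inter-block factors tend to $(\lambda_k-\lambda_l)^{m_km_l}$ for $l<k$.
\end{enumerate}
An alternative would be induction on $M$, removing one node of highest multiplicity at a time.

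I expect Step 3 to be the main obstacle. The order in which the pairs $(\mu_{k,p},\mu_{k',p'})$ appear determines the sign convention $(\lambda_k-\lambda_l)$ with $l<k$, and this must match the row ordering of \eqref{eq:exppolsys}. One also has to justify that the divided-difference limits hold uniformly in the entries; since every entry is polynomial in the nodes, continuity of the determinant suffices.
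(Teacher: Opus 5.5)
Your proposal is correct and follows the paper's proof: evaluate at $x=0$ via \eqref{eq:exp_der_0}, pull out the $l!$ factors, recognize the confluent Vandermonde determinant, and extend to all $x$ through Abel's (Liouville's) formula $W'=\bigl(\sum_k m_k\lambda_k\bigr)W$. The only difference is that the paper simply cites the value of the confluent Vandermonde determinant from Ha and Gibson. Your divided-difference limiting argument instead proves that value directly, with the sign convention $(\lambda_k-\lambda_l)$, $l<k$, correctly matched to the block ordering of \eqref{eq:exppolsys}.
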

\begin{proof}
	We show that the formula holds for $x = 0$.
	\begin{align*}
		&W(u_0, \ldots,u_{M-1})(0) = \det\left(\left(\frac{\mathrm{d}^i}{\dif x^i} u_j(0)\right)_{i=0,j=0}^{M-1, M-1}\right)\\ 
		&\quad = \det\left(\left(\frac{\mathrm{d}^i}{\dif x^i} \left(x^l\mathrm{e}^{\lambda_1 x}\right)\Big|_{x = 0}\right)_{i=0,l=0}^{M-1, m_1-1}\Big| \cdots \Big|\left(\frac{\mathrm{d}^i}{\dif x^i} \left(x^l\mathrm{e}^{\lambda_1 x}\right)\Large|_{x = 0}\right)_{i=0,l=0}^{M-1, m_r-1}\right)\\
		&\quad \overset{\eqref{eq:exp_der_0}}{=} \det\left(\left(l!{i\choose l}\lambda_1^{i - l}\right)_{i=0,l=0}^{M-1, m_1-1}\Big| \cdots \Big|\left(l!{i\choose l}\lambda_r^{i - l}\right)_{i=0,l=0}^{M-1, m_r-1}\right)\\
		&\quad = \left(\prod_{k=1}^{r}\prod_{l=1}^{m_k - 1}l!\right)\det\left(\left({i\choose l}\lambda_1^{i - l}\right)_{i=0,l=0}^{M-1, m_1-1}\Big| \cdots \Big|\left({i\choose l}\lambda_r^{i - l}\right)_{i=0,l=0}^{M-1, m_r-1}\right)\\
		&\quad = \left(\prod_{k=1}^{r}\prod_{l=1}^{m_k - 1}l!\right)\left(\prod_{k=1}^{r}\prod_{l=1}^{k-1}(\lambda_k - \lambda_l)^{m_km_l}\right).
	\end{align*}
	The last equality is due to the determinant of the confluent Vandermonde matrix \cite[Cor. 1]{HA198069}. By \cite[p. 166]{Wal98} the Wronskian satisfies the differential equation
	\begin{align*}
		W(u_0, \ldots,u_{M-1})'(x) = \left(\sum_{i=1}^r m_i\lambda_i\right)W(u_0, \ldots,u_{M-1})(x),
	\end{align*}
	which yields \eqref{eq:Wronskian} for $x \in \RR$.
\end{proof}

\begin{cor}\label{cor:ECT_forward}
	\eqref{eq:exppolsys} constitutes an ECT-system if $\lambda_1 < \lambda_2 < \ldots < \lambda_r$.
\end{cor}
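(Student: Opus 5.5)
The plan is to verify the criterion of Theorem~\ref{thm:ECT_charact}: for each $s=0,1,\ldots,M-1$, the Wronskian of the initial segment $u_0,\ldots,u_s$ must be strictly positive on every interval $[0,b]$. The key observation is that any initial segment of \eqref{eq:exppolsys} is again a system of the same exponential--polynomial type. Concretely, it has the form
\begin{align*}
	\{x^l{\mathrm{e}}^{\lambda_k x}\}_{k<k_0,\ 0\le l<m_k}\cup\{x^l{\mathrm{e}}^{\lambda_{k_0}x}\}_{0\le l<m'},
\end{align*}
with $1\le m'\le m_{k_0}$. So we can apply Theorem~\ref{thm:Wronskian} with $r$ replaced by $k_0$, with multiplicities $m_1,\ldots,m_{k_0-1},m'$, and with the same pairwise distinct exponents $\lambda_1,\ldots,\lambda_{k_0}$.

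It then suffices to read off the sign of \eqref{eq:Wronskian} for such a truncated system. The exponential prefactor $\exp\bigl(x\sum_k m_k\lambda_k\bigr)$ is positive for every real $x$. Each factorial product $\prod_{l=1}^{m_k-1}l!$ is positive. Each factor $(\lambda_k-\lambda_l)^{m_km_l}$ with $l<k$ is positive precisely because the exponents are ordered increasingly, so that $\lambda_k-\lambda_l>0$. This is the only place where the hypothesis $\lambda_1<\cdots<\lambda_r$ enters. Note that the exponent $m_km_l$ may be odd, so without the ordering the sign would genuinely be able to fail. Hence every leading Wronskian is strictly positive on all of $\RR$, and in particular on each $[0,b]$.

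Since the functions are smooth, the regularity requirement $C^n[a,b]$ of Theorem~\ref{thm:ECT_charact} holds trivially. The corollary then follows, on every compact interval and thus on $[0,\infty)$ in the convention fixed above.

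The only step needing care is the one that lets us apply Theorem~\ref{thm:Wronskian} to truncated systems. Its proof rested on the confluent Vandermonde determinant and on the ODE $W'=(\sum m_k\lambda_k)W$, and both are valid for arbitrary block sizes, including a partially truncated last block. We only need to note that the formula is stated for general $r$ and $m_k$, so the truncated system is literally an instance of it.

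A secondary remark concerns the use in the body of the paper (e.g.\ the system \eqref{eq:ECT_system_f}), where the exponents are negative, like $-1/\tau_1$ and $-\lambda$. There one orders them increasingly, for instance listing $x{\mathrm{e}}^{-2\lambda x}$-type terms before slower-decaying ones. That reordering is what is meant by \enquote{after potentially changing the order}, and it does not affect the corollary itself.
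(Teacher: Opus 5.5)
Your proof is correct and follows the paper's argument. You apply the closed-form Wronskian of Theorem~\ref{thm:Wronskian} to each initial segment $\{u_i\}_{i=0}^{s}$, which is itself an exponential--polynomial system with a possibly truncated last block. The ordering $\lambda_1<\cdots<\lambda_r$ makes every factor in \eqref{eq:Wronskian} positive, and you conclude via Theorem~\ref{thm:ECT_charact}, exactly as the paper does (your explicit check on truncated systems just spells out the paper's ``Analogously''). One harmless slip in your side remark: \eqref{eq:ECT_system_f} contains ${\mathrm{e}}^{-2\lambda x}$ but no $x{\mathrm{e}}^{-2\lambda x}$ term.
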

\begin{proof}
	If $\lambda_1 < \lambda_2 < \ldots < \lambda_r$, the Wronskian \eqref{eq:Wronskian} is positive on $\RR$. Analogously, the Wronskian corresponding to the subsystems $\{u_i\}_{i=0}^k,$ $0\le k\le M-1$ is positive. By Theorem~\ref{thm:ECT_charact}, \eqref{eq:exppolsys} is an ECT-system.
\end{proof}

\subsection{Transformations of ECT-systems}

Corollary~\ref{cor:ECT_forward} is used to analyze the forward curve. To subsequently obtain bounds for the yield curve, we will need the following results.

\begin{lem}\label{lem:integral_representation}
	For $x > 0$ let
	\begin{align*}
		y(x) = \frac{1}{x}\int_0^x f(\xi)\dif \xi,
	\end{align*}
	with $f\in C^\infty([0,\infty))$. Then
	\begin{align*}
		y^{(n)}(x) = \frac{1}{x^{n+1}}\int_0^x \xi^n f^{(n)}(\xi)\dif \xi\quad  \text{and} \quad \lim_{x\downarrow 0} y^{(n)}(x) = \frac{f^{(n)}(0)}{n+1}.
	\end{align*}
\end{lem}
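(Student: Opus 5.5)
We prove the formula for $y^{(n)}$ by induction on $n$, starting from the case $n=0$, which is the definition of $y$. The key identity is a product-rule relation. Write $F(x) := \int_0^x f(\xi)\dif\xi$, so that $xy(x) = F(x)$. Differentiating $n+1$ times with Leibniz' rule gives
\begin{align*}
	x y^{(n+1)}(x) + (n+1) y^{(n)}(x) = f^{(n)}(x),
\end{align*}
and hence
\begin{align*}
	\frac{\dif}{\dif x}\left(x^{n+1}y^{(n)}(x)\right) = x^n f^{(n)}(x).
\end{align*}

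The plan is to integrate this relation from $\epsilon$ to $x$ and let $\epsilon\downarrow 0$. This yields
\begin{align*}
	x^{n+1}y^{(n)}(x) = \int_0^x \xi^n f^{(n)}(\xi)\dif\xi,
\end{align*}
provided $\epsilon^{n+1}y^{(n)}(\epsilon)\to 0$. So the whole argument reduces to checking that boundary term.

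The boundary term is the only delicate point. To handle it, I would avoid the differential identity for the limit and instead use the substitution $\xi = xs$:
\begin{align*}
	y(x) = \int_0^1 f(xs)\dif s.
\end{align*}
Since $f\in C^\infty([0,\infty))$, differentiation under the integral sign is justified on compact sets. This gives
\begin{align*}
	y^{(n)}(x) = \int_0^1 s^n f^{(n)}(xs)\dif s,
\end{align*}
which extends continuously to $x=0$. In particular $y^{(n)}$ is bounded near $0$, so $\epsilon^{n+1}y^{(n)}(\epsilon)\to 0$.

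This representation in fact gives both claims directly. Substituting back $\xi = xs$ yields
\begin{align*}
	y^{(n)}(x) = \frac{1}{x^{n+1}}\int_0^x \xi^n f^{(n)}(\xi)\dif\xi.
\end{align*}
Letting $x\downarrow 0$ and using continuity of $f^{(n)}$, with dominated convergence on $[0,1]$, gives
\begin{align*}
	\lim_{x\downarrow 0} y^{(n)}(x) = f^{(n)}(0)\int_0^1 s^n\dif s = \frac{f^{(n)}(0)}{n+1}.
\end{align*}
Thus the induction can be replaced entirely by this one-line representation. The only step needing care is the interchange of differentiation and integration, which follows from uniform boundedness of $f^{(n)}$ on $[0,x_0]$ for each $x_0>0$.
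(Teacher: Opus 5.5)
Your proof is correct, but it takes a different route from the paper.

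\textbf{The paper's route.} It argues by induction on $n$. Differentiating the induction hypothesis gives $\frac{\mathrm{d}}{\mathrm{d}x}\bigl(x^{n+1}y^{(n)}(x)\bigr)=x^nf^{(n)}(x)$. Solving this yields
\[
x^{n+2}y^{(n+1)}(x)=x^{n+1}f^{(n)}(x)-(n+1)\int_0^x\xi^nf^{(n)}(\xi)\,\mathrm{d}\xi ,
\]
and an integration by parts turns the right-hand side into $\int_0^x\xi^{n+1}f^{(n+1)}(\xi)\,\mathrm{d}\xi$. The limit $x\downarrow 0$ is then a separate application of l'Hospital's rule to $\int_0^x\xi^nf^{(n)}(\xi)\,\mathrm{d}\xi\big/x^{n+1}$.

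\textbf{Your route.} The rescaling $y(x)=\int_0^1 f(xs)\,\mathrm{d}s$ replaces all of this with the single identity $y^{(n)}(x)=\int_0^1 s^nf^{(n)}(xs)\,\mathrm{d}s$. Both claims follow from it: the integral formula by substituting back, and the limit by continuity of $f^{(n)}$. You need no induction, no boundary terms and no l'Hospital. It also shows directly that $y$ extends to a $C^\infty$ function on $[0,\infty)$ whose one-sided derivatives at $0$ equal $f^{(n)}(0)/(n+1)$. That is the form in which the lemma is needed whenever Wronskians of the transformed functions are evaluated at $x=0$.

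\textbf{What the paper's route buys.} It stays with the original integral over $[0,x]$ and uses only one-variable calculus, with no differentiation under the integral sign.

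Your preliminary Leibniz-rule derivation of $\frac{\mathrm{d}}{\mathrm{d}x}\bigl(x^{n+1}y^{(n)}\bigr)=x^nf^{(n)}$ is also valid, and unlike the paper's it needs no induction hypothesis. Once you have the rescaled representation, though, it is redundant and can be dropped.
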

\begin{proof}
	We prove the first identity using induction. By assumption the identity holds for $n = 0$. Since
	\begin{align*}
		x^nf^{(n)}(x) &= \frac{\dif}{\dif x}\left(x^{n+1}y^{(n)}(x)\right)\\
		&= (n+1)x^ny^{(n)}(x) + x^{n+1}y^{(n+1)}(x),
	\end{align*}
	we can use partial integration to obtain
	\begin{align*}
		x^{n+2}y^{(n+1)}(x) &= x^{n+1}f^{(n)}(x) - (n+1)x^{(n+1)}y^{(n)}(x)\\
		&= x^{n+1}f^{(n)}(x) - (n+1)x^{n+1}\frac{1}{x^{n+1}}\int_0^x \xi^nf^{(n)}(\xi)\dif \xi\\
		&= x^{n+1}f^{(n)}(x) - \int_0^x (n+1)\xi^nf^{(n)}(\xi)\dif \xi\\
		&= \int_0^x x^{n+1}f^{(n+1)}(\xi)\dif \xi.
	\end{align*}
	Dividing by $x^{n+2}$ yields the desired result. To show the second identity we use l'Hospital's rule
	\begin{align*}
		\lim_{x\downarrow 0} y^{(n)}(x) &= \lim_{x\downarrow 0} \frac{1}{x^{n+1}}\int_0^x \xi^n f^{(n)}(\xi)\dif \xi\\ &= \lim_{x\downarrow 0} \frac{x^nf^{(n)}(x)}{(n+1)x^n} = \frac{f^{(n)}(0)}{n+1}.
	\end{align*}
\end{proof}

\begin{lem}\label{lem:ECT_yield}
	Let $\{u_i\}_{i=0}^n$ be an ECT-system on $[0,\infty)$. Then $\{v_i\}_{i=0}^n$ with
	\begin{align*}
		v_i(x) &= \frac{1}{x^2}\int_0^x \xi u_i(\xi)\dif \xi,\quad x > 0,\\
		v_i(0) &= \frac{u_i'(0)}{2},\quad 0\le i\le n,
	\end{align*}
	is also an ECT-system on $[0,\infty)$.
\end{lem}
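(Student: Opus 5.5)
The plan is to write the transformation as an integral operator with a totally positive kernel and then use the basic composition formula. For $x>0$ we have
\begin{align*}
v_i(x)=\int_0^\infty K(x,\xi)\,u_i(\xi)\dif\xi,\qquad K(x,\xi):=\frac{\xi}{x^2}\mathbbm{1}_{\{\xi<x\}} .
\end{align*}
The kernel $\mathbbm{1}_{\{\xi<x\}}$ is totally positive of every order. Multiplying a kernel by the positive factors $x^{-2}$ and $\xi$ preserves total positivity, so $K$ is totally positive. For $0<x_0<\dots<x_r$ and each $r\le n$, the continuous Cauchy--Binet (basic composition) formula gives
\begin{align*}
\det\bigl(v_i(x_j)\bigr)_{i,j=0}^r=\int_{0<\xi_0<\dots<\xi_r}\det\bigl(K(x_j,\xi_k)\bigr)_{j,k}\,\det\bigl(u_i(\xi_k)\bigr)_{i,k}\dif\xi_0\cdots\dif\xi_r .
\end{align*}
The second determinant is strictly positive because $\{u_i\}_{i=0}^r$ is a T-system. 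The first is nonnegative, and it is strictly positive on the open interlacing set $\xi_0<x_0<\xi_1<x_1<\dots<\xi_r<x_r$, where the kernel matrix is triangular with positive diagonal. Hence the integral is strictly positive. This shows $\{v_i\}_{i=0}^r$ is a T-system on $(0,\infty)$ for every $r$, i.e. a CT-system.

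To upgrade to the extended (ECT) property, I would let points coalesce. Extended determinants are limits of ordinary determinants divided by Vandermonde-type products, and the composition formula can be differentiated under the integral. Coalescing points in $x$ then corresponds to replacing rows of $\det K$ by $x$-derivatives of $K$. Since $K$ is not smooth, I would carry this out with a smoothed totally positive kernel, for example replacing the indicator by the Gaussian-convolved kernel $\Phi((x-\xi)/\varepsilon)$, which is extended totally positive. For it the composition formula with extended determinants on both sides holds and gives strict positivity. I would then let $\varepsilon\to 0$, but this limit only yields $\ge 0$.

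The main obstacle is recovering strict positivity in the confluent case, including at the endpoint $x=0$. For this I would fall back on Theorem~\ref{thm:ECT_charact} and check Wronskians. Writing $V_i(x)=\int_0^x\xi u_i(\xi)\dif\xi$, we have $v_i=x^{-2}V_i$. Since $W(wf_0,\dots,wf_r)=w^{r+1}W(f_0,\dots,f_r)$, it suffices to show $W(V_0,\dots,V_r)>0$ on $(0,\infty)$. From Lemma~\ref{lem:integral_representation}, $V_i'=xu_i$, and from the ECT property of the $u_i$ via the Karlin--Studden canonical representation $u_i=w_0\int w_1\int\cdots$ with positive weights, I would derive an explicit positive iterated-integral representation of $W(V_0,\dots,V_r)$. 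At $x=0$ I would work with the continuous extension $v_i(0)=\lim_{x\downarrow0}v_i(x)$; by Lemma~\ref{lem:integral_representation} this limit is $u_i(0)/2$ rather than $u_i'(0)/2$, so the stated normalisation at $0$ should be checked. Using $v_i^{(k)}(0^+)=u_i^{(k)}(0)/(k+2)$, which follows by the same l'Hospital argument, the Wronskian at $0$ becomes the Wronskian of the $u_i$ with columns scaled by $1/(k+2)>0$. That equals $\prod_k(k+2)^{-1}\,W(u_0,\dots,u_r)(0)>0$, which settles the endpoint.
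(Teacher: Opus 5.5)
Your proposal has a genuine gap, and it sits exactly at the step that carries the lemma. The composition-formula argument with the kernel $\xi x^{-2}\mathbf{1}_{\{\xi<x\}}$ is correct. However, it only yields the CT property, which is strictly weaker than ECT. The smoothing argument, as you note yourself, only gives nonnegative extended determinants in the limit. So everything rests on your third paragraph. There the decisive claim, $W(V_0,\dots,V_r)>0$ on $(0,\infty)$, is only announced: you say you ``would derive an explicit positive iterated-integral representation'' from the canonical representation, but you give no identity that produces one. Your reduction from $v_i$ to $V_i$ and your endpoint computation are fine. Without this positivity, however, nothing beyond the CT property is proved.

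The missing idea is short and needs no canonical representation. Put $g_j(x):=xu_j(x)$. Then $V_j(x)=\int_0^x g_j(\xi)\dif\xi$ and $V_j^{(k)}=g_j^{(k-1)}$ for $k\ge 1$. So only the zeroth-order entries of the Wronskian matrix involve the integral, and by linearity of the determinant in that column,
\begin{align*}
W(V_0,\dots,V_r)(x)=\int_0^x\det\begin{pmatrix}
g_0(\xi) & g_0(x) & g_0'(x) & \cdots & g_0^{(r-1)}(x)\\
\vdots & \vdots & \vdots & & \vdots\\
g_r(\xi) & g_r(x) & g_r'(x) & \cdots & g_r^{(r-1)}(x)
\end{pmatrix}\dif\xi .
\end{align*}
The integrand is the extended determinant $U^*$ of the system $\{g_j\}_{j=0}^r$ at the points $\xi<x=\dots=x$, with $x$ repeated $r$ times. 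By the rescaling identity you already quote, $W(g_0,\dots,g_k)=x^{k+1}W(u_0,\dots,u_k)>0$ on $(0,\infty)$. Hence $\{g_j\}$ is an ECT-system there, and the integrand is strictly positive for $0<\xi<x$. It follows that $W(V_0,\dots,V_r)>0$ and $W(v_0,\dots,v_r)=x^{-2(r+1)}W(V_0,\dots,V_r)>0$ on $(0,\infty)$. Your endpoint argument then finishes the proof via Theorem~\ref{thm:ECT_charact}. This is precisely the paper's proof.

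Your endpoint computation matches the paper's. You are also right that the normalisation must be $v_i(0)=u_i(0)/2$; the statement's $u_i'(0)/2$ is a typo. Likewise the correct constant is $\prod_{k=0}^r(k+2)^{-1}=1/(r+2)!$ rather than the paper's $1/(r+1)!$, which is immaterial for the sign.
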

\begin{proof}
	Since multiplying all the functions $u_i$ or all the functions $v_i$ with the same positive factor, dependent on $x$ and smooth, does not change the property of being (or not being) an ECT-system (see the integral representation from \cite[Ch. XI, Thm. 1.2.]{KS66}) $\{xu_i(x)\}_{i=0}^n$ is an ECT-system on $(0, \infty)$. And $\{v_i\}_{i=0}^n$ is an ECT-system on $(0, \infty)$, if and only if $(\tilde{v}_i)_{i=0}^n$ with
	\begin{align*}
		\tilde{v}_i(x) = \int_0^x \xi u_i(\xi)\dif \xi,\quad 0\le i\le n,
	\end{align*}
	is an ECT-system on $(0, \infty)$. For $x > 0$, it holds that
	\begin{align*}
		W(\tilde{v_0},\ldots,\tilde{v_i})(x) &= \int_0^x \begin{vmatrix}\xi u_1(\xi) & \cdots & \xi u_i(\xi) \\ xu_1(x) & \cdots & xu_i(x) \\ \frac{\dif}{\dif x}\left(xu_1(x)\right) & \cdots & \frac{\dif}{\dif x}\left(xu_i(x)\right) \\ \vdots & \ddots & \vdots \\ \frac{\dif^{i-1}}{\dif x^{i-1}}\left(xu_1(x)\right) & \cdots & \frac{\dif^{i-1}}{\dif x^{i-1}}\left(xu_i(x)\right)\end{vmatrix} \dif \xi.
	\end{align*}
	The integrand is positive since $\{xu_i(x)\}_{i=0}^n$ is an ECT-system, hence \\$W(v_0, \ldots, v_i)(x) > 0$ for every $x > 0,\ 0\le i\le n$. By Theorem~\ref{thm:ECT_charact}, $\{\tilde{v}_i\}_{i=0}^n$ is an ECT-system. Taking the limit $x\downarrow 0$ and using Lemma~\ref{lem:integral_representation}, we obtain
	\begin{align*}
		W(v_0, \ldots, v_i)(0) = \frac{1}{(i+1)!}W(u_0, \ldots, u_i)(0),
	\end{align*}
	so the Wronskian is positive for every $x \ge 0$ and $\{v_i\}_{i=0}^n$ is an ECT-system on $[0,\infty).$
\end{proof}

As a consequence of Lemma~\ref{lem:ECT_yield}, the derivative of the yield curve must be a polynomial of an ECT-system, if the derivative of the forward curve is a polynomial of (another) ECT-system. Crucially, the proof did not rely on a specific choice, so it can be used in the context of the exponential--polynomial family here, or in a different setting.

Finally, let us consider another transformation of an ECT-system $\{u_i\}_{i=0}^n$.
\begin{align*}
	v_i(x) := W(u_0,u_i)(x) = u_0(x)u_i'(x) - u_0'(x)u_i'(x),\quad i=1,\ldots,n.
\end{align*}
Using the identity
\begin{align*}
	\begin{vmatrix}
		u_0^{(k)} & u_i^{(k)} \\ u_0^{(l)} & u_i^{(l)}
	\end{vmatrix}
	&= \frac{1}{u_0} 
	\begin{vmatrix} 
		u_0^{(k)} & u_0 u_i^{(k)} - u_0^{(k)} u_i \\ u_0^{(l)} & u_0 u_i^{(l)} - u_0^{(l)} u_i
	\end{vmatrix}\\
	&= \frac{u_0^{(k)}}{u_0}
	\begin{vmatrix}
		u_0 & u_i \\ u_0^{(l)} & u_i^{(l)}
	\end{vmatrix}
	- \frac{u_0^{(l)}}{u_0}
	\begin{vmatrix}
		u_0 & u_i \\ u_0^{(k)} & u_i^{(k)}
	\end{vmatrix},
\end{align*}
we obtain
\begin{align*}
	&\frac{{\mathrm{d}}^j}{{\mathrm{d}} x^j} v_i = \frac{{\mathrm{d}}^j}{{\mathrm{d}} x^j} (u_0u_i' - u_0'u_i)
	= \sum_{k=0}^j {j \choose k}\left[u_0^{(k)}u_i^{(j-k+1)} - u_0^{(j-k+1)}u_i^{(k)}\right]\\ &= \sum_{k=0}^j {j \choose k}
	\begin{vmatrix}
		u_0^{(k)} & u_i^{(k)} \\ u_0^{(j-k+1)} & u_i^{(j-k+1)}
	\end{vmatrix}
	= \begin{vmatrix}
		u_0 & u_i \\ u_0^{(j+1)} & u_i^{(j+1)}
	\end{vmatrix}\\
	&\qquad+ \sum_{k=1}^j {j \choose k}\left[\frac{u_0^{(k)}}{u_0}
	\begin{vmatrix}
		u_0 & u_i \\ u_0^{(j-k+1)} & u_i^{(j-k+1)}
	\end{vmatrix}
	-\frac{u_0^{(j-k+1)}}{u_0}
	\begin{vmatrix}
		u_0 & u_i \\ u_0^{(k)} & u_i^{(k)}
	\end{vmatrix}
	\right],
\end{align*}
which yields
\begin{align*}
	W(v_1,\ldots,v_n) &= \det\left(\left(\begin{vmatrix}
		u_0 & u_i \\ u_0^{(j+1)} & u_i^{(j+1)}
	\end{vmatrix}\right)_{j = 0,i = 1}^{n-1, n}\right)\\ &= W(u_0,\ldots,u_n),
\end{align*}
where the last equality is due to condensation \cite[Sec. 20]{Ait56} and leads us to the following observation:
\begin{lem}\label{lem:ECT_der}
	If $\{u_i\}_{i=0}^n$ is an ECT-system on $[0,\infty)$, then so is $\{v_i\}_{i=1}^n$ with
	\begin{align*}
		v_i(x) = u_0(x)u_i'(x) - u_0'(x)u_i(x),\quad i=1,\ldots,n.
	\end{align*}
\end{lem}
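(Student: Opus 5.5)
The plan is to verify the Wronskian criterion of Theorem~\ref{thm:ECT_charact} for $\{v_i\}_{i=1}^n$. We need $W(v_1,\ldots,v_r)>0$ on $[0,b]$ for every $b>0$ and every $r=1,\ldots,n$. Regularity is not an issue. Since $u_i\in C^n[0,b]$, each $v_i=u_0u_i'-u_0'u_i$ lies in $C^{n-1}[0,b]$, which is exactly the order Theorem~\ref{thm:ECT_charact} requires for a system of $n$ functions. The subsystems come for free: $\{u_i\}_{i=0}^r$ is itself an ECT-system for each $r$, and $\{v_i\}_{i=1}^r$ is the transform of that subsystem. It therefore suffices to show $W(v_1,\ldots,v_n)>0$ for an arbitrary ECT-system $\{u_i\}_{i=0}^n$ and apply this with $n$ replaced by $r$.

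For the top Wronskian I will use the derivative computation displayed just before the statement. It rewrites $v_i^{(j)}$ as the $2\times2$ minor
\begin{align*}
\begin{vmatrix} u_0 & u_i\\ u_0^{(j+1)} & u_i^{(j+1)}\end{vmatrix}
\end{align*}
plus a combination, with coefficients independent of $i$, of the lower-order minors $\begin{vmatrix} u_0 & u_i\\ u_0^{(m)} & u_i^{(m)}\end{vmatrix}$ for $m\le j$. Viewed as rows indexed by $j$, this expresses row $j$ of the Wronskian matrix as the "leading" row plus combinations of earlier rows, up to a triangular change of rows. Gaussian elimination then reduces $W(v_1,\ldots,v_n)$ to $\det\bigl(\begin{vmatrix} u_0 & u_i\\ u_0^{(j+1)} & u_i^{(j+1)}\end{vmatrix}\bigr)_{j,i}$. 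I will check that the coefficients really only involve minors of order $m\le j$, so that the triangular structure holds.

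Next I apply Sylvester's condensation identity (\cite[Sec.~20]{Ait56}) to the $(n+1)\times(n+1)$ Wronskian matrix of $u_0,\ldots,u_n$, pivoting on the entry $u_0$. This gives
\begin{align*}
\det\Bigl(\begin{vmatrix} u_0 & u_i\\ u_0^{(j+1)} & u_i^{(j+1)}\end{vmatrix}\Bigr)_{j=0,i=1}^{n-1,n} = u_0^{\,n-1}\,W(u_0,\ldots,u_n).
\end{align*}
The power $u_0^{n-1}$ appears in condensation; as a sanity check, for $n=2$ this is the classical $W(W(f,g),W(f,h))=f\,W(f,g,h)$. It is harmless, since $u_0=W(u_0)>0$ on $[0,b]$ by the ECT property. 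Hence $W(v_1,\ldots,v_n)=u_0^{n-1}W(u_0,\ldots,u_n)>0$, and the criterion of Theorem~\ref{thm:ECT_charact} yields the claim.

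The main obstacle is the triangular row reduction in the first step: tracking that the binomial sum produces only lower-order minors with $i$-independent coefficients $u_0^{(k)}/u_0$. If that bookkeeping gets messy, I would instead prove $W(v_1,\ldots,v_r)=u_0^{r-1}W(u_0,\ldots,u_r)$ by induction on $r$. The route there is the Wronskian reduction $W(u_0,\ldots,u_r)=u_0^{r+1}W\bigl((u_1/u_0)',\ldots,(u_r/u_0)'\bigr)$, together with $v_i=u_0^2(u_i/u_0)'$ and the scaling rule $W(gf_1,\ldots,gf_r)=g^rW(f_1,\ldots,f_r)$. Combining these gives $W(v_1,\ldots,v_r)=u_0^{2r}W\bigl((u_1/u_0)',\ldots\bigr)=u_0^{r-1}W(u_0,\ldots,u_r)$ directly, which may be the cleaner path.
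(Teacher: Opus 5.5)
Your proposal is correct and follows the paper's own argument. The Leibniz expansion of $v_i^{(j)}$ is reduced, by a unitriangular row operation with $i$-independent coefficients, to the determinant of the $2\times 2$ minors $u_0u_i^{(j+1)}-u_0^{(j+1)}u_i$, and that determinant is then evaluated by condensation. You are in fact more careful than the paper: it writes $W(v_1,\ldots,v_n)=W(u_0,\ldots,u_n)$ and drops the factor $u_0^{\,n-1}$ that condensation produces, whereas your identity $W(v_1,\ldots,v_r)=u_0^{\,r-1}W(u_0,\ldots,u_r)$ is the correct one (your fallback via $v_i=u_0^2(u_i/u_0)'$ confirms it). Positivity is unaffected because $u_0=W(u_0)>0$.
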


\section{Checking assumptions \textbf{({A}1)} -- \textbf{({A}6)}} \label{sec:assumptions}

\begin{lem}
	The functions $b_{\mathrm{f}}, b_{\mathrm{y}}, c_{\mathrm{f}}$ and $c_{\mathrm{y}}$ have constant negative sign; hence \textbf{({A}1)} is satisfied.
\end{lem}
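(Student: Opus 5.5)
The claim concerns the four coefficient functions $b_{\mathrm{f}}, c_{\mathrm{f}}, b_{\mathrm{y}}, c_{\mathrm{y}}$ on $(0,\infty)$. I will check the sign of each one directly from its explicit formula. By the way \textbf{(A1)} is formulated in \cite{KRS23}, constant sign of $b$ and $c$ on $(0,\infty)$ is exactly what is needed there, so the assumption follows at once.

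\textbf{Forward case.} This is immediate. We have $b_{\mathrm{f}}(x) = -{\mathrm{e}}^{-x/\tau}$ and $c_{\mathrm{f}}(x) = -\lambda{\mathrm{e}}^{-\lambda x}$. Since $\lambda>0$ and exponentials are positive, both functions are strictly negative for every $x>0$, and in fact on all of $\RR$.

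\textbf{Yield case.} Here $b_{\mathrm{y}}(x) = \tau^2 h_1(x,\tau)/x^2$ and $c_{\mathrm{y}}(x) = h_1(x,\frac{1}{\lambda})/(\lambda x^2)$. The prefactors $\tau^2/x^2$ and $1/(\lambda x^2)$ are positive, so it suffices to show that $h_1(x,\theta)=(1+x/\theta){\mathrm{e}}^{-x/\theta}-1<0$ for all $x>0$ and $\theta>0$.

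Substitute $u=x/\theta>0$. The claim becomes $(1+u){\mathrm{e}}^{-u}<1$, which is equivalent to $1+u<{\mathrm{e}}^{u}$. This holds for $u>0$ by strict convexity of the exponential. Alternatively, note that $\frac{\dif}{\dif u}\big((1+u){\mathrm{e}}^{-u}\big) = -u{\mathrm{e}}^{-u}<0$ and the function equals $1$ at $u=0$. Hence $b_{\mathrm{y}}<0$ and $c_{\mathrm{y}}<0$ on $(0,\infty)$.

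\textbf{Behaviour at $x\downarrow 0$.} This is the only point needing a little care: the winding-number framework evaluates limits at the endpoints. From the expansion $h_1(x,\theta) = -\frac{x^2}{2\theta^2}+O(x^3)$ we get
\begin{align*}
b_{\mathrm{y}}(0^+)=-\tfrac12 \quad\text{and}\quad c_{\mathrm{y}}(0^+)=-\tfrac{\lambda}{2},
\end{align*}
both still negative. This is consistent with $\partial_{xr}y(0^+)=\tfrac12\partial_{xr}f(0)$ noted earlier. So the sign is constant on the closed range $[0,\infty)$ as well, and \textbf{(A1)} holds.

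I expect no real obstacle. The only step beyond inspection is the elementary inequality ${\mathrm{e}}^{u}>1+u$.
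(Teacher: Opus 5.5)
Your proof is correct and follows the paper's argument. The forward case is immediate in both. For the yield case the paper likewise reduces to $h_1<0$ via $h_1(0,\tau)=0$ and $\partial_x h_1(x,\tau)=-x{\mathrm{e}}^{-x/\tau}/\tau^2<0$, which is exactly your derivative argument written in the variable $u=x/\theta$. Your additional limits $b_{\mathrm{y}}(0^+)=-\tfrac12$ and $c_{\mathrm{y}}(0^+)=-\tfrac{\lambda}{2}$ are correct and are a harmless extra not included in the paper.
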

\begin{proof}
	This is trivial in the case of the forward curve. For the yield curve, notice that $h_1(0,\tau) = 0$ and
	\begin{align*}
		\partial_xh_1(x,\tau) = -\frac{x{\mathrm{e}}^{-\frac{x}{\tau}}}{\tau^2} < 0.
	\end{align*}
\end{proof}
\begin{lem}
	The first-order Wronskians for the forward curve are given by
	\begin{align*}
		W(b_{\mathrm{f}}, c_{\mathrm{f}})(x) &= \lambda\left(\frac{1}{\tau} - \lambda\right){\mathrm{e}}^{-\left(\frac{1}{\tau} + \lambda\right)x},\\
		W(c_{\mathrm{f}}, a_{\mathrm{f}})(x) &= {\mathrm{e}}^{-\lambda x}\left(\lambda\widetilde{\beta}_2\left(1 - x\left(\frac{1}{\tau} - \lambda\right)\right){\mathrm{e}}^{-\frac{x}{\tau}} + 2\lambda^3\widetilde{\sigma}^2{\mathrm{e}}^{-2\lambda x}\right),\\
		W(a_{\mathrm{f}}, b_{\mathrm{f}})(x) &= {\mathrm{e}}^{-\frac{x}{\tau}}\left(-\widetilde{\beta}_2{\mathrm{e}}^{-\frac{x}{\tau}} + 2\lambda\widetilde{\sigma}^2\left(\frac{1}{\tau} - 2\lambda\right){\mathrm{e}}^{-2\lambda x}\right.\\&\qquad\qquad\left. + \lambda\widetilde{\beta}_0\left(\frac{1}{\tau} - \lambda\right){\mathrm{e}}^{-\lambda x}\right).
	\end{align*}
	In particular, $W(b_{\mathrm{f}}, c_{\mathrm{f}})$ has constant sign for all $x\in (0,\infty)$, and thus \textbf{({A}4)} holds for $\mathcal{F}^{\mathrm{f}}$. The same is true for $W(b_{\mathrm{y}}, c_{\mathrm{y}})$ and $\mathcal{F}^{\mathrm{y}}$.
\end{lem}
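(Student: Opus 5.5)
The plan is a direct computation for the three forward Wronskians, and a reduction to Lemma~\ref{lem:ECT_yield} for the sign statement in the yield case. Throughout we are in the general case $2\lambda\neq\frac{1}{\tau}\neq\lambda$, so $\frac{1}{\tau}-\lambda\neq 0$.

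\emph{Forward Wronskians.} With $b_{\mathrm{f}}=-{\mathrm{e}}^{-x/\tau}$ and $c_{\mathrm{f}}=-\lambda{\mathrm{e}}^{-\lambda x}$ we have $b_{\mathrm{f}}'=\frac{1}{\tau}{\mathrm{e}}^{-x/\tau}$ and $c_{\mathrm{f}}'=\lambda^2{\mathrm{e}}^{-\lambda x}$. Hence
\begin{align*}
W(b_{\mathrm{f}},c_{\mathrm{f}}) = b_{\mathrm{f}}c_{\mathrm{f}}'-b_{\mathrm{f}}'c_{\mathrm{f}} = -\lambda^2{\mathrm{e}}^{-(\frac{1}{\tau}+\lambda)x}+\frac{\lambda}{\tau}{\mathrm{e}}^{-(\frac{1}{\tau}+\lambda)x},
\end{align*}
which is the stated formula. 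For $W(c_{\mathrm{f}},a_{\mathrm{f}})$ and $W(a_{\mathrm{f}},b_{\mathrm{f}})$ I would differentiate $a_{\mathrm{f}}$ term by term and use bilinearity of the Wronskian, handling the $\widetilde\beta_2$-, $\widetilde\beta_0$- and $\widetilde\sigma^2$-terms separately.

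Two of these pieces cancel. In $W(c_{\mathrm{f}},a_{\mathrm{f}})$ the $\widetilde\beta_0$ term drops out, because that term of $a_{\mathrm{f}}$ is a multiple of $c_{\mathrm{f}}$. In $W(a_{\mathrm{f}},b_{\mathrm{f}})$ the $x$-part coming from $x{\mathrm{e}}^{-x/\tau}$ cancels against $b_{\mathrm{f}}$, since $W(x{\mathrm{e}}^{-x/\tau},{\mathrm{e}}^{-x/\tau})=-{\mathrm{e}}^{-2x/\tau}$. Each remaining piece is a $2\times2$ Wronskian of two exponential--monomials. Collecting terms should give the displayed expressions; this part is routine bookkeeping.

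\emph{Sign of $W(b_{\mathrm{f}},c_{\mathrm{f}})$.} The factor $\lambda(\frac{1}{\tau}-\lambda)$ is a nonzero constant and the exponential is positive. So $W(b_{\mathrm{f}},c_{\mathrm{f}})$ has the constant sign of $\frac{1}{\tau}-\lambda$ on $(0,\infty)$, and \textbf{(A4)} holds for $\mathcal{F}^{\mathrm{f}}$.

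\emph{Yield case.} First identify $b_{\mathrm{y}}$ and $c_{\mathrm{y}}$ as transforms of $b_{\mathrm{f}}$ and $c_{\mathrm{f}}$. From $h_1(0,\tau)=0$ and $\partial_x h_1(x,\tau)=-\frac{x}{\tau^2}{\mathrm{e}}^{-x/\tau}$ we get $\tau^2h_1(x,\tau)=\int_0^x\xi\,b_{\mathrm{f}}(\xi)\dif\xi$. Similarly, $\frac{1}{\lambda}h_1(x,\frac{1}{\lambda})=\int_0^x\xi\,c_{\mathrm{f}}(\xi)\dif\xi$. Therefore $b_{\mathrm{y}}=\frac{1}{x^2}\int_0^x\xi b_{\mathrm{f}}$ and $c_{\mathrm{y}}=\frac{1}{x^2}\int_0^x\xi c_{\mathrm{f}}$, which is exactly the transformation of Lemma~\ref{lem:ECT_yield}.

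Next, put $u_0:=-b_{\mathrm{f}}$ and $u_1:=-c_{\mathrm{f}}$ if $\frac{1}{\tau}>\lambda$, and swap the two if $\frac{1}{\tau}<\lambda$. Both functions are positive, and $W(u_0,u_1)=\pm W(b_{\mathrm{f}},c_{\mathrm{f}})>0$ by the computation above. By Theorem~\ref{thm:ECT_charact} the pair $\{u_0,u_1\}$ is an ECT-system on $[0,\infty)$. (Alternatively, this follows from Corollary~\ref{cor:ECT_forward} after rescaling by the positive constant $\lambda$.)

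Lemma~\ref{lem:ECT_yield} then says the transformed pair $\{v_0,v_1\}$ is ECT on $[0,\infty)$, so $W(v_0,v_1)>0$. The transformation is linear, so negating both functions leaves the Wronskian unchanged, and the possible swap only flips its sign. Hence $W(b_{\mathrm{y}},c_{\mathrm{y}})$ has the constant sign of $\frac{1}{\tau}-\lambda$ on $(0,\infty)$, and \textbf{(A4)} holds for $\mathcal{F}^{\mathrm{y}}$ too.

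The only delicate points are two details in this last step. One is correctly identifying $b_{\mathrm{y}}$ and $c_{\mathrm{y}}$ with the integral transform, including the normalising constants $\tau^2$ and $\frac{1}{\lambda}$. The other is the sign bookkeeping from the negations and the reordering. If in doubt, I would cross-check the sign directly from the limit $W(v_0,v_1)(0)=\frac12W(u_0,u_1)(0)$ given in the proof of Lemma~\ref{lem:ECT_yield}.
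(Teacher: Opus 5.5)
Your argument is correct in substance, but it reaches the yield case by a different lemma than the paper.

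\textbf{Comparison with the paper.} For the three forward Wronskians you do what the paper does, namely direct computation. Your sign argument for $W(b_{\mathrm{f}},c_{\mathrm{f}})$ is the one the paper calls obvious. The difference is the yield step. The paper simply cites \cite[Lem.~A.2 d.]{KRS26}; you keep the argument inside this paper.
You identify $b_{\mathrm{y}}=\frac{1}{x^2}\int_0^x\xi\, b_{\mathrm{f}}(\xi)\dif\xi$ and $c_{\mathrm{y}}=\frac{1}{x^2}\int_0^x\xi\, c_{\mathrm{f}}(\xi)\dif\xi$; your normalising constants $\tau^2$ and $\frac1\lambda$ check out. You then normalise $\{-b_{\mathrm{f}},-c_{\mathrm{f}}\}$ (swapped if $\frac1\tau<\lambda$) into a two-element ECT-system and apply Lemma~\ref{lem:ECT_yield}.
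This costs some sign bookkeeping, which you handle correctly. In return it gives the explicit conclusion $\sgn W(b_{\mathrm{y}},c_{\mathrm{y}})=\sgn\bigl(\frac1\tau-\lambda\bigr)=\sgn W(b_{\mathrm{f}},c_{\mathrm{f}})$. It also makes visible that the yield basis is the integral transform of the forward basis, which the paper leaves implicit. The citation route is shorter, and the cited lemma carries the companion statements the paper uses elsewhere in this appendix.

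\textbf{Two corrections.} First, the ``routine bookkeeping'' would not reproduce the displayed formulas. With $a_{\mathrm{f}}$ as defined, its first term is $-\widetilde\beta_2\tau^2x{\mathrm{e}}^{-x/\tau}$. The $\widetilde\beta_2$-contributions then come out as $\lambda\widetilde\beta_2\tau^2\bigl(1-x(\frac1\tau-\lambda)\bigr){\mathrm{e}}^{-x/\tau}$ inside $W(c_{\mathrm{f}},a_{\mathrm{f}})$ and as $-\widetilde\beta_2\tau^2{\mathrm{e}}^{-x/\tau}$ inside $W(a_{\mathrm{f}},b_{\mathrm{f}})$.
So the statement drops a factor $\tau^2$, which the paper itself keeps in $W(a_{\mathrm{f}},b_{\mathrm{f}},c_{\mathrm{f}})$ and in $\eta(0)$. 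This is a typo in the statement and does not affect the sign claim or (A4). You should, however, have found it rather than asserting agreement.

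Second, the constant in your cross-check is $W(v_0,v_1)(0)=\frac16 W(u_0,u_1)(0)$, not $\frac12$. This follows from $v_j(0)=\frac{u_j(0)}{2}$ and $v_j'(0)=\frac{u_j'(0)}{3}$. The proof of Lemma~\ref{lem:ECT_yield} has typos here: $v_i(0)$ should be $\frac{u_i(0)}{2}$, and $\frac{1}{(i+1)!}$ should be $\frac{1}{(i+2)!}$. Only positivity matters, so your conclusion stands.
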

\begin{proof}
	The Wronskians are obtained by direct calculation and the sign property of $W(b_{\mathrm{f}}, c_{\mathrm{f}})$ is obvious. Applying \cite[Lem. A.2 d.]{KRS26}, the claim on $W(b_{\mathrm{y}}, c_{\mathrm{y}})$ follows.
\end{proof}
\begin{lem}
	The Wronskian $W(a_{\mathrm{f}}, b_{\mathrm{f}}, c_{\mathrm{f}})$ is given by
	\begin{align*}
		W(a_{\mathrm{f}}, b_{\mathrm{f}}, c_{\mathrm{f}})(x) &= \lambda\left(\frac{1}{\tau} - \lambda\right){\mathrm{e}}^{-\left(\frac{1}{\tau} + \lambda\right)x}\\&\qquad\left(\widetilde{\beta}_2\tau^2\left(\frac{1}{\tau} - \lambda\right){\mathrm{e}}^{-\frac{x}{\tau}} - 2\lambda^2\widetilde{\sigma}^2\left(\frac{1}{\tau} - 2\lambda\right){\mathrm{e}}^{-2\lambda x}\right).
	\end{align*}
	It has at most one zero in $(0,\infty)$, thus, \textbf{({A}5)} holds for $\mathcal{F}^{\mathrm{f}}$.
\end{lem}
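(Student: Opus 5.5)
The Wronskian itself is a direct computation with the three explicit functions $a_{\mathrm{f}}, b_{\mathrm{f}}, c_{\mathrm{f}}$. Since $b_{\mathrm{f}}$ and $c_{\mathrm{f}}$ are pure exponentials, I would expand along the row of $a_{\mathrm{f}}$:
\begin{align*}
W(a_{\mathrm{f}},b_{\mathrm{f}},c_{\mathrm{f}}) = a_{\mathrm{f}}W(b_{\mathrm{f}}',c_{\mathrm{f}}') - a_{\mathrm{f}}'W(b_{\mathrm{f}},c_{\mathrm{f}}'') + a_{\mathrm{f}}''W(b_{\mathrm{f}},c_{\mathrm{f}}'),
\end{align*}
using the $2\times2$ minors built from $b_{\mathrm{f}}=-{\mathrm{e}}^{-x/\tau}$ and $c_{\mathrm{f}}=-\lambda{\mathrm{e}}^{-\lambda x}$. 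All of these minors are multiples of ${\mathrm{e}}^{-(1/\tau+\lambda)x}$, with coefficients polynomial in $1/\tau$ and $\lambda$. A cleaner way to organize this is to note that the operator $u\mapsto W(u,b_{\mathrm{f}},c_{\mathrm{f}})$ equals $W(b_{\mathrm{f}},c_{\mathrm{f}})\cdot(D+\tfrac1\tau)(D+\lambda)u$, since it is a second-order linear differential operator annihilating ${\mathrm{e}}^{-x/\tau}$ and ${\mathrm{e}}^{-\lambda x}$ with leading coefficient $W(b_{\mathrm{f}},c_{\mathrm{f}})$. Here $W(b_{\mathrm{f}},c_{\mathrm{f}})=\lambda(\frac1\tau-\lambda){\mathrm{e}}^{-(1/\tau+\lambda)x}$ by the previous lemma. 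So it only remains to apply $(D+\frac1\tau)(D+\lambda)$ to each of the three terms of $a_{\mathrm{f}}$.

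The three terms behave as follows.
\begin{itemize}
\item The term $\widetilde\beta_0\lambda{\mathrm{e}}^{-\lambda x}$ is annihilated, which explains why $\widetilde\beta_0$ disappears from the formula.
\item For $\widetilde\sigma^2\cdot 2\lambda{\mathrm{e}}^{-2\lambda x}$, one gets $2\lambda\widetilde\sigma^2(\frac1\tau-2\lambda)(-\lambda){\mathrm{e}}^{-2\lambda x}$, which matches the second summand $-2\lambda^2\widetilde\sigma^2(\frac1\tau-2\lambda){\mathrm{e}}^{-2\lambda x}$.
\item For $-\widetilde\beta_2\tau^2x{\mathrm{e}}^{-x/\tau}$, first $(D+\frac1\tau)(x{\mathrm{e}}^{-x/\tau})={\mathrm{e}}^{-x/\tau}$, then $(D+\lambda){\mathrm{e}}^{-x/\tau}=(\lambda-\frac1\tau){\mathrm{e}}^{-x/\tau}$. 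The result is $\widetilde\beta_2\tau^2(\frac1\tau-\lambda){\mathrm{e}}^{-x/\tau}$, which gives the first summand.
\end{itemize}
Multiplying by $W(b_{\mathrm{f}},c_{\mathrm{f}})$ yields the stated formula. I would only verify the operator identity briefly, by checking the leading coefficient and the two kernel elements.

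For the zero count, the prefactor $\lambda(\frac1\tau-\lambda){\mathrm{e}}^{-(1/\tau+\lambda)x}$ is nonzero because $\frac1\tau\neq\lambda$ in this section. The zeros of the Wronskian are therefore those of $A{\mathrm{e}}^{-x/\tau}+B{\mathrm{e}}^{-2\lambda x}$. Because $\frac1\tau\neq2\lambda$, the pair $({\mathrm{e}}^{-x/\tau},{\mathrm{e}}^{-2\lambda x})$, suitably ordered, is an ECT-system by Corollary~\ref{cor:ECT_forward}. Theorem~\ref{thm:T-system_upper_bounds} then gives at most one zero, counting multiplicity, for any nontrivial combination. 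Directly, $A{\mathrm{e}}^{-x/\tau}+B{\mathrm{e}}^{-2\lambda x}=0$ forces ${\mathrm{e}}^{(2\lambda-1/\tau)x}=-B/A$, which has at most one solution. The degenerate case $A=B=0$ cannot occur, since $B=-2\lambda^2\widetilde\sigma^2(\frac1\tau-2\lambda)\neq0$ for $\sigma>0$ and $t>0$.

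Assumption \textbf{(A5)} of \cite{KRS23} asks for finitely many (here at most one) sign changes of $W(a,b,c)$ on $(0,\infty)$, which then follows. The only real obstacle is the bookkeeping of signs in the determinant. The operator factorization avoids expanding a $3\times3$ determinant, so I expect no serious difficulty.
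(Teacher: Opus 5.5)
Your proof is correct and follows the paper's route. The paper obtains the formula by direct calculation; your factorization $W(u,b_{\mathrm{f}},c_{\mathrm{f}})=W(b_{\mathrm{f}},c_{\mathrm{f}})\bigl(\frac{\mathrm{d}}{\mathrm{d}x}+\frac{1}{\tau}\bigr)\bigl(\frac{\mathrm{d}}{\mathrm{d}x}+\lambda\bigr)u$ is a clean way to organize that calculation and checks out term by term. The paper then bounds the zeros via the ECT pair $({\mathrm{e}}^{-x/\tau},{\mathrm{e}}^{-2\lambda x})$, Corollary~\ref{cor:ECT_forward} and Theorem~\ref{thm:T-system_upper_bounds}, exactly as you do.

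One harmless slip: in your first cofactor display, the minors multiplying $a_{\mathrm{f}}'$ and $a_{\mathrm{f}}''$ should be $b_{\mathrm{f}}c_{\mathrm{f}}''-b_{\mathrm{f}}''c_{\mathrm{f}}$ and $W(b_{\mathrm{f}},c_{\mathrm{f}})$, not $W(b_{\mathrm{f}},c_{\mathrm{f}}'')$ and $W(b_{\mathrm{f}},c_{\mathrm{f}}')$. That display is never used, and your operator identity, with leading coefficient $W(b_{\mathrm{f}},c_{\mathrm{f}})$, agrees with the correct expansion.
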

\begin{proof}
	The Wronskian is obtained by direct calculation. $({\mathrm{e}}^{-\frac{1}{\tau}}, {\mathrm{e}}^{-2\lambda x})$ or $({\mathrm{e}}^{-2\lambda x}, {\mathrm{e}}^{-\frac{1}{\tau}})$ constitutes an ECT-system; see Corollary~\ref{cor:ECT_forward}. By Theorem~\ref{thm:T-system_upper_bounds}, a polynomial of this system has at most one zero.
\end{proof}
\begin{lem}
	The Wronskian $W(a_{\mathrm{y}}, b_{\mathrm{y}}, c_{\mathrm{y}})$ is given by
	\begin{align*}
		&W(a_{\mathrm{y}}, b_{\mathrm{y}}, c_{\mathrm{y}})(x) = \\
		&\qquad \frac{\tau^5\widetilde{\beta}_2}{x^4}{\mathrm{e}}^{-\frac{x}{\tau}}\left(\frac{1}{\tau^3}{\mathrm{e}}^{-\frac{x}{\tau}} + \lambda\left(\left(2\lambda - \frac{3}{\tau}\right) + \left(\frac{1}{\tau} - \lambda\right)\frac{x}{\tau}\right){\mathrm{e}}^{-\lambda x}\right.\\
		&\qquad\left.- \frac{1}{\lambda}\left(\frac{1}{\tau} - \lambda\right)^2\left(2\lambda + \frac{1 + \lambda x}{\tau}\right){\mathrm{e}}^{-(\frac{1}{\tau} + \lambda)x}\right)\\
		&\qquad  + \frac{\widetilde{\sigma}^2}{2x^4}{\mathrm{e}}^{-\lambda x}\left(\left(\frac{1}{\tau} - \lambda\right){\mathrm{e}}^{-\frac{x}{\tau}} - \lambda^3\tau^2{\mathrm{e}}^{-2\lambda x} + \left(8\lambda - \frac{4}{\tau}\right){\mathrm{e}}^{-\left(\frac{1}{\tau} + 2\lambda\right)x}\right.\\
		&\qquad\left. + \tau^2\left(\frac{1}{\tau} - \lambda\right)\left(\frac{1}{\tau} - 2\lambda\right)\left(2\lambda\left(1 + \frac{x}{\tau}\right) + \frac{3}{\tau}\right){\mathrm{e}}^{-\left(\frac{1}{\tau} + 3\lambda\right)x}\right)
	\end{align*}
	It has at most one zero in $(0,\infty)$, thus, \textbf{({A}5)} holds for $\mathcal{F}^{\mathrm{y}}$.
\end{lem}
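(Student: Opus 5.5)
The statement has two parts: an explicit formula for $W(a_{\mathrm{y}}, b_{\mathrm{y}}, c_{\mathrm{y}})$, and the claim that it has at most one zero in $(0,\infty)$. I would treat them separately.

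For the formula, I would not differentiate the quotients $h(x)/x^2$ directly. Instead I would use the integral representation from Lemma~\ref{lem:integral_representation}. Each of $a_{\mathrm{y}}, b_{\mathrm{y}}, c_{\mathrm{y}}$ has the form $v(x) = x^{-2}\int_0^x \xi u(\xi)\,\dif\xi$, where $u$ is the corresponding forward coefficient $a_{\mathrm{f}}, b_{\mathrm{f}}, c_{\mathrm{f}}$. Since $W(gv_1,gv_2,gv_3) = g^3 W(v_1,v_2,v_3)$, I can strip the common factor $x^{-2}$ and work with
\begin{align*}
	\tilde v_i(x) = \int_0^x \xi u_i(\xi)\,\dif\xi .
\end{align*}
This is the source of the prefactor $x^{-4}$ after the remaining algebra. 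For the $\tilde v_i$, the derivatives are simply $\tilde v_i' = x u_i$ and $\tilde v_i'' = (xu_i)'$.

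The resulting determinant is linear in the first row. Because $a_{\mathrm{f}}$ is linear in $(\widetilde\beta_0, \widetilde\beta_2, \widetilde\sigma^2)$, the Wronskian splits into three parts. The $\widetilde\beta_0$-part vanishes, since that component of $a$ is a multiple of $c$. The $\widetilde\beta_2$-part and the $\widetilde\sigma^2$-part are explicit exponential-polynomials, each computed by expanding the $3\times 3$ determinant of explicit exponentials. This is routine, and a computer algebra check is advisable.

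For the zero count, I would factor out a positive function. Multiplying the bracket by $x^4\mathrm{e}^{(\frac{1}{\tau}+\lambda)x}$, or by a similar positive factor, turns it into a polynomial of an exponential-polynomial system of type \eqref{eq:exppolsys}. A direct application of Theorem~\ref{thm:T-system_upper_bounds} would give too many zeros, so a sharper argument is needed. The approach parallels \cite[Lem.~A.2]{KRS26}. The plan is to show:
\begin{itemize}
	\item the $\widetilde\beta_2$-coefficient function $B(x)$ has constant sign on $(0,\infty)$;
	\item the quotient $S/B$ is strictly monotone, where $S$ is the $\widetilde\sigma^2$-coefficient function.
\end{itemize}
Then $W = \widetilde\beta_2 B + \widetilde\sigma^2 S$ vanishes exactly where $S/B = -\widetilde\beta_2/\widetilde\sigma^2$, which happens at most once.

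Monotonicity of $S/B$ is equivalent to $W(B,S)$ having constant sign. There are two ways I would try to establish this:
\begin{itemize}
	\item a positive integral representation in the style of Lemma~\ref{lem:ECT_yield}, writing $W(B,S)$ as $\int_0^x(\cdots)$ of a forward-type quantity whose sign is controlled by $W(a_{\mathrm{f}},b_{\mathrm{f}},c_{\mathrm{f}})$ from the previous lemma;
	\item a Descartes/Tchebycheff argument on the reduced exponential-polynomial, after dividing by the dominant exponential.
\end{itemize}

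The main obstacle is this sign control. It requires a case split according to the ordering of $\frac{1}{\tau}$, $\lambda$ and $2\lambda$, which also fixes the signs of the factors $(\frac1\tau-\lambda)$ and $(\frac1\tau-2\lambda)$. In each case I would check the sign at $x\downarrow0$, using Lemma~\ref{lem:integral_representation} to relate it to the forward Wronskian, and the sign as $x\to\infty$ from the dominant exponential. I would then rule out interior sign changes via the ECT bound for the system reduced by Lemma~\ref{lem:ECT_der}.
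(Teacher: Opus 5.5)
Your reduction is sound, and it is not a strengthening. Since $\widetilde{\beta}_2\in\RR$ is free, ``at most one zero for all parameters'' is equivalent to injectivity of $S/B$ once $B\neq 0$. The gap is that the proposal stops exactly at the step that carries the content: the constant sign of $W(B,S)$. Neither suggested mechanism delivers it.

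\begin{itemize}
\item You do not exhibit the integral representation. Moreover, the forward object it would reduce to cannot be $W(a_{\mathrm{f}},b_{\mathrm{f}},c_{\mathrm{f}})$, which depends on $\widetilde{\beta}_2,\widetilde{\sigma}^2$, whereas $W(B,S)$ does not.
\item A Tchebycheff count via Theorem~\ref{thm:T-system_upper_bounds} on the expanded exponential polynomial only bounds the number of zeros, by a number well above one. Matching signs at $x\downarrow 0$ and $x\to\infty$ therefore cannot exclude an even number of interior sign changes, and the planned case split in the ordering of $\frac{1}{\tau},\lambda,2\lambda$ does not change that.
\end{itemize}

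The paper avoids $S/B$ entirely. It invokes \cite[Lem.~A.2~e.]{KRS26}, by which $W(a_{\mathrm{y}},b_{\mathrm{y}},c_{\mathrm{y}})$ has no more zeros than $W(a_{\mathrm{f}},b_{\mathrm{f}},c_{\mathrm{f}})$; the latter has at most one zero by the preceding lemma.

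The idea missing from your route is the two--function version of the condensation identity behind Lemma~\ref{lem:ECT_der}, namely Sylvester's identity
\[
W(y_1,y_2)\,W(y_1,y_2,f,g) = W\bigl(W(y_1,y_2,f),\,W(y_1,y_2,g)\bigr).
\]
Take $y_1=b_{\mathrm{y}}$, $y_2=c_{\mathrm{y}}$, and let $f,g$ be the $\widetilde{\beta}_2$-- and $\widetilde{\sigma}^2$--components of $a_{\mathrm{y}}$. This gives $W(B,S)=W(b_{\mathrm{y}},c_{\mathrm{y}})\,W(b_{\mathrm{y}},c_{\mathrm{y}},f,g)$.

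Up to nonzero constants and reordering, $B$, $W(b_{\mathrm{y}},c_{\mathrm{y}})$ and $W(b_{\mathrm{y}},c_{\mathrm{y}},f,g)$ are Wronskians of the yield transforms $u\mapsto x^{-2}\int_0^x\xi u(\xi)\dif\xi$ of the systems $\{{\mathrm{e}}^{-x/\tau},x{\mathrm{e}}^{-x/\tau},{\mathrm{e}}^{-\lambda x}\}$, $\{{\mathrm{e}}^{-x/\tau},{\mathrm{e}}^{-\lambda x}\}$ and $\{{\mathrm{e}}^{-x/\tau},x{\mathrm{e}}^{-x/\tau},{\mathrm{e}}^{-\lambda x},{\mathrm{e}}^{-2\lambda x}\}$. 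By Corollary~\ref{cor:ECT_forward} and Lemma~\ref{lem:ECT_yield}, all three are nonvanishing on $(0,\infty)$ whenever $\frac{1}{\tau}\notin\{\lambda,2\lambda\}$. Hence $S/B$ is strictly monotone, with no case analysis at all.

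Compared with the paper, this argument is self--contained within Appendix~A but uses the two--parameter structure of $a_{\mathrm{f}}$; the cited transfer statement works for arbitrary $a$. The same identity with $g\equiv 1$, combined with Rolle's theorem applied to $W(\tilde v_a,\tilde v_b,\tilde v_c)/W(x b_{\mathrm{f}},x c_{\mathrm{f}})$ for your $\tilde v_i$, is one way to prove that transfer statement.

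Finally, do not expect your computation to reproduce the displayed formula. The functions $a_{\mathrm{y}},b_{\mathrm{y}},c_{\mathrm{y}}$ extend smoothly to $x=0$, so their Wronskian is bounded near $0$. Yet the printed bracket multiplying $\frac{\widetilde{\sigma}^2}{2x^4}$ equals $3\lambda^3\tau^2\neq 0$ at $x=0$. Carrying out your computation should give:
\begin{itemize}
\item $\frac{1}{\lambda\tau^3}$ instead of $\frac{1}{\tau^3}$;
\item $-4\lambda^3\tau^2$ instead of $-\lambda^3\tau^2$;
\item exponents $-(\frac{1}{\tau}+\lambda)x$ and $-(\frac{1}{\tau}+2\lambda)x$ instead of $-(\frac{1}{\tau}+2\lambda)x$ and $-(\frac{1}{\tau}+3\lambda)x$ in the last two $\widetilde{\sigma}^2$--terms.
\end{itemize}
Neither the zero count above nor the paper's argument depends on these details.
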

\begin{proof}
	The equation follows by direct calculation. By \cite[Lem. A.2 e.]{KRS26}, the Wronskian cannot have more zeros than $W(a_{\mathrm{f}}, b_{\mathrm{f}}, c_{\mathrm{f}})$.
\end{proof}
\begin{lem}
	The limiting line $\ell_0$ is the same for the forward and yield curve and given by
	\begin{align*}
		\ell_0\ :\quad \lambda\widetilde{\beta} _0 + 2\lambda\widetilde{\sigma}^2 - \widetilde{\beta}_1 - \lambda r = 0;
	\end{align*}
	the envelope contacts $\ell_0$ at the point
	\begin{align*}
		&\eta(0) = \left(\frac{\widetilde{\beta}_2\tau^2 + 2\lambda^2\widetilde{\sigma}^2}{\frac{1}{\tau} - \lambda},\ \widetilde{\beta}_0 + 2\widetilde{\sigma}^2 - \frac{\widetilde{\beta}_2\tau^2 + 2\lambda^2\widetilde{\sigma}^2}{\lambda\left(\frac{1}{\tau} - \lambda\right)}\right)
	\end{align*}
	The limiting lines $\ell_\infty$ are given by
	\begin{align*}
		\ell_\infty^{\mathrm{f}}\ &: \quad \widetilde{\beta}_0 - r = 0,\quad\text{if }\lambda < \frac{1}{\tau},\quad \text{and}\\
		\ell_\infty^{\mathrm{y}}\ &:\quad -2\tau^5\widetilde{\beta}_2 + \frac{\widetilde{\beta}_0}{\lambda} + \frac{\widetilde{\sigma}^2}{2\lambda} - \tau^2\widetilde{\beta_1} - \frac{r}{\lambda} = 0,
	\end{align*}
	for the forward curve and the yield curve respectively. If $\frac{1}{\tau} < \lambda$, the terminal sign of $\partial_xf(x)$ is the sign of $-\beta_2$. The lines $\ell_0$ and $\ell_\infty^{\mathrm{f/y}}$ are not parallel, their intersection points are given by
	\begin{align*}
		M_{\mathrm{f}} = \left(2\lambda\widetilde{\sigma}^2,\ \widetilde{\beta}_0\right)
	\end{align*}
	and
	\begin{align*}
		M_{\mathrm{y}} = \begin{pmatrix}
			\lambda\widetilde{\beta}_0 + 2\lambda\widetilde{\sigma}^2 - \frac{\lambda}{\frac{1}{\lambda} - \lambda\tau^2}\left(-2\tau^5\widetilde{\beta}_2 + \widetilde{\beta}_0\left(\frac{1}{\lambda} - \lambda\tau^2\right) + \widetilde{\sigma}^2\left(\frac{1}{2\lambda}  -2\lambda\tau^2\right)\right)\\
			\frac{1}{\frac{1}{\lambda} - \lambda\tau^2}\left(-2\tau^5\widetilde{\beta}_2 + \widetilde{\beta}_0\left(\frac{1}{\lambda} - \lambda\tau^2\right) + \widetilde{\sigma}^2\left(\frac{1}{2\lambda}  -2\lambda\tau^2\right)\right)
		\end{pmatrix}.
	\end{align*}
	If $\lambda < \frac{1}{\tau}$, the envelope $\eta_{\mathrm{y}}$ contacts $\ell_\infty^{\mathrm{y}}$ at
	\begin{align*}
		\eta_{\mathrm{y}}(\infty) = \left(-2\tau^3\widetilde{\beta}_2 + \frac{\widetilde{\sigma}^2}{2\lambda\tau^2},\ \widetilde{\beta}_0\right),
	\end{align*}
	and the envelope $\eta_{\mathrm{f}}$ contacts $\ell_\infty^{\mathrm{f}}$ asymptotically (in the sense of \cite[Rem. 3.3]{KRS23}) with
	\begin{align*}
		\eta_{{\mathrm{f}},1}(x) \overset{x\rightarrow\infty}{\longrightarrow} \sgn(-\widetilde{\beta}_2)\infty\qquad\text{and}\qquad \eta_{{\mathrm{f}},2}(x) \overset{x\rightarrow\infty}{\longrightarrow} \widetilde{\beta}_0.
	\end{align*}
	If $\frac{1}{\tau} < \lambda$, it holds
	\begin{align*}
		\eta_{\mathrm{f}/\mathrm{y}, 1}(x) \overset{x\rightarrow\infty}{\longrightarrow} \sgn(-\widetilde{\beta}_2)\infty\qquad\text{and}\qquad \eta_{\mathrm{f}/\mathrm{y}, 2}(x) \overset{x\rightarrow\infty}{\longrightarrow}\sgn(\widetilde{\beta}_2)\infty,
	\end{align*}
	and the envelope $\eta_{\mathrm{y}}$ contacts $\ell_\infty^{\mathrm{y}}$ asymptotically.
\end{lem}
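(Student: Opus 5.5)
The plan is to compute everything directly from the explicit coefficient functions $a,b,c$ and the Wronskians obtained in the preceding lemmas. The envelope point is always obtained by Cramer's rule from \eqref{eq:env_lines}, i.e. $\eta(x)=\bigl(W(c,a)(x)/W(b,c)(x),\,W(a,b)(x)/W(b,c)(x)\bigr)$. The limiting lines come from the leading-order behaviour of $a,b,c$ as $x\downarrow0$ and $x\uparrow\infty$.

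\emph{The line $\ell_0$ and the point $\eta(0)$.} Evaluating at $x=0$ gives
\begin{align*}
a_{\mathrm{f}}(0)=\lambda\widetilde\beta_0+2\lambda\widetilde\sigma^2,\qquad b_{\mathrm{f}}(0)=-1,\qquad c_{\mathrm{f}}(0)=-\lambda,
\end{align*}
which is the stated equation of $\ell_0$. For the yield curve I would use Lemma~\ref{lem:integral_representation} with $n=1$ and $n=2$, applied to $\partial_x f$. This gives $a_{\mathrm{y}}(0)=a_{\mathrm{f}}(0)/2$ and $a_{\mathrm{y}}'(0)=a_{\mathrm{f}}'(0)/3$, and likewise for $b$ and $c$. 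Hence the two rows of \eqref{eq:env_lines} at $x=0$ for the yield curve are the forward rows scaled by $1/2$ and $1/3$. So $\ell_0$ and the solution $\eta(0)$ coincide for both curves. The coordinates of $\eta(0)$ then follow by plugging $x=0$ into the explicit Wronskians $W(b_{\mathrm{f}},c_{\mathrm{f}})$, $W(c_{\mathrm{f}},a_{\mathrm{f}})$ and $W(a_{\mathrm{f}},b_{\mathrm{f}})$ from the previous lemma.

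\emph{The lines $\ell_\infty$ and the points $M$.} For the forward curve with $\lambda<\frac1\tau$, I multiply $a_{\mathrm{f}}+b_{\mathrm{f}}\widetilde\beta_1+c_{\mathrm{f}}r$ by $\mathrm e^{\lambda x}$. The terms with $\mathrm e^{-x/\tau}$, $x\mathrm e^{-x/\tau}$ and $\mathrm e^{-2\lambda x}$ then vanish, leaving $\lambda(\widetilde\beta_0-r)$. If $\frac1\tau<\lambda$, the dominant term is instead $-\widetilde\beta_2 x\tau^2\mathrm e^{-x/\tau}$, which gives the terminal sign $-\beta_2$. For the yield curve I multiply by $x^2$ and use $h_1\to-1$ and $h_2\to-2$. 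This yields
\begin{align*}
x^2a_{\mathrm{y}}\to-2\tau^5\widetilde\beta_2+\frac{\widetilde\beta_0}{\lambda}+\frac{\widetilde\sigma^2}{2\lambda},\qquad x^2b_{\mathrm{y}}\to-\tau^2,\qquad x^2c_{\mathrm{y}}\to-\frac1\lambda,
\end{align*}
which is $\ell_\infty^{\mathrm{y}}$. The points $M_{\mathrm{f}}$ and $M_{\mathrm{y}}$ are obtained by solving the resulting $2\times2$ linear systems. Non-parallelism is checked by comparing the normal vectors $(-1,-\lambda)$ with $(0,-1)$ and with $(-\tau^2,-1/\lambda)$; the latter pair is parallel only if $\lambda^2\tau^2=1$, which is excluded.

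\emph{The endpoints of the envelope.} These are the hardest part. For $\eta_{\mathrm{f}}$ with $\lambda<\frac1\tau$, the explicit Wronskians give the ratio $W(a_{\mathrm{f}},b_{\mathrm{f}})/W(b_{\mathrm{f}},c_{\mathrm{f}})$. In it the $\mathrm e^{-\lambda x}$ term contributes $\widetilde\beta_0$ and all other terms decay, so $\eta_{\mathrm{f},2}\to\widetilde\beta_0$. In $W(c_{\mathrm{f}},a_{\mathrm{f}})/W(b_{\mathrm{f}},c_{\mathrm{f}})$ the term $-\lambda\widetilde\beta_2x(\frac1\tau-\lambda)$ dominates and diverges like $\sgn(-\widetilde\beta_2)\infty$. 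The case $\frac1\tau<\lambda$ is analogous: one reads off the dominant exponentials in each Wronskian.

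For $\eta_{\mathrm{y}}(\infty)$ the leading $x^{-2}$ terms produce proportional rows, because $a'\approx-2A/x^3$ and similarly for $b$ and $c$. Cramer's rule is therefore degenerate at leading order. I would expand each of $a_{\mathrm{y}},b_{\mathrm{y}},c_{\mathrm{y}}$ as $\frac{C}{x^2}+\frac{p(x)}{x^2}\mathrm e^{-\mu x}$, where $p$ is a polynomial and $\mu\in\{\frac1\tau,\lambda,2\lambda\}$. I would then compute the Wronskians to the first non-cancelling exponential order. Alternatively, I could use \cite[Lem. A.2]{KRS26} to transfer the Wronskian structure from the forward family. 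For $\lambda<\frac1\tau$ the slowest correction is $\mathrm e^{-\lambda x}$, so the finite limit comes from balancing these terms. Carrying out this cancellation cleanly, and confirming the stated value $\bigl(-2\tau^3\widetilde\beta_2+\frac{\widetilde\sigma^2}{2\lambda\tau^2},\,\widetilde\beta_0\bigr)$, is the step I expect to be the main obstacle. For $\frac1\tau<\lambda$, the same expansion should instead show divergence with the stated signs.
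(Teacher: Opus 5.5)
\textbf{The gap: the forward endpoint $\eta_{\mathrm{f},1}$ for $\lambda<\frac1\tau$.} Your route is the paper's: Cramer's rule on \eqref{eq:env_lines}, evaluation at $x=0$, comparison of exponents as $x\to\infty$, and transfer from the forward to the yield family via Lemma~\ref{lem:integral_representation}. That transfer is what the paper's citation of \cite[Lem.~A.2]{KRS26} supplies. Your treatment of $\ell_0$, $\ell_\infty^{\mathrm{f/y}}$, $M_{\mathrm{f/y}}$, non-parallelism and the regime $\frac1\tau<\lambda$ checks out.

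The step that fails is your claim that, for $\lambda<\frac1\tau$, the $\widetilde\beta_2$-term dominates $W(c_{\mathrm{f}},a_{\mathrm{f}})$. Computed from $a_{\mathrm{f}},c_{\mathrm{f}}$,
\[
W(c_{\mathrm{f}},a_{\mathrm{f}})(x)=\lambda\widetilde\beta_2\tau^2\bigl(1-x(\tfrac1\tau-\lambda)\bigr)\mathrm{e}^{-(\frac1\tau+\lambda)x}+2\lambda^3\widetilde\sigma^2\mathrm{e}^{-3\lambda x}.
\]
If $2\lambda<\frac1\tau$, the second term decays more slowly than the first. Since $\widetilde\sigma^2>0$ for $t>0$, this gives
\[
\eta_{\mathrm{f},1}(x)\sim\frac{2\lambda^2\widetilde\sigma^2}{\frac1\tau-\lambda}\,\mathrm{e}^{(\frac1\tau-2\lambda)x}\longrightarrow+\infty
\]
whatever the sign of $\widetilde\beta_2$. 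So your comparison of exponents only proves $\eta_{\mathrm{f},1}\to\sgn(-\widetilde\beta_2)\infty$ for $\lambda<\frac1\tau<2\lambda$. For $2\lambda<\frac1\tau$ and $\widetilde\beta_2>0$ the asserted limit is false, so no argument can establish it. This agrees with the paper's own preceding lemma, which says $\eta_{\mathrm{f}}$ ends on $r_{id}$ (i.e.\ $\eta_{\mathrm{f},1}\to+\infty$) whenever $2\lambda<\frac1\tau$. The statement needs this case distinction; $\eta_{\mathrm{f},2}\to\widetilde\beta_0$ and the asymptotic contact with $\ell_\infty^{\mathrm{f}}$ are unaffected.

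\textbf{Two smaller points.} First, do not plug $x=0$ into the printed $W(c_{\mathrm{f}},a_{\mathrm{f}})$ and $W(a_{\mathrm{f}},b_{\mathrm{f}})$. As printed they lack the factor $\tau^2$ on the $\widetilde\beta_2$-terms; note $a_{\mathrm{f}}'(0)=-\widetilde\beta_2\tau^2-\lambda^2\widetilde\beta_0-4\lambda^2\widetilde\sigma^2$. Using them would put $\widetilde\beta_2$ instead of $\widetilde\beta_2\tau^2$ into $\eta(0)$. Solving the $2\times2$ system at $x=0$ directly gives the stated point.

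Second, the step you left open for $\eta_{\mathrm{y}}(\infty)$ is short after rescaling, because the envelope is invariant under multiplying $a,b,c$ by a common positive function. Put $\hat a=x^2a_{\mathrm{y}}$, $\hat b=x^2b_{\mathrm{y}}$, $\hat c=x^2c_{\mathrm{y}}$. Then $\hat a'=xa_{\mathrm{f}}$, $\hat b'=xb_{\mathrm{f}}$, $\hat c'=xc_{\mathrm{f}}$. Up to exponentially small terms, $\hat a\to A:=-2\tau^5\widetilde\beta_2+\frac{\widetilde\beta_0}{\lambda}+\frac{\widetilde\sigma^2}{2\lambda}$, $\hat b\to-\tau^2$ and $\hat c\to-\frac1\lambda$. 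For $\lambda<\frac1\tau$ this yields
\[
W(\hat b,\hat c)=x(\hat b c_{\mathrm{f}}-b_{\mathrm{f}}\hat c)=\lambda\tau^2x\mathrm{e}^{-\lambda x}+o(x\mathrm{e}^{-\lambda x}).
\]
Likewise $W(\hat c,\hat a)=x(\hat c a_{\mathrm{f}}-c_{\mathrm{f}}\hat a)=x\bigl(\tfrac{\widetilde\sigma^2}{2}-2\lambda\tau^5\widetilde\beta_2\bigr)\mathrm{e}^{-\lambda x}+o(x\mathrm{e}^{-\lambda x})$, where the $\widetilde\beta_0$-contributions cancel. Finally $W(\hat a,\hat b)=\lambda\tau^2\widetilde\beta_0x\mathrm{e}^{-\lambda x}+o(x\mathrm{e}^{-\lambda x})$. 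Together these give exactly $\eta_{\mathrm{y}}(\infty)=\bigl(-2\tau^3\widetilde\beta_2+\frac{\widetilde\sigma^2}{2\lambda\tau^2},\widetilde\beta_0\bigr)$. The same identities, with $\mathrm{e}^{-x/\tau}$ now dominant, give the stated divergent limits for $\frac1\tau<\lambda$.
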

\begin{proof}
	Most of the results follow by direct calculation. The limits are obtained by a comparison of exponents. To find the contact points of $\eta_{\mathrm{y}}$ with $\ell_0$ and $\ell_\infty^{\mathrm{y}}$, one can use \cite[Lem. A.2 a. and b.]{KRS26}.
\end{proof}
\textbf{({A}6)} holds, since the slope function is strictly monotone or, equivalently, $W(b,c)$ does not change sign. As a direct consequence of the lemma above, \textbf{({A}3)} holds in the yield case, and in the forward case if $\lambda < 1/\tau$. In the other case \textbf{({A}3)} holds by applying \textbf{({A}6)} and when restricting to $(0,T)$, see Remark~\ref{rem:truncated}. By scaling with a positive function \cite[Rem. 3.1]{KRS23} or restricting to $(0,T)$ if $\lambda < 1/\tau$, it can also be assured that \textbf{({A}2)} holds.

\bibliographystyle{alpha}
\bibliography{references}

@book{Ait56,
	author    = {Aitken, Alexander C},
	title     = {Determinants and matrices},
	publisher = {Oliver and Boyd},
	year      = {1956},
	series    = {University Mathematical Texts},
	edition   = {9th},
	address   = {Edinburgh}
	}

@book{bruce1992curves,
	author = {Bruce, James W and Giblin, Peter J},
	publisher = {Cambridge University Press},
	title = {Curves and Singularities: a geometrical introduction to singularity theory},
	year = {1992}
	}

@techreport{Bli96,
	address = {Atlanta, GA},
	author = {Robert R. Bliss},
	number = {96-12a},
	publisher = {Federal Reserve Bank of Atlanta},
	title = {Testing term structure estimation methods},
	type = {Working Paper},
	year = {1996}
	}

@article{bauer2018economic,
	author = {Bauer, Michael D and Mertens, Thomas M},
	journal = {FRBSF Economic Letter},
	pages = {8--07},
	publisher = {Federal Reserve Bank of San Francisco},
	title = {Economic forecasts with the yield curve},
	volume = {7},
	year = {2018}
	}

@article{CG96,
	author = {Corless, Robert M and Gonnet, Gaston H and Hare, David EG and Jeffrey, David J and Knuth, Donald E},
	journal = {Advances in Computational mathematics},
	pages = {329--359},
	publisher = {Springer},
	title = {On the {Lambert} {W} function},
	volume = {5},
	year = {1996}
	}

@article{diez2020yield,
	author = {Diez, Franziska and Korn, Ralf},
	journal = {European Actuarial Journal},
	pages = {91--120},
	publisher = {Springer},
	title = {Yield curve shapes of Vasicek interest rate models, measure transformations and an application for the simulation of pension products},
	volume = {10},
	year = {2020}
	}

@article{DPR05,
	Author = {Diebold, Francis X. and Piazzesi, Monika and Rudebusch, Glenn D.},
	Title = {Modeling Bond Yields in Finance and Macroeconomics},
	Journal = {American Economic Review},
	Volume = {95},
	Number = {2},
	Year = {2005},
	Pages = {415–420}
	}

@article{diebold2006macroeconomy,
	title = {The macroeconomy and the yield curve: a dynamic latent factor approach},
	journal = {Journal of Econometrics},
	volume = {131},
	number = {1},
	pages = {309-338},
	year = {2006},
	issn = {0304-4076},
	author = {Francis X. Diebold and Glenn D. Rudebusch and S. {Bora\u{g}an Aruoba}}
	}

@book{Fil09,
	author = {Filipovic, Damir},
	publisher = {Springer Science \& Business Media},
	title = {Term-structure models: A graduate course},
	year = {2009}
	}

@article{gurkaynak2007us,
	author = {G{\"u}rkaynak, Refet S and Sack, Brian and Wright, Jonathan H},
	journal = {Journal of Monetary Economics},
	number = {8},
	pages = {2291--2304},
	publisher = {Elsevier},
	title = {The {US} Treasury yield curve: 1961 to the present},
	volume = {54},
	year = {2007}
	}

@article{HA198069,
	title = {A note on the determinant of a functional confluent vandermonde matrix and controllability},
	journal = {Linear Algebra and its Applications},
	volume = {30},
	pages = {69-75},
	year = {1980},
	author = {T.T. Ha and J.A. Gibson},
	}

@article{HW90,
	author = {Hull, John and White, Alan},
	title = {Pricing Interest-Rate-Derivative Securities},
	journal = {The Review of Financial Studies},
	volume = {3},
	number = {4},
	pages = {573-592},
	year = {1990},
	month = {10},
	}

@article{keller2018correction,
	author = {Keller-Ressel, Martin},
	journal = {Finance and Stochastics},
	pages = {503--510},
	publisher = {Springer},
	title = {Correction to: Yield curve shapes and the asymptotic short rate distribution in affine one-factor models},
	volume = {22},
	year = {2018}
	}

@article{keller2021classification,
	author = {Keller-Ressel, Martin},
	journal = {International Journal of Theoretical and Applied Finance},
	number = {05},
	pages = {2150027},
	publisher = {World Scientific},
	title = {The Classification Of Term Structure Shapes In The Two-Factor {V}asicek Model---A Total Positivity Approach},
	volume = {24},
	year = {2021}
	}

@article{keller2008yield,
	author = {Keller-Ressel, Martin and Steiner, Thomas},
	journal = {Finance and Stochastics},
	number = {2},
	pages = {149--172},
	publisher = {Springer},
	title = {Yield curve shapes and the asymptotic short rate distribution in affine one-factor models},
	volume = {12},
	year = {2008}
	}

@article{KRS23,
	author = {Keller-Ressel, Martin and Sachse, Felix},
	journal = {International Journal of Theoretical and Applied Finance},
	number = {04n05},
	pages = {2350013},
	publisher = {World Scientific Publishing Co. Pte. Ltd.},
	title = {State Space Decomposition And Classification Of Term Structure Shapes In The Two-Factor Vasicek Model},
	volume = {26},
	year = {2023}
	}

@article{KRS26,
	author = {Keller-Ressel, Martin and Sachse, Felix},
	title = {Term Structure Shapes and Their Consistent Dynamics in the Svensson Family},
	journal = {Mathematical Finance},
	volume = {36},
	number = {1},
	pages = {140-155},
	year = {2026}
	}

@book{KS66,
	author = {Karlin, Samuel and Studden, William J},
	publisher = {Interscience Publishers},
	title = {Tchebycheff systems: With applications in analysis and statistics},
	year = {1966}
	}

@article{litterman1991common,
	title={Common factors affecting bond returns},
	author={Litterman, Robert B and Scheinkman, Jose},
	journal={The journal of fixed income},
	volume={1},
	number={1},
	pages={54--61},
	year={1991},
	publisher={Portfolio Management Research}
	}

@article{NS87,
	author = {Nelson, Charles R and Siegel, Andrew F},
	journal = {Journal of business},
	pages = {473--489},
	publisher = {JSTOR},
	title = {Parsimonious modeling of yield curves},
	year = {1987}
	}

@phdthesis{sachse2026thesis,
	author  = {Sachse, Felix},
	title   = {Term Structure Shapes},
	school  = {TU Dresden},
	year    = {2026},
	address = {Dresden, Germany}
	}

@article{schlogl1997factor,
	title={Factor Models and the Shape of the Term Structure},
	author={Schloegl, Erik and Sommer, Daniel},
	journal={Available at SSRN: https://ssrn.com/abstract=1145},
	year={1997}
	}

@misc{Sve94,
	author = {Svensson, Lars EO},
	publisher = {National bureau of economic research Cambridge, Mass., USA},
	title = {Estimating and interpreting forward interest rates: Sweden 1992-1994},
	year = {1994}
	}

@article{vasicek1977equilibrium,
	author = {Vasicek, Oldrich},
	journal = {Journal of Financial Economics},
	number = {2},
	pages = {177--188},
	publisher = {Elsevier},
	title = {An equilibrium characterization of the term structure},
	volume = {5},
	year = {1977}
	}

@book{Wal98,
	title = {Ordinary Differential Equations},
	journal = {Graduate Texts in Mathematics},
	publisher = {Springer New York},
	series    = {Graduate Texts in Mathematics},
	author = {Walter,  Wolfgang},
	year = {1998}
	}
  
\end{document}